\documentclass{article} % For LaTeX2e
\usepackage{silence}
\usepackage{iclr2027_arxiv,times}

\usepackage{amsmath,amsfonts,bm}

\def\eqref#1{equation~\ref{#1}}
\def\Eqref#1{Equation~\ref{#1}}
\def\1{\bm{1}}

\DeclareMathAlphabet{\mathsfit}{\encodingdefault}{\sfdefault}{m}{sl}
\SetMathAlphabet{\mathsfit}{bold}{\encodingdefault}{\sfdefault}{bx}{n}

\usepackage[table]{xcolor}
\usepackage{array}
\usepackage{microtype}
\usepackage{graphicx}
\usepackage{booktabs}
\usepackage{cancel}
\usepackage{multirow}
\usepackage{adjustbox}
\usepackage{makecell}
\usepackage{siunitx}
\usepackage{subcaption}
\usepackage[T1]{fontenc}

\definecolor{RankBase}{HTML}{F4F5F7}
\colorlet{RankBest}{red!15}
\definecolor{RankBestText}{HTML}{2D4354}
\colorlet{RankSecond}{blue!10}
\definecolor{RankSecondText}{HTML}{334452}
\definecolor{RankThird}{HTML}{C3CEB8}
\definecolor{RankThirdText}{HTML}{35422D}
\definecolor{OTGroupShade}{HTML}{E7EAED}
\definecolor{OTGroupText}{HTML}{4B5157}

\newcommand{\uotrankfirst}[1]{\cellcolor{RankBest}\color{RankBestText}\bfseries #1}
\newcommand{\uotranksecond}[1]{\cellcolor{RankSecond}\color{RankSecondText}\bfseries #1}
\newcommand{\uotrankthird}[1]{\cellcolor{RankThird}\color{RankThirdText}\bfseries #1}
\providecommand{\setting}[2]{%
  \textbf{#1}\nobreak\hspace{0.3em}%
  {\scriptsize\color{black!55}\mbox{#2}}%
}
\newcommand{\uotgroup}[2]{%
  \rowcolor{OTGroupShade}%
  \multicolumn{13}{@{}l}{%
    \hspace{0.45em}\strut
    {\footnotesize\color{OTGroupText}\textsc{#1}}%
    \enspace{\scriptsize\color{black!55}\textit{(#2)}}%
  }\\
}
\newcolumntype{Z}[1]{>{\columncolor{RankBase}}S[table-format=#1]}

\usepackage[table]{xcolor}

\newcommand{\bestcfg}[1]{%
    \cellcolor{blue!20}\textbf{#1}%
}

\usepackage{pgfplots}
\pgfplotsset{compat=1.18}
\usepackage{tikz}
\usepackage{wrapfig}
\usepackage[ruled,vlined]{algorithm2e}
\usepackage{graphicx}
\usepackage{booktabs}
\usepackage{capt-of}
\usepackage{colortbl}
\usepackage{xcolor}

\usepackage{amssymb}
\usepackage{amsthm}
\newtheorem{definition}{Definition}
\newtheorem{theorem}{Theorem}
\newtheorem{lemma}{Lemma}
\newtheorem{proposition}{Proposition}
\newtheorem{corollary}{Corollary}
\newtheorem{assumption}{Assumption}
\theoremstyle{remark}
\newtheorem{remark}{Remark}

\usepackage{amsmath}
\usepackage{url}
\usepackage{hyperref}
\usepackage{comment}
\title{One-Step Generative Modeling via Unbalanced Optimal Transport}

\author{
Yirong Shen$^{1,2}$ \quad
Mengfei Xia$^2$\thanks{Corresponding author.} \quad
Junpeng Jing$^1$ \quad
Lu Gan$^3$ \quad
Cong Ling$^1$
\And
$^1$Imperial College London \quad
$^2$Ant Group \quad
$^3$Brunel University of London
}

\usepackage{float}
\usepackage{longtable}
\newcommand{\cls}{\mathsf{c}}
\iclrfinalcopy % Uncomment for camera-ready version, but NOT for submission.
\begin{document}

\maketitle
\fancyhead[L]{Preprint --- Work in Progress}

\begin{abstract}

Drifting models enable one-step generation by amortizing distribution transport into training, but this efficiency places greater demands on the transport field estimated at each update. In large-scale training, the field is computed from finite mini-batches of generated and real samples, which provide only imperfect approximations to the underlying distributions. Balanced optimal transport enforces exact mass matching within every mini-batch, making the estimated field sensitive to the particular composition of the real-data batch. We find that generated and real samples should be treated asymmetrically: letting the mass assigned to real samples adapt while keeping every generated sample fully transported improves generation across six feature-space metrics in controlled ablations, and is more robust to the relaxation strength than relaxing both marginals simultaneously, which falls below balanced transport under stronger relaxation. Motivated by this observation, we propose Unbalanced Optimal Transport Gradient Flow (UOT-GF), which keeps the generated-sample marginal fixed and relaxes only the real-data marginal.  Under identical settings at DiT-B/2 on ImageNet-256, UOT-GF improves Fréchet Inception Distance (FID) from 1.53 to 1.46 over the balanced W-Flow baseline; scaling the same recipe yields 1.34 and 1.22 FID at L/2 and XL/2, the best FID among the one-step models we compare. We further derive the induced UOT transport force, establish a kinetic Vlasov--Fokker--Planck formulation whose overdamped zero-temperature limit recovers the drifting dynamics, and characterize non-target stationary states together with sufficient conditions for convergence.

\end{abstract}
\section{Introduction}

\begin{figure}[t]
    \centering
    \includegraphics[width=\linewidth]{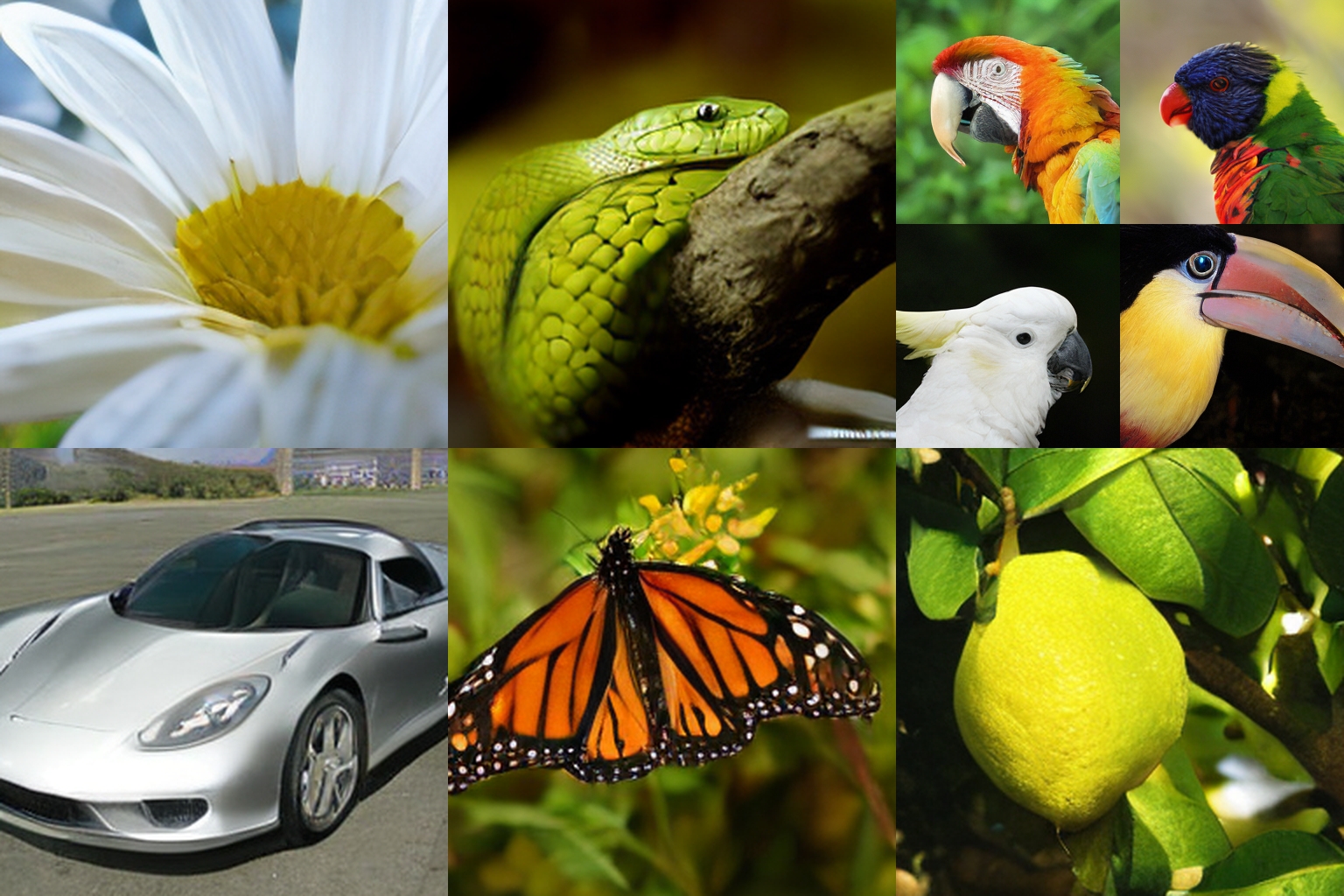}
    \caption{Selected 1-NFE samples from UOT-GF-XL/2 trained from scratch on
    ImageNet-256 (guidance scale $1.14$).}
    \label{fig:samples}
\end{figure}

Generative modeling seeks to transform samples from a simple reference
distribution into samples from the data distribution. Diffusion and flow-based
models achieve high generation quality but often require repeated network
evaluations at inference time
\citep{ddpm,song2021scorebased,lipman2023flow,liu2023flow}. GANs reduce sampling
to a single forward pass, but introduce adversarial optimization and its associated
training difficulties \citep{GAN,styleGAN}. Autoregressive and bridge-based
approaches offer further alternatives, with different trade-offs between generation
speed, modeling flexibility, and sample quality
\citep{li2024autoregressive,var,shi2023diffusion,bortoli2021diffusion,
pariset2023unbalanced}. Drifting models use a different strategy: rather than
integrating a long trajectory at test time, they learn how generated
particles should move toward the data distribution during training, enabling one-step sampling
\citep{deng2026generative}. W-Flow \citep{han2026one} interprets these dynamics as
a Wasserstein gradient flow induced by the balanced Sinkhorn divergence and reports
strong one-step performance on ImageNet-256.

The practical accuracy of such a drifting update depends on how well its transport
direction can be estimated during training. At each iteration, only a limited set of
generated and real examples participates in the transport problem
\citep{han2026one}. For large-scale datasets with substantial intra-class or local
variation, the samples present in one batch need not reflect the relative prevalence
of the underlying data patterns: some may be over-represented, while others may be
missing altogether. ImageNet provides a representative example, where individual
classes can exhibit considerable variation in appearance, pose, viewpoint, and
background \citep{imagenet,russakovsky2015imagenet,3524938.3525830}. Balanced transport still
assigns a prescribed amount of mass to every sample in the batch. Consequently,
the resulting coupling may be influenced by incidental batch composition \citep{pmlr-v139-fatras21a} rather than only by the population-level structure one ultimately wishes to learn. This raises a natural question: should the generated
and real samples be subject to the same marginal constraints in drifting-based
training?

To address this issue, we propose Unbalanced Optimal Transport Gradient Flow (UOT-GF). Unlike semi-dual,
adversarial, or JKO-type uses of unbalanced transport
\citep{NEURIPS2020_9719a00e,3666122.3667962,3692070.3692414}, it enters as the
geometry of an explicit, simulation-free drifting update, making the marginal
relaxation a design axis: it defines a family of transport energies, each inducing
its own flow, with balanced Sinkhorn drifting recovered as the $\tau\rightarrow1$
member. 
Our central finding is that \emph{which marginal is relaxed matters}: relaxing the real-data marginal alone is at least as effective as relaxing both, and is more robust to the relaxation strength---at stronger relaxation, relaxing both marginals degrades to worse than balanced transport while symmetrized source-fixed transport still improves on it. Each directed transport problem fixes its source marginal and relaxes its target marginal, and the full interaction averages contributions from both directions. Generated and real samples therefore each appear once on the fixed side and once on the relaxed side. The effective cross-interaction plan permits marginal reweighting while retaining at least half of each distribution's reference mass, as shown in Section~\ref{sec:energy}. This structural property does not by itself guarantee target coverage, which we evaluate empirically in Section~\ref{sec:exp}.% Our central finding is that how the marginals are relaxed matters more than
% whether: the standard both-sided relaxation, which the added freedom should favor,
% performs worse than balanced transport, while fixing the source and relaxing only the
% target improves on it consistently. The marginals are not interchangeable. The source
% is over generated particles the flow must keep moving; holding it exact keeps every
% particle under full-weight attraction, whereas relaxing it kills that force precisely
% on the particles furthest from the data. The target is an empirical batch of
% unreliable proportions; relaxing it only moves the barycentric target. This may
% appear to invite mode dropping, but every generated particle is still assigned in
% full at every step, and recall is unchanged from the balanced baseline
% (Section~\ref{sec:exp}). 
Our contributions are as follows:

\begin{itemize}\itemsep2pt
\item We study marginal relaxation in gradient-flow drifting and show empirically that symmetrized source-fixed transport improves generation over balanced transport across six feature-space metrics, and is markedly more robust to the relaxation strength than relaxing both marginals, which falls below balanced transport under stronger relaxation.

\item We develop a unified kinetic theory connecting stochastic dynamics to deterministic drifting, with balanced Sinkhorn drifting as a special case. We establish free-energy dissipation, reveal non-target stationary states inherent to the drifting paradigm, and give sufficient conditions for reaching the target.

\item On ImageNet-256, relaxation improves balanced transport \citep{han2026one} at B/2 under identical settings (1.53 $\rightarrow$ 1.46 FID), and scaling one fixed recipe yields 1.22 FID at XL/2, a new one-step state of the art among models trained from scratch (Figure~\ref{fig:benchmark_samples}). %On ImageNet-256, UOT-GF achieves one-step FIDs of $1.22$, $1.34$, and $1.46$ at XL, L, and B scales, respectively, outperforming balanced Sinkhorn drifting \citep{han2026one} at every scale and setting a new state of the art among one-step generators (Figure~\ref{fig:benchmark_samples}).
\end{itemize}

% Overall, our contributions are summarized as follows:
% \begin{itemize}
%     \item We establish a kinetic variational foundation for UOT-GF by constructing nonlinear VFP dynamics, proving free-energy
%     dissipation, and recovering Drifting model through an
%     $\mathcal{O}(\sqrt{m})$ overdamped limit followed by a zero-temperature
%     limit.
%     \item We characterize the long-time behavior of UOT-GF by identifying non-target stationary states that obstruct global convergence and proving conditional $\mathcal{O}(t^{-1})$ energy decay together with Wasserstein distance $\mathcal W_2$ convergence.
%     \item Extensive experiments demonstrate that UOT-GF achieves state-of-the-art one-step generation performance on ImageNet-256, obtaining 1.22 FID at XL scale, 1.34 at L scale and 1.46 at B scale, significantly improving the Drifting model.
% \end{itemize}

\begin{figure}[t]
    \centering
    \captionsetup{font=small, skip=2pt}
    \includegraphics[width=\linewidth]{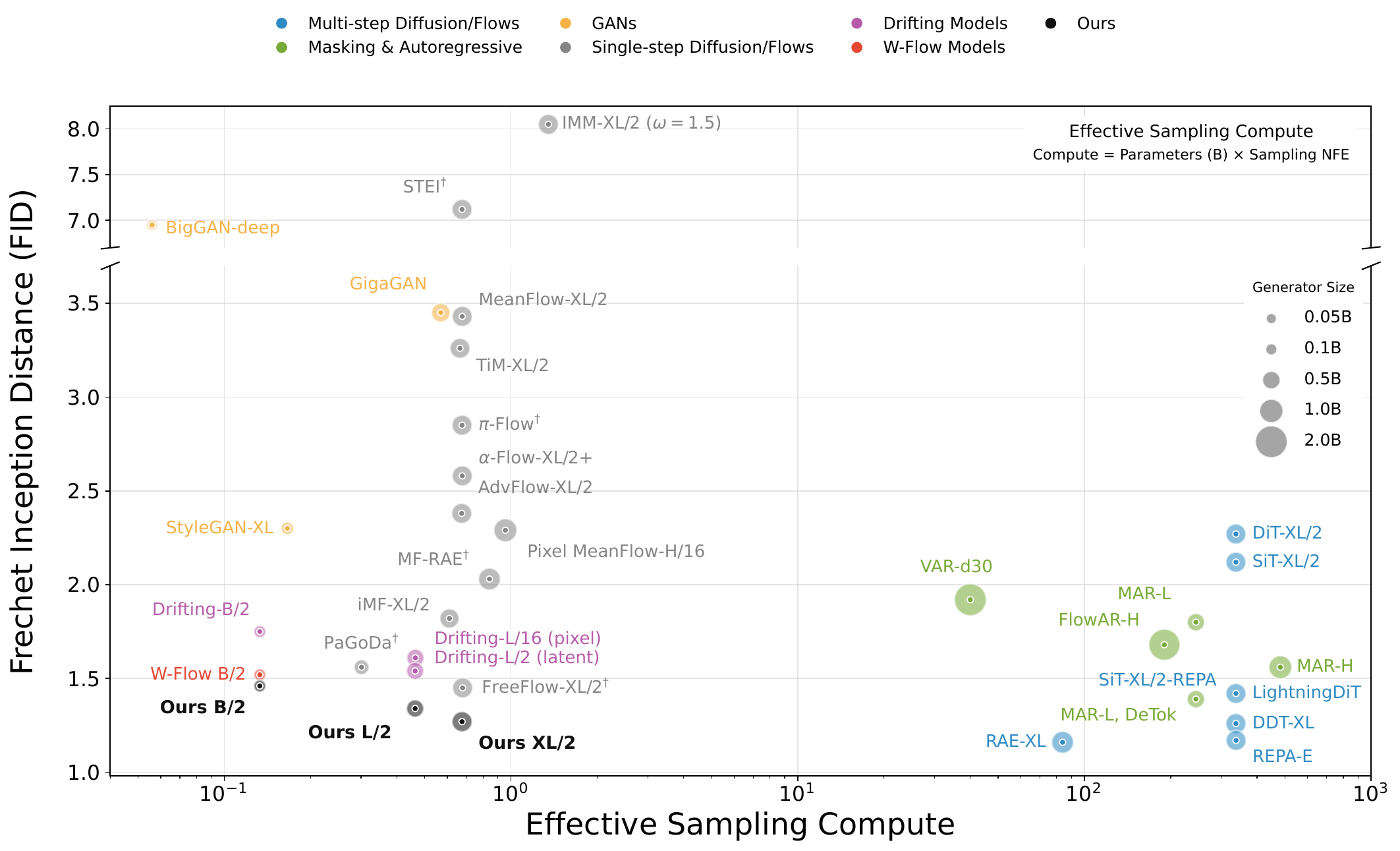}
    \caption{Fréchet Inception Distance (FID) versus effective sampling compute (parameters $\times$ NFE \citep{3618408.3619743}) on ImageNet-256. 1-NFE samples are shown in Appendix~\ref{app:samples}.}
    \label{fig:benchmark_samples}
    \vspace{-4mm}
\end{figure}

\begin{figure}[t]
    \centering
    \begin{subfigure}[b]{0.48\linewidth}
        \centering
        \includegraphics[width=\linewidth]{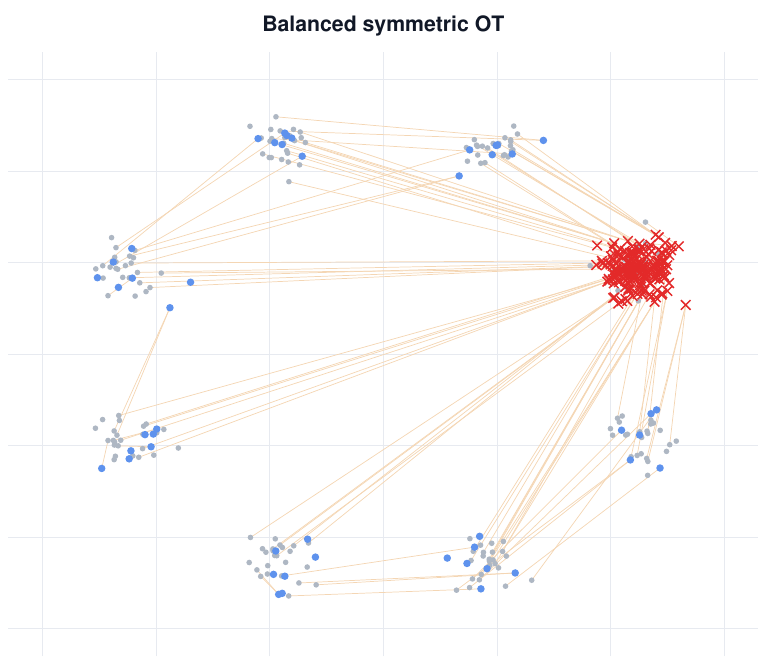}
        \caption{Balanced OT}
        \label{fig:uot_toy_a}
    \end{subfigure}\hfill
    \begin{subfigure}[b]{0.48\linewidth}
        \centering
        \includegraphics[width=\linewidth]{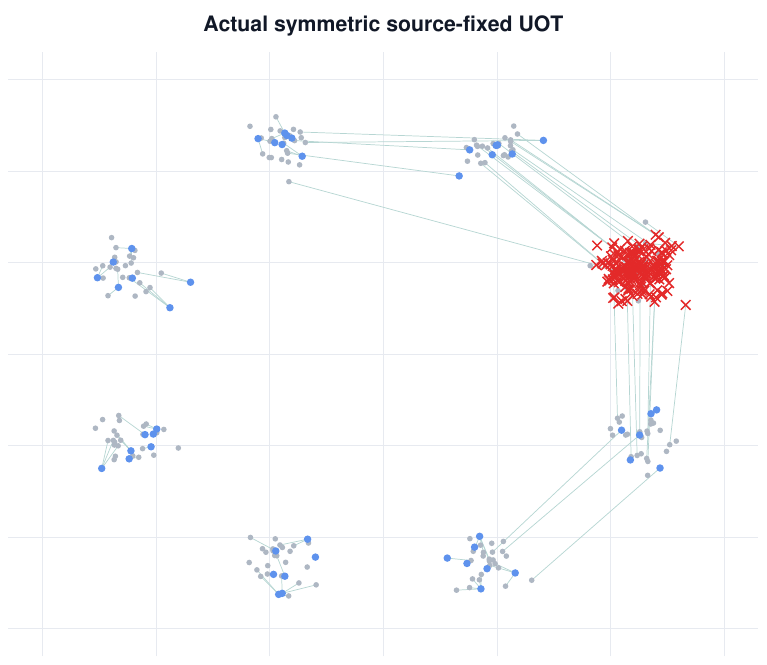}
        \caption{Symmetrized source-fixed UOT}
        \label{fig:uot_toy_b}
    \end{subfigure}
    \vspace{-6pt}
    \caption{\textbf{Oversampled mode in a minibatch.} One mode supplies 80\% of the targets (red crosses). Balanced OT (a) must deliver full mass to them, pulling sources from distant modes; symmetrized source-fixed UOT (b) reweights them and keeps most couplings local. Gray: sources; blue: remaining targets. More examples in Appendix~\ref{app:toy}.}
    \label{fig:uot_toy_examples}
    \vspace{-0.5cm}
\end{figure}

% \begin{figure*}[t]
%     \centering

%     % First row
%     \includegraphics[width=0.495\linewidth]
%     {figs/01_wrong_manifold_balanced_vs_uot.pdf}%
%     \hfill
%     \includegraphics[width=0.495\linewidth]
%     {figs/02_false_island_balanced_vs_uot.pdf}

%     % Second row
%     \includegraphics[width=0.495\linewidth]
%     {figs/03_cluster_proportion_mismatch.pdf}%
%     \hfill
%     \includegraphics[width=0.495\linewidth]
%     {figs/04_minibatch_duplication_burst.pdf}
%     \vspace{-10pt}
%     \caption{
% \textbf{Source-fixed UOT under target perturbations.}
% Balanced OT and symmetrized source-fixed UOT are compared on
% wrong-manifold, spurious-island, mode-imbalance, and minibatch-duplication examples. Gray dots denote sources; blue dots denote nominal targets; teal crosses mark imbalanced targets, and red crosses mark
% corrupted or duplicated targets. Each UOT solve fixes its source marginal, while the combined interaction permits reweighting of both effective marginals.
% }
% %     \caption{Balanced (left of each pair) versus source-fixed unbalanced transport (right)
% % on synthetic targets. Top left: target mass on a wrong manifold. Top right: a spurious
% % target island. Bottom left: equal-weight source against an 80/10/10 target. Bottom
% % right: duplicated mini-batch samples. In each case the hard marginal constraint
% % forces couplings that relaxing the target marginal avoids.}
%     \label{fig:uot_toy_examples}
%     \vspace{-1mm}
% \end{figure*}

\section{Preliminary}

Generative modeling can be formulated as learning a transport map
$f:\mathbb R^{d_0}\to\mathbb R^d$ that pushes a simple reference distribution
$p_{\mathrm{ref}}\in\mathcal P(\mathbb R^{d_0})$ toward the data distribution
$p:=p_{\mathrm{data}}\in\mathcal P(\mathbb R^d)$. Writing $f_\#$ for the pushforward
operator, the goal is to learn $f$ such that $f_\#p_{\mathrm{ref}}\approx p$.
Generative models parameterize $f$ by a neural network
$f_{\boldsymbol\theta}:\mathbb R^{d_0}\to\mathbb R^d$ and define the induced model
distribution $q_{\boldsymbol\theta}:=(f_{\boldsymbol\theta})_\#p_{\mathrm{ref}}$, so
that sampling $\mathbf z\sim p_{\mathrm{ref}}$ and computing
$\mathbf x=f_{\boldsymbol\theta}(\mathbf z)$ yields $\mathbf x\sim q_{\boldsymbol\theta}$.
Training seeks to minimize the discrepancy between $q_{\boldsymbol\theta}$ and $p$.

\begin{figure}[t]
    \centering
    \includegraphics[width=\linewidth]{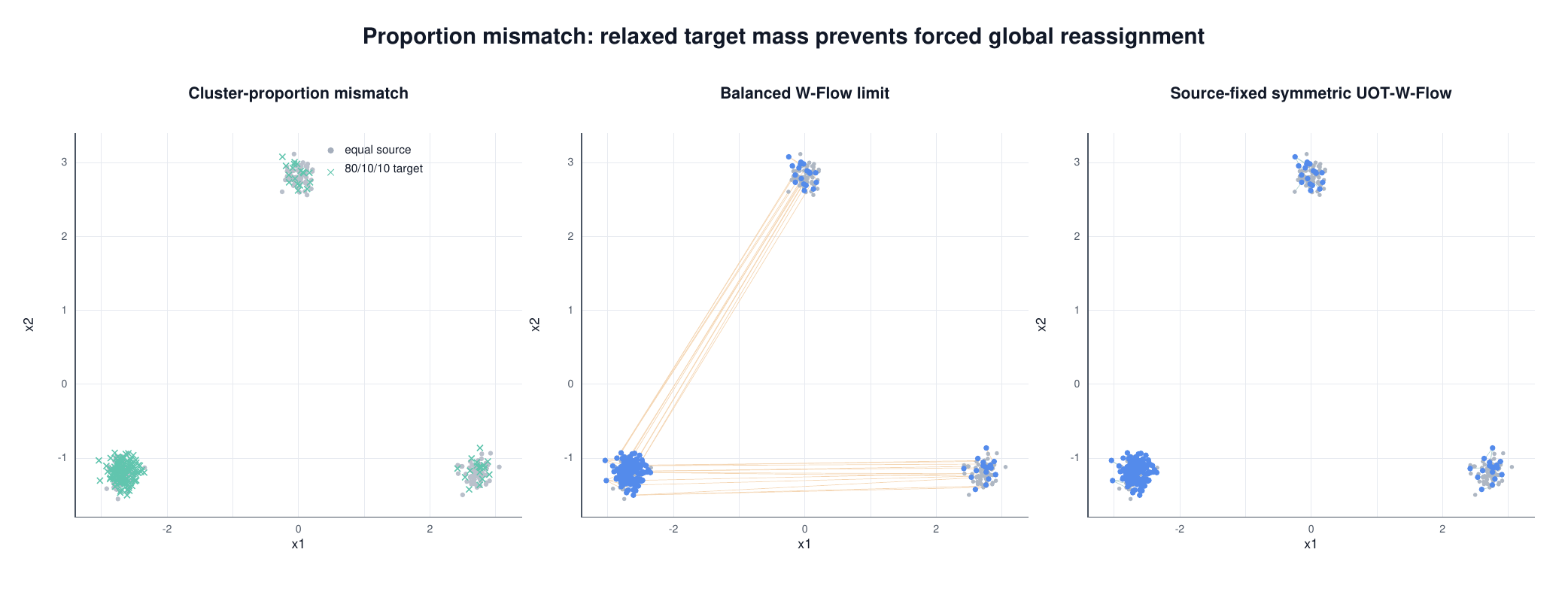}
    \caption{\textbf{Mode imbalance.} Equal-weight sources (gray dots) against an
    80/10/10 target (teal crosses), left. Balanced OT (middle) reassigns source mass
    across clusters to match the proportions; symmetrized source-fixed UOT (right)
    keeps sources on nearby clusters.}
    \label{fig:toy_imbalance}
\end{figure}

\subsection{Wasserstein Gradient Flow}\label{sec:wgf}

Beyond the static pushforward formulation above, a generative model can be viewed at the distributional level as a dynamical system that continuously evolves the model distribution toward $p$. Let $\mathcal{F}:\mathcal{P}(\mathbb{R}^d)\to\mathbb{R}$ be an energy functional and write $\frac{\delta\mathcal{F}}{\delta q}(q)$ for its first variation at $q$ (Definition~\ref{def:first_variation}), and let $\mathbf v_q := -\nabla\frac{\delta\mathcal F}{\delta q}(q)$ denote the induced velocity field. A curve of densities $\{q_t\}_{t\ge0}$ is a \emph{Wasserstein gradient flow}
(WGF) of $\mathcal F$ if it solves the continuity equation
\begin{equation}\label{WGF}
    \partial_t q_t = -\nabla\cdot(q_t\,\mathbf v_{q_t}),
\end{equation}
that is, mass is transported along the steepest-descent direction of $\mathcal{F}$ in the $2$-Wasserstein geometry. Along any sufficiently regular solution the energy dissipates as
\begin{equation}\label{dissipation}
    \frac{d}{dt}\mathcal{F}[q_t] = -\mathbb{E}_{q_t}\big[\|\mathbf v_{q_t}\|^2\big] \le 0.
\end{equation}
To drive the flow toward $p$, one takes $\mathcal{F}[q] = \mathsf{D}(q\,\|\,p)$ for a nonnegative, definite divergence $\mathsf{D}$, i.e., $\mathsf{D}(q\|p)\ge0$ with equality exactly when $q=p$, so that $p$ is the unique global minimizer and $\mathbf{v}_p = \mathbf{0}$. \Eqref{dissipation} guarantees that
$\mathcal{F}[q_t]$ is non-increasing, but not that it reaches zero: Wasserstein stationarity requires $\mathbf v_{q_t}=\mathbf 0$ only on $\mathrm{supp}(q_t)$, and non-target stationary states can exist. We return to this in Section~\ref{sec:longtime}.

The framework recovers familiar dynamics. For $\mathcal{F} = \mathrm{KL}(\cdot\,\|\,p)$, \eqref{WGF} is the Fokker--Planck equation $\partial_t q_t = -\nabla\cdot(q_t\nabla\log p) + \Delta q_t$, whose drift is the score of $p$; score-based models learn the scores of a time-indexed family of perturbed targets and integrate this equation numerically at inference time \citep{NEURIPS2019_3001ef25,ddpm,song2021scorebased,3618408.3619743}.

\subsection{Drifting Models}

Unlike score-based diffusion and ODE-based flow models, which numerically integrate
the sample dynamics over multiple steps at inference time, drifting models amortize
this transport into training: the generator $f_{\boldsymbol\theta}$ is updated so that
$q_{\boldsymbol\theta}=(f_{\boldsymbol\theta})_{\#}p_{\mathrm{ref}}$ approaches $p$, and
sampling afterwards requires a single evaluation of $f_{\hat{\boldsymbol\theta}}$.

% Let $f_i := f_{\boldsymbol\theta_i}$ and $q_i := (f_i)_{\#}p_{\mathrm{ref}}$ denote the
% generator and its induced distribution at iteration $i$. For a fixed
% $\boldsymbol z\sim p_{\mathrm{ref}}$, updating $\boldsymbol\theta_i$ to
% $\boldsymbol\theta_{i+1}$ displaces the generated sample by
% $\Delta\boldsymbol x_i := f_{i+1}(\boldsymbol z)-f_i(\boldsymbol z)$. Drifting models
% train this displacement to follow a prescribed velocity field $\mathbf v_{q_i}$ that
% depends on the current model distribution:
% \begin{equation}\label{eq:drift-update}
%     f_{i+1}(\boldsymbol z) \approx f_i(\boldsymbol z)
%     + h_i\,\mathbf v_{q_i}\big(f_i(\boldsymbol z)\big),
% \end{equation}
% with step size $h_i>0$. 

Let $f_n:=f_{\boldsymbol\theta_n}$ and $q_n:=(f_n)_\#p_{\mathrm{ref}}$ denote the
generator and its induced distribution at iteration $n$. For a fixed
$\mathbf z\sim p_{\mathrm{ref}}$, updating $\boldsymbol\theta_n$ to
$\boldsymbol\theta_{n+1}$ displaces the generated sample by
$\Delta\mathbf x_n:=f_{n+1}(\mathbf z)-f_n(\mathbf z)$. Drifting models train this
displacement to follow a prescribed velocity field $\mathbf v_{q_n}$ that depends on
the current model distribution:
\begin{equation}\label{eq:drift-update}
    f_{n+1}(\mathbf z)\approx f_n(\mathbf z)+h_n\,\mathbf v_{q_n}\big(f_n(\mathbf z)\big),
\end{equation}
with step size $h_n>0$. If the regression target is realized exactly,
\eqref{eq:drift-update} is an explicit Euler step of the mean-field dynamics
$\dot{\boldsymbol x}_t=\mathbf v_{q_t}(\boldsymbol x_t)$, and when $\mathbf v_q$ is the
WGF velocity \eqref{WGF} of an energy $\mathcal F$, the induced $q_t$ follows the
corresponding WGF. The idealized update is stationary
at an equilibrium $\mathbf v_{q_{\hat{\boldsymbol\theta}}}=\mathbf 0$ on
$\mathrm{supp}(q_{\hat{\boldsymbol\theta}})$, i.e., $f_{\hat{\boldsymbol\theta}}$ is a
fixed point of \eqref{eq:drift-update}; by Section~\ref{sec:longtime}, such an equilibrium
need not satisfy $q_{\hat{\boldsymbol\theta}}=p$. In practice $\mathbf v_{q_n}$ is estimated from
mini-batches of generated and real samples (Section~\ref{sec:impl}), so the field
actually followed is a finite-sample surrogate of \eqref{WGF}.

The choice of $\mathbf v_q$ is what distinguishes existing instances. The original
Drifting model \citep{deng2026generative} prescribes it heuristically through kernel-based
attraction toward $p$ and repulsion from $q$. W-Flow \citep{han2026one} instead takes $\mathcal F$ to be the debiased entropic Sinkhorn divergence $S_\varepsilon(q,p)$
\citep{Feydy2018InterpolatingBO}, whose WGF velocity is the difference of two barycentric
projections,
\begin{equation}\label{eq:wflow-velocity}
    \mathbf v_q(\mathbf x)
    = B_{q,p}(\mathbf x) - B_{q,q}(\mathbf x),
    \qquad
    B_{q,\alpha}(\mathbf x) := \mathbb E_{\pi^{q,\alpha}}[\mathbf y\mid\mathbf x],
\end{equation}
where $\pi^{q,\alpha}$ is the optimal plan of \emph{balanced} entropic transport between $q$ and $\alpha$, in which both marginals are matched exactly. This is the field we generalize: Section~\ref{sec:energy} replaces the balanced energy by an unbalanced one and derives the resulting force, and Section~\ref{sec:kinetic} recovers the mean-field dynamics underlying \eqref{eq:drift-update} as the overdamped, zero-temperature limit of a kinetic lifting.

\section{Theoretical analysis}

Several recent works have interpreted drifting models through Wasserstein gradient
flows \citep{he2026sinkhorn,caucheteux2026a,cao2026gradient,
gretton2026wasserstein,dumont2026learning,han2026one}. In particular, W-Flow
\citep{han2026one} formulates drifting as the gradient flow of the balanced
Sinkhorn divergence. Our method instead relaxes only the real-data marginal while
keeping the generated-sample marginal fixed. Because this asymmetric transport
changes both the energy and the induced force field, the existing balanced-flow
analysis does not directly apply.

To accommodate this setting, we first derive the transport force induced by a
generalized UOT energy and show that balanced W-Flow is recovered in the
hard-constraint limit (Section~\ref{sec:energy}). We then develop a nonlinear
Vlasov--Fokker--Planck formulation, establish free-energy dissipation, and recover
the first-order drifting dynamics through overdamped and zero-temperature limits
(Section~\ref{sec:kinetic}). Finally, we analyze long-time behavior, identifying
possible non-target stationary states and giving sufficient conditions for
convergence to the target distribution (Section~\ref{sec:longtime}).

\subsection{UOT--Sinkhorn Energy and Its First Variation}
\label{sec:energy}

Let $p,q\in\mathcal P(\Omega)$, $\Omega\subset\mathbb R^d$, and $c(\mathbf x,\mathbf y)=\tfrac12\|\mathbf x-\mathbf y\|^2$. A marginal penalty is generated either by the convex indicator $\iota$ ($\iota(1)=0$, $+\infty$ otherwise), which enforces a marginal exactly, or by a strictly convex $\varphi\in C^2((0,\infty))$ with $\varphi(1)=0$. 
For generators $(\varphi_1,\varphi_2)$ and $\rho,\varepsilon>0$,
\begin{equation}\label{eq:uot}
\mathrm{UOT}^{\varphi_1,\varphi_2}(\alpha_1,\alpha_2)
:=\inf_{\pi\in\mathcal M_+(\Omega^2)}
\int c\,d\pi+\rho D_{\varphi_1}(\pi_1\Vert\alpha_1)
+\rho D_{\varphi_2}(\pi_2\Vert\alpha_2)
+\varepsilon\,\mathrm{KL}(\pi\Vert\alpha_1\otimes\alpha_2),
\end{equation}
with $D_\varphi$ the Csisz\'ar divergence (Definition~\ref{def:phi_divergence}) and
$\tau:=\rho/(\rho+\varepsilon)$ the relaxation strength ($\tau\to1$ is the hard
constraint). 
Three members matter: balanced $(\iota,\iota)$, underlying W-Flow
\citep{han2026one}; symmetric $(\varphi,\varphi)$ \citep{Sjourn2019SinkhornDF}; and
source-fixed $(\iota,\mathrm{KL})$, used here (Figure~\ref{fig:uot_toy_examples}).
Let
$\mathcal A(\alpha_1,\alpha_2):=\tfrac12[\mathrm{UOT}^{\varphi_1,\varphi_2}(\alpha_1,\alpha_2)
+\mathrm{UOT}^{\varphi_1,\varphi_2}(\alpha_2,\alpha_1)]$ and define the
configurational energy
$\mathcal E_p(q):=\mathcal A(q,p)-\tfrac12\mathcal A(q,q)-\tfrac12\mathcal A(p,p)$
(Definition~\ref{def:configurational_energy}), which coincides on $\mathcal P(\Omega)$
with the unbalanced Sinkhorn divergence of \citet{Sjourn2019SinkhornDF} for
$(\varphi,\varphi)$ and with $S_\varepsilon(q,p)$
\citep{Feydy2018InterpolatingBO,han2026one} for $(\iota,\iota)$. Let
$\Phi_q:=\frac{\delta\mathcal E_p}{\delta q}(q)$ and $F_p[q]:=-\nabla\Phi_q$. By
construction $\mathcal E_p(p)=0$ and $F_p[p]=\mathbf 0$ for every choice of
generators.\footnote{The asymmetric divergence of
\citet[\S4.7.1]{Sjourn2019SinkhornDF} debiases with symmetric self-terms, which gives
positivity but $F_p[p]\neq\mathbf 0$. A drifting energy needs the target stationary,
so we debias with the same asymmetric cost.}

% Let $\mathcal A(\alpha_1,\alpha_2):=\tfrac12[\mathrm{UOT}^{\varphi_1,\varphi_2}(\alpha_1,\alpha_2) +\mathrm{UOT}^{\varphi_1,\varphi_2}(\alpha_2,\alpha_1)]$ and which coincides on $\mathcal P(\Omega)$ with the unbalanced Sinkhorn divergence of \citet{Sjourn2019SinkhornDF} for $(\varphi,\varphi)$ and with $S_\varepsilon(q,p)$ \citep{Feydy2018InterpolatingBO,han2026one} for $(\iota,\iota)$. Let $\mathcal E_p(q) :=\mathcal A(q,p)-\tfrac12\mathcal A(q,q)-\tfrac12\mathcal A(p,p)$, $\Phi_q:=\frac{\delta\mathcal E_p}{\delta q}(q)$ and $F_p[q]:=-\nabla\Phi_q$. By construction $\mathcal E_p(p)=0$ and $F_p[p]=\mathbf 0$ for every choice of generators.\footnote{The asymmetric divergence of \citet[\S4.7.1]{Sjourn2019SinkhornDF} debiases with symmetric self-terms, which gives
% positivity but $F_p[p]\neq\mathbf 0$. A drifting energy needs the target stationary,
% so we debias with the same asymmetric cost.}
\begin{proposition}[Force]\label{prop:uot_force}
Under Assumption~\ref{ass:uot_regularity}, assume that all optimal
plans appearing below are unique.
For a pair $(\alpha_1,\alpha_2)$ with optimal plan $\pi$ of
$\mathrm{UOT}^{\varphi_1,\varphi_2}(\alpha_1,\alpha_2)$ and a slot
$k\in\{1,2\}$, let
$r^{(k)}_{\alpha_1,\alpha_2}:=d\pi_k/d\alpha_k$ and
$B^{(k)}_{\alpha_1,\alpha_2}(\mathbf x)
:=\mathbb E_\pi[\mathbf x_{-k}\mid\mathbf x_k=\mathbf x]$,
with $r^{(k)}\equiv1$ if $\varphi_k=\iota$ and $r^{(k)}$
positive and $C^1$ otherwise, and define the force on the
$k$-th marginal
\begin{equation}\label{eq:slot_force}
F^{(k)}_{\alpha_1,\alpha_2}(\mathbf x)
:=
r^{(k)}_{\alpha_1,\alpha_2}(\mathbf x)
\bigl(B^{(k)}_{\alpha_1,\alpha_2}(\mathbf x)-\mathbf x\bigr).
\end{equation}
Then $F^{(k)}_{\alpha_1,\alpha_2}$ is the negative spatial gradient
of the first variation of
$\mathrm{UOT}^{\varphi_1,\varphi_2}(\alpha_1,\alpha_2)$
with respect to $\alpha_k$, and
\begin{equation}\label{eq:force_general}
F_p[q]
=
\tfrac12\bigl[F^{(1)}_{q,p}+F^{(2)}_{p,q}\bigr]
-\tfrac12\bigl[F^{(1)}_{q,q}+F^{(2)}_{q,q}\bigr].
\end{equation}
For $(\varphi,\varphi)$ this is
$r_{q,p}(B_{q,p}-\mathbf x)-r_{q,q}(B_{q,q}-\mathbf x)$;
for $(\iota,\iota)$ it is $B_{q,p}-B_{q,q}$, the W-Flow field
\eqref{eq:wflow-velocity}; for $(\iota,\mathrm{KL})$,
$r^{(1)}\equiv1$, $r^{(2)}>0$, and
$\int_\Omega r^{(2)}\,d\alpha_2=1$.
All force identities hold almost everywhere with respect to
the corresponding input measure.
\end{proposition}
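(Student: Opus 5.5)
The plan is to go through the dual of \eqref{eq:uot} and an envelope (Danskin-type) argument. Under Assumption~\ref{ass:uot_regularity}, the entropic UOT problem admits dual potentials $(f,g)$. The optimal plan then has the Gibbs form
\[
d\pi(\mathbf x,\mathbf y)=\exp\!\bigl((f(\mathbf x)+g(\mathbf y)-c(\mathbf x,\mathbf y))/\varepsilon\bigr)\,d\alpha_1(\mathbf x)\,d\alpha_2(\mathbf y),
\]
and the dual objective is
\[
\int -\varphi_1^*(-f)\,d\alpha_1+\int -\varphi_2^*(-g)\,d\alpha_2-\varepsilon\int\bigl(e^{(f\oplus g-c)/\varepsilon}-1\bigr)\,d\alpha_1 d\alpha_2 .
\]
Here the conjugate terms are scaled by $\rho$, and for $\iota$ the conjugate term reduces to the linear term $\int f\,d\alpha_1$. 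Uniqueness of the optimal plan and the envelope theorem applied to this concave dual give the first variation with respect to $\alpha_k$. For slot $1$ it is
\[
\Psi_1=-\rho\varphi_1^*(-f/\rho)-\varepsilon\int e^{(f+g-c)/\varepsilon}\,d\alpha_2+\varepsilon,
\]
and slot $2$ is symmetric. When $\varphi_1=\iota$ this becomes $f$ up to a constant, because the first Schrödinger marginal equation forces $\int e^{(f+g-c)/\varepsilon}\,d\alpha_2\equiv 1$.

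Next I compute $-\nabla\Psi_k$. The Gibbs form gives $r^{(1)}(\mathbf x)=\int e^{(f(\mathbf x)+g-c(\mathbf x,\cdot))/\varepsilon}\,d\alpha_2$. The first-order optimality condition relating $r^{(1)}$ and $f$ reads $-f=\rho\,\varphi_1'(r^{(1)})$, or equivalently $(\varphi_1^*)'(-f/\rho)=r^{(1)}$. I differentiate under the integral, which is justified by the regularity and boundedness in Assumption~\ref{ass:uot_regularity}. The terms involving $\nabla f$ then combine into
\[
\nabla f\,\bigl[(\varphi_1^*)'(-f/\rho)-r^{(1)}\bigr],
\]
which vanishes by the optimality condition. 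What remains is
\[
-\int \nabla_{\mathbf x}c(\mathbf x,\mathbf y)\,e^{(\cdots)/\varepsilon}\,d\alpha_2(\mathbf y)=-r^{(1)}(\mathbf x)\bigl(\mathbf x-B^{(1)}(\mathbf x)\bigr),
\]
so $-\nabla\Psi_1=F^{(1)}$, which is \eqref{eq:slot_force}. In the balanced case the same computation runs on $\nabla f$ directly and gives $r\equiv1$. All identities hold only $\alpha_k$-a.e., because the potentials are defined up to null sets. I expect the main obstacle to be this step: justifying the envelope formula for the first variation. That requires stability of the dual potentials under perturbations $\alpha_k+t\chi$ and differentiability of $t\mapsto\mathrm{UOT}$. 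I would first try to cite the standard argument for entropic OT, which is compactness of $\Omega$ together with uniform Lipschitz bounds on the potentials, and adapt it to the strictly convex $\varphi$ using the $C^2$ assumption.

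To obtain \eqref{eq:force_general}, I differentiate $\mathcal E_p(q)$ term by term. The cross term $\mathcal A(q,p)$ contributes $\tfrac12[\Psi^{(1)}_{q,p}+\Psi^{(2)}_{p,q}]$, since $q$ sits in slot $1$ of the first UOT and in slot $2$ of the swapped one. The self term $-\tfrac12\mathcal A(q,q)$ has $q$ in both slots of both summands, and the two summands are equal. By the chain rule its derivative is $-\tfrac12[\Psi^{(1)}_{q,q}+\Psi^{(2)}_{q,q}]$. The term $\mathcal A(p,p)$ is constant in $q$. Taking $-\nabla$ of these pieces gives \eqref{eq:force_general}.

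The special cases follow by substitution.
\begin{itemize}
\item For $(\varphi,\varphi)$, the transpose symmetry of the symmetric problem gives $F^{(2)}_{p,q}=F^{(1)}_{q,p}$, which yields $r_{q,p}(B_{q,p}-\mathbf x)-r_{q,q}(B_{q,q}-\mathbf x)$.
\item For $(\iota,\iota)$, $r\equiv1$ and the $\mathbf x$ terms cancel, leaving $B_{q,p}-B_{q,q}$, the W-Flow field \eqref{eq:wflow-velocity}.
\item For $(\iota,\mathrm{KL})$, $\pi_1=\alpha_1$ has total mass $1$. Hence $\pi_2(\Omega)=1$, which gives $\int r^{(2)}\,d\alpha_2=1$. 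Positivity of $r^{(2)}$ comes from the exponential Gibbs form.
\end{itemize}
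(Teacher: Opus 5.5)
Your argument is correct in outline, but it takes a genuinely different route from the paper.

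\textbf{The paper's route.} The paper never uses duality. It pushes the optimal plan forward by $(\mathcal T_t,\mathrm{id})$ with $\mathcal T_t=\mathrm{id}+t\boldsymbol\xi$. Both marginal penalties (hard or soft) and the entropic term are invariant under simultaneous pushforward of plan and reference measure. Hence $\pi_t$ is an admissible competitor for $\mathcal U((\mathcal T_t)_\#\alpha_1,\alpha_2)$ that changes only the quadratic cost. This gives a one-sided bound with slope $\int(\mathbf x-\mathbf y)\cdot\boldsymbol\xi\,d\pi$. The differentiability supplied by the transport chain rule in Assumption~\ref{ass:uot_regularity} upgrades the bound to an equality, and disintegration then yields $-\nabla\Psi^{(k)}=F^{(k)}$ $\alpha_k$-a.e. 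That proof is short and treats $\iota$ and $\varphi$ slots uniformly. It uses neither dual potentials nor the $C^1$ hypothesis on $r^{(k)}$; positivity of $r^{(2)}$ is simply assumed there.

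\textbf{What your route buys.} Your Gibbs-form/envelope computation gives more information:
\begin{itemize}
\item explicit first variations: the potential itself for an $\iota$ slot, and $-(\rho+\varepsilon)(e^{-f/\rho}-1)$ for a KL slot with potential $f$;
\item a pointwise computation of $\nabla\Psi^{(k)}$, in which the $\nabla f$ terms cancel by first-order optimality;
\item a role for the $C^1$ hypothesis, which makes $f=-\rho\varphi_1'(r^{(1)})$ differentiable;
\item positivity of $r^{(2)}$ as a consequence rather than an assumption.
\end{itemize}

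\textbf{What it costs.} You need structure that Assumption~\ref{ass:uot_regularity} does not grant:
\begin{itemize}
\item attainment of the dual problem;
\item an extension of the potentials off $\mathrm{supp}\,\alpha_k$ (e.g.\ through the optimality equations) before $\nabla f$ even makes sense; your remark that potentials are defined only up to null sets glosses over this;
\item differentiation under the integral sign.
\end{itemize}
All of this is routine on compact $\Omega$. It requires extra work on $\Omega=\mathbb R^d$, which is where Section~\ref{sec:kinetic} uses the force.

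\textbf{Two small remarks.} First, the obstacle you flag needs no stability of potentials. Freeze the optimal (extended) potentials. Weak duality and the affine dependence of the dual objective on $\alpha_1$ then make the dual integrand $\psi_0$ a subgradient of $\alpha_1\mapsto\mathcal U(\alpha_1,\alpha_2)$. The differentiability in Assumption~\ref{ass:uot_regularity} therefore forces $\Psi^{(1)}-\psi_0\ge\int(\Psi^{(1)}-\psi_0)\,d\alpha_1$ on $\Omega$, with equality on $\mathrm{supp}\,\alpha_1$. Hence $\nabla\Psi^{(1)}=\nabla\psi_0$ at interior points of the support. This is the dual mirror of the paper's primal competitor trick.

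Second, in the $(\varphi,\varphi)$ case you also need $F^{(2)}_{q,q}=F^{(1)}_{q,q}$. It follows from the same swap symmetry together with uniqueness of the optimal plans.
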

% \begin{proposition}[Force]\label{prop:uot_force}
% For a pair $(\alpha_1,\alpha_2)$ with unique optimal plan $\pi$ of
% $\mathrm{UOT}^{\varphi_1,\varphi_2}(\alpha_1,\alpha_2)$ and a slot $k\in\{1,2\}$, let
% $r^{(k)}_{\alpha_1,\alpha_2}:=d\pi_k/d\alpha_k$ and
% $B^{(k)}_{\alpha_1,\alpha_2}(\mathbf x):=\mathbb E_\pi[\mathbf x_{-k}\mid\mathbf x_k=\mathbf x]$,
% with $r^{(k)}\equiv1$ if $\varphi_k=\iota$ and $r^{(k)}$ positive and $C^1$ otherwise,
% and define the force on the $k$-th marginal
% \begin{equation}\label{eq:slot_force}
% F^{(k)}_{\alpha_1,\alpha_2}(\mathbf x)
% :=r^{(k)}_{\alpha_1,\alpha_2}(\mathbf x)\big(B^{(k)}_{\alpha_1,\alpha_2}(\mathbf x)-\mathbf x\big).
% \end{equation}
% Then $F^{(k)}_{\alpha_1,\alpha_2}$ is the negative gradient of the first variation of
% $\mathrm{UOT}^{\varphi_1,\varphi_2}(\alpha_1,\alpha_2)$ with respect to $\alpha_k$, and
% \begin{equation}\label{eq:force_general}
% F_p[q]
% =\tfrac12\big[F^{(1)}_{q,p}+F^{(2)}_{p,q}\big]
% -\tfrac12\big[F^{(1)}_{q,q}+F^{(2)}_{q,q}\big].
% \end{equation}
% For $(\varphi,\varphi)$ this is $r_{q,p}(B_{q,p}-\mathbf x)-r_{q,q}(B_{q,q}-\mathbf x)$;
% for $(\iota,\iota)$ it is $B_{q,p}-B_{q,q}$, the W-Flow field
% \eqref{eq:wflow-velocity}; for $(\iota,\mathrm{KL})$, $r^{(1)}\equiv1$ and
% $r^{(2)}\le1$.
% \end{proposition}

Averaging the forward and transposed reverse source-fixed plans gives an effective coupling $\bar\pi$ with $\bar\pi_1\ge q/2$ and $\bar\pi_2\ge p/2$: marginals may be reweighted but keep at least half their mass, with no guarantee of nonzero force or target coverage (Appendix~\ref{app:velocity}).
On ImageNet, forward-plan reweighting is mild (0.92--1.03) yet stable
(Figure~\ref{fig:source_fixed_uot_mass}).%In the source-fixed case the slot where $q$ is the source carries unit weight, so every particle receives the full attraction $B^{(1)}_{q,p}-\mathbf x$, while the slot where $q$ is the relaxed target may be attenuated where the plan declines mass.

\subsection{Kinetic Lifting and the Drifting Flow}
\label{sec:kinetic}

Proposition~\ref{prop:uot_force} turns the static transport problem into a
self-consistent force field. We lift the positional dynamics to phase space, taking
$\Omega=\mathbb R^d$ here, and then eliminate momentum to recover the first-order flow
of Section~\ref{sec:wgf}. Let $\mu_t^{m,\beta}(\mathbf x,\mathbf u)$ be the joint
density of position and momentum with positional marginal
$q_t^{m,\beta}:=\int\mu_t^{m,\beta}\,d\mathbf u$. For inertia $m>0$, friction
$\gamma>0$, and inverse temperature $\beta>0$, consider the nonlinear
Vlasov--Fokker--Planck equation
\begin{equation}\label{eq:uot_vfp}
\partial_t\mu_t^{m,\beta}
+\nabla_{\mathbf x}\cdot\Big(\frac{\mathbf u}{m}\mu_t^{m,\beta}\Big)
+\nabla_{\mathbf u}\cdot\Big(F_p[q_t^{m,\beta}]\,\mu_t^{m,\beta}\Big)
=\frac{\gamma}{m}\nabla_{\mathbf u}\cdot\big(\mathbf u\,\mu_t^{m,\beta}\big)
+\frac{\gamma}{\beta}\Delta_{\mathbf u}\mu_t^{m,\beta},
\end{equation}
whose left-hand side is conservative transport driven by the UOT force and whose
right-hand side is friction and thermal diffusion in momentum. The lifting is
variational: with $q_\mu:=\int\mu\,d\mathbf u$, the free energy
$\mathcal H_{m,\beta}(\mu):=\mathcal E_p(q_\mu)+\int\frac{\|\mathbf u\|^2}{2m}\mu
+\frac1\beta\int\mu\log\mu$ is a Lyapunov functional.

\begin{proposition}[Free-energy dissipation]\label{prop:vfp_dissipation}
Under Assumption~\ref{ass:kinetic}, every sufficiently regular solution of
\eqref{eq:uot_vfp} satisfies
$\frac{d}{dt}\mathcal H_{m,\beta}(\mu_t^{m,\beta})
=-\gamma\int\mu_t^{m,\beta}\big\|\frac{\mathbf u}{m}
+\frac1\beta\nabla_{\mathbf u}\log\mu_t^{m,\beta}\big\|^2\le0$.
\end{proposition}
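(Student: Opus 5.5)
We differentiate each of the three terms of $\mathcal H_{m,\beta}$ along a regular solution of \eqref{eq:uot_vfp}, integrate by parts, and show that the conservative (Vlasov) parts cancel. What survives should be exactly minus the friction--diffusion dissipation. Write $\mu=\mu_t^{m,\beta}$ and $q=q_\mu$. We use the decomposition $\partial_t\mu=-\nabla_{\mathbf x}\cdot(\tfrac{\mathbf u}{m}\mu)-\nabla_{\mathbf u}\cdot(F_p[q]\mu)+\gamma\nabla_{\mathbf u}\cdot\big(\mu(\tfrac{\mathbf u}{m}+\tfrac1\beta\nabla_{\mathbf u}\log\mu)\big)$. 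This holds because $\frac{\gamma}{\beta}\Delta_{\mathbf u}\mu=\frac\gamma\beta\nabla_{\mathbf u}\cdot(\mu\nabla_{\mathbf u}\log\mu)$.

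\textbf{Step 1 (potential term).} Integrating \eqref{eq:uot_vfp} in $\mathbf u$ kills the $\nabla_{\mathbf u}$-divergences, given enough decay at infinity (part of Assumption~\ref{ass:kinetic}). This leaves $\partial_t q=-\nabla_{\mathbf x}\cdot j$ with $j=\int\frac{\mathbf u}{m}\mu\,d\mathbf u$. By the chain rule for the first variation (Definition~\ref{def:first_variation}), $\frac{d}{dt}\mathcal E_p(q_t)=\int\Phi_q\,\partial_tq=\int\nabla\Phi_q\cdot j=-\iint F_p[q]\cdot\frac{\mathbf u}{m}\mu$. Here $F_p[q]=-\nabla\Phi_q$ is exactly as identified in Proposition~\ref{prop:uot_force}, which we take as given.

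\textbf{Step 2 (kinetic term).} With test function $\frac{\|\mathbf u\|^2}{2m}$, whose $\mathbf u$-gradient is $\frac{\mathbf u}{m}$ and which is independent of $\mathbf x$, the $\nabla_{\mathbf x}$ transport contributes zero. The force term contributes $+\iint F_p[q]\cdot\frac{\mathbf u}{m}\mu$, which cancels Step 1; this is the key Hamiltonian cancellation. The friction--diffusion term contributes $-\gamma\iint\frac{\mathbf u}{m}\cdot(\frac{\mathbf u}{m}+\frac1\beta\nabla_{\mathbf u}\log\mu)\mu$.

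\textbf{Step 3 (entropy).} We have $\frac{d}{dt}\frac1\beta\int\mu\log\mu=\frac1\beta\int(\log\mu+1)\partial_t\mu$. The transport term gives $\frac1\beta\int\frac{\mathbf u}{m}\cdot\nabla_{\mathbf x}\mu=0$, since $\mathbf u$ is divergence-free in $\mathbf x$. The force term gives $\frac1\beta\int F_p[q]\cdot\nabla_{\mathbf u}\mu=-\frac1\beta\int(\nabla_{\mathbf u}\cdot F_p[q])\mu=0$, because $F_p[q]$ depends only on $\mathbf x$. The dissipative term gives $-\gamma\iint\frac1\beta\nabla_{\mathbf u}\log\mu\cdot(\frac{\mathbf u}{m}+\frac1\beta\nabla_{\mathbf u}\log\mu)\mu$.

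Summing Steps 1--3, the surviving terms combine into the perfect square $-\gamma\int\mu\|\frac{\mathbf u}{m}+\frac1\beta\nabla_{\mathbf u}\log\mu\|^2\le0$, as claimed.

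The main obstacle is justifying Step 1 and the boundary terms rather than the algebra. Specifically, we must check that $t\mapsto\mathcal E_p(q_t)$ is differentiable with derivative $\int\Phi_{q_t}\partial_tq_t$. This requires the UOT potentials to be differentiable along the curve, with uniqueness of the optimal plans and $C^1$ regularity of $\Phi_q$ as supplied by Assumptions~\ref{ass:uot_regularity} and \ref{ass:kinetic}. It also requires growth bounds on $\nabla\Phi_q$: since $|F^{(k)}|\lesssim r^{(k)}(|B^{(k)}|+|\mathbf x|)$, these grow at most linearly. I would handle this by writing $\mathcal E_p(q_{t+h})-\mathcal E_p(q_t)$ through the dual (potential) formulation of each UOT term, sandwiching the difference between the potentials at $t$ and $t+h$, and passing to the limit using continuity of the potentials in $q$. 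I would then assume enough moment and decay of $\mu$ (finite second moments and Fisher information in $\mathbf u$) to discard all integration-by-parts boundary terms.
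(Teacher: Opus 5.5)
Your proposal is correct and follows essentially the same route as the paper: the potential and kinetic terms cancel through $F_p[q]=-\nabla\Phi_q$, the transport and force contributions to the entropy vanish, and the friction--diffusion terms complete the square. The paper bundles these three steps by testing the equation against $\Xi=\Phi_q+\frac{\|\mathbf u\|^2}{2m}+\frac1\beta(1+\log\mu)$ and discards boundary terms with a cutoff $\vartheta_R$, using the polynomial growth and decay bounds.

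The obstacle you single out is already a hypothesis: Assumption~\ref{ass:kinetic} posits the chain rule $\frac{d}{dt}\mathcal E_p(q_t)=\int\Phi_{q_t}\partial_tq_t$ and the bounds $|\Phi_q|+\|\nabla\Phi_q\|\le C(1+\|\mathbf x\|)^\nu$. So your dual-potential sandwich argument is unnecessary. Your claim that $\nabla\Phi_q$ grows at most linearly would need $r^{(k)}$ and $B^{(k)}$ to be bounded, which is not justified; simply invoke the assumption instead.
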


\Eqref{eq:uot_vfp} is thus a kinetic variational lifting of $\mathcal E_p$
rather than a gradient flow; the latter emerges as momentum relaxes.

\begin{theorem}[Quantitative Kramers--Smoluchowski limit]
\label{thm:overdamped}
Fix $\gamma,\beta,T>0$ and assume Assumption~\ref{ass:overdamped}. Suppose that the
initial laws share a positional marginal $q_0\in\mathcal P_2(\mathbb R^d)$ and satisfy
$\int\|\mathbf u\|^2\,d\mu_0^{m,\beta}\le C_0m$ uniformly for $0<m\le1$. Then
$\sup_{t\in[0,T]}\mathcal W_2(q_t^{m,\beta},q_t^\beta)\le C_{T,\beta}\sqrt m$, where
$C_{T,\beta}$ is independent of $m$ and
$q^\beta\in C([0,T];\mathcal P_2(\mathbb R^d))$ is the unique weak solution of
\begin{equation}\label{eq:finite_temp}
\partial_t q_t^\beta
=\frac1\gamma\nabla\cdot\bigl(q_t^\beta\nabla\Phi_{q_t^\beta}\bigr)
+\frac1{\gamma\beta}\Delta q_t^\beta,
\qquad q_0^\beta=q_0.
\end{equation}
\end{theorem}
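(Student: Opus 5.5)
We will couple the kinetic and overdamped dynamics at the particle level and obtain the $\sqrt m$ rate from a synchronous-coupling estimate. We assume, as Assumption~\ref{ass:overdamped} presumably provides, that the force $F_p[q]=-\nabla\Phi_q$ is globally Lipschitz in $\mathbf x$ and Lipschitz in $q$ with respect to $\mathcal W_2$:
\[
\|F_p[q](\mathbf x)-F_p[q'](\mathbf y)\|\le L\bigl(\|\mathbf x-\mathbf y\|+\mathcal W_2(q,q')\bigr),
\]
together with linear growth. Under these bounds, well-posedness of the McKean--Vlasov SDE
\[
d\bar{\mathbf x}_t=\tfrac1\gamma F_p[q_t^\beta](\bar{\mathbf x}_t)\,dt+\sqrt{2/(\gamma\beta)}\,d\mathbf W_t,\qquad q_t^\beta=\mathrm{Law}(\bar{\mathbf x}_t),
\]
follows from a standard Picard/Sznitman fixed point in $C([0,T];\mathcal P_2)$. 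By It\^o's formula, weak solutions of \eqref{eq:finite_temp} are laws of this SDE. Uniqueness of weak solutions then follows from the same Lipschitz bounds, via a linearised Fokker--Planck uniqueness argument.

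The kinetic equation \eqref{eq:uot_vfp} is the law of the nonlinear Langevin system
\[
d\mathbf x_t=\tfrac{\mathbf u_t}{m}\,dt,\qquad d\mathbf u_t=F_p[q_t^{m,\beta}](\mathbf x_t)\,dt-\tfrac\gamma m\mathbf u_t\,dt+\sqrt{2\gamma/\beta}\,d\mathbf W_t .
\]
We drive it with the same Brownian motion as the overdamped SDE and start from the same initial position, $\mathbf x_0=\bar{\mathbf x}_0\sim q_0$; the shared positional marginal in the hypothesis makes this coupling admissible. Integrating the momentum equation and using $d\mathbf x_t=\mathbf u_t/m\,dt$ gives
\[
\gamma(\mathbf x_t-\mathbf x_0)=\int_0^tF_p[q_s^{m,\beta}](\mathbf x_s)\,ds+\sqrt{2\gamma/\beta}\,\mathbf W_t-(\mathbf u_t-\mathbf u_0).
\]
Subtracting $\gamma$ times the integrated overdamped equation, the noise terms cancel exactly. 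Hence $\mathbf e_t:=\mathbf x_t-\bar{\mathbf x}_t$ satisfies
\[
\gamma\,\mathbf e_t=\int_0^t\bigl(F_p[q^{m,\beta}_s](\mathbf x_s)-F_p[q^\beta_s](\bar{\mathbf x}_s)\bigr)ds-(\mathbf u_t-\mathbf u_0).
\]
By the Lipschitz bound, together with $\mathcal W_2(q_s^{m,\beta},q_s^\beta)\le(\mathbb E\|\mathbf e_s\|^2)^{1/2}$ (valid since this is a coupling), we obtain
\[
\mathbb E\|\mathbf e_t\|^2\le C\int_0^t\mathbb E\|\mathbf e_s\|^2ds+C\bigl(\mathbb E\|\mathbf u_t\|^2+\mathbb E\|\mathbf u_0\|^2\bigr).
\]
Gronwall's inequality then reduces the theorem to the momentum bound $\sup_{t\le T}\mathbb E\|\mathbf u_t\|^2\le Cm$. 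That bound yields $\mathcal W_2\le (\mathbb E\|\mathbf e_t\|^2)^{1/2}\le C_{T,\beta}\sqrt m$.

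The main obstacle is this uniform momentum bound, because the friction $\gamma/m$ and the force live on different scales. We will apply It\^o's formula to $\|\mathbf u_t\|^2$:
\[
d\,\mathbb E\|\mathbf u\|^2=\Bigl(-\tfrac{2\gamma}{m}\mathbb E\|\mathbf u\|^2+2\,\mathbb E\langle\mathbf u,F\rangle+\tfrac{2\gamma d}{\beta}\Bigr)dt .
\]
Young's inequality gives $2\langle\mathbf u,F\rangle\le\frac{\gamma}{m}\|\mathbf u\|^2+\frac m\gamma\|F\|^2$, so
\[
\tfrac{d}{dt}\mathbb E\|\mathbf u\|^2\le-\tfrac{\gamma}{m}\mathbb E\|\mathbf u\|^2+C\bigl(1+m\,\mathbb E\|F\|^2\bigr).
\]
By linear growth, $\mathbb E\|F\|^2\le C(1+\mathbb E\|\mathbf x_t\|^2)$. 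This is where a priori second-moment control of $\mathbf x_t$, uniform in $m\le1$, is needed. We will get it by a coupled estimate on $\mathbb E\|\mathbf x\|^2$, bounding
\[
\tfrac{d}{dt}\mathbb E\|\mathbf x\|^2\le \tfrac1m\bigl(\mathbb E\|\mathbf x\|^2\bigr)^{1/2}\bigl(\mathbb E\|\mathbf u\|^2\bigr)^{1/2}\le C\bigl(\mathbb E\|\mathbf x\|^2+1\bigr)
\]
once $\mathbb E\|\mathbf u\|^2\lesssim m$ is known. The two estimates therefore have to be closed together by a bootstrap on $[0,T]$. If the bootstrap fails to close, the fallback is to use $\mathcal H_{m,\beta}$ from Proposition~\ref{prop:vfp_dissipation} to control the moments; the entropy term has the wrong sign for this, so that route needs a lower bound on $\mathcal E_p$ plus a Gaussian moment comparison. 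The linear ODE comparison then gives
\[
\mathbb E\|\mathbf u_t\|^2\le e^{-\gamma t/m}C_0m+\tfrac{m}{\gamma}C(1+m)\le Cm,
\]
using $\int\|\mathbf u\|^2d\mu_0^{m,\beta}\le C_0m$. Substituting into the Gronwall estimate finishes the proof.
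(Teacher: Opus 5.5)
Your synchronous coupling, the identity $\gamma\mathbf e_t=\int_0^t\bigl(F_p[q^{m,\beta}_s](\mathbf x_s)-F_p[q^\beta_s](\bar{\mathbf x}_s)\bigr)\,ds-(\mathbf u_t-\mathbf u_0)$ with exact cancellation of the noise, and the reduction of the theorem to $\sup_{t\le T}\mathbb E\|\mathbf u_t\|^2\le Cm$ are all correct. The gap is the momentum bound itself. Your Duhamel comparison gives
\[
\mathbb E\|\mathbf u_t\|^2\le C_0m+\frac{2dm}{\beta}+\frac m\gamma\int_0^te^{-\gamma(t-s)/m}\,\mathbb E\bigl\|F_p[q^{m,\beta}_s](\mathbf x_s)\bigr\|^2\,ds,
\]
so you need control of the second moment of the \emph{kinetic} force. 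Your route to it is an $m$-uniform bound on $\mathbb E\|\mathbf x_t\|^2$, and it fails as written. Since $\frac{d}{dt}\mathbb E\|\mathbf x_t\|^2=\frac2m\mathbb E\langle\mathbf x_t,\mathbf u_t\rangle$, the bound $\mathbb E\|\mathbf u_t\|^2\le Cm$ only yields $\frac{d}{dt}\mathbb E\|\mathbf x_t\|^2\le2\sqrt{C/m}\,(\mathbb E\|\mathbf x_t\|^2)^{1/2}$. Its constant blows up as $m\to0$, so it is not of the form $C(\mathbb E\|\mathbf x_t\|^2+1)$, and Gronwall gives no $m$-uniform bound. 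The fallback through $\mathcal H_{m,\beta}$ is not available under the stated hypotheses either. Proposition~\ref{prop:vfp_dissipation} rests on Assumption~\ref{ass:kinetic}, which the theorem does not assume, and you would still need a lower bound on $\mathcal E_p$ that nothing supplies.

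The missing idea is that no moment of the kinetic process is needed. It\^o's formula for $\|\bar{\mathbf x}_t\|^2$, the growth bound $\|F_p[\mathrm{Law}(\mathbf X)](\mathbf X)\|_{L^2}\le K_0+\Lambda\|\mathbf X\|_{L^2}$, and Gronwall give $C_F:=\sup_{t\le T}\mathbb E\|F_p[q^\beta_t](\bar{\mathbf x}_t)\|^2<\infty$. This is independent of $m$ because the overdamped SDE does not involve $m$. The Lipschitz conditions then give $\mathbb E\|F_p[q^{m,\beta}_s](\mathbf x_s)\|^2\le2C_F+2\Lambda^2\mathbb E\|\mathbf e_s\|^2$; the averaged-in-law form in Assumption~\ref{ass:overdamped} suffices, applied with $\alpha'=q^\beta_s$. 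Insert this into the Duhamel bound, bounding the kernel by $1$ only on the $\mathbf e$-part:
\[
\mathbb E\|\mathbf u_t\|^2\le C_0m+\frac{2dm}{\beta}+\frac{2C_Fm^2}{\gamma^2}+\frac{2\Lambda^2m}{\gamma}\int_0^t\mathbb E\|\mathbf e_s\|^2\,ds .
\]
Your error inequality then reads $\mathbb E\|\mathbf e_t\|^2\le C\int_0^t\mathbb E\|\mathbf e_s\|^2\,ds+Cm$ with $C$ independent of $m\le1$, and Gronwall closes.

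The paper gets there in one step: it solves the momentum equation by variation of constants before subtracting. The memory kernel $e^{-\gamma(t-s)/m}$ then multiplies only the overdamped force $F[q^\beta_s](\mathbf Y_s)$, controlled by $C_F$, and the Brownian increments. The kinetic-minus-overdamped force difference carries the factor $1-e^{-\gamma(t-s)/m}$ and goes into Gronwall. Your identity is the same one with $\mathbf u_t$ left unexpanded.
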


\Eqref{eq:finite_temp} is the WGF \eqref{WGF} of
$\mathcal E_p+\beta^{-1}\int q\log q\,d\mathbf x$
with mobility $1/\gamma$.
As $\beta\to\infty$, Corollary~\ref{cor:zero_temp} gives convergence, uniformly on finite time intervals in $\mathcal W_2$, to
\begin{equation}\label{eq:zero_temp}
\partial_t q_t
=
-\frac1\gamma\nabla\cdot\bigl(q_tF_p[q_t]\bigr),
\qquad q_{t=0}=q_0.
\end{equation}
This is \eqref{WGF} with $\mathcal F=\mathcal E_p$ and
$\mathbf v_q=\gamma^{-1}F_p[q]$.
Its characteristic Euler update is
$\mathbf x^{n+1}=\mathbf x^n+(h/\gamma)F_p[q_n](\mathbf x^n)$, with the mobility
absorbed into the drifting step size $\eta=h/\gamma$ (cf.~\eqref{eq:drift-update}).
For $(\iota,\iota)$, the force is the W-Flow field.

\begin{remark}\label{rem:diffusion}
For finite $\beta$, diffusion yields an absolutely continuous
law at every positive time, whereas \eqref{eq:zero_temp}
transports measures without diffusion.
On $\mathbb R^d$, the zero-temperature stationary states
of Section~\ref{sec:longtime} do not remain stationary at
finite temperature.
Other finite-temperature equilibria may nevertheless exist
and need not coincide with $p$.
\end{remark}

\subsection{Long-Time Behavior: Obstruction and Conditional Convergence}
\label{sec:longtime}

Along any regular solution of \eqref{eq:zero_temp} the energy dissipates as
\begin{equation}\label{eq:long_time_dissipation}
\frac{d}{dt}\mathcal E_p(q_t)=-\frac1\gamma\mathcal D(q_t),
\qquad
\mathcal D(q):=\int_\Omega\|\nabla\Phi_q\|^2\,dq,
\end{equation}
but, as noted in Section~\ref{sec:wgf}, this does not force $q_t\to p$. Throughout
this subsection $\Omega\subset\mathbb R^d$ is compact and the kernel
$e^{-c/\varepsilon}$ is positive and universal. The obstruction is elementary: take a target symmetric about the origin and the single atom $q=\delta_{\mathbf 0}$.
Every optimal plan inherits the symmetry, so every barycenter at the origin
vanishes and $F_p[\delta_{\mathbf 0}](\mathbf 0)=\mathbf 0$; the continuity equation
cannot split an atom, so the model stays at $\delta_{\mathbf 0}$.

\begin{proposition}[Non-target stationary state]\label{prop:non_target_stationary}
% Let $\Omega=[-1,1]$, let $p\in\mathcal P(\Omega)$ satisfy $R_\#p=p$ for $R(x)=-x$
% with $p\neq\delta_0$, and assume $\Phi_{\delta_0}$ admits a $C^2$ representative on
% $\Omega$. Then, for every choice of generators, $q_t\equiv\delta_0$ is a stationary
% distributional solution of \eqref{eq:zero_temp} and the metric slope
% $|\partial\mathcal E_p|(\delta_0)$ vanishes, although $\delta_0\neq p$.
Let $\Omega=[-1,1]$, let $p\in\mathcal P(\Omega)$ satisfy $\mathsf R_\#p=p$ for $\mathsf R(x)=-x$ with $p\neq\delta_0$, and assume $\Phi_{\delta_0}$ admits a $C^2$ representative on $\Omega$. Then, for every choice of generators, $q_t\equiv\delta_0$ is a stationary distributional solution of \eqref{eq:zero_temp} and the metric slope
$|\partial\mathcal E_p|(\delta_0)$ vanishes, although $\delta_0\neq p$.
\end{proposition}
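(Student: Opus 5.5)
The plan is to show that the velocity $F_p[\delta_0]$ vanishes at the single point where $\delta_0$ lives. That gives both the stationarity and the vanishing slope.

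\textbf{Step 1: symmetry of the first variation.} Write $\mathsf R(x)=-x$. The cost satisfies $c(\mathsf Rx,\mathsf Ry)=c(x,y)$, and both $\delta_0$ and $p$ are $\mathsf R$-invariant. Therefore every term in \eqref{eq:uot} is invariant under $\pi\mapsto(\mathsf R\times\mathsf R)_\#\pi$:
\begin{itemize}
\item the transport cost $\int c\,d\pi$;
\item the Csisz\'ar marginal penalties, since $D_\varphi$ is invariant under a common pushforward by a bijection and $\mathsf R_\#\alpha_k=\alpha_k$;
\item the entropic term $\mathrm{KL}(\pi\Vert\alpha_1\otimes\alpha_2)$.
\end{itemize}
By the uniqueness of optimal plans assumed in Proposition~\ref{prop:uot_force}, each plan entering $\mathcal A(\delta_0,p)$ and $\mathcal A(\delta_0,\delta_0)$ is $\mathsf R\times\mathsf R$-invariant. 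Uniqueness follows from strict convexity when $\varepsilon>0$. If one prefers to avoid it, one can instead symmetrize any optimal plan by averaging it with its reflection; convexity keeps the average optimal.

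With invariant plans, the slot barycentres in \eqref{eq:slot_force} are odd: $B^{(k)}(\mathsf Rx)=-B^{(k)}(x)$. The densities are even: $r^{(k)}(\mathsf Rx)=r^{(k)}(x)$. Hence each slot force $F^{(k)}$ is odd, and by \eqref{eq:force_general} so is $F_p[\delta_0]$.

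This step is easiest done at the level of potentials. The dual Sinkhorn potentials are unique up to the usual normalisation, and by uniqueness they are $\mathsf R$-invariant, so $\Phi_{\delta_0}$ is even. Using the assumed $C^2$ representative, $\Phi_{\delta_0}'$ is odd and continuous, so $\Phi_{\delta_0}'(0)=0$.

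\textbf{Step 2: stationarity.} Take a test function $\psi\in C_c^\infty$. For the constant curve $q_t\equiv\delta_0$, the weak form of \eqref{eq:zero_temp} reads
\[
\frac{d}{dt}\int\psi\,dq_t=\frac1\gamma\int\psi'F_p[q_t]\,dq_t .
\]
The left side is $0$. The right side is $\gamma^{-1}\psi'(0)F_p[\delta_0](0)=-\gamma^{-1}\psi'(0)\Phi_{\delta_0}'(0)=0$. So the constant curve is a distributional solution.

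The main subtlety is that "$F_p$ holds $\delta_0$-a.e." only pins down $F_p$ at the single atom $0$. This is why the $C^2$ representative is used: it gives a pointwise evaluation at $0$.

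\textbf{Step 3: the slope.} The plan is to bound the slope by the local behaviour of the first variation. For $\nu$ near $\delta_0$ in $\mathcal W_2$, take an optimal coupling $\gamma_\nu$ between $\delta_0$ and $\nu$; it is just $\delta_0\otimes\nu$. A first-order expansion of $\mathcal E_p$, whose first variation is the $C^2$ function $\Phi_{\delta_0}$, combined with a Taylor expansion of $\Phi_{\delta_0}$ at $0$, gives
\[
\mathcal E_p(\nu)-\mathcal E_p(\delta_0)=\int(\Phi_{\delta_0}(y)-\Phi_{\delta_0}(0))\,d\nu+o(\mathcal W_2)=\Phi_{\delta_0}'(0)\int y\,d\nu+O(\mathcal W_2^2)+o(\mathcal W_2).
\]
Since $\Phi_{\delta_0}'(0)=0$, the ratio $(\mathcal E_p(\delta_0)-\mathcal E_p(\nu))^+/\mathcal W_2(\delta_0,\nu)$ tends to $0$. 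Hence $|\partial\mathcal E_p|(\delta_0)=0$.

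\textbf{Main obstacle.} The expansion in Step 3 needs the remainder of the linearisation of $\mathcal E_p$ to be $o(\mathcal W_2)$, uniformly along arbitrary curves rather than linear interpolations. I would obtain this from the Lipschitz and smooth dependence of the entropic potentials on the measures on compact $\Omega$, i.e.\ the regularity in Assumption~\ref{ass:uot_regularity}. Concretely, $\Phi_\nu-\Phi_{\delta_0}$ is $C^1$-small when $\mathcal W_2(\nu,\delta_0)$ is small, and the energy difference is the integral of $\Phi_{\nu_s}$ against $d(\nu-\delta_0)$ along the segment $\nu_s=(1-s)\delta_0+s\nu$.

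Finally, $\delta_0\neq p$ by hypothesis, and this proof works for every choice of generators.
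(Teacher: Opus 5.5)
\textbf{Gap in Step 3.} The first variation in Definition~\ref{def:first_variation} is only a one-sided directional derivative along a fixed mixture direction $\omega$, and its $o(t)$ remainder depends on $\omega$. Measures that are $\mathcal W_2$-close to $\delta_0$, such as $\delta_h$, are at total-variation distance one from it. So the expansion $\mathcal E_p(\nu)-\mathcal E_p(\delta_0)=\int(\Phi_{\delta_0}-\Phi_{\delta_0}(0))\,d\nu+o(\mathcal W_2(\nu,\delta_0))$, uniformly in $\nu$, does not follow from the hypotheses.

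Your repair along the segment $\nu_s$ needs two things: first variations must exist at every $\nu_s$, and $\nabla\Phi_{\nu_s}\to\nabla\Phi_{\delta_0}$ uniformly on $\Omega$ as $\mathcal W_2(\nu_s,\delta_0)\to0$. The proposition gives you only a $C^2$ representative of $\Phi_{\delta_0}$. Assumption~\ref{ass:uot_regularity}, which the proposition does not invoke, concerns each fixed pair $(\alpha_1,\alpha_2)$ and says nothing about continuity in the measures. What you actually need is a stability theorem for entropic potentials. You would have to import it, and for the asymmetric $(\iota,\mathrm{KL})$ pair and general $C^2$ generators you would have to prove it.

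Step~1 also needs neither uniqueness of plans or potentials nor Assumption~\ref{ass:uot_regularity}. Differentiate the invariance $\mathcal E_p(\mathsf R_\#q)=\mathcal E_p(q)$ along $q_\lambda=(1-\lambda)\delta_0+\lambda\delta_x$ and its reflection. This gives $\Phi(x)-\Phi(0)=\Phi(-x)-\Phi(0)$ directly.

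\textbf{Missing idea.} The slope needs only a one-sided lower bound on $\mathcal E_p(q)-\mathcal E_p(\delta_0)$, and the paper gets it from structure rather than from uniform differentiability. With $\mathcal U=\mathrm{UOT}^{\varphi_1,\varphi_2}$, write $\mathcal E_p=\mathcal G-\tfrac12\mathcal S-\tfrac12\mathcal U(p,p)$, where $\mathcal G(q)=\tfrac12[\mathcal U(q,p)+\mathcal U(p,q)]$ and $\mathcal S(q)=\mathcal U(q,q)$.

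\begin{itemize}
\item The cross term $\mathcal G$ is convex in $q$, because the primal objective is jointly convex in the plan and the input measure.
\item The self term satisfies $0\le\mathcal S(q)-\mathcal S(\delta_0)\le\varsigma_*\int x^2\,dq$. Here $\varsigma_*$ minimizes the mass profile $H(\varsigma)=\rho\varphi_1(\varsigma)+\rho\varphi_2(\varsigma)+\varepsilon(\varsigma\log\varsigma-\varsigma+1)$. The lower bound is Jensen; the upper bound uses the competitor $\varsigma_*\,q\otimes q$.
\end{itemize}

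For $q_\lambda=(1-\lambda)\delta_0+\lambda q$, these give $\mathcal E_p(q_\lambda)-\mathcal E_p(\delta_0)\le\lambda\bigl(\mathcal G(q)-\mathcal G(\delta_0)\bigr)$. Dividing by $\lambda$ and using only the first variation at $\delta_0$ yields $\mathcal G(q)-\mathcal G(\delta_0)\ge\int(\Phi-\Phi(0))\,dq$.

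Combine this with the self-term bound, $\Phi'(0)=0$, and Taylor's theorem. The result is $\mathcal E_p(q)-\mathcal E_p(\delta_0)\ge-\tfrac12\bigl(\|\Phi''\|_\infty+\varsigma_*\bigr)\mathcal W_2(q,\delta_0)^2$ for every $q$. This forces $|\partial\mathcal E_p|(\delta_0)=0$ without any remainder estimate.
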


Wasserstein stationarity requires $\nabla\Phi_q=\mathbf 0$ only $q$-almost everywhere,
not on all of $\Omega$. In particular, for $(\iota,\iota)$ the flow is W-Flow, whose
velocity field vanishes identically only at $q=p$ \citep{han2026one}; the proposition
shows that vanishing on $\mathrm{supp}(q)$, which is all the flow sees, is strictly
weaker. The obstruction is a property of the zero-temperature limit
(Remark~\ref{rem:diffusion}), not of the transport geometry. Convergence therefore
needs a quantitative link between the energy gap and its dissipation.

% \begin{assumption}\label{ass:longtime}
% (i) $\mathcal E_p\ge0$ on $\mathcal P(\Omega)$ with equality iff $q=p$.
% (ii) For a.e.\ $t\ge0$: $p\ll q_t$ with $dp/dq_t\in L^2(q_t)$, $q_t$ satisfies a
% Poincar\'e inequality with constant $C_{\mathrm{PI}}(q_t)$, and
% $A:=\operatorname{ess\,sup}_{t}C_{\mathrm{PI}}(q_t)\,\chi^2(p\Vert q_t)<\infty$.
% \end{assumption}

% Condition (i) holds for $(\varphi,\varphi)$ by \citet[Thm.~5]{sejourne2019sinkhorn}
% and for $(\iota,\iota)$ by \citet{Feydy2018InterpolatingBO}; for $(\iota,\mathrm{KL})$
% it is not covered by their kernel-norm argument (Section~\ref{sec:energy}) and is
% verified numerically in Appendix~\ref{app:positivity}. Condition (ii) rules out
% collapse onto supports that miss $p$ and yields
% $\mathcal E_p(q_t)^2\le A\,\mathcal D(q_t)$ (Appendix~\ref{app:longtime}).

Convergence to $p$ itself requires that the flow keep covering the target.
Appendix~\ref{app:longtime} gives sufficient conditions (convexity of $\mathcal E_p$ along mixtures, bounded $\chi^2(p\Vert q_t)$, and a uniform Poincar\'e inequality along the trajectory) under which the energy--dissipation inequality $\mathcal E_p(q_t)^2\le A\,\mathcal D(q_t)$ holds and yields $\mathcal E_p(q_t)=\mathcal O(t^{-1})$ and $\mathcal W_2(q_t,p)\to0$ (Proposition~\ref{prop:decay}). The stationary state of Proposition~\ref{prop:non_target_stationary} violates the coverage condition, since $\chi^2(p\Vert\delta_{\mathbf 0})=\infty$; without it the flow still converges along a subsequence to a Wasserstein-stationary state (Proposition~\ref{prop:stationary_limit}).

The Dirac stationary example alone does not establish whether collapse can occur from a nondegenerate initialization. Whether the coverage condition in Assumption~\ref{ass:longtime} holds throughout training remains open (Appendix~\ref{sec:limitations}). We next instantiate \eqref{eq:zero_temp} as a training algorithm.

\section{Implementation}\label{sec:impl}

% \paragraph{Mini-batch approximation.}
% Given independent mini-batches $\{\mathbf x_i\}_{i=1}^N\sim q_t$ and
% $\{\mathbf y_j\}_{j=1}^M\sim p$ with uniform weights
% $\mathbf u_N:=\frac1N\mathbf 1_N$, $\mathbf u_M:=\frac1M\mathbf 1_M$, a coupling is a
% matrix $\mathbf P\in\mathbb R_+^{N\times M}$ and \eqref{eq:uot} becomes
% \begin{equation}\label{eq:empirical_uot}
% \inf_{\mathbf P\ge0}\;
% \langle\mathbf C,\mathbf P\rangle
% +\rho D_{\varphi_1}(\mathbf P\mathbf 1_M\,\|\,\mathbf u_N)
% +\rho D_{\varphi_2}(\mathbf P^{\!\top}\mathbf 1_N\,\|\,\mathbf u_M)
% +\varepsilon\,\mathrm{KL}(\mathbf P\,\|\,\mathbf u_N\mathbf u_M^{\!\top}),
% \end{equation}
% with $C_{ij}=c(\mathbf x_i,\mathbf y_j)$. For $\varphi_1=\iota$ the first penalty is
% the hard constraint $\mathbf P\mathbf 1_M=\mathbf u_N$.
\paragraph{Mini-batch approximation.}
Given independent mini-batches $\{\mathbf x_i\}_{i=1}^N\sim q_t$ and
$\{\mathbf y_j\}_{j=1}^M\sim p$ with uniform weights, a coupling is a matrix
$\mathbf P\in\mathbb R_+^{N\times M}$ and \eqref{eq:uot} becomes
\begin{equation}\label{eq:empirical_uot}
\inf_{\mathbf P\ge0}\;
\langle\mathbf C,\mathbf P\rangle
+\rho D_{\varphi_1}\big(\mathbf P\mathbf 1_M\,\big\|\,\tfrac1N\mathbf 1_N\big)
+\rho D_{\varphi_2}\big(\mathbf P^{\!\top}\mathbf 1_N\,\big\|\,\tfrac1M\mathbf 1_M\big)
+\varepsilon\,\mathrm{KL}\big(\mathbf P\,\big\|\,\tfrac1{NM}\mathbf 1_N\mathbf 1_M^{\!\top}\big),
\end{equation}
with $C_{ij}=c(\mathbf x_i,\mathbf y_j)$. For $\varphi_1=\iota$ the first penalty is
the hard constraint $\mathbf P\mathbf 1_M=\tfrac1N\mathbf 1_N$.

\paragraph{Training procedure.}
The remaining components follow W-Flow \citep{han2026one}; we summarize
them here and defer pseudocode to Appendix~\ref{app:impl}. Transport is carried out
in the space of frozen feature blocks $\psi_\ell$ rather than in latent space. At
iteration $n$, for each block $\ell$, we compute $\mathbf x_i=\psi_\ell(f_{\boldsymbol\theta_n}(\mathbf z_i))$,
build the costs $C_{ij}=\tfrac12\|\mathbf x_i-\psi_\ell(\mathbf y_j)\|^2$ against a real
batch and, with a second independent generated batch $\tilde{\mathcal X}_n$ in place of
the real one, the self-interaction costs, so that the self term avoids trivial diagonal
matches. The plans solving \eqref{eq:empirical_uot} yield the mini-batch estimate
$\widehat{\mathbf v}^\ell_i$ of $F_p[q_n]$ at $\mathbf x_i$ (Appendix~\ref{app:velocity}),
and the generator regresses onto an explicit Euler target,
\begin{equation}\label{eq:regression}
\bar{\mathbf x}^{\ell}_i=\operatorname{sg}\big(\mathbf x_i+\eta\,\widehat{\mathbf v}^\ell_i\big),
\qquad
\mathcal L(\boldsymbol\theta)
=\frac{1}{n_\psi N}\sum_{\ell,i}
\big\|\psi_\ell(f_{\boldsymbol\theta}(\mathbf z_i))-\bar{\mathbf x}^{\ell}_i\big\|^2,
\end{equation}
where $\eta$ absorbs the mobility $\gamma^{-1}$ and the stop-gradient avoids
differentiating through the transport solve. Guidance acts on the velocity,
$\widehat{\mathbf v}_{\cls}-\widehat{\mathbf v}_{\mathrm{self}}
+w\,(\widehat{\mathbf v}_{\cls}-\widehat{\mathbf v}_\varnothing)$, with $w$ sampled
during training and fixed at inference (Appendix~\ref{app:guidance}). The departure
from \citet{han2026one} is that the plans solve \eqref{eq:empirical_uot} rather than
balanced OT; at $\tau=1$ our symmetrized estimator reduces to theirs (the balanced
reverse plan is the transposed forward plan), and our independently trained
$\tau=1$ model reproduces W-Flow-B \citep{han2026one} theoretically.% where the mobility $\gamma^{-1}$ is absorbed into $\eta$ and the stop-gradient prevents

\newcolumntype{L}[1]{>{\raggedright\arraybackslash}p{#1}}

\sisetup{
  detect-family = true,
  detect-weight = true,
  table-number-alignment = center
}

\begin{table*}[t]
\centering
\captionsetup{
  font={footnotesize,normalfont},
  labelfont=normalfont,
  textfont=normalfont
}
\caption{Quantitative comparison of balanced OT, fully unbalanced OT,
and source-fixed UOT across marginal divergences and solver settings.
The six feature-specific FDr scores are followed by their mean,
$\mathrm{FDr}^{6}$. KID and KD-DINOv2 are scaled by $100$.
Lower is better except for IS. Red, blue, and green shading denotes
the best, second-best, and third-best results in each metric across
the entire table, respectively; ties share the same rank.}
\label{table:1}
\vspace{-2pt}

\begingroup
\small
\setlength{\tabcolsep}{3.2pt}
\renewcommand{\arraystretch}{1.12}
\sisetup{
  detect-family=true,
  detect-weight=true,
  table-text-alignment=center
}

% These subgroup headings do not change the scope of the rankings.
\def\uotsection#1{%
  \multicolumn{13}{@{}l}{\footnotesize\itshape #1}\\
}

\begin{adjustbox}{max width=\textwidth,center}
\begin{tabular}{
  @{}L{4.0cm}
  Z{1.2}
  Z{1.2}
  Z{2.2}
  Z{2.2}
  Z{2.2}
  Z{2.2}
  Z{2.2}
  @{\hspace{9pt}}
  Z{1.4}
  Z{2.2}
  Z{1.2}
  Z{3.2}
  @{\,${}\pm{}$\,}
  Z{1.2}
  @{}
}
\toprule
& \multicolumn{7}{c}{\textbf{FDr} $\downarrow$}
& \multicolumn{5}{c@{}}{\textbf{Other metrics}} \\
\cmidrule(lr){2-8}
\cmidrule(lr){9-13}
\textbf{Model / setting}
& \multicolumn{1}{c}{\textbf{Incep.}}
& \multicolumn{1}{c}{\textbf{ConvNeXt}}
& \multicolumn{1}{c}{\textbf{DINOv2}}
& \multicolumn{1}{c}{\textbf{MAE}}
& \multicolumn{1}{c}{\textbf{SigLIP}}
& \multicolumn{1}{c}{\textbf{CLIP}}
& \multicolumn{1}{c}{\makecell{\textbf{Mean}\\$\mathrm{FDr}^{6}$}}
& \multicolumn{1}{c}{\makecell{\textbf{KID}\\$\downarrow$}}
& \multicolumn{1}{c}{\makecell{\textbf{KD-DINOv2}\\$\downarrow$}}
& \multicolumn{1}{c}{\makecell{\textbf{FID}\\$\downarrow$}}
& \multicolumn{2}{c@{}}{\makecell{\textbf{IS}\\$\uparrow$}} \\
\midrule

\uotgroup{Balanced OT}{both marginals fixed}

\setting{W-FLOW}{\textnormal{baseline, our retraining}}
& 4.21 & 8.92 & 27.13 & 15.35 & 50.16 & 39.50 & 24.21
& 0.1321 & 40.70 & 7.07 & 140.37 & 1.64 \\

\midrule
\uotgroup{Fully unbalanced OT}{both marginals relaxed}

\setting{S\'ejourn\'e KL-UOT}{$\tau{=}0.95$}
& 4.27 & 9.17 & 28.17 & 15.61 & 51.46 & 40.74  & 24.90 
& {\uotranksecond{0.1273}} &  43.12 & 7.17 & \multicolumn{2}{c@{}}{\uotrankthird{145.28\,${}\pm{}$\,1.87}} \\

\setting{S\'ejourn\'e KL-UOT}{$\tau{=}0.975$}
& 4.16 & 8.77 & 26.99 & 15.05 & 49.81  & 39.02 & 23.97 
& 0.1303 & 40.40 & 7.00 & 143.28 & 3.12\\

\setting{S\'ejourn\'e KL-UOT}{$\tau{=}0.985$}
& 4.18 & 8.69 & 26.79 & 15.17  & 49.32 & 38.46  & 23.77 
& 0.1362  & 39.75 & 7.03  & 142.57 & 2.63 \\

\setting{S\'ejourn\'e KL-UOT}{$\tau{=}0.99$}
& 4.16 &8.67 & 26.74  & 15.06 & 49.23 &38.34 &23.70 
& 0.1414 & 39.77& 7.00 &142.65 & 2.83 \\

\midrule

\uotgroup{Forward-only source-fixed UOT}
{source fixed; target relaxed}

\setting{KL}{$\tau{=}0.985,i{=}10$}
& 4.23 & 8.82 & 26.73 & 15.20 & 49.20 & 38.71
& 23.81
& 0.1313 & 40.04 & 7.11
& 140.83 & 2.72 \\

\midrule

\uotgroup{Symmetrized source-fixed UOT}
{source fixed; target relaxed in each directional solve}

\setting{Pearson-$\chi^2$}{$\tau{=}0.95$}
& 4.20 & 8.50 & 26.57 & 15.39 & 49.36 & 38.78 & 23.80
& 0.1377 & 39.55 & 7.06 & 144.07 & 2.45 \\

\setting{Hellinger}{$\tau{=}0.95$}
& 4.20 & 8.36 & 26.29 & {\uotrankfirst{15.04}} & 48.83 & 38.26 & 23.50
& 0.1375 & 38.65 & 7.05 & 144.90 & 1.50 \\

\cmidrule(lr){1-13}
\uotsection{KL: Marginal relaxation $\tau$ variants}

\setting{KL}{$\tau{=}0.95$}
& {\uotrankthird{4.13}} & 8.53 & 26.64 & 15.12 & 48.93 & 38.41 & 23.63
& 0.1328 & 39.46 & 6.94 & 145.08 & 2.29 \\

\setting{KL}{$\tau{=}0.975$}
& 4.17 & 8.71 & 26.84 & 15.19 & 49.31 & 38.67 & 23.81
& 0.1327 & 39.92 & 7.00 & 143.10 & 2.42 \\

\setting{KL}{$\tau\!:\!0.95\!\rightarrow\!0.985$}
& {\uotrankthird{4.13}} & 8.61 & 26.79 & 15.12 & 49.28 & 38.93 & 23.81
& {\uotrankfirst{0.1268}} & 40.04 & {\uotrankthird{6.93}} & 142.82 & 2.66 \\

\setting{KL}{$\tau{=}0.99$}
& 4.25 & 8.79 & 26.87 & 15.13 & 49.53 & 38.80 & 23.89
& 0.1375 & 39.96 & 7.14 & 142.40 & 3.14 \\

\setting{KL}{$\tau{=}0.985$}
& {\uotranksecond{4.09}} & 8.57 & 26.53 & {\uotranksecond{15.05}} & 48.64 & 38.34 & 23.54
& 0.1343 & 39.32 & {\uotranksecond{6.86}} & 143.28 & 2.49 \\

\cmidrule(lr){1-13}
\uotsection{KL: Step size $\eta$ variant}

\setting{KL}{$\tau{=}0.985,\eta{=}0.5$}
& 4.23 & 9.00 & 27.07 & 15.44 & 50.20 & 39.31 & 24.21
& 0.1374 & 40.31 & 7.10 & 141.49 & 1.99 \\

\cmidrule(lr){1-13}
\uotsection{KL: Sinkhorn iteration $i$ variants}

\setting{KL}{$\tau{=}0.985,i{=}1$}
& {\uotrankfirst{4.00}} & {\uotranksecond{7.44}} & {\uotranksecond{24.86}} & 15.28 & {\uotranksecond{47.28}} & {\uotrankthird{37.35}} & {\uotranksecond{22.70}}
& 0.1495 & {\uotranksecond{35.89}} & {\uotrankfirst{6.72}} & \multicolumn{2}{c@{}}{\uotranksecond{156.59\,${}\pm{}$\,2.70}} \\

\setting{KL}{$\tau{=}0.985,i{=}5$}
& 4.19 & 8.60 & 26.49 & {\uotrankthird{15.09}} & 48.75 & 38.43 & 23.59
& 0.1364 & 39.51 & 7.04 & 144.05 & 2.85 \\

\setting{KL}{$\tau{=}0.985,i{=}15$}
& 4.30 & 8.90 & 26.89 & 15.13 & 49.34 & 38.70 & 23.88
& 0.1387 & 39.96 & 7.22 & 141.37 & 3.41 \\

\cmidrule(lr){1-13}
\uotsection{KL: Entropic regularization $\epsilon$ variants}

\setting{KL}{$\tau{=}0.985,\epsilon{=}0.01$}
& 4.15 & {\uotrankthird{8.26}} & {\uotrankthird{25.33}} & 15.14 & {\uotrankthird{47.38}} & {\uotranksecond{37.34}} & {\uotrankthird{22.93}}
& 0.1564 & {\uotrankthird{36.43}} & 6.98 & 144.18 & 2.25 \\

\setting{KL}{$\tau{=}0.985,\epsilon{=}0.01,i{=}1$}
& 4.16 & {\uotrankfirst{6.47}} & {\uotrankfirst{22.85}} & 16.04 & {\uotrankfirst{44.78}} & {\uotrankfirst{34.77}} & {\uotrankfirst{21.51}}
& 0.2236 & {\uotrankfirst{31.28}} & 6.99 & \multicolumn{2}{c@{}}{\uotrankfirst{164.49\,${}\pm{}$\,2.40}} \\

\setting{KL}{$\tau{=}0.985,\epsilon{=}0.1$}
& 4.49 & 9.08 & 28.62 & 16.04 & 52.49 & 41.03 & 25.29
& {\uotrankthird{0.1297}} & 43.52 & 7.54 & 143.13 & 3.52 \\

\bottomrule
\end{tabular}
\end{adjustbox}
\endgroup
\end{table*}

\begin{table*}[!t]
\centering
\caption{
\footnotesize{
Density, coverage, precision, and recall on ImageNet
$256{\times}256$. All results use 50K generated images and the
ImageNet validation set as reference. Density and coverage use $k=5$;
precision and recall use $k=3$. DINOv2 results use CLS-token features.
Higher is better for all metrics. Within each of the two comparison
blocks independently, red, blue, and green denote the best,
second-best, and third-best results in each metric column, respectively;
ties share the same rank. Subsections within a block share the same ranking.
}}
\label{tab:dc_pr_all}

\begingroup
\small
\setlength{\tabcolsep}{4pt}
\renewcommand{\arraystretch}{1.12}
\sisetup{
  detect-family=true,
  detect-weight=true,
  table-text-alignment=center
}

% Red: first; blue: second; green: third.
\newcommand{\dcfirst}[1]{%
  \cellcolor{red!18}\bfseries #1%
}
\newcommand{\dcsecond}[1]{%
  \cellcolor{blue!15}\bfseries #1%
}
\newcommand{\dcthird}[1]{%
  \cellcolor{green!18}\bfseries #1%
}

% Major gray row: a new ranking block.
\newcommand{\dcgroup}[2]{%
  \rowcolor{black!8}%
  \multicolumn{11}{@{}l}{%
    \textsc{#1}\;
    \textcolor{gray}{\textit{(#2)}}%
  }\\
}

% Subsection: does not start a new ranking block.
\newcommand{\dcsubgroup}[1]{%
  \multicolumn{11}{@{}l}{\textit{#1}}\\
}

\begin{adjustbox}{max width=\textwidth,center}
\begin{tabular}{
  @{}L{4.0cm}
  c
  c
  @{\hspace{10pt}}
  Z{1.6}
  Z{1.6}
  Z{1.6}
  Z{1.6}
  @{\hspace{10pt}}
  Z{1.6}
  Z{1.6}
  Z{1.6}
  Z{1.6}
  @{}
}
\toprule
\textbf{Model / setting}
& \multicolumn{1}{c}{\textbf{Step}}
& \multicolumn{1}{c}{\textbf{CFG}}
& \multicolumn{4}{c}{\textbf{Inception}}
& \multicolumn{4}{c}{\textbf{DINOv2-CLS}} \\
\cmidrule(lr){4-7}
\cmidrule(lr){8-11}
& &
& \multicolumn{1}{c}{\makecell{\textbf{Density}\\$\uparrow$}}
& \multicolumn{1}{c}{\makecell{\textbf{Coverage}\\$\uparrow$}}
& \multicolumn{1}{c}{\makecell{\textbf{Precision}\\$\uparrow$}}
& \multicolumn{1}{c}{\makecell{\textbf{Recall}\\$\uparrow$}}
& \multicolumn{1}{c}{\makecell{\textbf{Density}\\$\uparrow$}}
& \multicolumn{1}{c}{\makecell{\textbf{Coverage}\\$\uparrow$}}
& \multicolumn{1}{c}{\makecell{\textbf{Precision}\\$\uparrow$}}
& \multicolumn{1}{c}{\makecell{\textbf{Recall}\\$\uparrow$}} \\
\midrule

% ================================================================
% Ranking block 1: all listed 30K ablations, including forward-only
% ================================================================
\dcgroup{30K Ablations}{ranked within this block}

\dcsubgroup{Balanced OT: both marginals fixed}
\setting{W-FLOW}{baseline, our retraining}
& 30K & 1.50
& 1.275972 & 0.879240 & 0.766760 & 0.391320
& 0.360704 & 0.237440 & 0.441700 & 0.159980 \\

\cmidrule(lr){1-11}
\dcsubgroup{Fully unbalanced OT: both marginals relaxed}

\setting{S\'ejourn\'e KL-UOT}{$\tau{=}0.95$}
& 30K & 1.50
& 1.274408 & 0.876160 & 0.766900 & {\dcsecond{0.398180}} 
& 0.356112  & 0.227020 & 0.436980  & 0.162640 \\

\setting{S\'ejourn\'e KL-UOT}{$\tau{=}0.975$}
& 30K & 1.50
& 1.282540 & 0.880580 & 0.766400 & 0.391820 
& 0.357960 & 0.235980  & 0.439280  & 0.165760 \\

\setting{S\'ejourn\'e KL-UOT}{$\tau{=}0.985$}
& 30K & 1.50
& 1.280780 & 0.883520 & 0.763340 & {\dcthird{0.394160}}
& 0.353650 & 0.236880 & 0.438820 & {\dcthird{0.168260}}  \\

\setting{S\'ejourn\'e KL-UOT}{$\tau{=}0.99$}
& 30K & 1.50
& 1.289264 & 0.883520 & 0.768060  & 0.390340
& 0.355656 & 0.238200 & 0.443180 & 0.167400 \\

\cmidrule(lr){1-11}
\dcsubgroup{Forward-only source-fixed UOT: source fixed; target relaxed}

\setting{KL}{$\tau{=}0.985,i{=}10$}
& 30K & 1.50
& 1.285796 & 0.880000 & 0.763980 & 0.391540
& {\dcsecond{0.369696}} & 0.238960 & 0.443620 & 0.159780 \\

\cmidrule(lr){1-11}
\dcsubgroup{Symmetrized source-fixed UOT: source fixed; target relaxed in each directional solve}

\setting{Pearson-$\chi^2$}{$\tau{=}0.95$}
& 30K & 1.50
& 1.318908 & 0.885160 & 0.773060 & 0.382380
& {\dcthird{0.368712}} & 0.242340
& {\dcthird{0.445860}} & 0.149680 \\

\setting{Hellinger}{$\tau{=}0.95$}
& 30K & 1.50
& {\dcthird{1.325380}} & {\dcthird{0.886320}}
& {\dcthird{0.774840}} & 0.377060
& {\dcfirst{0.375876}} & {\dcthird{0.245320}}
& {\dcfirst{0.450580}} & 0.153000 \\

\cmidrule(lr){1-11}

\setting{KL}{$\tau{=}0.95$}
& 30K & 1.50
& 1.297344 & 0.881000 & 0.768580 & 0.384120
& 0.365412 & 0.239720 & 0.443620 & 0.164260 \\

\setting{KL}{$\tau{=}0.975$}
& 30K & 1.50
& 1.294996 & 0.883800 & 0.769840 & 0.387900
& 0.362096 & 0.237740 & 0.440280 & 0.156080 \\

\setting{KL}{$\tau\!:\!0.95\!\rightarrow\!0.985$}
& 30K & 1.50
& 1.295212 & 0.882240 & 0.766400 & 0.389620
& 0.358616 & 0.240260 & 0.441440 & 0.159420 \\

\setting{KL}{$\tau{=}0.99$}
& 30K & 1.50
& 1.286124 & 0.885620 & 0.767260 & 0.383880
& 0.360840 & 0.238480 & 0.443160 & 0.162020 \\

\setting{KL}{$\tau{=}0.985$}
& 30K & 1.50
& 1.295844 & {\dcsecond{0.886740}} & 0.769360 & 0.393320
& 0.356080 & 0.237660 & 0.442120 & 0.160920 \\

\cmidrule(lr){1-11}

\setting{KL}{$\tau{=}0.985,\eta{=}0.5$}
& 30K & 1.50
& 1.266760 & 0.878880 & 0.760660 & 0.393480
& 0.361552 & 0.236480 & 0.441680 & 0.152500 \\

\cmidrule(lr){1-11}

\setting{KL}{$\tau{=}0.985,i{=}1$}
& 30K & 1.50
& {\dcfirst{1.342936}} & {\dcfirst{0.888300}}
& {\dcfirst{0.783180}} & 0.372300
& 0.359704 & {\dcsecond{0.253880}}
& {\dcsecond{0.447560}} & 0.153800 \\

\setting{KL}{$\tau{=}0.985,i{=}5$}
& 30K & 1.50
& 1.288708 & 0.883800 & 0.772880 & 0.384600
& 0.358848 & 0.243140 & 0.441600 & 0.156920 \\

\setting{KL}{$\tau{=}0.985,i{=}15$}
& 30K & 1.50
& 1.299728 & 0.880700 & 0.766340 & 0.386240
& 0.366136 & 0.236440 & 0.445840 & 0.167740 \\

\cmidrule(lr){1-11}

\setting{KL}{$\tau{=}0.985,\epsilon{=}0.01$}
& 30K & 1.50
& 1.232700 & 0.875160 & 0.758220 & {\dcfirst{0.405240}}
& 0.344216 & 0.244160 & 0.436800 & {\dcsecond{0.175820}} \\

\setting{KL}{$\tau{=}0.985,\epsilon{=}0.01,i{=}1$}
& 30K & 1.50
& 1.258084 & 0.878020 & 0.772440 & 0.388720
& 0.348040 & {\dcfirst{0.257740}}
& 0.436420 & {\dcfirst{0.183040}} \\

\setting{KL}{$\tau{=}0.985,\epsilon{=}0.1$}
& 30K & 1.50
& {\dcsecond{1.327708}} & 0.877540
& {\dcsecond{0.780740}} & 0.359080
& 0.368236 & 0.224580 & 0.445700 & 0.140040 \\

\midrule[0.7pt]

% ================================================================
% Ranking block 2: scaled UOT + official/external models
% ================================================================
\dcgroup{Scaled and Official Models}{ranked within this block}

\dcsubgroup{Scaled symmetrized source-fixed UOT models}

\setting{KL B/2}{$\tau{=}0.95$}
& 200K & 1.19
& 1.067172 & 0.947400 & 0.721460 & 0.611280
& 0.497324 & 0.553880 & 0.544420 & 0.428020 \\

\setting{KL B/2}{$\tau{=}0.985$}
& 200K & 1.19
& 1.064864 & 0.948600 & {\dcthird{0.723980}} & 0.615300
& 0.502468 & 0.563280 & 0.547880 & 0.428720 \\

\setting{KL B/2}{$\tau{=}0.985,i{=}1$}
& 200K & 1.19
& {\dcsecond{1.079880}} & 0.948760
& {\dcsecond{0.731400}} & 0.602500
& 0.508476 & 0.563320 & 0.553580 & 0.429280 \\

\cmidrule(lr){1-11}

\setting{KL L/2}{$\tau{=}0.985$}
& 200K & 1.15
& 1.040424 & 0.948140 & 0.712040 & {\dcthird{0.627460}}
& 0.501716 & 0.576740 & 0.544980 & 0.467820 \\

\setting{KL L/2}{$\tau{=}0.985,i{=}1$}
& 200K & 1.15
& 1.048880 & 0.947580 & 0.721940 & 0.617800
& 0.510300 & 0.580980 & 0.556860 & 0.461920 \\

\cmidrule(lr){1-11}

\setting{KL XL/2}{$\tau{=}0.985$}
& 200K & 1.14
& 1.040012 & {\dcthird{0.949280}}
& 0.714240 & {\dcsecond{0.638660}}
& {\dcsecond{0.561476}} & {\dcsecond{0.644860}}
& {\dcfirst{0.585140}} & {\dcsecond{0.502080}} \\

\cmidrule(lr){1-11}
\dcsubgroup{Official checkpoints and external reference}

\setting{Pixel Drift-L}{\cite{deng2026generative}}
& 100K & 1.00
& {\dcfirst{1.091920}} & {\dcfirst{0.951360}}
& {\dcfirst{0.732360}} & 0.595420
& {\dcfirst{0.569008}} & {\dcfirst{0.658880}}
& {\dcsecond{0.580220}} & {\dcthird{0.485960}} \\

\setting{W-Flow B/2}{\cite{han2026one}}
& 200K & 1.19
& {\dcthird{1.072436}} & {\dcsecond{0.950140}}
& 0.721940 & 0.613040
& 0.505844 & 0.561520 & 0.548800 & 0.428560 \\

\setting{W-Flow L/2}{\cite{han2026one}}
& 200K & 1.14
& 1.041888 & 0.946180 & 0.711580 & 0.626180
& 0.511296 & 0.582060 & 0.556000 & 0.463300 \\

\setting{W-Flow XL/2}{\cite{han2026one}}
& 180K & 1.09
& 0.935748 & 0.937380 & 0.680620 & {\dcfirst{0.667500}}
& {\dcthird{0.538504}} & {\dcthird{0.639100}}
& {\dcthird{0.569640}} & {\dcfirst{0.516040}} \\

\bottomrule
\end{tabular}
\end{adjustbox}

\endgroup
\end{table*}

\section{Experiments}
\label{sec:exp}

\paragraph{Setup.}
We evaluate on class-conditional ImageNet $256\times256$. Following W-Flow \citep{han2026one}, a DiT-style generator operates in the SD-VAE latent space \citep{Rombach_2022_CVPR}, transport costs use the released pretrained MAE encoders \citep{deng2026generative}, and guidance acts on the velocity with its weight sampled during training (Appendix~\ref{app:guidance}); sampling takes one forward pass. By default we use ten Sinkhorn iterations and $\varepsilon=0.05$ at every scale, since for our method ten outperform one at B/2 and L/2 (Table~\ref{tab:fdr-system-level}); W-Flow instead uses one iteration at L/2 and XL/2 and $\varepsilon=0.01$ at XL/2, and its other training settings are kept (Appendix~\ref{app:impl}). Ablations use DiT-B/2 for 100 epochs. FID \citep{3295222.3295408} uses 50K samples under the ADM protocol \citep{3540261.3540933}.% We evaluate on class-conditional ImageNet $256\times256$. Following W-Flow \citep{han2026one}, a DiT-style generator operates in the SD-VAE latent space \citep{Rombach_2022_CVPR}, transport costs use the released pretrained MAE encoders \citep{deng2026generative}, and guidance acts on the velocity with its weight sampled during training (Appendix~\ref{app:guidance}); sampling takes one forward pass. By default we use ten Sinkhorn iterations and $\varepsilon=0.05$ at every scale, since ten outperform one at scale (Table~\ref{tab:fdr-system-level}); W-Flow instead uses one iteration at L/2 and XL/2 and $\varepsilon=0.01$ at XL/2, and its other training settings are kept (Appendix~\ref{app:impl}). Ablations use DiT-B/2 for 100 epochs. FID \citep{3295222.3295408} uses 50K samples under the ADM protocol \citep{3540261.3540933}.
% We evaluate on class-conditional ImageNet $256\times256$. Following
% W-Flow \citep{han2026one}, the generator is a DiT-style network in the
% SD-VAE latent space \citep{Rombach_2022_CVPR}, and transport costs are
% computed using the released pretrained MAE encoders
% \citep{deng2026generative}. Sampling uses a single forward pass, with
% guidance applied to the velocity field and its weight sampled during
% training (Appendix~\ref{app:guidance}). Ablations use DiT-B/2 for
% 100 epochs with $\varepsilon=0.05$. Unless otherwise specified, we use ten Sinkhorn iterations, which outperform one at scale (Table~\ref{tab:fdr-system-level}). Scaled models follow W-Flow's architecture and training settings but keep this solver (ten iterations, $\varepsilon=0.05$) at every scale, whereas W-Flow uses one iteration at L/2 and XL/2 and $\varepsilon=0.01$ at XL/2 (Appendix~\ref{app:impl}). FID \citep{3295222.3295408} is computed on 50K samples under the ADM protocol \citep{3540261.3540933}.

\paragraph{Run-to-run variation.}
Each row of Table~\ref{table:1} is a single training run, so we calibrate differences against independent retrainings of the same configuration. Evaluation is deterministic given a checkpoint and sampling seed; across three sampling seeds FID varies by $\pm$ 0.009%
. Our from-scratch retraining of the balanced baseline gives 7.07 FID and 24.21 FDr$^6$ at 30K steps, against 7.08 and 24.18 for the 30K checkpoint released by \citet{han2026one} under the same pipeline (they report 7.08 under their protocol). At 200K steps, our $\tau=1$ B/2 model and the released W-Flow-B checkpoint give 1.53 vs.\ 1.52 FID and 11.64 vs.\ 11.71 FDr$^6$ (Table~\ref{tab:fdr-system-level}). The two retrainings thus differ by 0.01 FID and 0.03 FDr$^6$ at 30K steps and by 0.01 FID and 0.07 FDr$^6$ at 200K steps, well below the differences discussed next. Over three independent training seeds, balanced and $\tau=0.985$ give $7.07\pm 0.05$ and $6.86\pm 0.06$ FID, and $24.21\pm 0.04$ and $23.54\pm 0.06$ FDr$^6$.

\paragraph{Marginal relaxation.}
Relaxing the target marginal helps: all seven symmetrized source-fixed runs in Table~\ref{table:1} improve on balanced transport in FDr$_6$ (23.50--23.89 vs.\ 24.21) and in each of the six feature spaces, and so does the two-sided KL-UOT of \citet{Sjourn2019SinkhornDF} at $\tau=0.975$ and $0.985$ (23.97 and 23.77). Source-fixed transport is nonetheless preferable: at the respective best settings it reaches 6.86 FID and 23.54 FDr$_6$ against 7.00 and 23.77 for two-sided; at matched $\tau=0.985$ it is better in all six spaces; and it is more robust to the relaxation strength, still improving on balanced at $\tau=0.95$ (23.63, 6.94) where two-sided falls below it. In FID the differences are smaller and more $\tau$-sensitive for both formulations. Forward-only transport matches the symmetrized family in FDr$_6$ (23.81) but not in FID (7.11 vs.\ 6.86). Regularization $\varepsilon$ also affects performance.

\paragraph{Scaling.}
The solver setting interacts with the marginal treatment: at 30K steps, balanced transport improves from one to ten Sinkhorn iterations whereas source-fixed UOT is best at one (6.72 vs.\ 6.86; Table~\ref{table:1}); at 200K steps UOT is best at ten (Table~\ref{tab:fdr-system-level}). We therefore compare each method at its best setting. UOT-GF ($\tau=0.985$) achieves FIDs of $\mathbf{1.46}/\mathbf{1.34}/\mathbf{1.22}$ at B/L/XL (Table~\ref{tab:fdr-system-level}); the XL result is a new one-step state of the art in FID among models trained from scratch. Two comparisons are controlled: at B/2, against balanced transport under identical settings (1.53; the released W-Flow-B gives 1.52), and at XL/2, where balanced transport under our recipe reaches 1.39 FID at 100K steps against 1.26 for UOT-GF with otherwise identical settings, so the gain persists at scale. The released W-Flow-L and W-Flow-XL checkpoints \citep{han2026one} (1.36/1.33) differ in solver settings (and W-Flow-XL trains 180K steps); at W-Flow's L/2 setting (one iteration) relaxation does not help (1.40 vs.\ 1.36), and balanced transport at ten iterations is untested at L/2. UOT-GF remains competitive with multi-step models at substantially lower sampling compute (Figure~\ref{fig:benchmark_samples}; Figure~\ref{fig:efficiency} and Figure~\ref{fig:efficiency_cd}). On other feature-space metrics the picture is mixed: at B/2, where settings and guidance match, precision, recall and coverage change by less than $0.003$ (Table~\ref{tab:dc_pr_all}); the lower XL recall ($0.639$ vs.\ $0.668$) coincides with stronger guidance ($1.14$ vs.\ $1.09$).

\begin{figure*}[t]
  \centering
  \includegraphics[width=\textwidth]{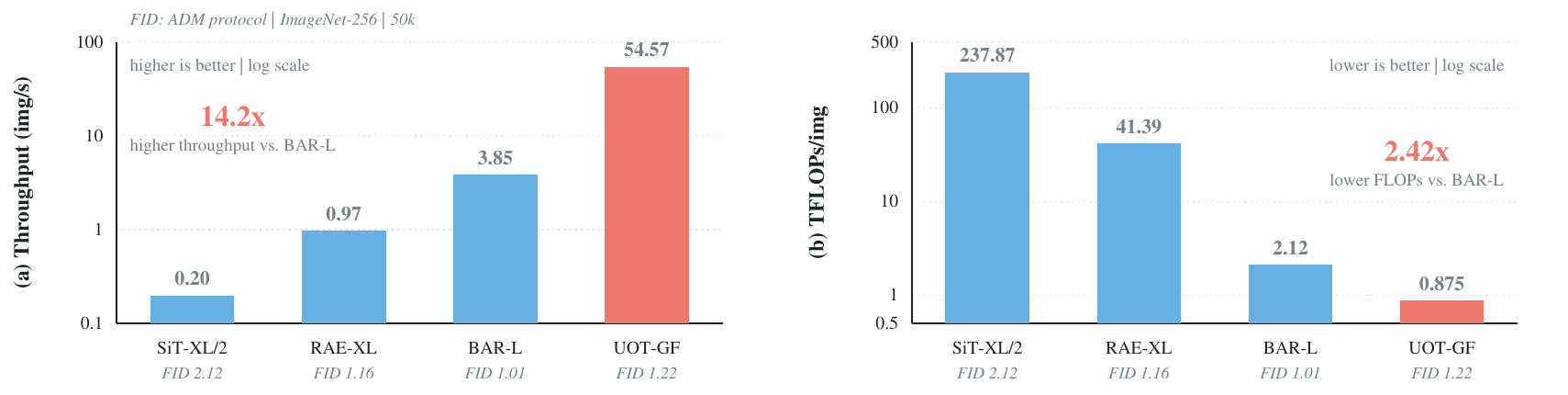}
  \captionsetup{skip=3pt}
  \caption{Sampling efficiency on ImageNet-256, measured on a single Nvidia H20 under identical settings (log scale). (a) Throughput in images per second; (b) compute per image in TFLOPs. UOT-GF uses one network evaluation. Measurement details are in Appendix~\ref{app:addr}.}
  \label{fig:efficiency}
\end{figure*}

\providecommand{\grayentry}[1]{\textcolor{black!55}{#1}}
\providecommand{\shadeword}[1]{\colorbox{gray!15}{\strut #1}}
\providecommand{\tblcite}[1]{\textcolor{blue!70!black}{[#1]}}
\definecolor{oursgray}{RGB}{238,235,231}
\definecolor{citationblue}{RGB}{0,102,153}

\DeclareRobustCommand{\shadeword}[1]{%
  {\setlength{\fboxsep}{1pt}\colorbox{oursgray}{#1}}%
}

\begin{table*}[t]
\centering

\captionsetup{
  font=normalfont,
  labelfont=normalfont,
  textfont=normalfont,
  justification=raggedright,
  singlelinecheck=false
}
\vspace{-2mm}
\caption{%
  System-level comparison on ImageNet $256\times256$.
  NFE and Params. denote sampling evaluations and model size. All metrics for all methods are computed by us under a unified evaluation pipeline; numbers may differ slightly from the original papers.
}
\label{tab:fdr-system-level}

\begingroup
\scriptsize
\setlength{\tabcolsep}{2.2pt}
\renewcommand{\arraystretch}{1.02}
\arrayrulecolor{black!65}

\begin{adjustbox}{max width=0.99\textwidth,center}
\begin{tabular}{@{}l|cc|ccccc|c|ccccc@{}}
Method
& NFE
& Params.
& Incep.
& ConvNeXt
& DINOv2
& MAE
& SigLIP
& FDr-CLIP
& FID$\downarrow$
& IS$\uparrow$
& FDr$^{6}\!\downarrow$
& Precision$\uparrow$
& Recall$\uparrow$
\\
\hline

\grayentry{50k validation images}
& \grayentry{N/A}
& \grayentry{N/A}
& \grayentry{1.00}
& \grayentry{1.00}
& \grayentry{1.00}
& \grayentry{1.00}
& \grayentry{1.00}
& \grayentry{1.00}
& \grayentry{1.68}
& \grayentry{232.2}
& \grayentry{1.00}
& \grayentry{0.75}
& \grayentry{0.66}
\\
\hline

\multicolumn{14}{@{}l}{\grayentry{\emph{discrete}}}\\
VAR-d30 \citep{var}
&  10$\times$2 & 2B
& 1.18 & 1.70 & 5.31 & 6.83 & 11.89
& 13.31 & 1.97 & 304.6 & 6.70& \textbf{0.82} & 0.59 \\

% BAR-B \tblcite{57}
% & $256 \times 2\times 4$ & 1.1B
% & 0.68 & 0.93 & 4.13 & 5.02 & 8.30
% & 6.78 & 1.15 & 273.9 & 4.31 & -- & -- \\

BAR-L \citep{yu2026autoregressive}
& 256$\times$2$\times$\tiny{4} & 1.1B
& \textbf{0.61} & \textbf{0.78} & 3.29 & 4.20 & 6.60
& \textbf{5.98} & \textbf{1.01} & 281.9 & 3.57 &0.77 &\textbf{0.68} \\
\hline

\multicolumn{14}{@{}l}{\grayentry{\emph{latent, multi-step}}}\\
\multicolumn{14}{@{}l}{%
  \hspace{0.4em}\grayentry{\emph{without semantic distillation}}}\\

SiT-XL/2 \citep{10.1007/978-3-031-72980-5_2}
& 250$\times$2 & 675M
& 1.26 & 2.02 & 7.89 & 5.62 & 16.14
& 17.69 & 2.12 & 256.7 & 8.44 & 0.81 & 0.60\\

MAR-L \citep{li2024autoregressive}
& 256$\times$2$\times$\tiny{100} & 478M
& 1.07 & 1.10 & 6.09 & 4.38 & 12.67
& 14.78 & 1.80 & 293.4 & 6.68 &0.80 &0.60\\

FlowAR-H \citep{ren2025flowar}
& 50$\times$2 & 1.9B
& 1.00 & 1.30 & 4.68 & 4.59 & 12.75
& 12.49 & 1.68 & 274.1 & 6.13 & 0.80 & 0.62 \\

MAR-H \citep{li2024autoregressive}
& 256$\times$2$\times$\tiny{100} & 942M
& 0.93 & 0.95 & 4.95 & 3.71 & 10.07
& 13.02 & 1.56 & 299.5 & 5.61 & 0.80 & 0.62 \\

MAR-L, DeTok \citep{yang2026latent}
& 256$\times$2$\times$\tiny{100} & 478M
& 0.83 & 1.36 & 4.66 & 4.40 & 9.57
& 12.12 & 1.39 & 306.2 & 5.49& 0.81 & 0.62  \\

\multicolumn{14}{@{}l}{%
  \hspace{0.4em}\grayentry{\emph{with semantic distillation}}}\\

REG \citep{wu2025representation}
& 250$\times$2 & 685M
& 0.92 & 1.14 & 3.45 & 3.02 & 8.42
& 10.86 & 1.54 & 302.9 & 4.64& 0.78 & 0.62   \\

SiT-XL/2-REPA \citep{yu2025representation}
& 250$\times$2  & 675M
& 0.85 & 1.22 & 4.27 & 3.85 & 9.87
& 12.65 & 1.42 & 306.1 & 5.45 & 0.80 & 0.65 \\

LightningDiT \citep{Yao2025ReconstructionVG}
& 250$\times$2  & 675M
& 0.85 & 1.09 & 3.76 & 3.02 & 8.47
& 10.21 & 1.42 & 294.3 & 4.57 & 0.80 & 0.64 \\

DDT-XL \citep{Wang_2026_CVPR}
& 250$\times$2  & 675M
& 0.75 & 1.02 & 4.26 & 4.11 & 10.16
& 13.86 & 1.26 & 309.3 & 5.70 & 0.79 & 0.66 \\

REPA-E \citep{11445131}
& 250$\times$2  & 676M
& 0.70 & 1.28 & 2.44 &\textbf{2.52}& 5.04
& 6.28 & 1.17 & 298.3 & \textbf{3.04} & 0.79 & 0.66 \\

RAE-XL \citep{zheng2026diffusion}
& 50$\times$2  & 839M
& 0.69 & 1.79 & \textbf{2.11} & 3.30 & \textbf{3.79}
& 7.87 & 1.16 & 261.0 & 3.26 & 0.77 & 0.67  \\
\hline

\multicolumn{14}{@{}l}{\grayentry{\emph{latent, one-step}}}\\

Drift-L \citep{deng2026generative}

& 1 & 463M
& 0.91 & 2.03 & 10.35 & 6.51 & 24.12
& 21.59 & 1.53 & 257.2 & 10.92 & 0.79 & 0.63 \\

iMF-XL \citep{Geng_2026_CVPR}
& 1 & 610M
& 1.09 & 1.72 & 7.30 & 6.09 & 17.02
& 17.14 & 1.82 & 278.9 & 8.39  & 0.78 & 0.63 \\

iMF-XL \citep{Geng_2026_CVPR}
& 2 & 610M
& 0.96 & 1.54 & 6.31 & 5.62 & 14.91
& 15.52 & 1.61 & 289.1 & 7.48  & 0.79 & 0.63 \\

W-Flow-B \citep{han2026one}
& 1 & 133M
& 0.91 & 2.10 & 11.48 & 7.54 & 25.67
& 22.54 & 1.52 & 271.9 & 11.71  & 0.72 & 0.61 \\

W-Flow-L \citep{han2026one}
& 1 & 463M
& 0.81 & 1.81 & 10.12 & 6.94 &22.71 
& 20.59 & 1.36 &   271.7  & 10.49  & 0.71 & 0.63\\

W-Flow-XL \citep{han2026one}
& 1 & 679M
& 0.79 & 1.68  & 8.89  &  6.69 &
20.48 & 18.71 & 1.33  & 265.6 & 9.54  & 0.68  & 0.67  \\
\hline

\multicolumn{12}{@{}l}{\grayentry{\emph{pixel, multi-step}}}\\

PixNerd-XL \citep{wang2026pixnerd}
& 250$\times$2 & 1.0B
& 1.25 & 1.21 & 3.57 & 3.56 & 9.12
& 11.36 & 2.10 & \textbf{318.8} & 5.01 & 0.81 & 0.59 \\

JiT-L \citep{Li_2026_CVPR}
& 250$\times$2 & 459M
& 1.54 & 3.49 & 6.10 & 8.07 & 19.37
& 25.82 & 2.59 & 288.5 & 10.73 & 0.79 & 0.59 \\

JiT-H \citep{Li_2026_CVPR}
& 250$\times$2 & 953M
& 1.18 & 2.52 & 4.28 & 5.65 & 11.91
& 20.40 & 1.97 & 296.0 & 7.66 & 0.78 & 0.63 \\
\hline

\multicolumn{12}{@{}l}{\grayentry{\emph{pixel, one-step}}}\\

Drift-L \citep{deng2026generative}

& 1 & 465M
& 0.87 & 1.26 & 3.88 & 3.44 & 8.70
& 10.48 & 1.46 & 298.7 & 4.40 & 0.81 & 0.60 \\
pMF-L \citep{lu2026onestep}

& 1 & 410M & 1.62 & 1.36 & 6.70 & 9.72 & 20.34
& 14.81 & 2.72 & 261.7 & 9.09  & 0.81 & 0.56 \\

pMF-H \citep{lu2026onestep}
&1&935M& 1.37 & 1.15 & 5.43 & 6.25 & 15.33
& 11.68 & 2.29 & 267.2 & 6.87  & 0.80 & 0.59 \\
\hline

\multicolumn{12}{@{}l}{\grayentry{\emph{Unbalanced optimal transport (\textbf{ours})}}}\\
\rowcolor{oursgray}

KL $\tau = 1$ - B
& 1 & 133M & 0.91 & 2.10 & 11.48 & 7.51
& 25.47 & 22.39 
&1.53 & 271.1 & 11.64 & 0.72 & 0.61
\\

\rowcolor{oursgray}
KL $\tau = 0.95$ - B & 1 & 133M & 0.90  & 2.01& 11.40 &7.55 &25.39 
& 22.38   & 1.51  &277.2 &11.61 & 0.72 & 0.61 \\

% \rowcolor{oursgray}
% S\'ejourn\'e KL $\tau = 0.985$ - B
% & 1 & 133M &  &  & 
% &  & 
% & &  & 
% & &  &  \\

\rowcolor{oursgray}
KL $\tau = 0.985$ - B
& 1 & 133M & 0.87 & 2.05 & 11.37
& 7.47 & 25.19
& 22.13 & 1.46 & 271.5
& 11.51 & 0.72 & 0.62 \\

\rowcolor{oursgray}
KL $\tau = 0.985$, $i=1$ - B
& 1 & 133M & 0.92 & 1.92  & 11.15  & 7.54  &25.47 
&22.41  & 1.54 & 280.5 & 11.56& 0.73 & 0.60 \\

\rowcolor{oursgray}
KL $\tau = 0.985$ - L
&1  & 463M & 0.80
& 1.91 & 10.57 
& 7.14  & 23.90  & 21.11
& 1.34 & 268.4 & 10.91 & 0.71 & 0.63 \\

\rowcolor{oursgray}
KL $\tau = 0.985$, $i=1$ - L
&1  & 463M & 0.84 & 1.76 & 10.21 & 7.03 & 23.41 
&20.92 & 1.40  & 280.6 & 10.70 & 0.72 & 0.62 \\

% \rowcolor{oursgray}
% W-Flow $\epsilon=0.05$, $i=10$ - XL - $100k$
% & 1  & 679M &  
% &  & 
% & &  & 
% & &  &  & &  \\

\rowcolor{oursgray}
KL $\tau = 0.985$ - XL - $100k$
& 1  & 679M & 0.75 
& 1.59 & 9.16
&6.34  & 20.74 &19.29 
&1.26 & 287.6 & 9.65 & 0.72 & 0.63   \\

\rowcolor{oursgray}
KL $\tau = 0.985$ - XL - $200k$
& 1 & 679M  & 0.74 & 1.60 &9.00 
& 6.20 & 20.26 & 19.05 & 1.22  & 284.2 & 9.47  & 0.71 & 0.64 \\

\end{tabular}
\end{adjustbox}
\endgroup
\end{table*}

\section{Conclusion}

% We introduced UOT-GF, a one-step generative framework that guides the training dynamics of image generators by unbalanced optimal transport. By fixing the generator marginal while relaxing the target marginal, the proposed geometry enables adaptive mass reweighting and yields improved transport directions. We derived the corresponding gradient-flow formulation and implemented it through mini-batch UOT plans, explicit Euler targets, and stop-gradient regression. On class-conditional ImageNet $256\times256$, UOT-GF consistently improves over balanced W-Flow across model scales and achieves an FID of \textbf{1.27}, establishing a new state of the art among one-step generators while remaining competitive with multi-step methods. These results demonstrate that unbalanced transport provides an effective and scalable geometry for efficient generative modeling. Future work may extend this framework to higher-resolution and more general conditional generation settings.

% UOT-GF drives drifting with symmetrized source-fixed transport, which improves on balanced transport under identical settings, whereas the tested two-sided relaxation does not. Our kinetic analysis recovers drifting as a limit, identifies non-target stationary states, and gives sufficient conditions for convergence. UOT-GF reaches 1.22 FID on ImageNet $256\times256$, a new one-step state of the art among models trained from scratch.

We introduced UOT-GF using symmetrized source-fixed transport, allowing marginal reweighting with lower bounds on both interaction marginals. In our experiments, it improves generation over balanced transport. Our kinetic analysis recovers drifting, including the balanced $\tau\to1$ limit, and identifies non-target stationary states and sufficient conditions for target convergence. On ImageNet $256$, UOT-GF achieves a one-step state-of-the-art FID of \textbf{1.22}. These results motivate extensions to higher resolutions and broader conditional settings.

% We introduced UOT-GF, a one-step generative framework that guides image-generator training through unbalanced optimal transport. Relaxing the target marginal enables adaptive mass reweighting and improved transport directions, which we realize using mini-batch UOT plans, explicit Euler targets, and stop-gradient regression. On class-conditional ImageNet $256\times256$, UOT-GF consistently outperforms balanced W-Flow across model scales and achieves an FID of \textbf{1.22}, setting a new state of the art among one-step generators while remaining competitive with multi-step methods. These results establish unbalanced transport as an effective and scalable geometry for efficient generation, with promising extensions to higher-resolution and broader conditional settings.

% \subsubsection*{Acknowledgments}
% Use unnumbered third level headings for the acknowledgments. All
% acknowledgments, including those to funding agencies, go at the end of the paper.
\bibliography{iclr2027_conference}
\bibliographystyle{iclr2027_conference}
\newpage
\appendix

\section*{AI Use Statement}

We used large language models as writing and engineering assistants: to polish the prose and draft text from the authors' outlines, to search for related work (every reference was checked against the original source by the authors), to help modify and debug experimental code, to typeset tables and algorithms, to check the presentation and notational consistency of the mathematical statements, and to obtain critical feedback on drafts. They were not used to generate research ideas, to design the method, to run or analyze experiments, or to produce any reported result or figure. All content produced with their assistance was reviewed and verified by the authors, who take full responsibility for the paper.

\section{Notation}
\label{app:notation}

This appendix collects the notation used throughout the paper. Vectors are bold
lowercase, matrices bold uppercase, and random vectors bold uppercase as well; the
two never appear in the same context. Generic measures are written
$\alpha,\alpha'$; $q$ is always the model distribution and $p$ the target. Symbols
that appear only inside a single proof are listed in the last block and are not
reused elsewhere.

{\renewcommand{\arraystretch}{1.4}
\begin{longtable}{p{1.25in}p{3.25in}}

\multicolumn{2}{l}{\bf Numbers and Arrays}\\*[2pt]
$\displaystyle a$ & A scalar\\
$\displaystyle \mathbf x$ & A vector (bold lowercase)\\
$\displaystyle \mathbf P$ & A matrix (bold uppercase)\\
$\displaystyle \mathbf 1_N$ & All-ones vector of length $N$\\
$\displaystyle \mathbf x^{\!\top},\,\mathbf P^{\!\top}$ & Transpose\\
$\displaystyle \|\mathbf x\|$ & Euclidean norm\\
$\displaystyle \langle\mathbf C,\mathbf P\rangle$ & Frobenius inner product, $\sum_{ij}C_{ij}P_{ij}$\\
$\displaystyle \delta_{\mathbf x}$ & Dirac mass at $\mathbf x$\\[6pt]

\multicolumn{2}{l}{\bf Sets and Spaces}\\*[2pt]
$\displaystyle \mathbb R,\,\mathbb R_+$ & Real numbers; nonnegative reals\\
$\displaystyle d,\,d_0$ & Dimension of the latent (data) space; of the noise space\\
$\displaystyle \Omega\subset\mathbb R^d$ & Domain of the latent space\\
$\displaystyle \{1,\dots,N\}$ & Integers from $1$ to $N$\\
$\displaystyle [a,b],\,(a,b]$ & Closed interval; half-open interval\\
$\displaystyle \mathcal P(\Omega),\,\mathcal P_2(\mathbb R^d)$ & Probability measures on $\Omega$; those with finite second moment\\
$\displaystyle \mathcal M_+(\Omega^2)$ & Nonnegative finite measures on $\Omega\times\Omega$\\
$\displaystyle \mathrm{supp}(q)$ & Support of a measure $q$\\
$\displaystyle C^k(\Omega),\,C_c^\infty(\Omega)$ & $k$-times differentiable functions; smooth compactly supported functions\\
$\displaystyle L^1(\alpha),\,L^2(\alpha)$ & Integrable and square-integrable functions w.r.t.\ $\alpha$\\
$\displaystyle \mathcal K$ & $\mathcal W_2$-compact set in Assumption~\ref{ass:longtime}\\[6pt]

\multicolumn{2}{l}{\bf Indexing}\\*[2pt]
$\displaystyle x_i,\,P_{ij}$ & Element $i$ of vector $\mathbf x$; element $(i,j)$ of matrix $\mathbf P$\\
$\displaystyle i,\,j$ & Indices of generated and real samples in a mini-batch\\
$\displaystyle n$ & Training iteration\\
$\displaystyle t$ & Continuous time\\
$\displaystyle k\in\{1,2\}$ & Marginal slot of a transport problem\\
$\displaystyle \ell$ & Feature-block index\\
$\displaystyle \mathbf x_k,\,\mathbf x_{-k}$ & The $k$-th coordinate of a pair; the other coordinate\\
$\displaystyle (\cdot)^{(k)}$ & Quantity attached to the $k$-th marginal\\
$\displaystyle (\cdot)_{\alpha_1,\alpha_2}$ & Quantity attached to the pair $(\alpha_1,\alpha_2)$\\
$\displaystyle \pi_{1,i},\,\pi_{2,j}$ & Row and column sums of a mini-batch plan, $\sum_jP_{ij}$ and $\sum_iP_{ij}$\\[6pt]

\multicolumn{2}{l}{\bf Calculus and Operators}\\*[2pt]
$\displaystyle \partial_t$ & Partial derivative in time\\
$\displaystyle \nabla,\,\nabla_{\mathbf x},\,\nabla_{\mathbf u}$ & Gradient; gradient in position; gradient in momentum\\
$\displaystyle \nabla\cdot$ & Divergence\\
$\displaystyle \Delta,\,\Delta_{\mathbf u}$ & Laplacian; Laplacian in momentum\\
$\displaystyle \frac{d}{dt}$ & Total time derivative along a curve\\
$\displaystyle |\partial\mathcal F|$ & Metric slope of $\mathcal F$ in $(\mathcal P(\Omega),\mathcal W_2)$\\
$\displaystyle \frac{\delta\mathcal F}{\delta q}$ & First variation of a functional $\mathcal F$ at $q$ (Definition~\ref{def:first_variation})\\
$\displaystyle \int f\,d\alpha$ & Integral of $f$ against the measure $\alpha$\\
$\displaystyle \frac{d\alpha}{d\alpha'}$ & Radon--Nikodym derivative\\
$\displaystyle \operatorname{LSE}_j(\cdot)$ & Log-sum-exp over index $j$\\
$\displaystyle \mathcal O(\cdot)$ & Big-O notation\\
$\displaystyle \mathrm{sg}(\cdot)$ & Stop-gradient\\[6pt]

\multicolumn{2}{l}{\bf Probability and Information Theory}\\*[2pt]
$\displaystyle \mathbf x\sim q$ & $\mathbf x$ is distributed according to $q$\\
$\displaystyle \mathbb E_{q}[f]$ & Expectation of $f$ under $q$\\
$\displaystyle \mathbb E_\pi[\mathbf x_{-k}\mid\mathbf x_k]$ & Conditional expectation under a plan $\pi$\\
$\displaystyle \mathrm{Var}_q(h)$ & Variance of $h$ under $q$\\
$\displaystyle \mathcal N(0,\mathbf{I})$ & Standard Gaussian\\
$\displaystyle \operatorname{Law}(\mathbf X)$ & Probability law of a random vector $\mathbf X$\\
$\displaystyle \mathbf X,\,\mathbf Y,\,\mathbf U,\,\mathbf B$ & Random vectors in $\mathbb R^d$\\
$\displaystyle \mathrm{KL}(\alpha\Vert\alpha')$ & Generalized Kullback--Leibler divergence (Definition~\ref{def:generalized_kl})\\
$\displaystyle D_\varphi(\alpha\Vert\alpha')$ & Csisz\'ar $\varphi$-divergence (Definition~\ref{def:phi_divergence})\\
$\displaystyle \chi^2(p\Vert q)$ & Chi-square divergence\\
$\displaystyle \mathcal W_2(\alpha,\alpha')$ & 2-Wasserstein distance\\
$\displaystyle \alpha\ll\alpha'$ & $\alpha$ is absolutely continuous w.r.t.\ $\alpha'$\\
$\displaystyle \alpha\otimes\alpha'$ & Product measure\\
$\displaystyle f_\#\alpha$ & Pushforward of $\alpha$ by a map $f$\\
$\displaystyle \alpha(\Omega)$ & Total mass of $\alpha$\\
$\displaystyle C_{\mathrm{PI}}(q)$ & Poincar\'e constant of $q$\\[6pt]

\multicolumn{2}{l}{\bf Functions}\\*[2pt]
$\displaystyle f:\mathcal A\to\mathcal B$ & Function with domain $\mathcal A$ and codomain $\mathcal B$\\
$\displaystyle \log x$ & Natural logarithm\\
$\displaystyle x^+$ & Positive part, $\max(0,x)$\\
$\displaystyle \varphi'$ & Derivative of a scalar function $\varphi$\\
$\displaystyle \varphi'_\infty$ & Recession constant, $\lim_{r\to\infty}\varphi(r)/r$\\
$\displaystyle \iota$ & Convex indicator of $\{1\}$: $0$ at $1$, $+\infty$ elsewhere\\
$\displaystyle \mathsf R$ & Reflection $\mathsf R(x)=-x$ (Proposition~\ref{prop:non_target_stationary})\\[6pt]

\multicolumn{2}{l}{\bf Distributions and Measures}\\*[2pt]
$\displaystyle p$ & Target (data) distribution, $p=p_{\mathrm{data}}$\\
$\displaystyle p_{\mathrm{ref}}$ & Reference (noise) distribution\\
$\displaystyle q_{\boldsymbol\theta}$ & Model distribution, $(f_{\boldsymbol\theta})_\#p_{\mathrm{ref}}$\\
$\displaystyle q_t,\,q_n$ & Model distribution at time $t$; at iteration $n$\\
$\displaystyle \widehat q,\,\widehat p$ & Empirical mini-batch measures\\
$\displaystyle \alpha,\,\alpha',\,\alpha_1,\,\alpha_2$ & Generic measures (arguments of a transport cost)\\
$\displaystyle \pi,\,\pi^{\alpha_1,\alpha_2}$ & Transport plan; optimal plan for the pair $(\alpha_1,\alpha_2)$\\
$\displaystyle \pi_1,\,\pi_2$ & First and second marginals of $\pi$\\
$\displaystyle \kappa^{(k)}$ & Conditional kernel of $\pi$ given the $k$-th coordinate\\
$\displaystyle \mu^{m,\beta}_t(\mathbf x,\mathbf u)$ & Phase-space density\\
$\displaystyle q^{m,\beta}_t,\,q^\beta_t$ & Positional marginal of $\mu^{m,\beta}_t$; its overdamped limit\\
$\displaystyle q_\mu$ & Positional marginal of a phase-space density $\mu$\\[6pt]

\multicolumn{2}{l}{\bf Generator}\\*[2pt]
$\displaystyle f_{\boldsymbol\theta}$ & Generator network with parameters $\boldsymbol\theta$\\
$\displaystyle f_n$ & Generator at training iteration $n$, $f_{\boldsymbol\theta_n}$\\
$\displaystyle \mathbf z$ & Latent noise, $\mathbf z\sim p_{\mathrm{ref}}$\\
$\displaystyle \mathbf x$ & Generated sample, $f_{\boldsymbol\theta}(\mathbf z)$\\
$\displaystyle \Delta\mathbf x_n$ & Displacement of a sample between iterations\\
$\displaystyle \psi_\ell$ & Frozen feature encoder, block $\ell$\\
$\displaystyle n_\psi$ & Number of feature blocks\\
$\displaystyle \cls$ & Class label\\
$\displaystyle \mathcal X,\,\mathcal Y$ & Mini-batches of generated and real samples\\[6pt]

\multicolumn{2}{l}{\bf Transport}\\*[2pt]
$\displaystyle c(\mathbf x,\mathbf y)$ & Ground cost, $\tfrac12\|\mathbf x-\mathbf y\|^2$\\
$\displaystyle \mathbf C$ & Cost matrix, $C_{ij}=c(\mathbf x_i,\mathbf y_j)$\\
$\displaystyle \mathbf P$ & Coupling matrix on a mini-batch\\
$\displaystyle \mathbf K$ & Log-kernel, $-\mathbf C/\varepsilon$\\
$\displaystyle \varepsilon$ & Entropic regularization strength\\
$\displaystyle \rho$ & Marginal-penalty weight\\
$\displaystyle \tau$ & Relaxation strength, $\rho/(\rho+\varepsilon)$; $\tau\to1$ is balanced\\
$\displaystyle \varphi_1,\,\varphi_2$ & Marginal-penalty generators for source and target\\
$\displaystyle \mathrm{UOT}^{\varphi_1,\varphi_2}_{\varepsilon,\rho}$ & Entropic unbalanced transport cost (Definition~\ref{def:phi_uot})\\
$\displaystyle \mathcal A(\alpha_1,\alpha_2)$ & Symmetrized transport cost\\
$\displaystyle S_\varepsilon,\,S^{U,\varphi}_{\varepsilon,\rho}$ & Balanced Sinkhorn divergence; symmetric unbalanced Sinkhorn divergence\\
$\displaystyle r^{(k)}_{\alpha_1,\alpha_2}$ & Transported-mass ratio of the $k$-th marginal, $d\pi_k/d\alpha_k$\\
$\displaystyle B^{(k)}_{\alpha_1,\alpha_2}$ & Conditional barycenter of the other coordinate given the $k$-th\\
$\displaystyle a_i,\,b_j$ & Sinkhorn scaling factors in Algorithm~\ref{alg:sinkhorn_sf}, entries of $\mathbf a\in\mathbb R^N$, $\mathbf b\in\mathbb R^M$\\[6pt]

\multicolumn{2}{l}{\bf Energies and Forces}\\*[2pt]
$\displaystyle \mathcal F$ & Generic energy functional on $\mathcal P(\Omega)$\\
$\displaystyle \mathsf D(q\Vert p)$ & Generic nonnegative, definite divergence (Section~\ref{sec:wgf})\\
$\displaystyle \mathcal E_p$ & Configurational energy (Definition~\ref{def:configurational_energy})\\
$\displaystyle \mathcal E_p^\beta$ & Finite-temperature energy, $\mathcal E_p+\tfrac1\beta\int q\log q$\\
$\displaystyle \mathcal H_{m,\beta}$ & Kinetic free energy on phase space\\
$\displaystyle \Phi_q$ & First variation of $\mathcal E_p$ at $q$\\
$\displaystyle F^{(k)}_{\alpha_1,\alpha_2}$ & Force on the $k$-th marginal, $r^{(k)}(B^{(k)}-\mathbf x)$\\
$\displaystyle F_p[q]$ & Force induced by $\mathcal E_p$, $-\nabla\Phi_q$\\
$\displaystyle \mathbf v_q$ & Wasserstein gradient-flow velocity\\
$\displaystyle \mathcal D(q)$ & Dissipation, $\int\|\nabla\Phi_q\|^2dq$\\
$\displaystyle A$ & Uniform nondegeneracy constant in Assumption~\ref{ass:longtime}\\[6pt]

\multicolumn{2}{l}{\bf Dynamics}\\*[2pt]
$\displaystyle \mathbf u$ & Momentum (phase-space coordinate)\\
$\displaystyle m,\,\gamma,\,\beta$ & Inertia, friction, inverse temperature\\
$\displaystyle h_n$ & Drifting step size at iteration $n$\\
$\displaystyle \eta$ & Euler step size, $h/\gamma$\\
$\displaystyle T$ & Time horizon\\
$\displaystyle L_{\mathbf x},\,L_{\mathcal W},\,\Lambda$ & Lipschitz constants of the force in position and in law; their sum\\
$\displaystyle K_0$ & Force at the origin, $\|F_p[\delta_{\mathbf 0}](\mathbf 0)\|$\\
$\displaystyle C_0,\,C_{T,\beta},\,C_F$ & Constants in Theorem~\ref{thm:overdamped}\\[6pt]

\multicolumn{2}{l}{\bf Implementation}\\*[2pt]
$\displaystyle N,\,M$ & Number of generated and real samples in a mini-batch\\
$\displaystyle N_{\cls}$ & Classes per training step\\
$\displaystyle N_{\mathrm{pos}},\,N_{\mathrm{neg}}$ & Real and generated samples per class\\
$\displaystyle N_{\mathrm{unc}}$ & Unconditional real samples per step\\
$\displaystyle L$ & Number of Sinkhorn iterations\\
$\displaystyle \mathbf P^{\to},\,\mathbf P^{\leftarrow}$ & Forward and reverse source-fixed plans\\
$\displaystyle \widehat{\mathbf v}^{\to},\,\widehat{\mathbf v}^{\leftarrow}$ & Forward and reverse velocity estimates\\
$\displaystyle \widehat{\mathbf v}_{\cls},\,\widehat{\mathbf v}_{\mathrm{self}},\,\widehat{\mathbf v}_\varnothing$ & Estimated velocities toward class, self, and unconditional batches\\
$\displaystyle \widehat{\mathbf v}^\ell_i$ & Estimated velocity of sample $i$ in feature block $\ell$\\
$\displaystyle w,\,s$ & Guidance weight; reported scale $s=w+1$\\
$\displaystyle \mathcal L(\boldsymbol\theta)$ & Regression loss\\[6pt]

\multicolumn{2}{l}{\bf Proof-Local Symbols}\\*[2pt]
$\displaystyle \mathcal U,\,\Psi^{(k)}$ & Shorthand for $\mathrm{UOT}^{\varphi_1,\varphi_2}$ and its first variation in slot $k$\\
$\displaystyle \mathcal T_t,\,\mathsf S$ & Perturbation diffeomorphism; coordinate swap (Proposition~\ref{prop:uot_force})\\
$\displaystyle \Xi,\,\vartheta_R,\,\nu,\,\sigma$ & Free-energy density, cutoff, growth and decay exponents (Proposition~\ref{prop:vfp_dissipation})\\
$\displaystyle \varkappa_m,\,\mathsf e,\,\mathsf e_\beta$ & Momentum kernel; coupling errors (Theorem~\ref{thm:overdamped}, Corollary~\ref{cor:zero_temp})\\
$\displaystyle \mathcal G,\,\mathcal S,\,\varsigma,\,\lambda,\,C_\Phi$ & Cross and self terms, plan mass, mixing parameter, Hessian bound (Proposition~\ref{prop:non_target_stationary})\\

\end{longtable}
}

\section{Related Work}
\label{sec:related}

\paragraph{Optimal transport and its relaxations.}
Optimal transport lifts a ground cost on samples to a geometry on distributions
\citep{villani2009optimal,santambrogio2015optimal,peyré2019computational}. Entropic
regularization makes it tractable via Sinkhorn iterations \citep{2999792.2999868}
and improves sample complexity from the curse-of-dimensionality rate of plain OT to
a parametric rate \citep{Genevay2018SampleCO,3454287.3454695}; the entropic bias is removed by the Sinkhorn divergence, which is positive, definite, and metrizes weak
convergence \citep{pmlr-v84-genevay18a,Feydy2018InterpolatingBO}. Unbalanced OT
replaces the hard marginal constraints by Csisz\'ar penalties, allowing mass to be
created or destroyed \citep{Liero2015OptimalEP,Chizat2016ScalingAF}, and
\citet{Sjourn2019SinkhornDF} extend Sinkhorn divergences to this setting, proving
positivity for symmetric penalties. In the mini-batch regime, balanced OT is known to
produce couplings that the population plan would never form, and unbalanced
mini-batch OT was proposed as the remedy \citep{pmlr-v108-fatras20a,pmlr-v139-fatras21a}.
Our energy is a source-fixed member of the unbalanced family, debiased so that the
target is stationary (Section~\ref{sec:energy}).

\paragraph{Wasserstein gradient flows.}
Gradient flows of energy functionals in the Wasserstein geometry originate in the
JKO scheme and Otto calculus \citep{doi:10.1137/S0036141096303359,article111}, with the
metric-space theory of \citet{article222}; see \citet{article333}
for an overview. In machine learning they have been instantiated with kernel
discrepancies \citep{Arbel2019MaximumMD,pmlr-v130-mroueh21a,glaser2021kale,hertrich2024generative},
variational and discriminator-based divergences
\citep{pmlr-v97-gao19b,ansari2021refining,pmlr-v162-fan22d}, JKO steps parameterized by
input-convex networks \citep{mokrov2021largescale,alvarez-melis2022optimizing,pmlr-v151-bunne22a},
and unbalanced transport with an inner adversarial solve \citep{3692070.3692414}.
These methods integrate the flow at inference time or through a costly inner
optimization; drifting models instead amortize it into the generator's training
trajectory. Kinetic lifts of Wasserstein flows and their overdamped limits are
classical in PDE theory \citep{article56312} and have been studied as
accelerated sampling schemes \citep{Ma2019IsTA}; we use them to place drifting
inside a Langevin-type hierarchy (Section~\ref{sec:kinetic}).

\paragraph{Optimal transport in generative modeling.}
The Wasserstein distance entered generative modeling as a GAN critic
\citep{pmlr-v70-arjovsky17a,3295222.3295327,salimans2018improving} and as a
Sinkhorn-divergence loss \citep{pmlr-v84-genevay18a}. A second line learns the
transport map itself, through input-convex potentials
\citep{pmlr-v119-makkuva20a,korotin2021wasserstein} or neural dual solvers
\citep{korotin2023neural,rout2022generative,fan2022scalable}. Unbalanced transport
was introduced for robustness to outliers via robust OT \citep{NEURIPS2020_9719a00e}, machine unlearning generative models \citep{choi2026unlearning} and semi-dual UOT maps \citep{3666122.3667962}; these rely on adversarial or
semi-dual formulations and have been demonstrated up to CelebA-HQ. OT couplings
also straighten flow-matching paths \citep{tong2024improving,3618408.3619574}
and underlie Schr\"odinger-bridge samplers \citep{bortoli2021diffusion,shi2023diffusion}. We
use unbalanced transport differently: not as a loss to be minimized by an inner
solver, but as the geometry that defines the velocity of an explicit drifting update.

\paragraph{Drifting models and their interpretation.}
Drifting models \citep{deng2026generative} train a one-step generator by regressing
each sample onto a stop-gradient target displaced along a kernel-based
attraction--repulsion field, reaching 1.54 FID on ImageNet-256. Several works have
since interpreted the field. With Gaussian kernels it is a difference of
kernel-density score estimates, connecting drifting to score matching and to
Wasserstein flows of smoothed divergences
\citep{lai2026unified,cao2026gradient,gretton2026wasserstein};
\citet{franz2026drifting} show the original field is not conservative, and
\citet{caucheteux2026a} give a unifying variational account of generative
Wasserstein flows. Finite-particle rates for conservative and non-conservative
fields are established by \citet{balasubramanian2026finite}. On the algorithmic side,
\citet{he2026sinkhorn} replace one-sided kernel normalization by Sinkhorn coupling,
restoring identifiability; W-Flow \citep{han2026one} realizes drifting as the
Wasserstein gradient flow of the balanced Sinkhorn divergence with a convergence
guarantee for the particle dynamics and state-of-the-art one-step results;
\citet{esteban2026kernel} use kernel gradients, \citet{zhang2026lookahead} regress
onto multi-step drifts, \citet{dumont2026learning} constrain the dynamics to
transport maps, and \citet{feng2026representation,zhang2026distilling} match distributions in
representation space. Drifting has also been applied to speech enhancement, medical
imaging, and policy optimization
\citep{xu2026speech,li2026generative,zhang2026positive}. All of these inherit balanced
transport, or a kernel surrogate of it, as the geometry of the field; to our knowledge,
our work is the first to relax its marginal constraints, showing that a source-fixed
unbalanced geometry improves on it under identical settings on ImageNet.

\paragraph{One-step generation.}
Beyond drifting, one-step generators are obtained by distilling multi-step models
\citep{3618408.3619743,lu2025simplifying,frans2025one}, by learning average
velocities \citep{geng2025mean}, or adversarially
\citep{styleGAN,lin2026adversarial}. Drifting-type methods differ in
requiring neither a teacher nor a discriminator.

\section{Definition}\label{app:def}

\begin{definition}[First variation]
\label{def:first_variation}
Let $\mathcal F:\mathcal P(\Omega)\to\mathbb R$ be a functional. A measurable function
$\Phi_q:\Omega\to\mathbb R$ is called a first variation of $\mathcal F$ at $q$ if, for
every finite signed Borel measure $\omega$ with $\Phi_q\in L^1(|\omega|)$ and
$q+t\omega\in\mathcal P(\Omega)$ for all $t\in[0,t_0)$ and some $t_0>0$,
\begin{equation}
\mathcal F[q+t\omega]=\mathcal F[q]+t\int_\Omega\Phi_q\,d\omega+o(t)
\qquad\text{as } t\to0^+.
\label{eq:first_variation}
\end{equation}
We write $\Phi_q=\frac{\delta\mathcal F}{\delta q}(q)$. Since admissible
perturbations have zero total mass, $\Phi_q$ is determined only up to an additive
constant.
\end{definition}
Applying Definition~\ref{def:first_variation} to $\mathcal E_p$ from
Definition~\ref{def:configurational_energy}, we write
$\Phi_q:=\frac{\delta\mathcal E_p}{\delta q}(q)$ and $F_p[q]:=-\nabla\Phi_q$.

\begin{definition}[The 2-Wasserstein distance]
\label{def:wasserstein_distance}
Let $\mathcal P_2(\mathbb R^d)$ denote the space of probability measures with finite
second moments. For $\alpha,\alpha'\in\mathcal P_2(\mathbb R^d)$, their 2-Wasserstein
distance is
\begin{equation}
\mathcal W_2^2(\alpha,\alpha')
:=\inf_{\pi\in\Pi(\alpha,\alpha')}
\int_{\mathbb R^d\times\mathbb R^d}\|\mathbf x-\mathbf y\|^2\,d\pi(\mathbf x,\mathbf y),
\end{equation}
where $\Pi(\alpha,\alpha')$ denotes the set of couplings whose marginals are $\alpha$
and $\alpha'$.
\end{definition}

\begin{definition}[$\varphi$-divergence]
\label{def:phi_divergence}
Let $\varphi:\mathbb R_+\to\mathbb R\cup\{+\infty\}$ be a convex lower-semicontinuous
function satisfying $\varphi(1)=0$. For two nonnegative finite measures
$\alpha,\alpha'\in\mathcal M_+(\Omega)$, let
\begin{equation}
\alpha=\frac{d\alpha}{d\alpha'}\alpha'+\alpha^\perp
\end{equation}
be the Lebesgue decomposition of $\alpha$ with respect to $\alpha'$. The Csisz\'ar
$\varphi$-divergence from $\alpha$ to $\alpha'$ is defined by
\begin{equation}
D_\varphi(\alpha\Vert\alpha')
:=\int_\Omega\varphi\Big(\frac{d\alpha}{d\alpha'}\Big)d\alpha'
+\varphi'_\infty\,\alpha^\perp(\Omega),
\label{eq:phi_divergence}
\end{equation}
where $\varphi'_\infty:=\lim_{r\to+\infty}\varphi(r)/r$ is the recession constant of
$\varphi$, with the convention $0\cdot(+\infty)=0$.
\end{definition}
In particular, for the convex indicator $\iota$ one has $\varphi'_\infty=+\infty$ and
$D_\iota(\alpha\Vert\alpha')=0$ if $\alpha=\alpha'$ and $+\infty$ otherwise, so the corresponding marginal penalty acts as a hard constraint.

\begin{definition}[Generalized Kullback--Leibler divergence]
\label{def:generalized_kl}
For $\alpha,\alpha'\in\mathcal M_+(\Omega)$ with $\alpha\ll\alpha'$, the generalized
Kullback--Leibler divergence is
\begin{equation}
\mathrm{KL}(\alpha\Vert\alpha')
:=\int_\Omega\log\Big(\frac{d\alpha}{d\alpha'}\Big)d\alpha
-\alpha(\Omega)+\alpha'(\Omega),
\end{equation}
and $\mathrm{KL}(\alpha\Vert\alpha')=+\infty$ whenever $\alpha\not\ll\alpha'$.
\end{definition}

\begin{definition}[Entropic unbalanced optimal transport]
\label{def:phi_uot}
Let $c:\Omega\times\Omega\to\mathbb R_+$ be a nonnegative ground cost, let
$\varepsilon,\rho>0$, and let $\varphi_1,\varphi_2$ be marginal-penalty generators,
each either the convex indicator $\iota$ or a strictly convex
$\varphi\in C^2((0,\infty))$ with $\varphi(1)=0$. For
$\alpha_1,\alpha_2\in\mathcal M_+(\Omega)$,
\begin{equation}
\begin{aligned}
\mathrm{UOT}^{\varphi_1,\varphi_2}_{\varepsilon,\rho}(\alpha_1,\alpha_2)
:=\inf_{\pi\in\mathcal M_+(\Omega^2)}\Big\{
&\int_{\Omega^2}c(\mathbf x,\mathbf y)\,d\pi(\mathbf x,\mathbf y)
+\rho D_{\varphi_1}(\pi_1\Vert\alpha_1)\\
&+\rho D_{\varphi_2}(\pi_2\Vert\alpha_2)
+\varepsilon\,\mathrm{KL}(\pi\Vert\alpha_1\otimes\alpha_2)
\Big\},
\end{aligned}
\label{eq:phi_uot}
\end{equation}
where $\pi_1,\pi_2$ are the marginals of $\pi$. The case $\varphi_k=\iota$ imposes
the hard constraint $\pi_k=\alpha_k$; a $C^2$ generator penalizes deviation from it.
We write $\tau:=\rho/(\rho+\varepsilon)$.
\end{definition}

\begin{definition}[Symmetric unbalanced Sinkhorn divergence \citep{Sjourn2019SinkhornDF}]
\label{def:phi_sinkhorn}
For a single generator $\varphi$ applied to both marginals, the debiased divergence
\begin{equation}
S^{U,\varphi}_{\varepsilon,\rho}(\alpha_1,\alpha_2)
:=
\mathrm{UOT}^{\varphi,\varphi}_{\varepsilon,\rho}(\alpha_1,\alpha_2)
-\tfrac12\mathrm{UOT}^{\varphi,\varphi}_{\varepsilon,\rho}(\alpha_1,\alpha_1)
-\tfrac12\mathrm{UOT}^{\varphi,\varphi}_{\varepsilon,\rho}(\alpha_2,\alpha_2)
+\tfrac{\varepsilon}{2}\bigl(\alpha_1(\Omega)-\alpha_2(\Omega)\bigr)^2
\label{eq:phi_sinkhorn}
\end{equation}
is nonnegative, definite, and metrizes weak convergence when the kernel
$e^{-c/\varepsilon}$ is positive and universal. On $\mathcal P(\Omega)$ the mass
correction vanishes, and for $\varphi=\iota$ it reduces to the Sinkhorn divergence
$S_\varepsilon$ of \citet{Feydy2018InterpolatingBO}.
\end{definition}

\begin{definition}[Configurational energy]
\label{def:configurational_energy}
Let $p\in\mathcal P(\Omega)$ be the fixed target. With the symmetrized cost
$\mathcal A(\alpha_1,\alpha_2):=\tfrac12\big[\mathrm{UOT}^{\varphi_1,\varphi_2}_{\varepsilon,\rho}(\alpha_1,\alpha_2)
+\mathrm{UOT}^{\varphi_1,\varphi_2}_{\varepsilon,\rho}(\alpha_2,\alpha_1)\big]$,
the configurational energy of $q\in\mathcal P(\Omega)$ is
\begin{equation}
\mathcal E_p(q)
:=
\mathcal A(q,p)-\tfrac12\mathcal A(q,q)-\tfrac12\mathcal A(p,p).
\label{eq:configurational_energy}
\end{equation}
For $(\varphi,\varphi)$ this coincides on $\mathcal P(\Omega)$ with
Definition~\ref{def:phi_sinkhorn}. For $(\iota,\mathrm{KL})$ the self-term
$\mathcal A(q,q)$ uses the same asymmetric cost as the cross-term, so that
$\mathcal E_p(p)=0$ and $F_p[p]=\mathbf 0$ (Section~\ref{sec:energy}); this
construction is not covered by the positivity proof of
\citet{Sjourn2019SinkhornDF}, see Assumption~\ref{ass:longtime}.
\end{definition}

\begin{definition}[Transported-mass ratio and conditional barycenter]
\label{def:survival_barycenter}
Let $\pi$ be an optimal plan of
$\mathrm{UOT}^{\varphi_1,\varphi_2}_{\varepsilon,\rho}(\alpha_1,\alpha_2)$ with
$\pi_k\ll\alpha_k$ for $k\in\{1,2\}$, and let
$\pi=\pi_k\otimes\kappa^{(k)}$ be its disintegration with respect to the $k$-th
marginal, where $\kappa^{(k)}(\cdot\mid\mathbf x)$ is a probability kernel. Define
\begin{equation}
r^{(k)}_{\alpha_1,\alpha_2}(\mathbf x):=\frac{d\pi_k}{d\alpha_k}(\mathbf x),
\qquad
B^{(k)}_{\alpha_1,\alpha_2}(\mathbf x):=\int_\Omega\mathbf x_{-k}\,\kappa^{(k)}(d\mathbf x_{-k}\mid\mathbf x),
\label{eq:survival_barycenter}
\end{equation}
the transported-mass ratio of the $k$-th marginal and the conditional barycenter of
the other coordinate given $\mathbf x_k=\mathbf x$. If $\varphi_k=\iota$ then
$r^{(k)}\equiv1$.
\end{definition}

\section{Assumptions}

% For $p\ll q$, define the chi-square divergence
% \begin{equation}
%     \chi^2(p\Vert q)
%     :=
%     \int_\Omega
%     \left(
%         \frac{dp}{dq}-1
%     \right)^2dq.
%     \label{eq:chi_square}
% \end{equation}
% We say that $q$ satisfies a Poincar\'e inequality with constant
% $C_{\mathrm{PI}}(q)$ if
% \begin{equation}
%     \operatorname{Var}_q(h)
%     \leq
%     C_{\mathrm{PI}}(q)
%     \int_\Omega\|\nabla h(x)\|^2\,dq(x)
%     \label{eq:poincare_q}
% \end{equation}
% for every admissible $h$, where
% \begin{equation}
%     \operatorname{Var}_q(h)
%     :=
%     \int_\Omega
%     \left|
%         h-\int_\Omega h\,dq
%     \right|^2dq.
% \end{equation}

\begin{assumption}[Regularity of the transport cost]\label{ass:uot_regularity}
Write $\mathcal U:=\mathrm{UOT}^{\varphi_1,\varphi_2}_{\varepsilon,\rho}$. For every
pair $(\alpha_1,\alpha_2)$ under consideration, $\mathcal U(\alpha_1,\alpha_2)$ is
finite, its optimal plan has finite second moment, and for each slot $k\in\{1,2\}$
the first variation
$\Psi^{(k)}_{\alpha_1,\alpha_2}:=\frac{\delta\,\mathcal U}{\delta\alpha_k}(\alpha_1,\alpha_2)$
admits a $C^1$ representative satisfying the transport chain rule
\begin{equation}\label{eq:uot_transport_chain_rule}
\mathcal U\big|_{\alpha_k\to(\mathcal T_t)_\#\alpha_k}
=\mathcal U(\alpha_1,\alpha_2)
+t\int_\Omega\nabla\Psi^{(k)}_{\alpha_1,\alpha_2}\cdot\boldsymbol\xi\,d\alpha_k
+o(t)\qquad\text{as }t\to0,
\end{equation}
for every $\boldsymbol\xi\in C_c^\infty(\operatorname{int}\Omega;\mathbb R^d)$, where
$\mathcal T_t:=\mathrm{id}+t\boldsymbol\xi$ and the left-hand side denotes
$\mathcal U$ with only its $k$-th argument replaced by $(\mathcal T_t)_\#\alpha_k$.
\end{assumption}

\begin{assumption}[Variational regularity]\label{ass:kinetic}
$\mathcal E_p$ admits a first variation $\Phi_q\in C^1(\mathbb R^d)$ with
$F_p[q]=-\nabla_{\mathbf x}\Phi_q$, and along every smooth curve of densities
$t\mapsto q_t$ under consideration the chain rule
$\frac{d}{dt}\mathcal E_p(q_t)=\int_{\mathbb R^d}\Phi_{q_t}\,\partial_tq_t\,d\mathbf x$
holds with an absolutely convergent integral. Moreover, with
$\mathbf z=(\mathbf x,\mathbf u)$, there exist $\nu\ge0$ and $\sigma>2d+2\nu$ such
that, uniformly on compact time intervals,
\begin{equation}
|\Phi_q|+\|\nabla\Phi_q\|+\|F_p[q]\|\le C(1+\|\mathbf x\|)^{\nu},\qquad
|\log\mu_t^{m,\beta}|+\|\nabla_{\mathbf z}\log\mu_t^{m,\beta}\|\le C(1+\|\mathbf z\|)^{\nu},
\end{equation}
\begin{equation}
\mu_t^{m,\beta}+\|\nabla_{\mathbf z}\mu_t^{m,\beta}\|+|\partial_t\mu_t^{m,\beta}|
\le C(1+\|\mathbf z\|)^{-\sigma}.
\end{equation}
\end{assumption}

% \paragraph{Uniform force stability.}
% For fixed $p$, write $F[\mu]:=F_p[\mu]$.
% Assume that $F$ is Borel measurable and that there exist
% constants $L_x,L_\mu\ge0$, independent of $m$ and $\beta$,
% such that, for all $\mu,\nu\in\mathcal P_2(\mathbb R^n)$
% and $\mathbf x,\mathbf y\in\mathbb R^n$,
% \begin{equation}\label{eq:force_position_stability}
% \|F[\mu](\mathbf x)-F[\mu](\mathbf y)\|
% \le L_x\|\mathbf x-\mathbf y\|,
% \end{equation}
% and
% \begin{equation}\label{eq:force_measure_stability}
% \left(
% \int_{\mathbb R^n}
% \|F[\mu](\mathbf x)-F[\nu](\mathbf x)\|^2
% \,d\nu(\mathbf x)
% \right)^{1/2}
% \le L_\mu\mathcal W_2(\mu,\nu).
% \end{equation}
% Set
% \begin{equation}
% \Lambda:=L_x+L_\mu,
% \qquad
% K:=\|F[\delta_{\mathbf0}](\mathbf0)\|<\infty.
% \end{equation}

\begin{assumption}[Force stability and initial data]
\label{ass:overdamped}
Fix $\gamma>0$ and the target $p$, and write $F[\alpha]:=F_p[\alpha]$, assumed
independent of $m$ and $\beta$ and jointly Borel measurable in $(\alpha,\mathbf x)$.
There exist $L_{\mathbf x},L_{\mathcal W}\ge0$ such that, for all
$\alpha,\alpha'\in\mathcal P_2(\mathbb R^d)$ and $\mathbf x,\mathbf y\in\mathbb R^d$,
\begin{equation}\label{eq:ks_stability}
\|F[\alpha](\mathbf x)-F[\alpha](\mathbf y)\|\le L_{\mathbf x}\|\mathbf x-\mathbf y\|,
\qquad
\Big(\int\|F[\alpha]-F[\alpha']\|^2\,d\alpha'\Big)^{1/2}
\le L_{\mathcal W}\,\mathcal W_2(\alpha,\alpha');
\end{equation}
set $\Lambda:=L_{\mathbf x}+L_{\mathcal W}$ and
$K_0:=\|F[\delta_{\mathbf0}](\mathbf0)\|<\infty$. The initial laws
$\mu_0^{m,\beta}\in\mathcal P_2(\mathbb R^{2d})$ share the positional marginal
$q_0\in\mathcal P_2(\mathbb R^d)$ and satisfy
$\int\|\mathbf u\|^2\,d\mu_0^{m,\beta}\le C_0m$ for $0<m\le1$, with $C_0$
independent of $m$.
\end{assumption}

\begin{assumption}[Long-time convergence]\label{ass:longtime}
There exists a $\mathcal W_2$-compact set
$\mathcal K\subset\mathcal P_2(\Omega)$ containing $p$
and the trajectory $(q_t)_{t\ge0}$ such that
$\mathcal E_p$ is lower semicontinuous on $\mathcal K$
and has $p$ as its unique minimizer there.
Moreover, for some $A>0$,
\begin{equation}
\mathcal E_p(q_t)^2
\le A\,\mathcal D(q_t)
\qquad\text{for a.e. }t\ge0.
\end{equation}
\end{assumption}

\begin{lemma}[Sufficient condition for Assumption~\ref{ass:longtime}(ii)]\label{lem:ed}
Suppose $\mathcal E_p$ is convex along linear mixtures, $p\ll q$ with
$\frac{dp}{dq}\in L^2(q)$, and $q$ satisfies the Poincar\'e inequality
$\mathrm{Var}_q(h)\le C_{\mathrm{PI}}(q)\int\|\nabla h\|^2dq$ for admissible $h$. Then
\begin{equation}
\mathcal E_p(q)^2\le C_{\mathrm{PI}}(q)\,\chi^2(p\Vert q)\,\mathcal D(q).
\end{equation}
In particular, Assumption~\ref{ass:longtime}(ii) holds with
$A=\operatorname{ess\,sup}_{t\ge0}C_{\mathrm{PI}}(q_t)\,\chi^2(p\Vert q_t)$ whenever the
right-hand side is finite.
\end{lemma}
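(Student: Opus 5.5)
The plan is the standard convexity--duality argument.

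\emph{Step 1: convexity.} Since $\mathcal E_p$ is convex along linear mixtures and $p$ is its minimizer with $\mathcal E_p(p)=0$, the function $g(\lambda):=\mathcal E_p((1-\lambda)q+\lambda p)$ is convex on $[0,1]$, so $g(1)\ge g(0)+g'(0^+)$. The perturbation $\omega:=p-q$ is admissible in Definition~\ref{def:first_variation}, because $q+\lambda\omega$ is a probability measure for $\lambda\in[0,1]$. Hence $g'(0^+)=\int\Phi_q\,d(p-q)$, and therefore
\[
\mathcal E_p(q)\le\int_\Omega\Phi_q\,d(q-p).
\]
Justifying $\Phi_q\in L^1(|p-q|)$ is routine: $\Phi_q$ is continuous on compact $\Omega$.

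\emph{Step 2: rewrite against $q$.} Using $p\ll q$, set $w:=\frac{dp}{dq}-1$. Then $\int w\,dq=0$, and
\[
\int\Phi_q\,d(q-p)=-\int\Phi_q\,w\,dq=-\int\bigl(\Phi_q-\mathbb E_q\Phi_q\bigr)w\,dq,
\]
where subtracting the mean is free because $w$ has zero $q$-mean.

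\emph{Step 3: Cauchy--Schwarz and Poincar\'e.} By Cauchy--Schwarz in $L^2(q)$, the right-hand side is at most $\mathrm{Var}_q(\Phi_q)^{1/2}\,\|w\|_{L^2(q)}$, and $\|w\|_{L^2(q)}^2=\chi^2(p\Vert q)$. Apply the Poincar\'e inequality with $h=\Phi_q$; it is admissible since it is $C^1$. This gives $\mathrm{Var}_q(\Phi_q)\le C_{\mathrm{PI}}(q)\int\|\nabla\Phi_q\|^2dq=C_{\mathrm{PI}}(q)\,\mathcal D(q)$.

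\emph{Step 4: combine.} If $\mathcal E_p(q)<0$ the claim would need a separate argument. Under Assumption~\ref{ass:longtime}, however, $p$ is the unique minimizer with value $0$, so $\mathcal E_p(q)\ge0$. Squaring the chain of inequalities then yields
\[
\mathcal E_p(q)^2\le C_{\mathrm{PI}}(q)\,\chi^2(p\Vert q)\,\mathcal D(q).
\]
Evaluating this at $q=q_t$ and taking the essential supremum gives the stated $A$.

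The main obstacle is Step 1. The difficulty is identifying the one-sided derivative of $g$ at $0$ with $\int\Phi_q\,d(p-q)$. This needs the first-variation expansion for the specific perturbation $\omega=p-q$, which Definition~\ref{def:first_variation} supplies provided $\Phi_q$ exists at $q$ as a genuine first variation; Assumption~\ref{ass:kinetic} provides this. The nonnegativity of $\mathcal E_p$, needed to square the inequality, is inherited from the minimizer condition.
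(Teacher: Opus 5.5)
Your proposal is correct and is essentially the paper's argument: the convexity tangent inequality $\mathcal E_p(q)\le\int\Phi_q\,d(q-p)$, rewriting it against $q$ with the mean of $\Phi_q$ subtracted, then Cauchy--Schwarz and the Poincar\'e inequality with $h=\Phi_q$. Your check that $\mathcal E_p(q)\ge0$ before squaring is a step the paper leaves implicit; it also holds without Assumption~\ref{ass:longtime}, since convexity at $p$ with $\Phi_p$ constant (the cross and self first variations cancel at $q=p$) gives $\mathcal E_p(q)\ge\mathcal E_p(p)=0$ for every $q$, so the standalone inequality holds as stated.
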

\begin{proof}
Convexity and $\mathcal E_p(p)=0$ give
$\mathcal E_p(q)\le\int\Phi_q\,d(q-p)=\int(\Phi_q-\bar\Phi_q)\big(1-\tfrac{dp}{dq}\big)\,dq$
with $\bar\Phi_q:=\int\Phi_q\,dq$, the constant dropping out because both measures
have unit mass. Cauchy--Schwarz bounds this by
$\sqrt{\mathrm{Var}_q(\Phi_q)}\,\sqrt{\chi^2(p\Vert q)}$, and the Poincar\'e
inequality gives $\mathrm{Var}_q(\Phi_q)\le C_{\mathrm{PI}}(q)\int\|\nabla\Phi_q\|^2dq
=C_{\mathrm{PI}}(q)\,\mathcal D(q)$.
\end{proof}

\section{Proofs}

\textbf{Proposition 1} (Force).
Under Assumption~\ref{ass:uot_regularity}, assume that all optimal
plans appearing below are unique.
For a pair $(\alpha_1,\alpha_2)$ with optimal plan $\pi$ of
$\mathrm{UOT}^{\varphi_1,\varphi_2}(\alpha_1,\alpha_2)$ and a slot
$k\in\{1,2\}$, let
$r^{(k)}_{\alpha_1,\alpha_2}:=d\pi_k/d\alpha_k$ and
$B^{(k)}_{\alpha_1,\alpha_2}(\mathbf x)
:=\mathbb E_\pi[\mathbf x_{-k}\mid\mathbf x_k=\mathbf x]$,
with $r^{(k)}\equiv1$ if $\varphi_k=\iota$ and $r^{(k)}$
positive and $C^1$ otherwise, and define the force on the
$k$-th marginal
\begin{equation}\label{eq:slot_force2}
F^{(k)}_{\alpha_1,\alpha_2}(\mathbf x):=r^{(k)}_{\alpha_1,\alpha_2}(\mathbf x)
\bigl(B^{(k)}_{\alpha_1,\alpha_2}(\mathbf x)-\mathbf x\bigr).
\end{equation}
Then $F^{(k)}_{\alpha_1,\alpha_2}$ is the negative spatial gradient
of the first variation of
$\mathrm{UOT}^{\varphi_1,\varphi_2}(\alpha_1,\alpha_2)$
with respect to $\alpha_k$, and
\begin{equation}\label{eq:force_general2}
F_p[q]
=
\tfrac12\bigl[F^{(1)}_{q,p}+F^{(2)}_{p,q}\bigr]
-\tfrac12\bigl[F^{(1)}_{q,q}+F^{(2)}_{q,q}\bigr].
\end{equation}
For $(\varphi,\varphi)$ this is
$r_{q,p}(B_{q,p}-\mathbf x)-r_{q,q}(B_{q,q}-\mathbf x)$;
for $(\iota,\iota)$ it is $B_{q,p}-B_{q,q}$, the W-Flow field
\eqref{eq:wflow-velocity}; for $(\iota,\mathrm{KL})$,
$r^{(1)}\equiv1$, $r^{(2)}>0$, and
$\int_\Omega r^{(2)}\,d\alpha_2=1$.
All force identities hold almost everywhere with respect to
the corresponding input measure.

\begin{proof}
Write
\begin{equation}
\mathcal U(\alpha_1,\alpha_2)
:=\mathrm{UOT}^{\varphi_1,\varphi_2}(\alpha_1,\alpha_2),
\end{equation}
and denote its first variation in the $k$-th argument by
\begin{equation}
\Psi^{(k)}_{\alpha_1,\alpha_2}
:=\frac{\delta\,\mathcal U}{\delta\alpha_k}(\alpha_1,\alpha_2).
\end{equation}

We first establish the force identity for the first argument. Let
$\boldsymbol\xi\in C_c^\infty(\Omega;\mathbb R^d)$ and, for sufficiently small $|t|$,
define the diffeomorphism
\begin{equation}
\mathcal T_t(\mathbf x):=\mathbf x+t\,\boldsymbol\xi(\mathbf x).
\end{equation}
Set
\begin{equation}
\alpha_{1,t}:=(\mathcal T_t)_\#\alpha_1,
\qquad
\pi_t:=(\mathcal T_t,\mathrm{id})_\#\pi.
\end{equation}
The marginals of $\pi_t$ are
\begin{equation}
(\pi_t)_1=(\mathcal T_t)_\#\pi_1,
\qquad
(\pi_t)_2=\pi_2.
\end{equation}

Csisz\'ar divergences are invariant under simultaneous pushforward of both arguments
by a measurable bijection. Consequently,
\begin{equation}
D_{\varphi_1}\bigl((\pi_t)_1\Vert\alpha_{1,t}\bigr)=D_{\varphi_1}(\pi_1\Vert\alpha_1).
\end{equation}
This identity also holds for the hard marginal constraint. Moreover, since
\begin{equation}
\alpha_{1,t}\otimes\alpha_2=(\mathcal T_t,\mathrm{id})_\#(\alpha_1\otimes\alpha_2),
\end{equation}
the entropic regularization satisfies
\begin{equation}
\mathrm{KL}\bigl(\pi_t\Vert\alpha_{1,t}\otimes\alpha_2\bigr)
=
\mathrm{KL}\bigl(\pi\Vert\alpha_1\otimes\alpha_2\bigr).
\end{equation}
The second marginal penalty is unchanged as well.

Using $\pi_t$ as a competitor for $\mathcal U(\alpha_{1,t},\alpha_2)$ and expanding
the quadratic cost therefore yields
\begin{align}
\mathcal U(\alpha_{1,t},\alpha_2)-\mathcal U(\alpha_1,\alpha_2)
&\le
\int_{\Omega^2}
\bigl[c(\mathcal T_t(\mathbf x),\mathbf y)-c(\mathbf x,\mathbf y)\bigr]\,d\pi(\mathbf x,\mathbf y)\\
&=
t\int_{\Omega^2}(\mathbf x-\mathbf y)\cdot\boldsymbol\xi(\mathbf x)\,d\pi(\mathbf x,\mathbf y)
+\frac{t^2}{2}\int_\Omega\|\boldsymbol\xi(\mathbf x)\|^2\,d\pi_1(\mathbf x).
\end{align}

Equality holds at $t=0$. Since the left-hand side is differentiable at $t=0$,
dividing by $t$ and taking the limits from both sides gives
\begin{equation}
\left.\frac{d}{dt}\mathcal U(\alpha_{1,t},\alpha_2)\right|_{t=0}
=
\int_{\Omega^2}(\mathbf x-\mathbf y)\cdot\boldsymbol\xi(\mathbf x)\,d\pi(\mathbf x,\mathbf y).
\end{equation}
Disintegrating $\pi$ with respect to $\pi_1$ and using
$d\pi_1=r^{(1)}_{\alpha_1,\alpha_2}\,d\alpha_1$, we obtain
\begin{equation}
\left.\frac{d}{dt}\mathcal U(\alpha_{1,t},\alpha_2)\right|_{t=0}
=
\int_\Omega
r^{(1)}_{\alpha_1,\alpha_2}(\mathbf x)
\bigl(\mathbf x-B^{(1)}_{\alpha_1,\alpha_2}(\mathbf x)\bigr)
\cdot\boldsymbol\xi(\mathbf x)\,d\alpha_1(\mathbf x).
\end{equation}
On the other hand, the transport chain rule gives
\begin{equation}
\left.\frac{d}{dt}\mathcal U(\alpha_{1,t},\alpha_2)\right|_{t=0}
=
\int_\Omega
\nabla\Psi^{(1)}_{\alpha_1,\alpha_2}(\mathbf x)
\cdot\boldsymbol\xi(\mathbf x)\,d\alpha_1(\mathbf x).
\end{equation}
Since $\boldsymbol\xi$ is arbitrary, it follows that
\begin{equation}
-\nabla\Psi^{(1)}_{\alpha_1,\alpha_2}(\mathbf x)
=
r^{(1)}_{\alpha_1,\alpha_2}(\mathbf x)
\bigl(B^{(1)}_{\alpha_1,\alpha_2}(\mathbf x)-\mathbf x\bigr)
=
F^{(1)}_{\alpha_1,\alpha_2}(\mathbf x)
\end{equation}
for $\alpha_1$-almost every $\mathbf x$. Perturbing the second coordinate instead
proves the same identity for $k=2$. Thus, for either slot,
\begin{equation}
-\nabla\frac{\delta\,\mathcal U}{\delta\alpha_k}(\alpha_1,\alpha_2)
=
F^{(k)}_{\alpha_1,\alpha_2}.
\end{equation}

Next, the definition of $\mathcal A$ implies
\begin{equation}
\mathcal E_p(q)
=
\frac12\mathcal U(q,p)+\frac12\mathcal U(p,q)
-\frac12\mathcal U(q,q)-\frac12\mathcal U(p,p).
\end{equation}
Applying the chain rule to both occurrences of $q$ in $\mathcal U(q,q)$, we obtain,
up to an additive spatial constant,
\begin{equation}
\Phi_q
=
\frac12\left[
\Psi^{(1)}_{q,p}+\Psi^{(2)}_{p,q}
\right]
-\frac12\left[
\Psi^{(1)}_{q,q}+\Psi^{(2)}_{q,q}
\right].
\end{equation}
Taking the negative spatial gradient yields
\begin{equation}
F_p[q]
=
\frac12\left[
F^{(1)}_{q,p}+F^{(2)}_{p,q}
\right]
-\frac12\left[
F^{(1)}_{q,q}+F^{(2)}_{q,q}
\right],
\end{equation}
which proves \eqref{eq:force_general}.

Suppose now that $\varphi_1=\varphi_2=\varphi$. Let
$\mathsf S(\mathbf x,\mathbf y):=(\mathbf y,\mathbf x)$. By symmetry of the cost and
of the marginal penalties, together with uniqueness of the optimal plans,
\begin{equation}
\pi^{p,q}=\mathsf S_\#\pi^{q,p},
\qquad
\pi^{q,q}=\mathsf S_\#\pi^{q,q}.
\end{equation}
Hence
\begin{equation}
F^{(2)}_{p,q}=F^{(1)}_{q,p},
\qquad
F^{(2)}_{q,q}=F^{(1)}_{q,q}.
\end{equation}
Writing $r_{\alpha,\alpha'}:=r^{(1)}_{\alpha,\alpha'}$ and
$B_{\alpha,\alpha'}:=B^{(1)}_{\alpha,\alpha'}$, we conclude that
\begin{equation}
F_p[q](\mathbf x)
=
r_{q,p}(\mathbf x)\bigl(B_{q,p}(\mathbf x)-\mathbf x\bigr)
-
r_{q,q}(\mathbf x)\bigl(B_{q,q}(\mathbf x)-\mathbf x\bigr).
\end{equation}
In the balanced case $(\iota,\iota)$, both marginal density ratios are identically
one, and this reduces to
\[
F_p[q](\mathbf x)=B_{q,p}(\mathbf x)-B_{q,q}(\mathbf x),
\]
as claimed.

Finally, in the source-fixed case $(\iota,\mathrm{KL})$, the hard constraint gives
$\pi_1=\alpha_1$, hence $r^{(1)}\equiv1$. Since both input measures are
probabilities,
\[
\int_\Omega r^{(2)}\,d\alpha_2
=
\pi_2(\Omega)
=
\pi_1(\Omega)
=
\alpha_1(\Omega)
=
1.
\]
Together with the assumed positivity of $r^{(2)}$, this establishes the final
assertion.
\end{proof}

\textbf{Proposition 2} (Free-energy dissipation).
Under Assumption~\ref{ass:kinetic}, every sufficiently regular solution of
\eqref{eq:uot_vfp} satisfies
$\frac{d}{dt}\mathcal H_{m,\beta}(\mu_t^{m,\beta})
=-\gamma\int\mu_t^{m,\beta}\big\|\frac{\mathbf u}{m}
+\frac1\beta\nabla_{\mathbf u}\log\mu_t^{m,\beta}\big\|^2\le0$.

\begin{proof}
Write $\mu=\mu_t^{m,\beta}$, $q=q_t^{m,\beta}$,
$\Phi=\delta\mathcal E_p/\delta q(q)$, and
$F=F_p[q]=-\nabla_{\mathbf x}\Phi$.
All integrals below are over $\mathbb R^{2n}$ unless otherwise
specified, with $d\mathbf z=d\mathbf x\,d\mathbf u$.

Define
\begin{equation}
\Xi(\mathbf x,\mathbf u)
:=
\Phi(\mathbf x)
+\frac{\|\mathbf u\|^2}{2m}
+\frac1\beta(1+\log\mu(\mathbf x,\mathbf u)).
\end{equation}
Then
\begin{equation}
\nabla_{\mathbf u}\Xi
=
\frac{\mathbf u}{m}
+\frac1\beta\nabla_{\mathbf u}\log\mu,
\end{equation}
and \eqref{eq:uot_vfp} can be written as
\begin{equation}
\partial_t\mu
=
-\nabla_{\mathbf x}\cdot\left(\frac{\mathbf u}{m}\mu\right)
-\nabla_{\mathbf u}\cdot(F\mu)
+\gamma\nabla_{\mathbf u}\cdot
\bigl(\mu\nabla_{\mathbf u}\Xi\bigr).
\end{equation}

By Assumption~\ref{ass:kinetic}, with constants allowed to change
from line to line,
\begin{equation}
|\Xi|+\|\nabla_{\mathbf z}\Xi\|
+\|F\|+\frac{\|\mathbf u\|}{m}
\le C(1+\|\mathbf z\|)^{\nu}.
\end{equation}
In particular,
\begin{equation}\label{eq:kinetic_integrability_bound}
|\Xi\,\partial_t\mu|
+\mu\left(\frac{\|\mathbf u\|}{m}+\|F\|\right)
 \|\nabla_{\mathbf z}\Xi\|
+\mu\|\nabla_{\mathbf u}\Xi\|^2\le C(1+\|\mathbf z\|)^{2\nu-\sigma}
\in L^1(\mathbb R^{2d}),
\end{equation}
since $\sigma>2d+2\nu$.

To justify integration by parts on the unbounded phase space,
choose $\vartheta\in C_c^\infty(\mathbb R^{2d})$ such that
$0\le\vartheta\le1$, $\vartheta=1$ on $B_1$, and
$\operatorname{supp}\vartheta\subset B_2$.
For $R\ge1$, set
\begin{equation}
\vartheta_R(\mathbf z):=\vartheta(\mathbf z/R),
\qquad
A_R:=B_{2R}\setminus B_R,
\end{equation}
where $B_R$ is the ball of radius $R$ in $\mathbb R^{2d}$.
Then
\begin{equation}
\operatorname{supp}\nabla_{\mathbf z}\vartheta_R
\subset A_R,
\qquad
\|\nabla_{\mathbf z}\vartheta_R\|_{L^\infty}
\le \frac{C}{R}.
\end{equation}

Multiplying the equation by $\vartheta_R\Xi$ and integrating by parts
with compact support gives
\begin{equation}\label{eq:kinetic_cutoff_identity}
\int \vartheta_R\Xi\,\partial_t\mu\,d\mathbf z
=
\int \vartheta_R\mu
\left(
\frac{\mathbf u}{m}\cdot\nabla_{\mathbf x}\Xi
+
F\cdot\nabla_{\mathbf u}\Xi
\right)\,d\mathbf z
-\gamma\int \vartheta_R\mu
\|\nabla_{\mathbf u}\Xi\|^2\,d\mathbf z
+\mathcal R_R,
\end{equation}
where
\begin{equation}
\mathcal R_R
:=
\int \mu\Xi\left(
\frac{\mathbf u}{m}\cdot\nabla_{\mathbf x}\vartheta_R
+
F\cdot\nabla_{\mathbf u}\vartheta_R
-
\gamma\nabla_{\mathbf u}\Xi
\cdot\nabla_{\mathbf u}\vartheta_R
\right)\,d\mathbf z.
\end{equation}
The decay and growth bounds imply
\begin{equation}\label{eq:kinetic_cutoff_error}
|\mathcal R_R|
\le
\frac{C}{R}
\int_{A_R}(1+\|\mathbf z\|)^{2\nu-\sigma}\,d\mathbf z
\le
C R^{2d+2\nu-\sigma-1}
\longrightarrow0
\qquad\text{as} \quad R\to\infty.
\end{equation}
Here we used $|A_R|\le C R^{2d}$.

It remains to treat the conservative contribution.
Since $F=-\nabla_{\mathbf x}\Phi$, its integrand satisfies
the pointwise identity
\begin{equation}
\mu\left(
\frac{\mathbf u}{m}\cdot\nabla_{\mathbf x}\Xi
+
F\cdot\nabla_{\mathbf u}\Xi
\right)
=
\frac1\beta\left(
\frac{\mathbf u}{m}\cdot\nabla_{\mathbf x}\mu
+
F\cdot\nabla_{\mathbf u}\mu
\right).
\end{equation}
Moreover,
$\nabla_{\mathbf x}\cdot(\mathbf u/m)=0$ and
$\nabla_{\mathbf u}\cdot F=0$.
Another compactly supported integration by parts therefore yields
\begin{equation}
\int \vartheta_R\mu
\left(
\frac{\mathbf u}{m}\cdot\nabla_{\mathbf x}\Xi
+
F\cdot\nabla_{\mathbf u}\Xi
\right)\,d\mathbf z=
-\frac1\beta\int \mu
\left(
\frac{\mathbf u}{m}\cdot\nabla_{\mathbf x}\vartheta_R
+
F\cdot\nabla_{\mathbf u}\vartheta_R
\right)\,d\mathbf z.
\end{equation}
Its absolute value is bounded by
\begin{equation}
\frac{C}{R}\int_{A_R}
(1+\|\mathbf z\|)^{\nu-\sigma}\,d\mathbf z
\le C R^{2d+\nu-\sigma-1}
\longrightarrow0.
\end{equation}

Passing to the limit $R\to\infty$ in
\eqref{eq:kinetic_cutoff_identity}, using dominated convergence
and \eqref{eq:kinetic_integrability_bound}, gives
\begin{equation}
\int \Xi\,\partial_t\mu\,d\mathbf z
=
-\gamma\int \mu\|\nabla_{\mathbf u}\Xi\|^2\,d\mathbf z.
\end{equation}

Finally, the assumed chain rule for $\mathcal E_p$, together
with differentiation under the integral sign for the kinetic
and entropy terms, gives
\begin{equation}
\frac{d}{dt}\mathcal H_{m,\beta}(\mu)
=
\int \Xi\,\partial_t\mu\,d\mathbf z.
\end{equation}
The latter differentiations are justified by the uniform
integrable bounds
\begin{equation}
\left(
1+\|\mathbf u\|^2+|\log\mu|
\right)|\partial_t\mu|
\le C(1+\|\mathbf z\|)^{\nu-\sigma}
\in L^1(\mathbb R^{2d}).
\end{equation}
Consequently,
\begin{equation}
\frac{d}{dt}\mathcal H_{m,\beta}(\mu_t^{m,\beta})
=
-\gamma\int_{\mathbb R^{2d}}
\mu_t^{m,\beta}
\left\|
\frac{\mathbf u}{m}
+\frac1\beta\nabla_{\mathbf u}\log\mu_t^{m,\beta}
\right\|^2
\,d\mathbf x\,d\mathbf u
\le0.
\end{equation}
\end{proof}

\textbf{Theorem 1} (Quantitative Kramers--Smoluchowski limit). Fix $\gamma,\beta,T>0$
and assume Assumption~\ref{ass:overdamped}. Suppose that the initial laws share a
positional marginal $q_0\in\mathcal P_2(\mathbb R^d)$ and satisfy
$\int\|\mathbf u\|^2\,d\mu_0^{m,\beta}\le C_0m$ uniformly for $0<m\le1$. Then
$\sup_{t\in[0,T]}\mathcal W_2(q_t^{m,\beta},q_t^\beta)\le C_{T,\beta}\sqrt m$, where
$C_{T,\beta}$ is independent of $m$ and $q^\beta\in C([0,T];\mathcal P_2(\mathbb R^d))$
is the unique weak solution of
\begin{equation}
\partial_t q_t^\beta
=
\frac1\gamma\nabla\cdot
\bigl(q_t^\beta\nabla\Phi_{q_t^\beta}\bigr)
+\frac1{\gamma\beta}\Delta q_t^\beta,
\qquad q_0^\beta=q_0.
\end{equation}

\begin{proof}
For an $\mathbb R^d$-valued random variable $\mathbf X$, we denote its probability law
by $\operatorname{Law}(\mathbf X)$, namely,
\begin{equation}
\operatorname{Law}(\mathbf X)(\Gamma)
:=
\mathbb P(\mathbf X\in\Gamma),
\qquad
\Gamma\subset\mathbb R^d \text{ Borel}.
\end{equation}
For square-integrable random vectors, we write
\begin{equation}
\|\mathbf X\|_{L^2}
:=
\left(\mathbb E\|\mathbf X\|^2\right)^{1/2}.
\end{equation}
For any square-integrable random vectors $\mathbf X,\mathbf Y$ defined on the same
probability space, the force stability assumptions give
\begin{equation}\label{eq:ks_lifted_stability}
\begin{aligned}
\|F[\operatorname{Law}(\mathbf X)](\mathbf X)-F[\operatorname{Law}(\mathbf Y)](\mathbf Y)\|_{L^2}
&\le
L_{\mathbf x}\|\mathbf X-\mathbf Y\|_{L^2}
+L_{\mathcal W}\,\mathcal W_2\big(\operatorname{Law}(\mathbf X),\operatorname{Law}(\mathbf Y)\big)\\
&\le\Lambda\|\mathbf X-\mathbf Y\|_{L^2}.
\end{aligned}
\end{equation}
Taking $\mathbf Y\equiv\mathbf0$ also yields
\begin{equation}\label{eq:ks_growth}
\|F[\operatorname{Law}(\mathbf X)](\mathbf X)\|_{L^2}
\le K_0+\Lambda\|\mathbf X\|_{L^2}.
\end{equation}
These Lipschitz and growth estimates ensure well-posedness of the following
McKean--Vlasov equations by Picard iteration.

Choose $(\mathbf X_0,\mathbf U_0^m)$ with law $\mu_0^{m,\beta}$, and let $\mathbf B_t$
be a standard $d$-dimensional Brownian motion independent of the initial data. Define
\begin{equation}\label{eq:ks_kinetic_sde}
\begin{aligned}
d\mathbf X_t^m
&=\frac{\mathbf U_t^m}{m}\,dt,\\
d\mathbf U_t^m
&=F[q_t^{m,\beta}](\mathbf X_t^m)\,dt
-\frac{\gamma}{m}\mathbf U_t^m\,dt
+\sqrt{\frac{2\gamma}{\beta}}\,d\mathbf B_t,
\end{aligned}
\end{equation}
where $q_t^{m,\beta}=\operatorname{Law}(\mathbf X_t^m)$. The joint law of
$(\mathbf X_t^m,\mathbf U_t^m)$ solves \eqref{eq:uot_vfp}.

Using the same Brownian motion and the same initial position, define
\begin{equation}\label{eq:ks_overdamped_sde}
d\mathbf Y_t =\frac1\gamma F[q_t^\beta](\mathbf Y_t)\,dt
+\sqrt{\frac2{\gamma\beta}}\,d\mathbf B_t,
\qquad
\mathbf Y_0=\mathbf X_0,
\qquad
q_t^\beta=\operatorname{Law}(\mathbf Y_t).
\end{equation}
Its law is the unique weak solution in $C([0,T];\mathcal P_2(\mathbb R^d))$ of
\begin{equation}
\partial_tq_t^\beta
=
-\frac1\gamma\nabla\cdot
\bigl(q_t^\beta F[q_t^\beta]\bigr)
+\frac1{\gamma\beta}\Delta q_t^\beta.
\end{equation}
Since $F[q]=-\nabla\Phi_q$, this is \eqref{eq:finite_temp}, interpreted in the weak
sense.

\paragraph{An overdamped moment bound.}
Let $M(t):=\mathbb E\|\mathbf Y_t\|^2$. By It\^o's formula and \eqref{eq:ks_growth},
\begin{equation}
M'(t)
=
\frac2\gamma
\mathbb E\bigl[\mathbf Y_t\cdot F[q_t^\beta](\mathbf Y_t)\bigr]
+\frac{2n}{\gamma\beta}\le
\frac{1+2\Lambda}{\gamma}M(t)
+\frac{K_0^2}{\gamma}
+\frac{2n}{\gamma\beta}.
\end{equation}
Gronwall's inequality gives $\sup_{0\le t\le T}M(t)<\infty$. Consequently,
\begin{equation}\label{eq:ks_force_moment}
C_F:= \sup_{0\le t\le T} \mathbb E\|F[q_t^\beta](\mathbf Y_t)\|^2 <\infty.
\end{equation}
The constant $C_F$ is independent of $m$, because the overdamped process does not
depend on $m$.

\paragraph{The small-mass estimate.}
Set
\begin{equation}
\varkappa_m(t):=e^{-\gamma t/m}.
\end{equation}
Solving the linear momentum equation and integrating in time gives
\begin{equation}\label{eq:ks_mild_position}
\begin{aligned}
\mathbf X_t^m
&=
\mathbf X_0+\frac{1-\varkappa_m(t)}{\gamma}\mathbf U_0^m\\
&\quad+
\frac1\gamma\int_0^t
\bigl(1-\varkappa_m(t-s)\bigr)
F[q_s^{m,\beta}](\mathbf X_s^m)\,ds + \sqrt{\frac2{\gamma\beta}}
\int_0^t\bigl(1-\varkappa_m(t-s)\bigr)\,d\mathbf B_s.
\end{aligned}
\end{equation}
Define
\begin{equation}
\mathbf D_t:=\mathbf X_t^m-\mathbf Y_t,
\qquad
\Delta F_s
:=
F[q_s^{m,\beta}](\mathbf X_s^m)-F[q_s^\beta](\mathbf Y_s).
\end{equation}
Subtracting the integral equation for $\mathbf Y_t$ from \eqref{eq:ks_mild_position}
yields
\begin{equation}\label{eq:ks_difference}
\begin{aligned}
\mathbf D_t
&=
\frac{1-\varkappa_m(t)}{\gamma}\mathbf U_0^m\\
&\quad+
\frac1\gamma\int_0^t
\bigl(1-\varkappa_m(t-s)\bigr)\Delta F_s\,ds-
\frac1\gamma\int_0^t
\varkappa_m(t-s)F[q_s^\beta](\mathbf Y_s)\,ds-
\sqrt{\frac2{\gamma\beta}}
\int_0^t \varkappa_m(t-s)\,d\mathbf B_s.
\end{aligned}
\end{equation}

Write $\mathsf e(t):=\mathbb E\|\mathbf D_t\|^2$. By \eqref{eq:ks_lifted_stability},
\begin{equation}
\mathbb E\|\Delta F_s\|^2\le\Lambda^2\mathsf e(s).
\end{equation}
Moreover,
\begin{equation}
0\le \varkappa_m(t)\le1,
\qquad
\int_0^t \varkappa_m(t-s)\,ds\le\frac m\gamma,
\qquad
\int_0^t \varkappa_m(t-s)^2\,ds\le\frac m{2\gamma}.
\end{equation}
Using $\|\boldsymbol\xi_1+\boldsymbol\xi_2+\boldsymbol\xi_3+\boldsymbol\xi_4\|^2
\le4\sum_{j=1}^4\|\boldsymbol\xi_j\|^2$, Cauchy--Schwarz, and It\^o's isometry in
\eqref{eq:ks_difference}, we obtain
\begin{equation}
\mathsf e(t)
\le
\frac{4C_0m}{\gamma^2}
+\frac{4\Lambda^2T}{\gamma^2}\int_0^t\mathsf e(s)\,ds
+\frac{4C_Fm^2}{\gamma^4}
+\frac{4dm}{\gamma^2\beta}.
\end{equation}
Since $0<m\le1$, Gronwall's inequality gives
\begin{equation}
\sup_{0\le t\le T}\mathsf e(t)
\le
\frac4{\gamma^2}
\left(C_0+\frac n\beta+\frac{C_F}{\gamma^2}\right)
e^{4\Lambda^2T^2/\gamma^2}\,m.
\end{equation}
The pair $(\mathbf X_t^m,\mathbf Y_t)$ is a coupling of $q_t^{m,\beta}$ and
$q_t^\beta$. Therefore,
\begin{equation}
\mathcal W_2(q_t^{m,\beta},q_t^\beta)^2
\le \mathsf e(t),
\end{equation}
and hence
\begin{equation}
\sup_{0\le t\le T}
\mathcal W_2(q_t^{m,\beta},q_t^\beta)
\le C_{T,\beta}\sqrt m,
\end{equation}
with
\begin{equation}
C_{T,\beta}
=
\frac2\gamma
\left(C_0+\frac n\beta+\frac{C_F}{\gamma^2}\right)^{1/2}
e^{2\Lambda^2T^2/\gamma^2},
\end{equation}
which is independent of $m$.
\end{proof}

\begin{corollary}[Zero-temperature limit]\label{cor:zero_temp}
Under the force stability conditions of Assumption~\ref{ass:overdamped}, let $q_0\in\mathcal P_2(\mathbb R^d)$. Let $q_t^\beta$ and $q_t$ be the unique weak solutions of \eqref{eq:finite_temp} and \eqref{eq:zero_temp}, respectively, with the same initial law $q_0$. Then, for every $T>0$,
\begin{equation}
\sup_{t\in[0,T]}
\mathcal W_2(q_t^\beta,q_t)
\le
\sqrt{\frac{2dT}{\gamma}}\,
e^{\Lambda T/\gamma}\,\beta^{-1/2},
\qquad
\Lambda:=L_{\mathbf x}+L_{\mathcal W}.
\end{equation}
In particular, $q^\beta\to q$ uniformly on every finite time interval in
$\mathcal W_2$ as $\beta\to\infty$.
\end{corollary}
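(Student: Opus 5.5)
I will prove Corollary~\ref{cor:zero_temp} by synchronous coupling, following the template of Theorem~\ref{thm:overdamped} but with one difference: no noise appears in the limit equation. On a common probability space, take $\mathbf Y_0=\mathbf X_0\sim q_0$ and a standard Brownian motion $\mathbf B_t$ independent of $\mathbf Y_0$. Let $\mathbf Y^\beta_t$ solve the McKean--Vlasov SDE \eqref{eq:ks_overdamped_sde}, whose law $q_t^\beta$ is the weak solution of \eqref{eq:finite_temp}. Let $\mathbf X_t$ solve the deterministic mean-field ODE $\dot{\mathbf X}_t=\gamma^{-1}F[q_t](\mathbf X_t)$ with $q_t=\operatorname{Law}(\mathbf X_t)$. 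Its law is the weak solution of \eqref{eq:zero_temp}.

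Well-posedness of both equations comes from Picard iteration using the lifted Lipschitz bound \eqref{eq:ks_lifted_stability} and the growth bound \eqref{eq:ks_growth}, exactly as in the proof of Theorem~\ref{thm:overdamped}. Uniqueness lets me identify these laws with the given $q_t^\beta$ and $q_t$. Since $(\mathbf Y^\beta_t,\mathbf X_t)$ is a coupling of the two laws, it suffices to bound $\mathsf e_\beta(t):=\mathbb E\|\mathbf D_t\|^2$ with $\mathbf D_t:=\mathbf Y^\beta_t-\mathbf X_t$.

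To get the sharp constant with a single exponential $e^{\Lambda T/\gamma}$, I will not use the crude $\|a+b\|^2\le2\|a\|^2+2\|b\|^2$ splitting. Instead I apply It\^o's formula to $\|\mathbf D_t\|^2$. Since $d\mathbf D_t=\gamma^{-1}\Delta F_t\,dt+\sqrt{2/(\gamma\beta)}\,d\mathbf B_t$, with $\Delta F_t:=F[q_t^\beta](\mathbf Y_t^\beta)-F[q_t](\mathbf X_t)$,
\[
\frac{d}{dt}\mathsf e_\beta(t)=\frac2\gamma\mathbb E[\mathbf D_t\cdot\Delta F_t]+\frac{2d}{\gamma\beta}.
\]
The martingale term has zero mean because of the second-moment bounds from \eqref{eq:ks_growth}. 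Cauchy--Schwarz together with \eqref{eq:ks_lifted_stability} gives $\mathbb E[\mathbf D_t\cdot\Delta F_t]\le\Lambda\,\mathsf e_\beta(t)$. Hence
\[
\mathsf e_\beta'(t)\le\frac{2\Lambda}{\gamma}\mathsf e_\beta(t)+\frac{2d}{\gamma\beta},\qquad \mathsf e_\beta(0)=0.
\]
Gronwall then yields $\mathsf e_\beta(t)\le\frac{2dt}{\gamma\beta}e^{2\Lambda t/\gamma}$. I use the bound $\int_0^t e^{2\Lambda(t-s)/\gamma}ds\le t\,e^{2\Lambda t/\gamma}$ rather than the exact integral, so that the constant matches the statement. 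Taking square roots and the supremum over $t\in[0,T]$ gives $\sup_{t\le T}\mathcal W_2(q_t^\beta,q_t)\le\sqrt{2dT/\gamma}\,e^{\Lambda T/\gamma}\beta^{-1/2}$, and the limit as $\beta\to\infty$ follows immediately.

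The main obstacle is the rigor of the lifted Lipschitz step. Assumption~\ref{ass:overdamped} gives the measure-stability inequality integrated against $\alpha'$, so I must split $\Delta F_t$ as $F[q_t^\beta](\mathbf Y^\beta_t)-F[q_t^\beta](\mathbf X_t)$ plus $F[q_t^\beta](\mathbf X_t)-F[q_t](\mathbf X_t)$. The second piece is then evaluated under $\operatorname{Law}(\mathbf X_t)=q_t$, which is exactly the reference measure the assumption requires, and $\mathcal W_2(q_t^\beta,q_t)\le\mathsf e_\beta^{1/2}$ closes the bound. A secondary point is justifying It\^o's formula and the vanishing of the martingale expectation. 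Both follow from the finite second moments of $\mathbf Y^\beta$ and $\mathbf X$ on $[0,T]$, which come from the Gronwall moment bound in the proof of Theorem~\ref{thm:overdamped}.
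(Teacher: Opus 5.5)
Your proposal is correct and follows the paper's proof essentially line by line. The paper also uses a synchronous coupling of the noisy McKean--Vlasov SDE with the deterministic mean-field ODE from a common initial law, applies It\^o's formula to $\mathsf e_\beta$, obtains $\mathsf e_\beta'\le\frac{2\Lambda}{\gamma}\mathsf e_\beta+\frac{2d}{\gamma\beta}$ via Cauchy--Schwarz and the lifted Lipschitz bound, and applies Gronwall with $\int_0^t e^{2\Lambda(t-s)/\gamma}ds\le t\,e^{2\Lambda t/\gamma}$. Your explicit splitting of $\Delta F_t$, so that the measure-stability term is integrated against $q_t=\operatorname{Law}(\mathbf X_t)$, is exactly what the paper's lifted bound uses implicitly.
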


\begin{proof}
Let $\mathbf B_t$ be a standard $d$-dimensional Brownian motion. Couple the processes
$\mathbf Y_t^\beta$ and $\mathbf Y_t$ through
\begin{equation}
\begin{aligned}
d\mathbf Y_t^\beta
&=
\frac1\gamma F_p[q_t^\beta](\mathbf Y_t^\beta)\,dt
+\sqrt{\frac2{\gamma\beta}}\,d\mathbf B_t,\\
d\mathbf Y_t
&=
\frac1\gamma F_p[q_t](\mathbf Y_t)\,dt,
\end{aligned}
\end{equation}
with a common initial random variable of law $q_0$, independent of $\mathbf B$. Their
time-marginal laws are $q_t^\beta$ and $q_t$, respectively.

Set
\begin{equation}
\mathsf e_\beta(t):=\mathbb E\|\mathbf Y_t^\beta-\mathbf Y_t\|^2.
\end{equation}
Since $(\mathbf Y_t^\beta,\mathbf Y_t)$ is a coupling of $q_t^\beta$ and $q_t$,
\begin{equation}\label{eq:zero_temp_coupling}
\mathcal W_2(q_t^\beta,q_t)^2\le \mathsf e_\beta(t).
\end{equation}
The force stability assumptions therefore imply
\begin{equation}
\begin{aligned}
\left(
\mathbb E
\|F_p[q_t^\beta](\mathbf Y_t^\beta)-F_p[q_t](\mathbf Y_t)\|^2
\right)^{1/2}&\le
L_{\mathbf x}\,\mathsf e_\beta(t)^{1/2}
+L_{\mathcal W}\,\mathcal W_2(q_t^\beta,q_t)\\
&\le\Lambda\,\mathsf e_\beta(t)^{1/2}.
\end{aligned}
\end{equation}

Applying It\^o's formula and Cauchy--Schwarz gives, for almost every $t\in[0,T]$,
\begin{equation}
\begin{aligned}
\mathsf e_\beta'(t)
&=
\frac2\gamma
\mathbb E\left[
(\mathbf Y_t^\beta-\mathbf Y_t)\cdot
\bigl(
F_p[q_t^\beta](\mathbf Y_t^\beta)-F_p[q_t](\mathbf Y_t)
\bigr)
\right]
+\frac{2d}{\gamma\beta}\\
&\le
\frac{2\Lambda}{\gamma}\mathsf e_\beta(t)
+\frac{2d}{\gamma\beta}.
\end{aligned}
\end{equation}
Since $\mathsf e_\beta(0)=0$, Gronwall's inequality yields
\begin{equation}
\mathsf e_\beta(t)
\le
\frac{2d}{\gamma\beta}
\int_0^t e^{2\Lambda(t-s)/\gamma}\,ds
\le
\frac{2dt}{\gamma\beta}e^{2\Lambda t/\gamma}.
\end{equation}
Combining this estimate with \eqref{eq:zero_temp_coupling} and taking the supremum
over $t\in[0,T]$ proves the claim.
\end{proof}

\textbf{Proposition 3} (Non-target stationary state). Let $\Omega=[-1,1]$, let
$p\in\mathcal P(\Omega)$ satisfy $\mathsf R_\#p=p$ for $\mathsf R(x)=-x$ with
$p\neq\delta_0$, and assume $\Phi_{\delta_0}$ admits a $C^2$ representative on
$\Omega$. Then, for every choice of generators, $q_t\equiv\delta_0$ is a stationary
distributional solution of \eqref{eq:zero_temp} and the metric slope
$|\partial\mathcal E_p|(\delta_0)$ vanishes, although $\delta_0\neq p$.

\begin{proof}
Write
\begin{equation}
\mathcal U(\alpha,\alpha')
:=
\mathrm{UOT}^{\varphi_1,\varphi_2}(\alpha,\alpha'),
\qquad
\mathcal G(q):=\frac12\bigl[\mathcal U(q,p)+\mathcal U(p,q)\bigr],
\qquad
\mathcal S(q):=\mathcal U(q,q).
\end{equation}
Then
\begin{equation}\label{eq:stationary_energy_decomposition}
\mathcal E_p(q)
=
\mathcal G(q)-\frac12\mathcal S(q)-\frac12\mathcal U(p,p).
\end{equation}
Set $\Phi:=\Phi_{\delta_0}$. The first variation is understood in the sense that, for
every $q\in\mathcal P(\Omega)$,
\begin{equation}\label{eq:stationary_first_variation}
\lim_{\lambda\downarrow0}
\frac{
\mathcal E_p((1-\lambda)\delta_0+\lambda q)-\mathcal E_p(\delta_0)
}{\lambda}
=
\int_\Omega\bigl(\Phi(x)-\Phi(0)\bigr)\,dq(x).
\end{equation}

\paragraph{Reflection symmetry and stationarity.}
Simultaneous reflection of both coordinates preserves the quadratic cost and all
divergence terms. Hence
\begin{equation}
\mathcal U(\mathsf R_\#\alpha,\mathsf R_\#\alpha')=\mathcal U(\alpha,\alpha').
\end{equation}
Since $\mathsf R_\#p=p$, it follows that
\begin{equation}\label{eq:stationary_reflection}
\mathcal E_p(\mathsf R_\#q)=\mathcal E_p(q).
\end{equation}
This argument does not interchange the two marginal slots and therefore does not
require $\varphi_1=\varphi_2$.

Apply \eqref{eq:stationary_reflection} to $q_\lambda=(1-\lambda)\delta_0+\lambda\delta_x$
and differentiate at $\lambda=0+$. By \eqref{eq:stationary_first_variation},
\begin{equation}
\Phi(x)-\Phi(0)=\Phi(-x)-\Phi(0).
\end{equation}
Thus $\Phi$ is even, and its differentiability implies
\begin{equation}
\Phi'(0)=0,
\qquad
F_p[\delta_0](0)=-\Phi'(0)=0.
\end{equation}
Consequently, for every $\zeta\in C_c^\infty(\mathbb R)$, the constant curve
$q_t=\delta_0$ satisfies
\begin{equation}
\frac{d}{dt}\int_\Omega\zeta\,dq_t
=
0
=
\frac1\gamma
\int_\Omega\zeta'(x)F_p[\delta_0](x)\,d\delta_0(x).
\end{equation}
It is therefore a stationary distributional solution of \eqref{eq:zero_temp}.

\paragraph{A bound for the self-interaction term.}
Define
\begin{equation}
H(\varsigma)
:=
\rho\varphi_1(\varsigma)+\rho\varphi_2(\varsigma)
+\varepsilon(\varsigma\log\varsigma-\varsigma+1),
\qquad \varsigma\ge0,
\end{equation}
with $0\log0:=0$, and let $\varsigma_*$ be a minimizer of $H$. Such a minimizer
exists by lower semicontinuity and coercivity. If either generator is $\iota$, the
constraint enforces $\varsigma_*=1$.

For any finite-cost plan $\pi$ with total mass $\varsigma=\pi(\Omega^2)$, Jensen's
inequality gives
\begin{equation}
D_{\varphi_k}(\pi_k\Vert q)\ge\varphi_k(\varsigma),\quad k\in\{1,2\},
\qquad
\mathrm{KL}(\pi\Vert q\otimes q)
\ge \varsigma\log\varsigma-\varsigma+1.
\end{equation}
Since the transport cost is nonnegative,
\begin{equation}
\mathcal S(q)\ge H(\varsigma_*)=\mathcal S(\delta_0).
\end{equation}
Conversely, the competitor $\pi=\varsigma_*\,q\otimes q$ gives
\begin{equation}
\begin{aligned}
\mathcal S(q)
&\le
H(\varsigma_*)+\frac{\varsigma_*}{2}
\int_{\Omega^2}(x-y)^2\,dq(x)\,dq(y)\\
&=
\mathcal S(\delta_0)
+\varsigma_*\left(
\int_\Omega x^2\,dq(x)
-\left(\int_\Omega x\,dq(x)\right)^2
\right).
\end{aligned}
\end{equation}
Therefore,
\begin{equation}\label{eq:stationary_self_bound}
0\le \mathcal S(q)-\mathcal S(\delta_0)
\le \varsigma_*\int_\Omega x^2\,dq(x).
\end{equation}

\paragraph{A quadratic lower bound for the energy.}
The maps $q\mapsto\mathcal U(q,p)$ and $q\mapsto\mathcal U(p,q)$ are convex. Indeed,
the primal objective is jointly convex in the plan and the varying input measure, by
joint convexity of Csisz\'ar divergences and the affine dependence of the product
reference measure on that input. Taking the infimum over the plan preserves
convexity. Thus $\mathcal G$ is convex.

Fix $q\in\mathcal P(\Omega)$ and set $q_\lambda=(1-\lambda)\delta_0+\lambda q$. By
\eqref{eq:stationary_energy_decomposition}, \eqref{eq:stationary_self_bound}, and
convexity of $\mathcal G$,
\begin{equation}
\begin{aligned}
\mathcal E_p(q_\lambda)-\mathcal E_p(\delta_0)
&=
\mathcal G(q_\lambda)-\mathcal G(\delta_0)
-\frac12\bigl(\mathcal S(q_\lambda)-\mathcal S(\delta_0)\bigr)\\
&\le \mathcal G(q_\lambda)-\mathcal G(\delta_0)\\
&\le \lambda\bigl(\mathcal G(q)-\mathcal G(\delta_0)\bigr).
\end{aligned}
\end{equation}
Dividing by $\lambda$ and letting $\lambda\downarrow0$ yields
\begin{equation}
\mathcal G(q)-\mathcal G(\delta_0)
\ge
\int_\Omega\bigl(\Phi(x)-\Phi(0)\bigr)\,dq(x).
\end{equation}
Using \eqref{eq:stationary_self_bound} once more, we obtain
\begin{equation}
\mathcal E_p(q)-\mathcal E_p(\delta_0)
\ge
\int_\Omega\bigl(\Phi(x)-\Phi(0)\bigr)\,dq(x)
-\frac{\varsigma_*}{2}\int_\Omega x^2\,dq(x).
\end{equation}

Let $C_\Phi:=\|\Phi''\|_{L^\infty(\Omega)}<\infty$. Since $\Phi'(0)=0$, Taylor's
theorem gives
\begin{equation}
\Phi(x)-\Phi(0)\ge-\frac{C_\Phi}{2}x^2.
\end{equation}
Consequently,
\begin{equation}\label{eq:stationary_quadratic_bound}
\mathcal E_p(q)-\mathcal E_p(\delta_0)
\ge
-\frac{C_\Phi+\varsigma_*}{2}\int_\Omega x^2\,dq(x)
=
-\frac{C_\Phi+\varsigma_*}{2}\mathcal W_2(q,\delta_0)^2.
\end{equation}

\paragraph{Vanishing metric slope.}
By the definition of the descending metric slope,
\begin{equation}
\begin{aligned}
|\partial\mathcal E_p|(\delta_0)
&:=
\limsup_{\substack{
q\to\delta_0\ \mathrm{in}\ \mathcal W_2\\
q\neq\delta_0}}
\frac{
[\mathcal E_p(\delta_0)-\mathcal E_p(q)]_+
}{
\mathcal W_2(q,\delta_0)
}\\
&\le
\limsup_{q\to\delta_0}
\frac{C_\Phi+\varsigma_*}{2}\mathcal W_2(q,\delta_0)
=0.
\end{aligned}
\end{equation}
The slope is nonnegative, so it equals zero. Finally, $\delta_0\neq p$ by assumption.
\end{proof}

\section{Long-Time Behavior}
\label{app:longtime}

This appendix contains the two long-time results referenced in
Section~\ref{sec:longtime}: convergence to a stationary state without further
assumptions, and algebraic convergence to the target under
Assumption~\ref{ass:longtime}.

\begin{proposition}[Convergence to a stationary state]\label{prop:stationary_limit}
Let $\Omega\subset\mathbb R^d$ be compact and let $(q_t)_{t\ge0}\subset\mathcal P(\Omega)$
be a global regular solution of \eqref{eq:zero_temp} such that
$E(t):=\mathcal E_p(q_t)$ is absolutely continuous on every bounded interval with
$E'(t)=-\frac1\gamma\mathcal D(q_t)$ for a.e.\ $t\ge0$. Suppose
$\mathcal D:\mathcal P(\Omega)\to[0,\infty]$ is lower semicontinuous with respect to
$\mathcal W_2$. Then there exist $t_j\uparrow\infty$ and $q_\infty\in\mathcal P(\Omega)$
such that $\mathcal W_2(q_{t_j},q_\infty)\to0$ and $\mathcal D(q_\infty)=0$; in
particular, $\nabla\Phi_{q_\infty}=\mathbf 0$ $q_\infty$-almost everywhere, so
$q_\infty$ is a Wasserstein-stationary point of $\mathcal E_p$.
\end{proposition}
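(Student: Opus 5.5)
The argument combines the energy identity with compactness of $\mathcal P(\Omega)$ and lower semicontinuity of $\mathcal D$.

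\textbf{Step 1: $E$ is bounded below.} First I check that $E$ is bounded below on $\mathcal P(\Omega)$. Each $\mathrm{UOT}$ term is finite, and on compact $\Omega$ it is bounded uniformly over probability inputs. Indeed, the competitor $\varsigma_*\,\alpha_1\otimes\alpha_2$ gives a uniform upper bound involving $\operatorname{diam}(\Omega)^2$, as in the proof of Proposition~\ref{prop:non_target_stationary}. The Jensen argument of that proof gives the uniform lower bound $H(\varsigma_*)$. Hence $\mathcal E_p$ is bounded on $\mathcal P(\Omega)$. In particular, $E_\infty:=\inf_t E(t)>-\infty$, and positivity of $\mathcal E_p$ is not needed.

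\textbf{Step 2: integrability of the dissipation.} Integrating $E'=-\gamma^{-1}\mathcal D(q_t)$, which is valid by absolute continuity on bounded intervals, gives for every $T$
\[
\frac1\gamma\int_0^T\mathcal D(q_t)\,dt=E(0)-E(T)\le E(0)-E_\infty<\infty .
\]
So $\mathcal D(q_\cdot)\in L^1(0,\infty)$. It follows that $\liminf_{t\to\infty}\mathcal D(q_t)=0$: otherwise $\mathcal D(q_t)\ge c>0$ for all large $t$, and the integral would diverge. Choose $t_j\uparrow\infty$ with $\mathcal D(q_{t_j})\to0$. If one wants a concrete choice, pick $t_j\in[j,j+1]$ in the full-measure set where the identity holds, with $\mathcal D(q_{t_j})\le\int_j^{j+1}\mathcal D+1/j$.

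\textbf{Step 3: compactness.} Since $\Omega$ is compact, $\mathcal P(\Omega)$ is compact for weak convergence, and $\mathcal W_2$ metrizes weak convergence there. So $(\mathcal P(\Omega),\mathcal W_2)$ is compact. Extracting a further subsequence, still denoted $t_j$, gives $q_\infty$ with $\mathcal W_2(q_{t_j},q_\infty)\to0$.

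\textbf{Step 4: conclusion.} Lower semicontinuity of $\mathcal D$ yields $0\le\mathcal D(q_\infty)\le\liminf_j\mathcal D(q_{t_j})=0$. By the definition of $\mathcal D$ in \eqref{eq:long_time_dissipation}, $\int\|\nabla\Phi_{q_\infty}\|^2\,dq_\infty=0$, so $\nabla\Phi_{q_\infty}=\mathbf 0$ $q_\infty$-a.e. Then $F_p[q_\infty]=\mathbf 0$ on the support of $q_\infty$, and the constant curve is a distributional stationary solution, exactly as in the stationarity step of Proposition~\ref{prop:non_target_stationary}.

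\textbf{Main obstacle.} The main obstacle is Step 1, the uniform lower bound on $\mathcal E_p$. The self-term enters with a negative sign and positivity is not assumed for $(\iota,\mathrm{KL})$, so boundedness must come from the two-sided uniform bounds on $\mathrm{UOT}$ over $\mathcal P(\Omega)$. The compactness of $\Omega$ makes this straightforward. Everything else is routine.
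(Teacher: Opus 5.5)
Your proposal is correct and follows the paper's proof step for step: a uniform bound on $\mathcal E_p$ over $\mathcal P(\Omega)$, integrability of $t\mapsto\mathcal D(q_t)$, a sequence with vanishing dissipation, $\mathcal W_2$-compactness of $\mathcal P(\Omega)$, and lower semicontinuity of $\mathcal D$. The only real variation is in Step 1, and it is a mild improvement: the paper bounds $\mathrm{UOT}$ below by termwise nonnegativity and above via the product coupling $\alpha_1\otimes\alpha_2$, whereas your Jensen lower bound $H(\varsigma_*)$ also covers generators $\varphi_k$ that take negative values.
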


\begin{proof}
\emph{Step 1: boundedness of the energy.}
Since $\Omega$ is compact, $c_{\max}:=\sup_{\Omega^2}c<\infty$. For
$\alpha_1,\alpha_2\in\mathcal P(\Omega)$, all terms in \eqref{eq:phi_uot} are
nonnegative, so $\mathrm{UOT}^{\varphi_1,\varphi_2}_{\varepsilon,\rho}(\alpha_1,\alpha_2)\ge0$;
the product coupling $\pi=\alpha_1\otimes\alpha_2$ satisfies both marginal
constraints exactly and has zero entropic penalty, so
$\mathrm{UOT}^{\varphi_1,\varphi_2}_{\varepsilon,\rho}(\alpha_1,\alpha_2)\le\int c\,d(\alpha_1\otimes\alpha_2)\le c_{\max}$.
Hence $0\le\mathcal A\le c_{\max}$ on $\mathcal P(\Omega)^2$ and
$|\mathcal E_p(q)|\le c_{\max}$ for every $q\in\mathcal P(\Omega)$.

\emph{Step 2: integrability of the dissipation.}
By hypothesis $E'\le0$ a.e., so $E$ is nonincreasing and, by Step~1, bounded below;
thus $E_\infty:=\lim_{t\to\infty}E(t)$ exists in $[-c_{\max},c_{\max}]$. For every
$T>0$, absolute continuity gives
$\frac1\gamma\int_0^T\mathcal D(q_t)\,dt=E(0)-E(T)$, and letting $T\to\infty$ by
monotone convergence,
\begin{equation}\label{eq:diss_integrable}
\int_0^\infty\mathcal D(q_t)\,dt=\gamma\big(E(0)-E_\infty\big)\le2\gamma c_{\max}<\infty.
\end{equation}

\emph{Step 3: a sequence along which the dissipation vanishes.}
Fix $j\in\mathbb N$. The set $\{t\ge0:\mathcal D(q_t)>1/j\}$ has finite Lebesgue
measure by \eqref{eq:diss_integrable}, so its complement is unbounded. Choose
inductively $t_j>t_{j-1}+1$ with $\mathcal D(q_{t_j})\le1/j$; then $t_j\uparrow\infty$
and $\mathcal D(q_{t_j})\to0$.

\emph{Step 4: compactness.}
Since $\Omega$ is compact, $(\mathcal P(\Omega),\mathcal W_2)$ is compact
\citep{villani2009optimal}. Hence $(q_{t_j})_j$ admits a subsequence, not
relabeled, and a limit $q_\infty\in\mathcal P(\Omega)$ with
$\mathcal W_2(q_{t_j},q_\infty)\to0$.

\emph{Step 5: stationarity of the limit.}
Lower semicontinuity of $\mathcal D$ and Step~3 give
$0\le\mathcal D(q_\infty)\le\liminf_{j\to\infty}\mathcal D(q_{t_j})=0$. Since
$\mathcal D(q_\infty)=\int_\Omega\|\nabla\Phi_{q_\infty}\|^2\,dq_\infty$, this forces
$\nabla\Phi_{q_\infty}=\mathbf 0$ $q_\infty$-a.e., which is the Wasserstein
stationarity condition for \eqref{eq:zero_temp}.
\end{proof}

\begin{proposition}[Conditional algebraic decay]\label{prop:decay}
Let $\Omega\subset\mathbb R^d$ be compact and $(q_t)_{t\ge0}$ a global regular solution
of \eqref{eq:zero_temp} satisfying \eqref{eq:long_time_dissipation} and
Assumption~\ref{ass:longtime}. With $E_0:=\mathcal E_p(q_0)$,
\begin{equation}\label{eq:algebraic_energy_rate}
\mathcal E_p(q_t)\le\frac{\gamma AE_0}{\gamma A+tE_0}\quad\forall t\ge0,
\end{equation}
and $\mathcal W_2(q_t,p)\to0$ as $t\to\infty$. Assumption~\ref{ass:longtime}(ii)
holds with $A=\operatorname{ess\,sup}_tC_{\mathrm{PI}}(q_t)\,\chi^2(p\Vert q_t)$
whenever $\mathcal E_p$ is convex along mixtures, $p\ll q_t$ with
$\frac{dp}{dq_t}\in L^2(q_t)$, and $q_t$ satisfies a Poincar\'e inequality with
constant $C_{\mathrm{PI}}(q_t)$ (Lemma~\ref{lem:ed}).
\end{proposition}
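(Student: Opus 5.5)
The decay bound \eqref{eq:algebraic_energy_rate} will come from a scalar differential inequality, and the convergence $\mathcal W_2(q_t,p)\to0$ from a compactness argument. Throughout set $E(t):=\mathcal E_p(q_t)$. By Assumption~\ref{ass:longtime}, $p\in\mathcal K$ is the unique minimizer of $\mathcal E_p$ on $\mathcal K$, and $\mathcal E_p(p)=0$. Since the trajectory lies in $\mathcal K$, this gives $E(t)\ge0$ for all $t$. From \eqref{eq:long_time_dissipation} we get $E'=-\gamma^{-1}\mathcal D(q_t)\le0$, so $E$ is nonincreasing. Combining this with the energy--dissipation inequality $E^2\le A\,\mathcal D$ yields
\[
E'(t)\le-\frac{1}{\gamma A}E(t)^2\qquad\text{for a.e. }t\ge0 .
\]

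Next I solve this comparison inequality, splitting into two cases. If $E(t_0)=0$ for some $t_0$, then monotonicity and nonnegativity force $E\equiv0$ for $t\ge t_0$. For those times the bound holds trivially, and on $[0,t_0)$ the argument below applies. Where $E>0$, the function $1/E$ is absolutely continuous, since $E$ is absolutely continuous and bounded away from zero on compact subintervals. There it satisfies $(1/E)'=-E'/E^2\ge 1/(\gamma A)$. Integrating from $0$ gives $1/E(t)\ge 1/E_0+t/(\gamma A)$. Rearranging this is exactly $E(t)\le \gamma AE_0/(\gamma A+tE_0)$. The degenerate case $E_0=0$ is trivial.

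For the Wasserstein convergence I argue by contradiction. Suppose $\mathcal W_2(q_{t_j},p)\ge\delta>0$ along some sequence $t_j\to\infty$. Since $\mathcal K$ is $\mathcal W_2$-compact, pass to a subsequence with $q_{t_j}\to q_\infty\in\mathcal K$, so that $\mathcal W_2(q_\infty,p)\ge\delta$. Lower semicontinuity of $\mathcal E_p$ on $\mathcal K$ and the decay bound give $\mathcal E_p(q_\infty)\le\liminf_j E(t_j)=0$. Uniqueness of the minimizer then forces $q_\infty=p$, which is a contradiction.

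The final sentence of the statement follows by invoking Lemma~\ref{lem:ed} pointwise in $t$ and taking the essential supremum. The main obstacle is the regularity bookkeeping. I must make sure $E$ is absolutely continuous so that the a.e. differential inequality can be integrated. I must also handle $E$ hitting zero, where $1/E$ is undefined. I would take absolute continuity as part of the meaning of "regular solution satisfying \eqref{eq:long_time_dissipation}", as in Proposition~\ref{prop:stationary_limit}.
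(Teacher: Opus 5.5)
Your proposal is correct and follows the paper's proof essentially step for step. You both integrate the Riccati-type inequality $E'\le -E^2/(\gamma A)$ via $1/E$, treat $E$ reaching zero separately, and then use compactness with lower semicontinuity to identify every subsequential limit as $p$; your use of the compactness of $\mathcal K$ itself, rather than of $\mathcal P(\Omega)$ as in the paper, is if anything slightly cleaner, since it places the limit in $\mathcal K$, where lower semicontinuity and uniqueness of the minimizer are assumed.
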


\begin{proof}
Set $E(t):=\mathcal E_p(q_t)$. Combining \eqref{eq:long_time_dissipation} with
Assumption~\ref{ass:longtime}(ii) gives $E'(t)\le-E(t)^2/(\gamma A)$ for a.e.\ $t$,
and $E\ge0$ since $p$ minimizes $\mathcal E_p$ with $\mathcal E_p(p)=0$. If $E_0=0$
the bound is trivial. Otherwise, on any interval where $E>0$, the chain rule for
absolutely continuous functions yields $\frac{d}{dt}\big(1/E\big)\ge1/(\gamma A)$,
hence $1/E(t)\ge1/E_0+t/(\gamma A)$, which is \eqref{eq:algebraic_energy_rate}; if
$E$ reaches zero at some $t_*$, monotonicity and nonnegativity keep it at zero
thereafter, so the bound holds for all $t$. In particular $E(t)\to0$. Since
$\mathcal P(\Omega)$ is $\mathcal W_2$-compact, every sequence $t_j\to\infty$ has a
subsequence with $q_{t_j}\to q_*$, and lower semicontinuity gives
$\mathcal E_p(q_*)\le\liminf_jE(t_j)=0$, so $q_*=p$ by uniqueness of the
minimizer. As every subsequence has a further subsequence converging to $p$,
$\mathcal W_2(q_t,p)\to0$.
\end{proof}

\section{Implementation Details}
\label{app:impl}

\paragraph{ImageNet experiments.}
Detailed hyperparameters and configurations are given in Table~\ref{tab:hparams}. We largely follow the setup of W-Flow \citep{han2026one} without extensive hyperparameter tuning; the only new hyperparameters are the marginal penalties $(\varphi_1,\varphi_2)$, the relaxation strength $\tau$, and the symmetrization of the velocity estimator. Experiments are conducted on Nvidia H20 GPUs with 141\,GB of memory each. The B/2, L/2, and XL/2 models are trained on 8 nodes of 8 GPUs; the ablations use a single node of 8 GPUs. Training L/2 takes about 5 days.

\subsection{Training Algorithm}
\label{app:algorithm}

Algorithm~\ref{alg:uotgf} gives one training step. All transport quantities are
computed under \texttt{no\_grad}; gradients reach the generator only through the
regression loss, so the generator regresses onto a stop-gradient target displaced
along the UOT force.

\begin{algorithm}[t]
\caption{One UOT-GF training step}
\label{alg:uotgf}
\KwIn{generator $f_{\boldsymbol\theta}(\cdot;\cls,w)$; frozen feature blocks
$\{\psi_\ell\}_{\ell=1}^{n_\psi}$; $\varepsilon,\tau,\eta$; guidance distribution
$p(w)$; Sinkhorn iterations $L$}
Sample $w\sim p(w)$ and $N_{\cls}$ classes; for each class $\cls$: real batch
$\mathcal Y_{\cls}=\{\mathbf y_{\cls,j}\}_{j=1}^{N_{\mathrm{pos}}}$, noise
$\{\mathbf z_{\cls,i}\}_{i=1}^{N_{\mathrm{neg}}}$ and $\{\tilde{\mathbf z}_{\cls,i}\}_{i=1}^{N_{\mathrm{neg}}}$\;
Sample an unconditional real batch $\mathcal Y_\varnothing$ ($N_{\mathrm{unc}}$)\;
$\mathcal X_{\cls}\leftarrow\{f_{\boldsymbol\theta}(\mathbf z_{\cls,i};\cls,w)\}_i$,\quad
$\tilde{\mathcal X}_{\cls}\leftarrow\{f_{\boldsymbol\theta}(\tilde{\mathbf z}_{\cls,i};\cls,w)\}_i$
\tcp*{no gradient}
\For{each block $\ell$ and class $\cls$}{
  $\mathbf x_i\leftarrow\psi_\ell(f_{\boldsymbol\theta}(\mathbf z_{\cls,i};\cls,w))$ for all $i$\;
  $\widehat{\mathbf v}_{\cls}\leftarrow\textsc{Velocity}(\{\mathbf x_i\},\psi_\ell(\mathcal Y_{\cls}))$,\quad
  $\widehat{\mathbf v}_{\mathrm{self}}\leftarrow\textsc{Velocity}(\{\mathbf x_i\},\psi_\ell(\tilde{\mathcal X}_{\cls}))$,\quad
  $\widehat{\mathbf v}_\varnothing\leftarrow\textsc{Velocity}(\{\mathbf x_i\},\psi_\ell(\mathcal Y_\varnothing))$\;
  $\widehat{\mathbf v}^\ell_{\cls,i}\leftarrow(\widehat{\mathbf v}_{\cls,i}-\widehat{\mathbf v}_{\mathrm{self},i})+w\,(\widehat{\mathbf v}_{\cls,i}-\widehat{\mathbf v}_{\varnothing,i})$\;
  $\bar{\mathbf x}^\ell_{\cls,i}\leftarrow\mathrm{sg}\big(\mathbf x_i+\eta\,\widehat{\mathbf v}^\ell_{\cls,i}\big)$\;
}
$\mathcal L(\boldsymbol\theta)\leftarrow\dfrac{1}{n_\psi N_{\cls}N_{\mathrm{neg}}}
\displaystyle\sum_{\ell,\cls,i}\big\|\psi_\ell(f_{\boldsymbol\theta}(\mathbf z_{\cls,i};\cls,w))-\bar{\mathbf x}^\ell_{\cls,i}\big\|^2$\;
Update $\boldsymbol\theta$ by a gradient step on $\mathcal L$\;
\end{algorithm}

\begin{algorithm}[t]
\caption{Source-fixed Sinkhorn}
\label{alg:sinkhorn_sf}
\KwIn{cost $\mathbf C\in\mathbb R^{N\times M}$; $\varepsilon$; $\tau=\rho/(\rho+\varepsilon)$; iterations $L$}
\KwOut{plan $\mathbf P$ with $\mathbf P\mathbf 1_M=\tfrac1N\mathbf 1_N$}
$\mathbf K\leftarrow-\mathbf C/\varepsilon$;\quad $\log\mathbf b\leftarrow\mathbf 0$\;
\For{$t=1,\dots,L$}{
  $\log a_i\leftarrow\log\tfrac1N-\operatorname{LSE}_j\big(K_{ij}+\log b_j\big)$
  \tcp*{exact source projection}
  $\log b_j\leftarrow\log\tfrac1M-\tau\operatorname{LSE}_i\big(K_{ij}+\log a_i\big)$
  \tcp*{relaxed target projection}
}
$\log a_i\leftarrow\log\tfrac1N-\operatorname{LSE}_j\big(K_{ij}+\log b_j\big)$
\tcp*{final projection}
$P_{ij}\leftarrow\exp\big(\log a_i+K_{ij}+\log b_j\big)$\;
\end{algorithm}

\subsection{Source-Fixed Sinkhorn Solver}
\label{app:sinkhorn}

For a cost matrix $\mathbf C\in\mathbb R^{N\times M}$ with uniform weights
$\tfrac1N\mathbf 1_N$, $\tfrac1M\mathbf 1_M$, the source-fixed problem
\eqref{eq:empirical_uot} with $(\varphi_1,\varphi_2)=(\iota,\mathrm{KL})$ is solved
in the log domain. With $\mathbf K:=-\mathbf C/\varepsilon$ and
$\tau=\rho/(\rho+\varepsilon)$, initialize $\log\mathbf b=\mathbf 0$ and iterate $L$
times
\begin{equation}\label{eq:sinkhorn_sf}
\log a_i\leftarrow\log\tfrac1N-\operatorname{LSE}_j\big(K_{ij}+\log b_j\big),\qquad
\log b_j\leftarrow\log\tfrac1M-\tau\operatorname{LSE}_i\big(K_{ij}+\log a_i\big),
\end{equation}
where LSE is log-sum-exp. The first update is the exact source projection
($\varphi_1=\iota$); the second is the KL-relaxed target projection, whose exponent
$\tau<1$ is the standard damping for a $\rho\,\mathrm{KL}$ penalty
\citep{Sjourn2019SinkhornDF}. After the loop one further source update is applied so
that the returned plan $P_{ij}=\exp(\log a_i+K_{ij}+\log b_j)$ satisfies
$\mathbf P\mathbf 1_M=\tfrac1N\mathbf 1_N$ exactly for any finite $L$; this also keeps
the particle velocities correctly scaled. Setting $\tau=1$ recovers balanced Sinkhorn
and hence W-Flow.

\subsection{Velocity Estimator}
\label{app:velocity}

\textsc{Velocity}$(\mathbf x,\mathbf y)$ estimates the symmetrized
cross-interaction contribution in Proposition~\ref{prop:uot_force}
from the empirical plans. Let
$\mathbf P^{\to}\in\mathbb R_+^{N\times M}$ be the plan of
$\mathrm{UOT}(\widehat q,\widehat p)$
($\mathbf x$ source-fixed), and
$\mathbf P^{\leftarrow}\in\mathbb R_+^{M\times N}$ that of
$\mathrm{UOT}(\widehat p,\widehat q)$
($\mathbf y$ source-fixed, $\mathbf x$ relaxed).
The barycenters and mass ratios below are computed from these
empirical plans:
\begin{align}
\text{forward:}\quad&
\widehat{\mathbf v}^{\to}_i
=
\frac{\sum_j P^{\to}_{ij}\mathbf y_j}
     {\sum_j P^{\to}_{ij}}
-\mathbf x_i
=
B^{(1)}_{q,p}(\mathbf x_i)-\mathbf x_i,
\label{eq:fwd_vel}\\
\text{reverse:}\quad&
\widehat{\mathbf v}^{\leftarrow}_i
=
N\Big(
\sum_j P^{\leftarrow}_{ji}\mathbf y_j
-\pi^{\leftarrow}_{2,i}\mathbf x_i
\Big)
=
r^{(2)}_{p,q}(\mathbf x_i)
\big(
B^{(2)}_{p,q}(\mathbf x_i)-\mathbf x_i
\big),
\qquad
\pi^{\leftarrow}_{2,i}:=\sum_j P^{\leftarrow}_{ji}.
\label{eq:rev_vel}
\end{align}
We use
\textsc{Velocity}$(\mathbf x,\mathbf y)
=\tfrac12(
\widehat{\mathbf v}^{\to}
+\widehat{\mathbf v}^{\leftarrow})$.
In \eqref{eq:fwd_vel}, the row mass equals $1/N$ exactly,
so $r^{(1)}\equiv1$; in \eqref{eq:rev_vel}, the incoming mass
$\pi^{\leftarrow}_{2,i}$ may deviate from $1/N$, giving
$r^{(2)}_i=N\pi^{\leftarrow}_{2,i}$, which averages to one
over the batch but may be below or above one locally.
The same estimator with $\mathbf y$ drawn from a second
generated batch gives the self-interaction term
$\widehat{\mathbf v}_{\mathrm{self}}$; with $\mathbf y$
drawn from unconditional data, it gives the unconditional
term $\widehat{\mathbf v}_{\varnothing}$ for guidance
(Appendix~\ref{app:guidance}).

\paragraph{Effective interaction and marginal bounds.}
Define the effective cross-interaction matrix
\[
\overline{\mathbf P}
:=
\tfrac12\big(
\mathbf P^{\to}
+(\mathbf P^{\leftarrow})^\top
\big).
\]
The estimator can equivalently be written as
\[
\text{\textsc{Velocity}}(\mathbf x,\mathbf y)_i
=
N\sum_j \overline P_{ij}(\mathbf y_j-\mathbf x_i).
\]
Since each directed solve fixes its source marginal,
\[
\mathbf P^{\to}\mathbf 1_M=\tfrac1N\mathbf 1_N,
\qquad
\mathbf P^{\leftarrow}\mathbf 1_N=\tfrac1M\mathbf 1_M.
\]
Consequently,
\begin{align}
\overline{\mathbf P}\mathbf 1_M
&=
\tfrac12\Big(
\tfrac1N\mathbf 1_N
+(\mathbf P^{\leftarrow})^\top\mathbf 1_M
\Big)
\ge \tfrac1{2N}\mathbf 1_N,
\\
\overline{\mathbf P}^{\top}\mathbf 1_N
&=
\tfrac12\Big(
(\mathbf P^{\to})^\top\mathbf 1_N
+\tfrac1M\mathbf 1_M
\Big)
\ge \tfrac1{2M}\mathbf 1_M,
\end{align}
where the inequalities are componentwise.
Equivalently, the associated measure
$\bar\pi=\sum_{i,j}\overline P_{ij}
\delta_{(\mathbf x_i,\mathbf y_j)}$ satisfies
\[
\bar\pi_1\ge\tfrac12\widehat q,
\qquad
\bar\pi_2\ge\tfrac12\widehat p.
\]
Thus, the effective cross interaction allows marginal
reweighting while retaining at least half of each reference
marginal. These bounds also hold for finitely many Sinkhorn
iterations, up to floating-point error, because both directed
solves end with a source projection.
They concern cross-interaction mass and guarantee neither
a nonzero net force nor target coverage.

\paragraph{Symmetry of the construction.}
For the symmetric cost and common regularization parameters
used here, transposing a transport plan gives
\[
\mathrm{UOT}^{\phi_1,\phi_2}(q,p)
=
\mathrm{UOT}^{\phi_2,\phi_1}(p,q).
\]
Exchanging the two marginal penalties therefore leaves the
symmetrized cost $A$, the energy $E_p$, and its exact gradient
field unchanged. Here ``source-fixed'' describes each
directed solve; the full construction uses symmetrized
one-sided relaxation. This symmetry concerns the exact
transport objectives, whereas the marginal bounds above
also apply to the finite-iteration plans.

% \subsection{Velocity Estimator}
% \label{app:velocity}

% \textsc{Velocity}$(\mathbf x,\mathbf y)$ estimates the force of
% Proposition~\ref{prop:uot_force} from the empirical plans. With
% $\mathbf P^{\to}$ the plan of $\mathrm{UOT}(\widehat q,\widehat p)$ ($\mathbf x$
% source-fixed) and $\mathbf P^{\leftarrow}$ that of $\mathrm{UOT}(\widehat p,\widehat q)$
% ($\mathbf y$ source-fixed, $\mathbf x$ relaxed),
% \begin{align}
% \text{forward:}\quad&
% \widehat{\mathbf v}^{\to}_i
% =\frac{\sum_jP^{\to}_{ij}\mathbf y_j}{\sum_jP^{\to}_{ij}}-\mathbf x_i
% =B^{(1)}_{q,p}(\mathbf x_i)-\mathbf x_i,\label{eq:fwd_vel}\\
% \text{reverse:}\quad&
% \widehat{\mathbf v}^{\leftarrow}_i
% =N\Big(\sum_jP^{\leftarrow}_{ji}\mathbf y_j-\pi^{\leftarrow}_{2,i}\,\mathbf x_i\Big)
% =r^{(2)}_{p,q}(\mathbf x_i)\big(B^{(2)}_{p,q}(\mathbf x_i)-\mathbf x_i\big),
% \qquad \pi^{\leftarrow}_{2,i}:=\sum_jP^{\leftarrow}_{ji},\label{eq:rev_vel}
% \end{align}
% and $\textsc{Velocity}=\tfrac12(\widehat{\mathbf v}^{\to}+\widehat{\mathbf v}^{\leftarrow})$.
% In \eqref{eq:fwd_vel} the row mass equals $1/N$ exactly, so $r^{(1)}\equiv1$; in
% \eqref{eq:rev_vel} the incoming mass $\pi^{\leftarrow}_{2,i}$ may deviate from $1/N$,
% giving $r^{(2)}_i=N\pi^{\leftarrow}_{2,i}$, which averages to one over the batch but
% is not bounded by it. The same estimator with $\mathbf y$ drawn from a second
% generated batch gives the self-interaction term $\widehat{\mathbf v}_{\mathrm{self}}$;
% with $\mathbf y$ drawn from unconditional data it gives the unconditional term
% $\widehat{\mathbf v}_\varnothing$ for guidance (Appendix~\ref{app:guidance}).

\subsection{Guidance}
\label{app:guidance}

Following \citet{han2026one}, the guidance weight $w$ is sampled at each training step from the distribution in Table~\ref{tab:hparams} and the generator is
conditioned on it, so a single model covers $w\in[0,3]$; at inference a fixed $w$ is
chosen by search and sampling remains one forward pass. We report guidance as
$s:=w+1$ in the main text. For a class $\cls$ and weight $w$, three velocities are
estimated from the same generated batch: $\widehat{\mathbf v}_{\cls}$ toward real samples
of class $\cls$, $\widehat{\mathbf v}_{\mathrm{self}}$ toward a second generated batch of
class $\cls$, and $\widehat{\mathbf v}_\varnothing$ toward $N_{\mathrm{unc}}$ real samples
drawn from all classes. The regression target uses
\begin{equation}\label{eq:cfg}
\widehat{\mathbf v}_w
=(\widehat{\mathbf v}_c-\widehat{\mathbf v}_{\mathrm{self}})
+w\,(\widehat{\mathbf v}_c-\widehat{\mathbf v}_\varnothing),
\end{equation}
in which the self-interaction term cancels from the guidance difference. No label
dropout is used; the unconditional direction comes from the unconditional data batch,
not from an unconditional generator. 

\subsection{Feature-Space Transport}
\label{app:features}

Costs are squared Euclidean distances between features of the released pretrained
MAE encoder \citep{deng2026generative}, $C^\ell_{ij}=\frac12\|\psi_\ell(\mathbf x_i)-\psi_\ell(\mathbf y_j)\|^2$,
computed independently for each selected block $\ell$. Plans, velocities, and the
regression target are formed per block, and the per-block losses are averaged.
Block selection and feature normalization follow \citet{han2026one}.

\subsection{Hyperparameters}
\label{app:hparams}

Table~\ref{tab:hparams} lists the configuration. Unless stated, values are inherited
from W-Flow \citep{han2026one}.

\begin{table}[t]
\centering\footnotesize
\setlength{\tabcolsep}{3pt}
\renewcommand{\arraystretch}{0.92}
\resizebox{\textwidth}{!}{%
\begin{tabular}{l|c|c|c|c}
\toprule
& \textbf{Ablation} (Table~\ref{table:1}) & \textbf{UOT-GF B/2} & \textbf{UOT-GF L/2} & \textbf{UOT-GF XL/2}\\
\midrule
\rowcolor{gray!15}\multicolumn{5}{l}{\textit{Generator Architecture}}\\
arch & DiT-B/2 & DiT-B/2 & DiT-L/2 & DiT-XL/2\\
input size & $32{\times}32{\times}4$ & $32{\times}32{\times}4$ & $32{\times}32{\times}4$ & $32{\times}32{\times}4$\\
patch size & $2{\times}2$ & $2{\times}2$ & $2{\times}2$ & $2{\times}2$\\
hidden dim & 768 & 768 & 1024 & 1152\\
depth & 12 & 12 & 24 & 28\\
register tokens & 16 & 16 & 16 & 16\\
style embedding tokens & 32 & 32 & 32 & 32\\
\midrule
\rowcolor{gray!15}\multicolumn{5}{l}{\textit{Feature Encoder}}\\
architecture & ResNet & ResNet & ResNet & ResNet\\
SSL pre-train method & latent-MAE & latent-MAE & latent-MAE & latent-MAE\\
ResNet: input size & $32{\times}32{\times}4$ & $32{\times}32{\times}4$ & $32{\times}32{\times}4$ & $32{\times}32{\times}4$\\
ResNet: conv$_1$ stride & 1 & 1 & 1 & 1\\
ResNet: base width & 256 & 640 & 640 & 640\\
ResNet: block type & bottleneck & bottleneck & bottleneck & bottleneck\\
ResNet: blocks / stage & [3, 4, 6, 3] & [3, 4, 6, 3] & [3, 4, 6, 3] & [3, 4, 6, 3]\\
ResNet: size / stage & $[32^2,16^2,8^2,4^2]$ & $[32^2,16^2,8^2,4^2]$ & $[32^2,16^2,8^2,4^2]$ & $[32^2,16^2,8^2,4^2]$\\
MAE: masking ratio & 50\% & 50\% & 50\% & 50\%\\
MAE: pre-train epochs & 192 & 1280 & 1280 & 1280\\
classification finetune & No & 3k steps & 3k steps & 3k steps\\
\midrule
\rowcolor{gray!15}\multicolumn{5}{l}{\textit{Generator Optimizer}}\\
optimizer & AdamW & AdamW & AdamW & AdamW\\
$(\beta_1,\beta_2)$ & (0.9, 0.95) & (0.9, 0.95) & (0.9, 0.95) & (0.9, 0.95)\\
learning rate & 2e-4 & 4e-4 & 4e-4 & 3e-4\\
weight decay & 0.01 & 0.0 & 0.01 & 0.01\\
warmup steps & 5k & 10k & 10k & 10k\\
gradient clip & 2.0 & 2.0 & 2.0 & 2.0\\
training steps & 30k & 200k & 200k & 200k\\
training epochs & 100 & 1280 & 1280 & 1280\\
EMA decay & 0.999 & 0.999 & 0.999 & 0.999\\
\midrule
\rowcolor{gray!15}\multicolumn{5}{l}{\textit{Training Loss Computation}}\\
class labels $N_{\cls}$ & 64 & 128 & 128 & 128\\
positive samples $N_{\mathrm{pos}}$ & 64 & 64 & 64 & 128\\
generated samples $N_{\mathrm{neg}}$ & 64 & 64 & 64 & 64\\
effective batch $B$ ($N_{\cls}{\times}N_{\mathrm{neg}}$) & 4096 & 8192 & 8192 & 8192\\
marginal penalties $(\varphi_1,\varphi_2)$ & $(\iota,\mathrm{KL})$ & $(\iota,\mathrm{KL})$ & $(\iota,\mathrm{KL})$ & $(\iota,\mathrm{KL})$\\
relaxation $\tau$ & 0.985 & 0.985 & 0.985 & 0.985\\
regularization $\varepsilon$ & 0.05 & 0.05 & 0.05 & 0.05\\
Sinkhorn iterations $L$ & 10 & 10 & 10 & 10\\
symmetrized velocity & yes & yes & yes & yes\\
self-transport diagonal mask & no & no & no & no\\
step size $\eta$ & 1.0 & 1.0 & 1.0 & 1.0\\
\midrule
\rowcolor{gray!15}\multicolumn{5}{l}{\textit{CFG Configuration}}\\
train: CFG $w$ range & $[0,3]$ & $[0,3]$ & $[0,3]$ & $[0,3]$\\
train: CFG $w$ sampling & $(w{+}1)^{-3}$ & $(w{+}1)^{-5}$ & 50\%: $w{=}0$; 50\%: $(w{+}1)^{-3}$ & 20\%: $w{=}0$; 80\%: $(w{+}1)^{-4}$\\
train: uncond samples $N_{\mathrm{unc}}$ & 16 & 32 & 32 & 32\\
inference: CFG $w$ search & $[0.0,2.5]$ & $[0.0,2.5]$ & $[0.0,2.5]$ & $[0.0,2.5]$\\
inference: selected $s{=}w{+}1$ & 1.50 & 1.19 & 1.15 & 1.14\\
\bottomrule
\end{tabular}}
\caption{Configurations and hyperparameters. All rows except the transport rows
($(\varphi_1,\varphi_2)$, $\tau$, $L$, $\epsilon$, symmetrization, diagonal mask) follow
\citet[Table~5]{han2026one}. CFG sampling entries abbreviate $p(w)\propto(\cdot)$.}
\label{tab:hparams}
\end{table}

Ablations (Section~\ref{sec:exp}) use DiT-B/2 with $N_c=64$ (effective batch 4096),
30K steps, $\varepsilon=0.05$, and guidance $s=1.5$.

\subsection{Diagnostics}
\label{app:diagnostics}

The solver logs, per plan: the source residual $\max_i|\pi_{1,i}-1/N|$ with $\pi_{1,i}=\sum_jP_{ij}$ (zero up to floating point after the final projection); the target $\mathrm{KL}(\mathbf P^{\!\top}\mathbf 1_N\,\|\,\tfrac1M\mathbf 1_M)$; the target effective sample size fraction $\big(\sum_j\pi_{2,j}\big)^2/\big(M\sum_j\pi_{2,j}^2\big)$ with $\pi_{2,j}=\sum_iP_{ij}$, which measures how much target mass is reallocated; the range $[\min_j,\max_j]$ of $M\pi_{2,j}$, the empirical $r^{(2)}$; and the reverse fraction $\|\widehat{\mathbf v}^{\leftarrow}\|_{\mathrm{rms}}/\|\widehat{\mathbf v}^{\to}\|_{\mathrm{rms}}$.
Figure~\ref{fig:source_fixed_uot_mass} reports $M\pi_{2,j}$ on real ImageNet targets for our checkpoint.

\subsection{Evaluation}
\label{app:eval}

\begin{table}[!htbp]
\centering
\caption{FID under the JiT and ADM evaluation protocols.}
\label{tab:scaling_fid}
\small
\setlength{\tabcolsep}{3.5pt}
\begin{adjustbox}{max width=\columnwidth,center}
\begin{tabular}{@{}lccc@{}}
\toprule
Model / setting & CFG & FID(JiT) $\downarrow$ & FID(ADM) $\downarrow$ \\
\midrule
Drifting Pixel-L   \citep{deng2026generative}              & 1.00 & 1.597925 & 1.463253 \\
W-Flow-B \citep{han2026one}                 & 1.19 & 1.518592 & 1.521997 \\
W-Flow-L  \citep{han2026one}                        & 1.14 & 1.352408 & 1.357978 \\
W-Flow-XL  \citep{han2026one}                     & 1.09 & 1.295583 & 1.333832 \\
\midrule
UOT-GF-B ($\tau=0.985$)          & 1.19 & 1.469181 & 1.461928 \\
UOT-GF-B ($\tau=0.985,\ i=1$)    & 1.19 & 1.548405 & 1.539864 \\
UOT-GF-L ($\tau=0.985$)          & 1.15 & 1.348107 & 1.341342 \\
UOT-GF-L ($\tau=0.985,\ i=1$)    & 1.15 & 1.405842 & 1.400581 \\
UOT-GF-XL 100k     & 1.15 & 1.261174 & 1.261597 \\
UOT-GF-XL 200k     & 1.14 & 1.225218 & 1.219600 \\
\bottomrule
\end{tabular}
\end{adjustbox}
\end{table}
All metrics use 50K generated images against the ImageNet validation reference. FID in the main text follows the ADM evaluation suite \citep{3540261.3540933}. W-Flow's published numbers use the JiT protocol \citep{Li_2026_CVPR}; Table~\ref{tab:scaling_fid} reports both protocols for W-Flow and for UOT-GF. Precision/recall use $k=3$ and density/coverage $k=5$ \citep{3454287.3454640,3524938.3525603}, in both Inception and DINOv2-CLS feature spaces. Guidance scales for reported FIDs are those of Table~\ref{tab:hparams}; Figure~\ref{fig:cfg_sweep_with_table} gives the FID--IS trade-off.

\subsection{Toy example}\label{app:toy}

\begin{figure*}[t]
    \centering
    \includegraphics[width=\textwidth]
    {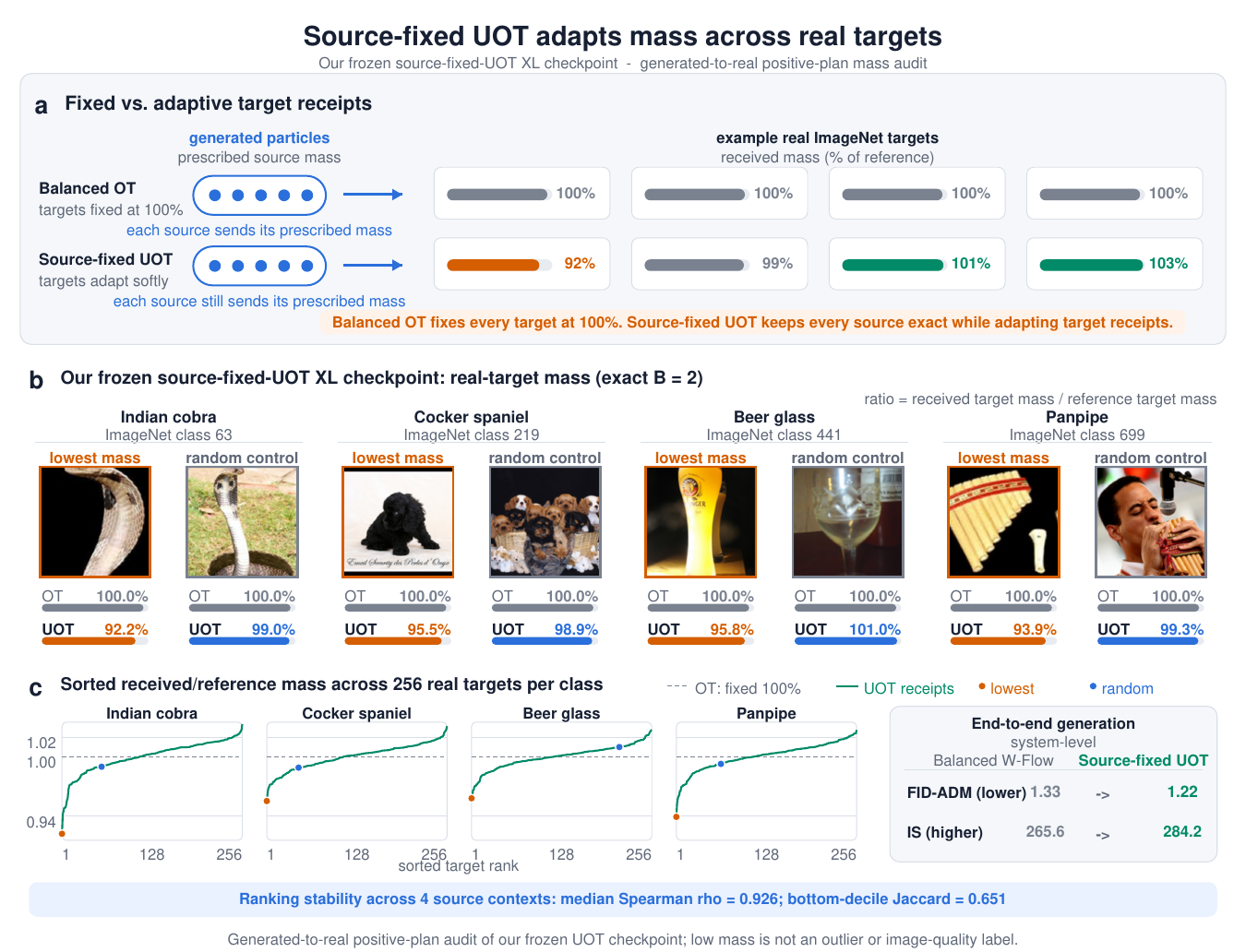}
    \caption{\textbf{Forward source-fixed UOT reweights real targets.}
\textbf{(a)} Balanced OT fixes both marginals; source-fixed UOT
fixes the generated marginal and KL-relaxes the real-data marginal.
\textbf{(b)} Lowest-mass targets and class-matched controls,
where $r_j$ is received mass relative to the uniform target marginal.
\textbf{(c)} Rankings over 256 targets per class are stable across
four contexts ($\rho=0.926$, Jaccard $=0.651$).
Lower mass indicates reduced weight in the forward interaction,
not lower image quality.
Right: final-checkpoint FID and IS.
    }
    \label{fig:source_fixed_uot_mass}
    \vspace{-2mm}
\end{figure*}

\begin{figure}[ht]
    \centering
    \includegraphics[width=\linewidth]{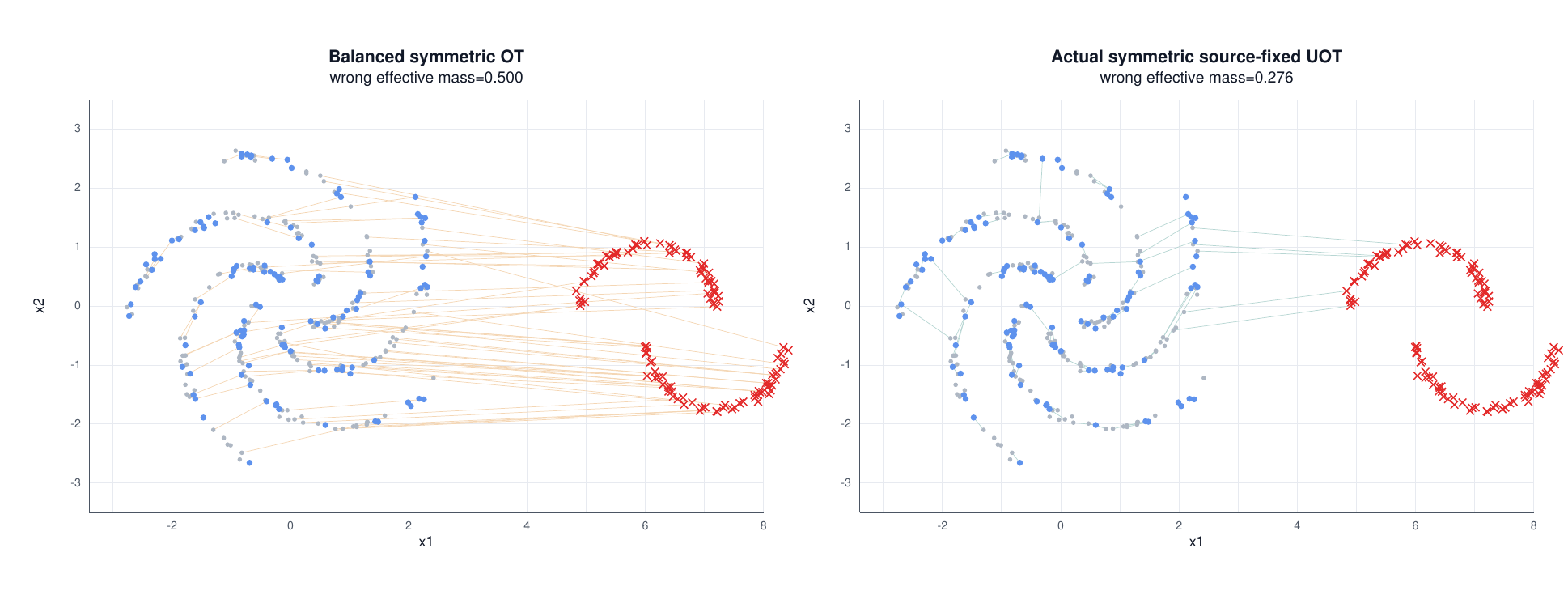}
    \caption{\textbf{Wrong-manifold targets.} Part of the target mass (red crosses)
    lies on a corrupted manifold. Balanced OT (left) must transport full mass to it;
    symmetrized source-fixed UOT (right) reduces the mass sent there. Gray: sources;
    blue: nominal targets.}
    \label{fig:toy_wrong_manifold}
\end{figure}

\begin{figure}[ht]
    \centering
    \includegraphics[width=\linewidth]{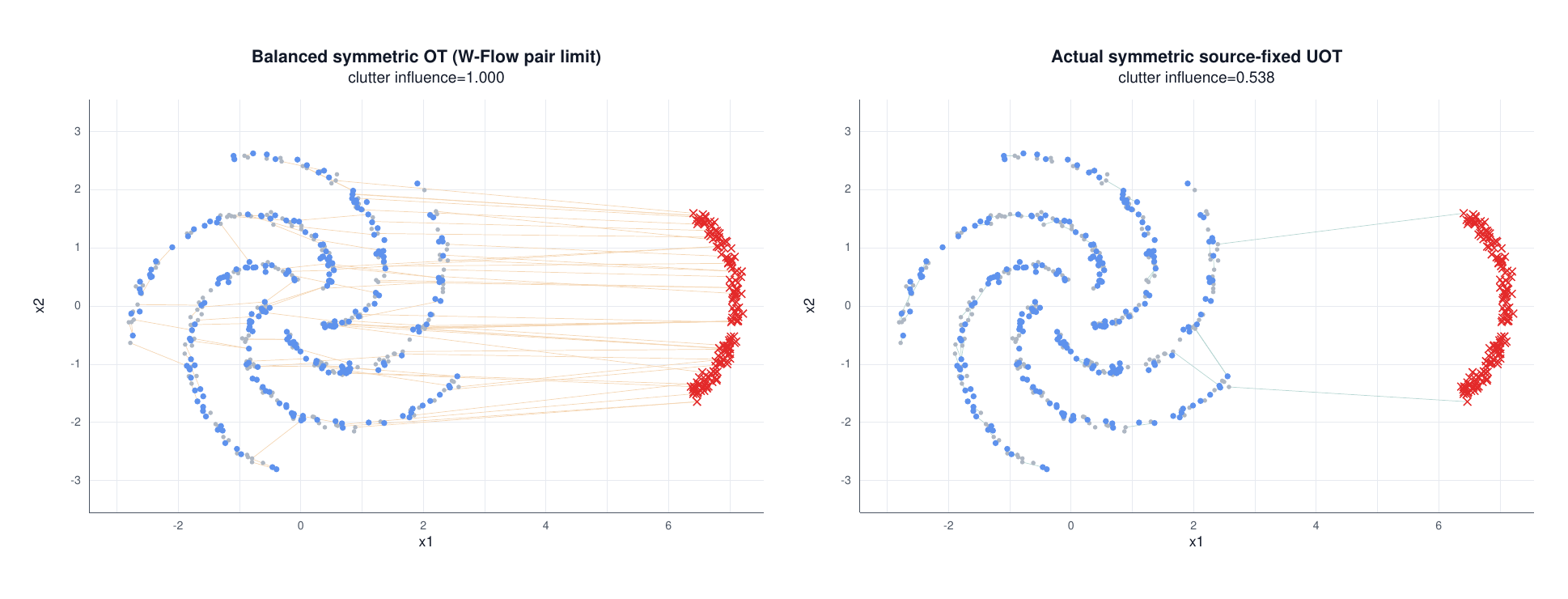}
    \caption{\textbf{Spurious-island targets.} A cluster of spurious targets (red
    crosses) appears away from the data. Balanced OT (left) couples sources to it at
    full weight; symmetrized source-fixed UOT (right) reduces its influence.}
    \label{fig:toy_false_island}
\end{figure}

% \begin{figure}[ht]
%     \centering
%     \includegraphics[width=\linewidth]{figs/03_cluster_proportion_mismatch.pdf}
%     \caption{\textbf{Mode imbalance.} Equal-weight sources (gray dots) against an
%     80/10/10 target (teal crosses), left. Balanced OT (middle) reassigns source mass
%     across clusters to match the proportions; symmetrized source-fixed UOT (right)
%     keeps sources on nearby clusters.}
%     \label{fig:toy_imbalance}
% \end{figure}

\paragraph{Synthetic transport diagnostics.}
We construct four two-dimensional source--target problems to compare
balanced OT with symmetrized source-fixed UOT under target perturbations. Let $\{\mathbf x_i\}_{i=1}^{N}$ and
$\{\mathbf y_j\}_{j=1}^{M}$ denote the source and target samples with uniform weights
$\tfrac1N\mathbf 1_N$ and $\tfrac1M\mathbf 1_M$. In every experiment the ground cost is
$C_{ij}=\tfrac12\|\mathbf x_i-\mathbf y_j\|^2$ and the entropic regularization is
$\varepsilon=0.05$. The balanced baseline minimizes
\[
  \langle\mathbf C,\mathbf P\rangle
  +\varepsilon\,\mathrm{KL}\big(\mathbf P\,\big\|\,\tfrac1{NM}\mathbf 1_N\mathbf 1_M^{\!\top}\big)
  \quad\text{subject to}\quad
  \mathbf P\mathbf 1_M=\tfrac1N\mathbf 1_N,\qquad
  \mathbf P^{\!\top}\mathbf 1_N=\tfrac1M\mathbf 1_M .
\]
The source-fixed unbalanced variant instead solves
\[
  \min_{\mathbf P\ge0,\;\mathbf P\mathbf 1_M=\frac1N\mathbf 1_N}
  \langle\mathbf C,\mathbf P\rangle
  +\varepsilon\,\mathrm{KL}\big(\mathbf P\,\big\|\,\tfrac1{NM}\mathbf 1_N\mathbf 1_M^{\!\top}\big)
  +\rho\,\mathrm{KL}\big(\mathbf P^{\!\top}\mathbf 1_N\,\big\|\,\tfrac1M\mathbf 1_M\big),
\]
where $\tau=\rho/(\rho+\varepsilon)=0.985$, corresponding to $\rho\approx3.283$. The displayed optimization problems define one directed solve.
We solve each problem in both directions, obtaining
$\mathbf P^\to\in\mathbb R_+^{N\times M}$ for source-to-target
transport and $\mathbf P^\leftarrow\in\mathbb R_+^{M\times N}$
for target-to-source transport, with the cost matrix transposed
and the reference weights exchanged in the reverse solve.
The effective cross-interaction plan is
\[
\overline{\mathbf P}
:=\tfrac12\bigl(\mathbf P^\to+
(\mathbf P^\leftarrow)^\top\bigr).
\]
For source-fixed UOT,
$\mathbf P^\to\mathbf 1_M=\frac1N\mathbf 1_N$ and
$\mathbf P^\leftarrow\mathbf 1_N=\frac1M\mathbf 1_M$.
Consequently,
\[
\overline{\mathbf P}\mathbf 1_M
\ge \frac{1}{2N}\mathbf 1_N,
\qquad
\overline{\mathbf P}^{\top}\mathbf 1_N
\ge \frac{1}{2M}\mathbf 1_M,
\]
componentwise.
Thus, each directed solve fixes its own source marginal,
while both effective marginals may be reweighted.
The balanced comparison uses $\tau=1$ in both directions.

The plans are computed with Algorithm~\ref{alg:sinkhorn_sf} in double precision; the
balanced baseline sets $\tau=1$, and no diagonal mask is applied. The four problems
use $L=3000/80$, $3000/150$, $400/400$, and $3000/80$ iterations for the
balanced/source-fixed solvers, respectively. All random samples are reproducible from
the base seed 20260831, with separate deterministic seed offsets for each source,
target, and perturbation.

\paragraph{Wrong-manifold replacement (Figure~\ref{fig:toy_wrong_manifold}).}
We sample 192 source points and 120 clean target points from a five-arm pinwheel. For each point, the radius is uniform on $[0.45,2.75]$, and the angle is $2\pi k/5+1.17r+\eta$, where $k$ is the arm index and $\eta\sim\mathcal{N}(0,0.045^2)$; isotropic Cartesian noise with standard deviation $0.03$ is then added. Another 120 target points are sampled from a shifted, noisy two-moons distribution. These distractors constitute half of the observed target sample.

\paragraph{Spurious target island (Figure~\ref{fig:toy_false_island}).}
The source contains 240 points from a five-arm pinwheel, and the observed target contains 240 independently sampled clean pinwheel points plus 160 remote crescent-shaped distractors. The latter therefore account for 40\% of the 400 target points. The clean pinwheel uses radii uniform on $[0.45,2.85]$, angular coefficient $1.16$, angular noise of standard deviation $0.045$, and Cartesian noise of standard deviation $0.03$. For the distractor crescent, $\theta\sim\mathrm{Uniform}[-1.28,1.28]$ and $R\sim\mathcal{N}(1.02,0.045^2)$, with coordinates
$(6.10+R\cos\theta,\;0.05+1.55R\sin\theta)$ and additional isotropic noise of standard deviation $0.025$.

\paragraph{Cluster-proportion mismatch (Figure~\ref{fig:toy_imbalance}).}
Source and target points are sampled around the same three centers,
$(-2.7,-1.2)$, $(2.7,-1.2)$, and $(0,2.8)$, with isotropic Gaussian noise of standard deviation $0.12$. The 180 source points are divided equally among the three components (60/60/60), whereas the 180 target points have counts 144/18/18, giving an 80/10/10 empirical target proportion.

\paragraph{Oversampled-mode minibatch (Figure~\ref{fig:uot_toy_examples}).}
Eight Gaussian modes are placed on a ring of radius $2.45$, at angles $2\pi k/8+\pi/8$. All points have isotropic Gaussian noise of standard deviation $0.16$. The 192 source points are equally divided among the modes. Of the 240 independently sampled target points, 80\% belong to one mode, with the remainder distributed across the other seven. This is a near-duplicate \emph{mode burst}, not a set of point-identical copies.

The panels display the effective cross-interaction plans
$\overline{\mathbf P}$.
For each method, we connect evenly spaced source points to their
largest-coupling targets,
\[
j^*(i)=\arg\max_j\overline P_{ij}.
\]
The four problems display 96, 100, 54, and 96 such links,
respectively. Colors follow Figure~\ref{fig:uot_toy_examples}, and link width does not encode transported mass.
These plots visualize couplings constructed from both transport directions, rather than integrated particle trajectories.

\section{Additional results}\label{app:addr}

\paragraph{Efficiency evaluation.}
We evaluated all methods at $256\times256$ resolution on NVIDIA H20 GPUs, assigning one GPU to each process. Inference used FP32 tensors with TF32 enabled. We retained each model's high-fidelity sampling configuration: SiT-XL/2 used a 250-step Euler SDE sampler with CFG $1.5$ and the official EMA VAE; RAE-XL used a 50-step Euler ODE sampler with autoguidance $1.42$; BAR-L used guidance $5.3$, temperature $3.0$, token allocation $[2,2,5,7]$, and KV caching; and UOT-GF used its 200k-step XL checkpoint with CFG $1.14$ and one generator evaluation.

For throughput, we used a batch size of 125, discarded 10 warm-up batches, and timed the following 390 batches. Throughput is the number of timed images divided by the elapsed generation-and-decoding time. For single-image latency, we fixed the conditioning label to ImageNet class 0, discarded 20 warm-up requests, and timed 200 batch-one requests individually, synchronizing CUDA immediately before and after each request. We report the mean latency; checkpoint loading and file I/O are excluded from both timing measurements.

We estimated FLOPs per image by running the complete sampling and decoding path under PyTorch's \texttt{FlopCounterMode} and dividing the counted operations by the batch size. These are theoretical estimates for supported operators, not measurements of hardware utilization or energy consumption. The speedup in panel is each model's batch-125 throughput divided by SiT-XL/2's batch-125 throughput.

\begin{figure*}[t]
  \centering
  \includegraphics[width=\textwidth]{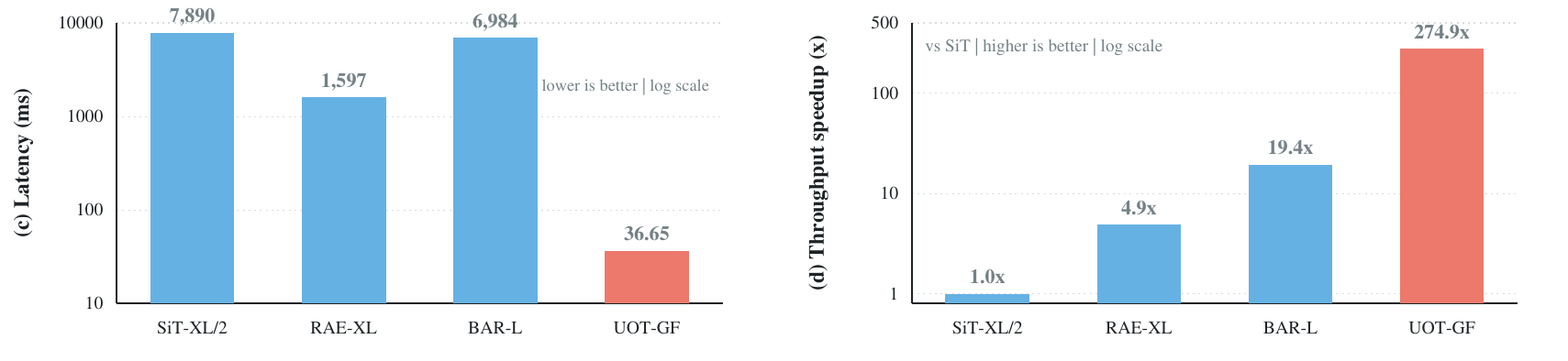}
  \caption{Single-H20 inference efficiency at $256\times256$ resolution.
  (c) Mean batch-one generation-and-decoding latency over 200 timed requests.
  (d) Batch-125 throughput speedup relative to SiT-XL/2.
  Both panels use logarithmic vertical axes and model-specific high-fidelity
  sampling settings. The speedup in (d) is not computed from batch-one latency.}
  \label{fig:efficiency_cd}
\end{figure*}

\paragraph{Controlled visualization of target-marginal relaxation.}

Figure~\ref{fig:controlled_target_mass} isolates the transport mechanism by applying balanced OT and source-fixed UOT to the \emph{same} generated sources and real ImageNet targets. The 128 source particles are generated with a fixed seed from our frozen EMA W-Flow-XL/2 checkpoint at 180k steps (class 178, CFG 1.14). The 128 targets consist of 96 distinct class-178 images, 16 exact copies of four of those images, and 16 foreign-class images (class 0). We encode the targets as 4096-dimensional global SD-VAE latents and use the training-time input scaling and quadratic cost for both operators, with $\epsilon=0.05$ ($\epsilon_{\mathrm{eff}}=204.8$) and 10 Sinkhorn iterations; source-fixed UOT additionally uses $\tau=0.985$. Only the positive-forward coupling is shown. The upper plots report each target's received mass relative to its uniform reference mass: balanced OT keeps this ratio near one, whereas source-fixed UOT preserves the source marginal while allowing target-specific deviations through a soft KL penalty. The lower heatmaps show row-normalized coupling density relative to uniform, with target columns grouped into clean images, duplicate families, and foreign-class images. This controlled probe visualizes marginal adaptation, rather than comparing separately trained generators or establishing a difference in image quality.

\begin{figure*}[t]
    \centering
    \includegraphics[width=0.95\textwidth]{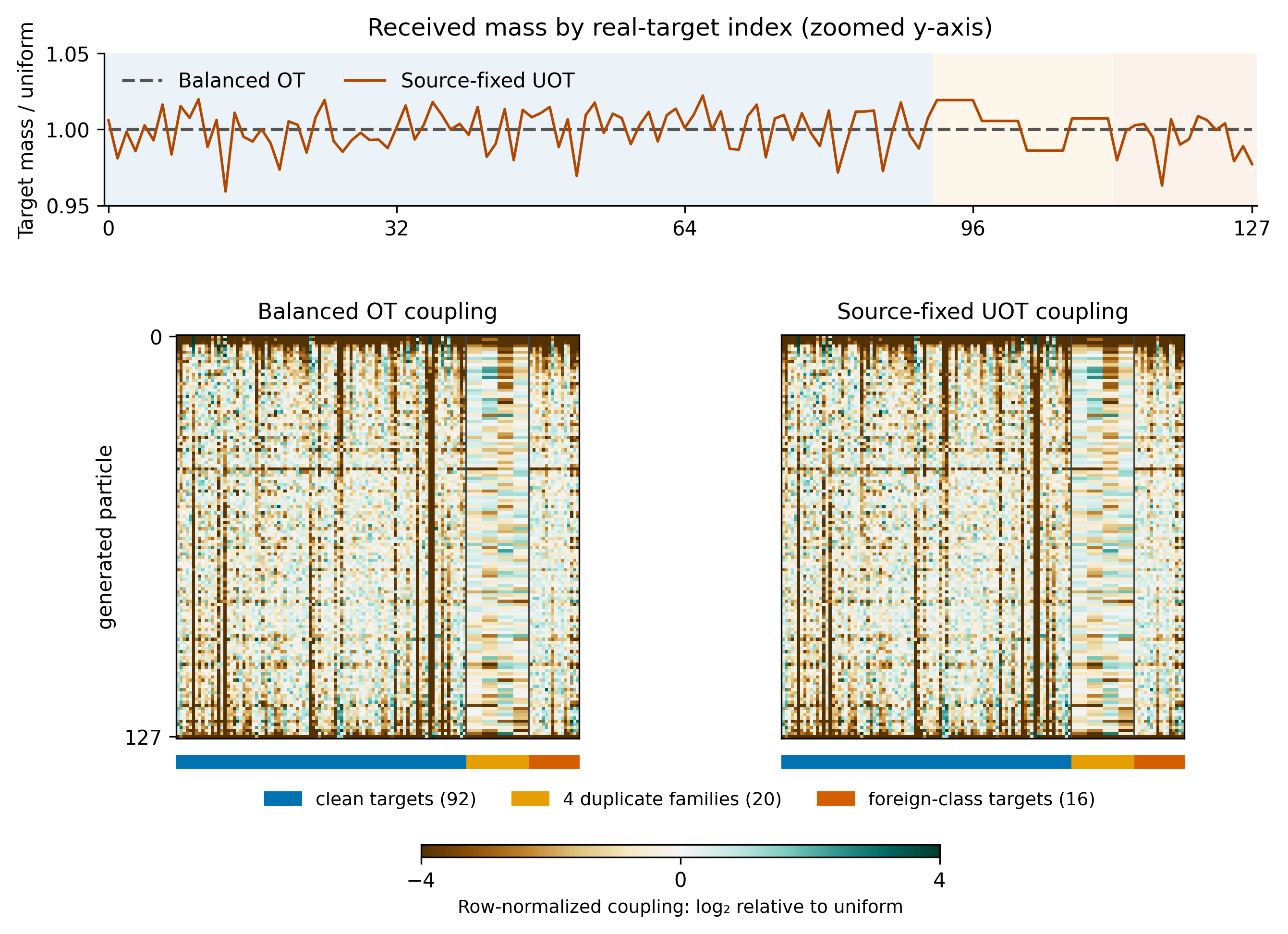}
    \caption{Controlled comparison of balanced OT and source-fixed UOT on identical frozen XL sources and real ImageNet targets. Top: received target mass relative to the uniform reference. Bottom: row-normalized positive-forward couplings; the colored strips distinguish 92 clean targets, 20 members of four duplicate families, and 16 foreign-class targets.}
    \label{fig:controlled_target_mass}
\end{figure*}

\paragraph{Representation-space updates and couplings.}
We additionally visualize a frozen-checkpoint replay of the source-fixed UOT training update for ImageNet classes 178 and 208 (Figs.~\ref{fig:uot_coupling_178} and~\ref{fig:uot_coupling_208}). The shared PCA projection shows generated particles, reference features, and the particles after one Euler update; arrows indicate the projected update direction. The accompanying heatmaps display the coupling components used to construct that update. Transport and velocity are computed in the full 4096-dimensional global latent space; PCA is used only for visualization. These figures are diagnostics of the update mechanism, not balanced-OT comparisons or measures of generation quality.

\begin{figure*}[t]
    \centering
    \includegraphics[width=0.95\textwidth]{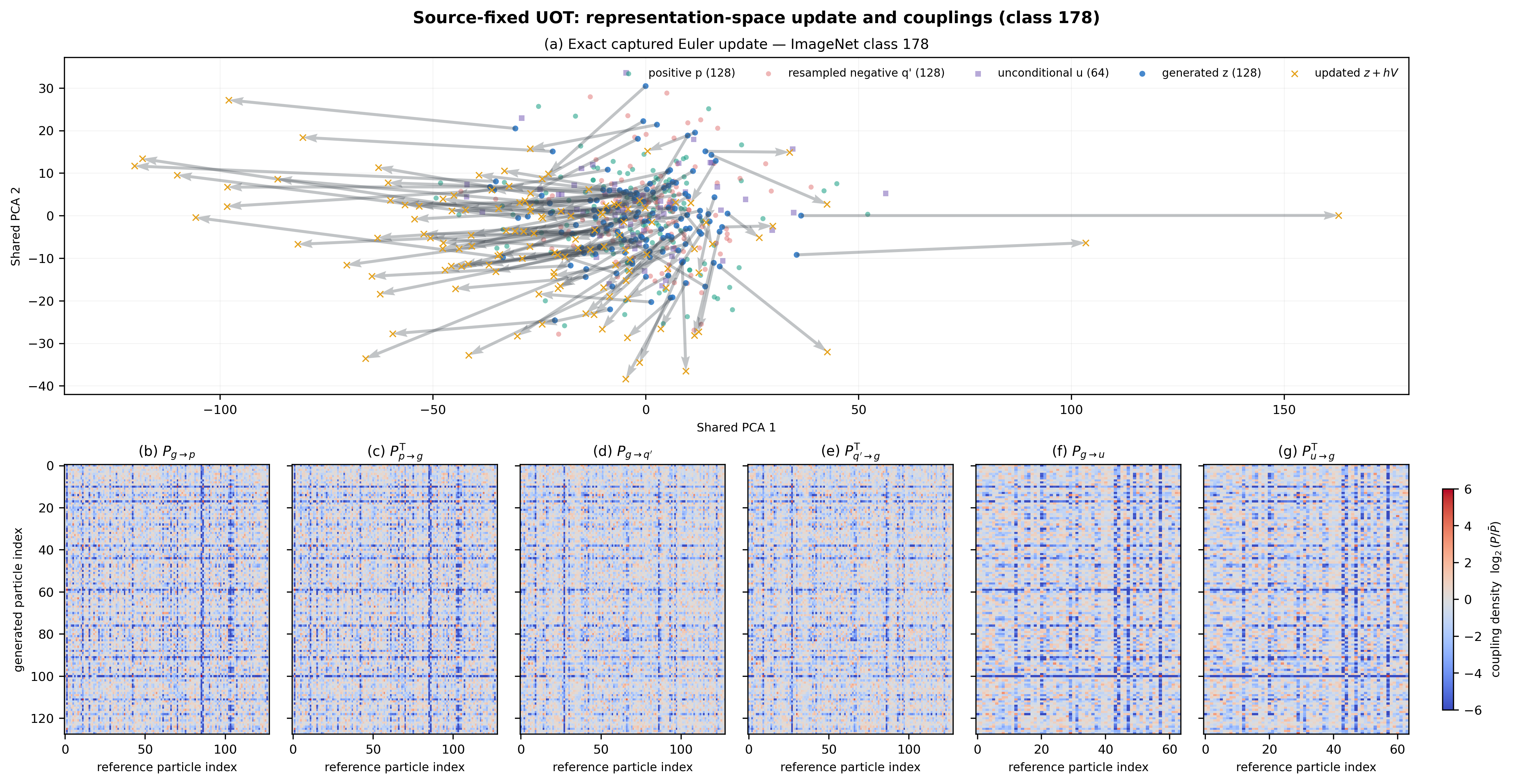}
    \caption{Source-fixed UOT update for ImageNet class 178 using the frozen XL checkpoint. (a) Shared PCA projection of the latent particles and one Euler update; arrows connect generated particles to their updated positions. (b--g) Coupling components used to form the update. Computation uses 4096-dimensional global latents, $\epsilon_{\mathrm{eff}}=204.8$, $\tau=0.985$, and 10 Sinkhorn iterations.}
    \label{fig:uot_coupling_178}
\end{figure*}

\begin{figure*}[t]
    \centering
    \includegraphics[width=0.95\textwidth]{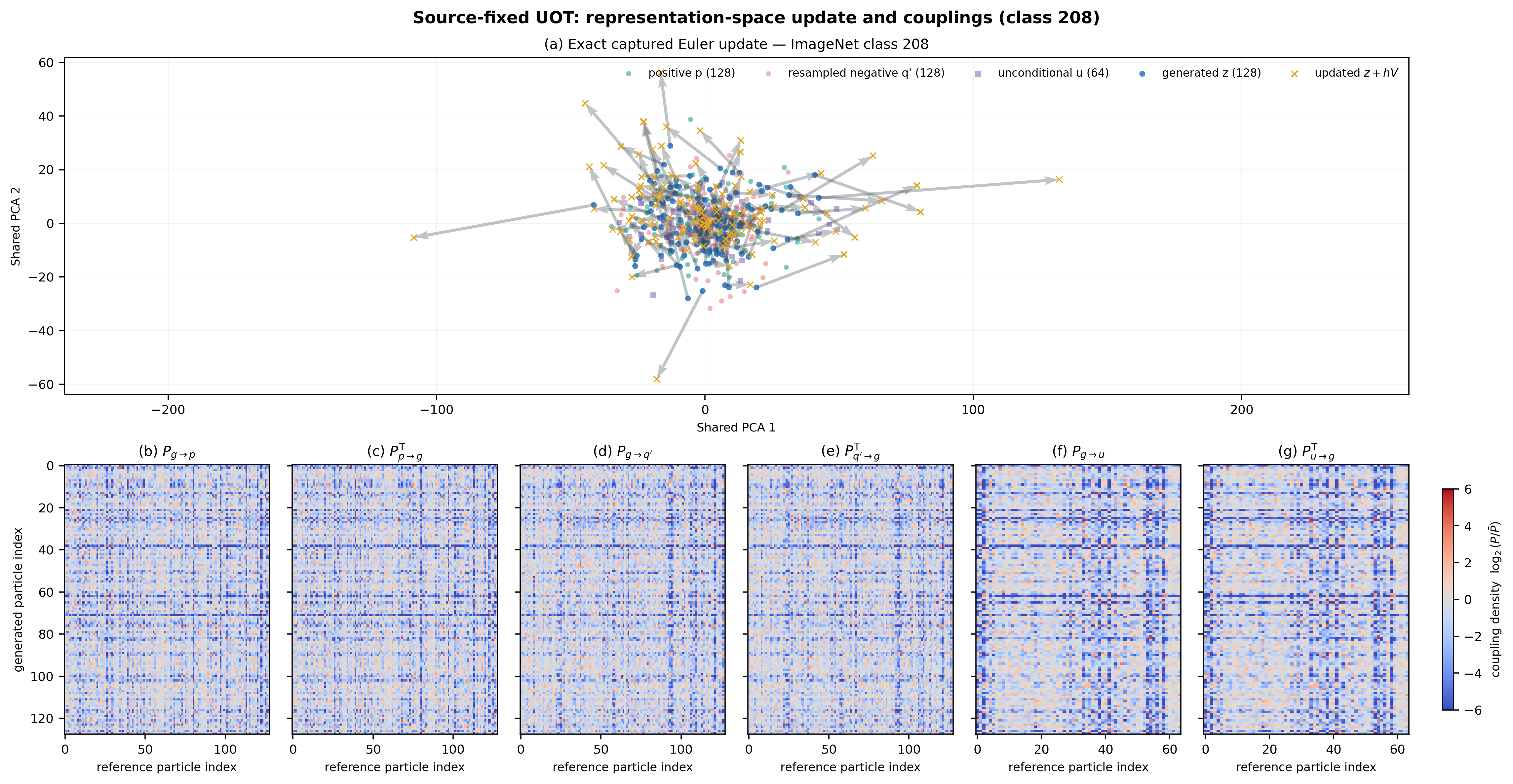}
    \caption{The same frozen-checkpoint source-fixed UOT probe for ImageNet class 208. (a) Shared PCA projection and one Euler update. (b--g) Corresponding coupling components. The PCA coordinates are for display only; all transport computations use the full latent representation.}
    \label{fig:uot_coupling_208}
\end{figure*}

\begin{figure*}[!htbp]
    \centering

    \begin{minipage}[c]{0.47\textwidth}
    \vspace{0pt}
        \centering
        \includegraphics[width=\linewidth]
        {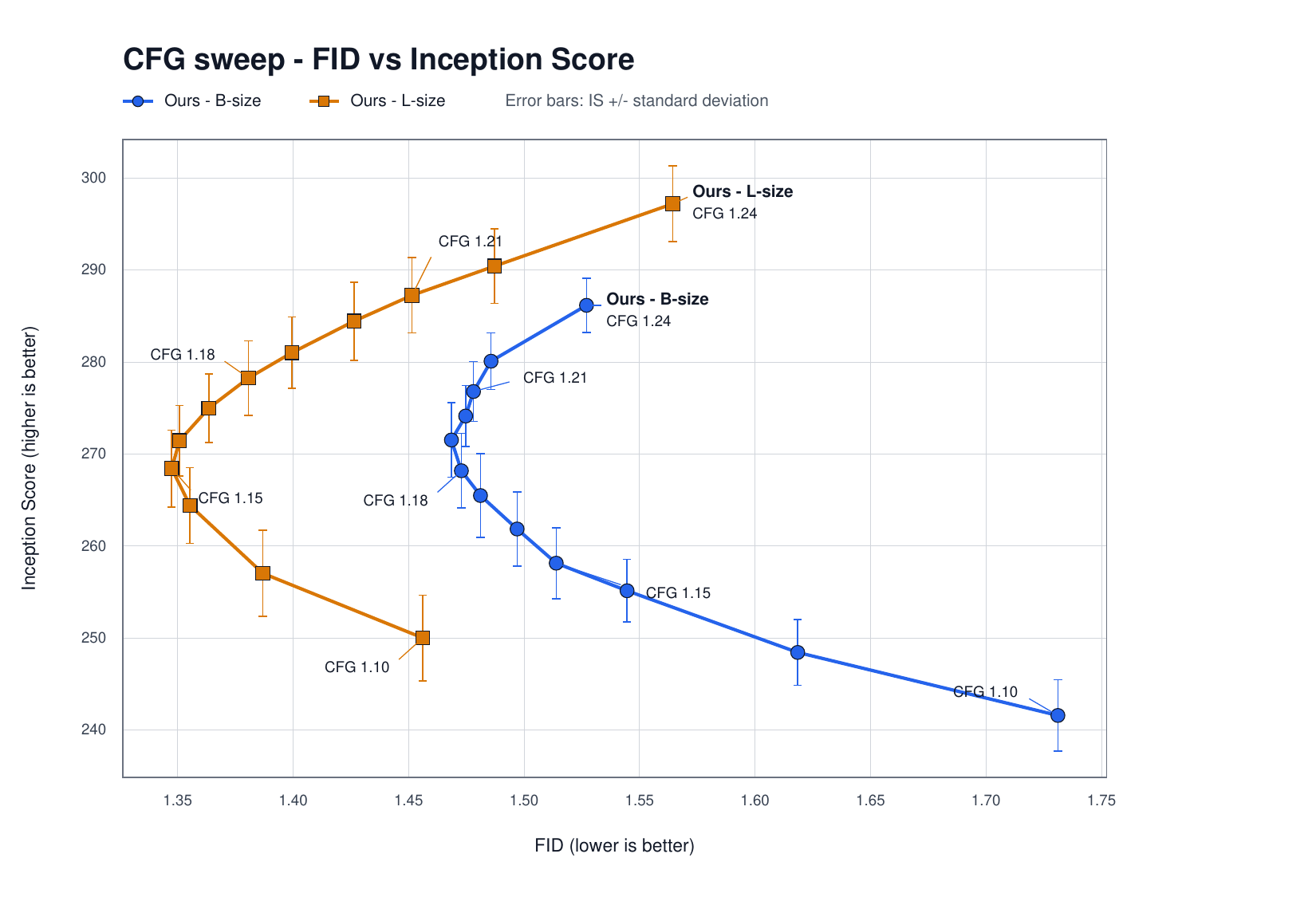}
    \end{minipage}
    \hspace{0.005\textwidth}%
    % ==================== Right: CFG tables ====================
    % ==================== Right: CFG tables ====================
\begin{minipage}[c]{0.515\textwidth}
\vspace{0pt}
    \centering
    \setlength{\tabcolsep}{3pt}
    \renewcommand{\arraystretch}{1.16}

    {\footnotesize\bfseries CFG scales: 1.10--1.17}

    \vspace{0.6mm}

    \resizebox{\linewidth}{!}{%
    \begin{tabular}{@{}llcccccc@{}}
        \toprule
        \rowcolor{gray!12}
        \textbf{Model}
        & \textbf{Metric}
        & \textbf{1.10}
        & \textbf{1.12}
        & \textbf{1.14}
        & \textbf{1.15}
        & \textbf{1.16}
        & \textbf{1.17} \\
        \midrule

        \rowcolor{gray!5}
        \cellcolor{red!15} {\textbf{B/2}} &  \cellcolor{red!15} \textbf{FID} $\downarrow$
        & 1.73
        & 1.61
        & 1.54
        & 1.51
        & 1.49
        & 1.48\\

        \rowcolor{gray!5}
        \cellcolor{red!15}
        & \cellcolor{red!15} \textbf{IS} $\uparrow$
        & $241.59{\pm}3.87$
        & $248.44{\pm}3.56$
        & $255.14{\pm}3.40$
        & $258.14{\pm}3.85$
        & $261.85{\pm}4.05$
        & $265.49{\pm}4.54$ \\

        \addlinespace[1pt]

        \rowcolor{gray!5}
        \cellcolor{red!15} {\textbf{L/2}}  & \cellcolor{red!15} \textbf{FID} $\downarrow$
        & 1.45
        & 1.38
        & 1.35
        & \bestcfg{1.34}
        & 1.35
        & 1.36 \\

        \rowcolor{gray!5}
        \cellcolor{red!15}
        &  \cellcolor{red!15} \textbf{IS} $\uparrow$
        & $250.01{\pm}4.66$
        & $257.03{\pm}4.69$
        & $264.41{\pm}4.12$
        & $268.43{\pm}4.18$
        & $271.45{\pm}3.83$
        & $274.98{\pm}3.74$ \\

        \bottomrule
    \end{tabular}%
    }

    \vspace{1.5mm}

    {\footnotesize\bfseries CFG scales: 1.18--1.24}

    \vspace{0.6mm}

    \resizebox{\linewidth}{!}{%
    \begin{tabular}{@{}llcccccc@{}}
        \toprule
        \rowcolor{gray!12}
        \textbf{Model}
        & \textbf{Metric}
        & \textbf{1.18}
        & \textbf{1.19}
        & \textbf{1.20}
        & \textbf{1.21}
        & \textbf{1.22}
        & \textbf{1.24} \\
        \midrule

        \rowcolor{gray!5}
        \cellcolor{red!15} {\textbf{B/2}}  & \cellcolor{red!15} \textbf{FID} $\downarrow$
        & 1.47
        & \bestcfg{1.46}
        & 1.47
        & 1.47
        & 1.48
        & 1.52 \\

        \rowcolor{gray!5}
        \cellcolor{red!15}
        &  \cellcolor{red!15} \textbf{IS} $\uparrow$
        & $268.18{\pm}4.06$
        & $271.52{\pm}4.05$
        & $274.13{\pm}3.32$
        & $276.79{\pm}3.25$
        & $280.09{\pm}3.09$
        & $286.16{\pm}2.94$ \\

        \addlinespace[1pt]

        \rowcolor{gray!5}
        \cellcolor{red!15} {\textbf{L/2}}  &  \cellcolor{red!15} \textbf{FID} $\downarrow$
        & 1.38
        & 1.39
        & 1.42
        & 1.45
        & 1.48
        & 1.56 \\

        \rowcolor{gray!5}
        \cellcolor{red!15}
        & \cellcolor{red!15} \textbf{IS}$\uparrow$
        & $278.26{\pm}4.05$
        & $281.02{\pm}3.89$
        & $284.44{\pm}4.24$
        & $287.25{\pm}4.09$
        & $290.42{\pm}4.04$
        & $297.19{\pm}4.13$ \\

        \bottomrule
    \end{tabular}%
    }

\end{minipage}
    \caption{
        CFG-scale analysis for our B/2 and L/2 models.
        The left panel shows the FID--IS trade-off, while the right panel    reports the corresponding numerical results.
    }
    \label{fig:cfg_sweep_with_table}
    \par\vspace{2mm}

    % Figure 10 remains separately numbered within the shared float.
    \includegraphics[width=0.49\textwidth]
    {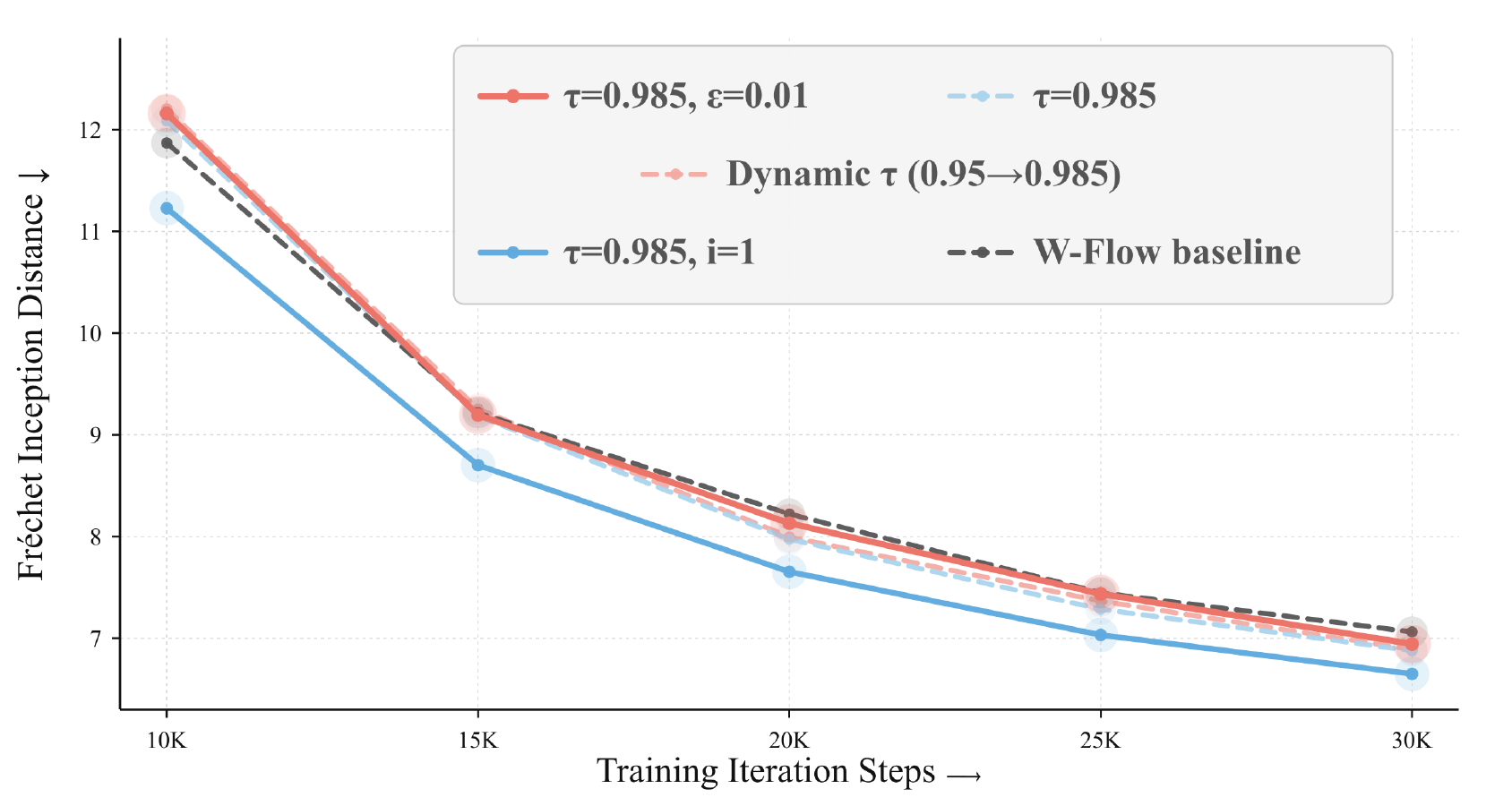}
    \hfill
    \includegraphics[width=0.49\textwidth]
    {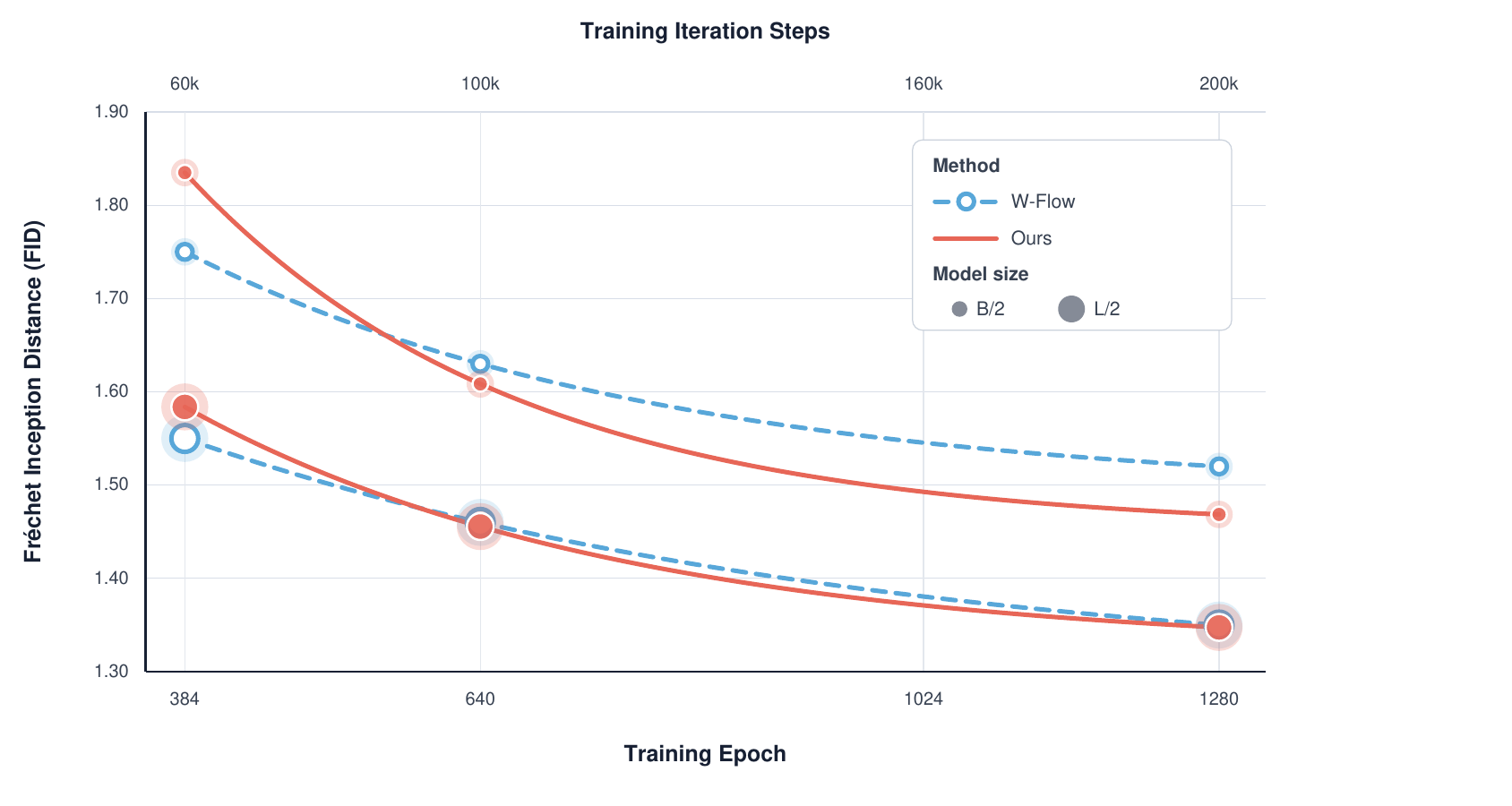}

    \caption{
    Training dynamics and FID convergence.
    (Left) Comparison of our ablation variants with baseline methods.
    (Right) Comparison of our B/2 and L/2 models with baseline methods.
}
    \label{fig:training_convergence}
\end{figure*}

\begin{figure*}[ht]
\centering
\captionsetup[subfigure]{font=small,skip=2pt}

% Row 1
\begin{subfigure}{0.49\textwidth}
    \centering
    \includegraphics[width=\linewidth]{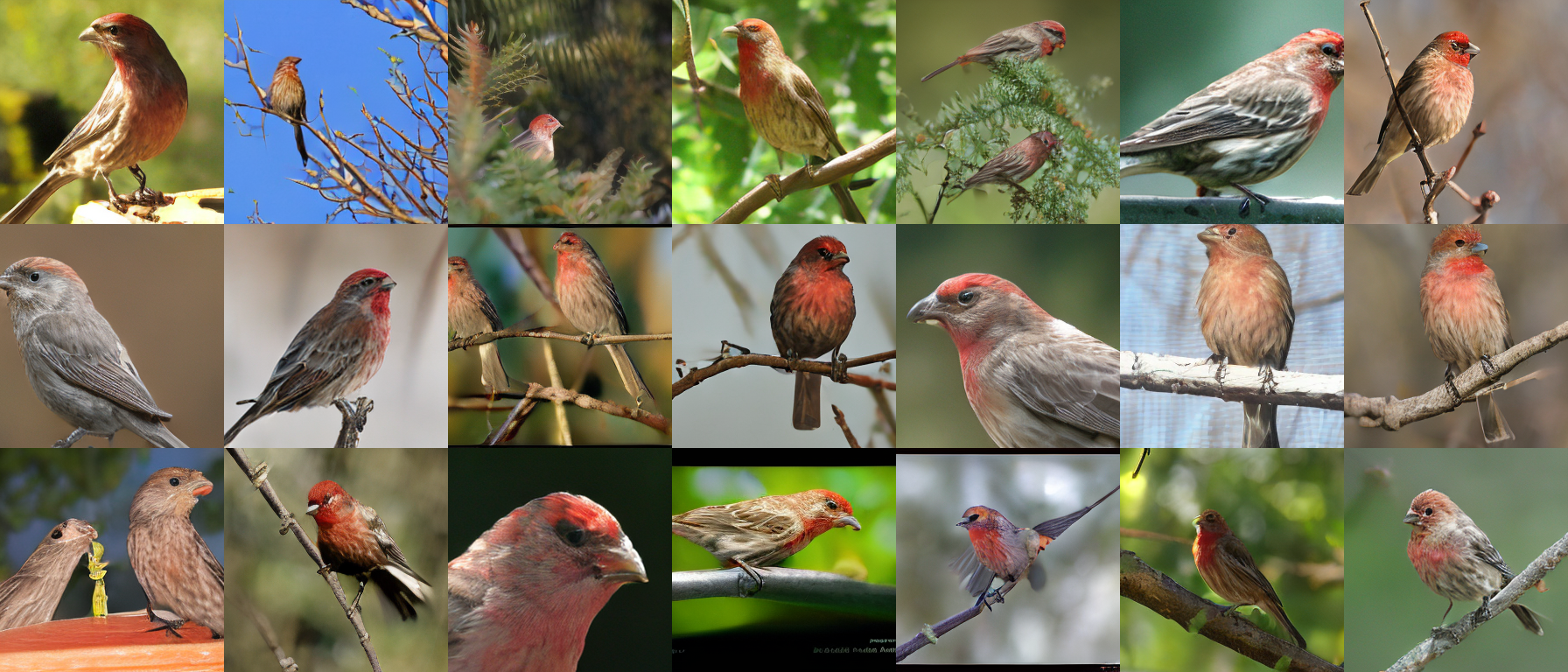}
    \caption{House finch (Class 012)}
\end{subfigure}\hfill
\begin{subfigure}{0.49\textwidth}
    \centering
    \includegraphics[width=\linewidth]{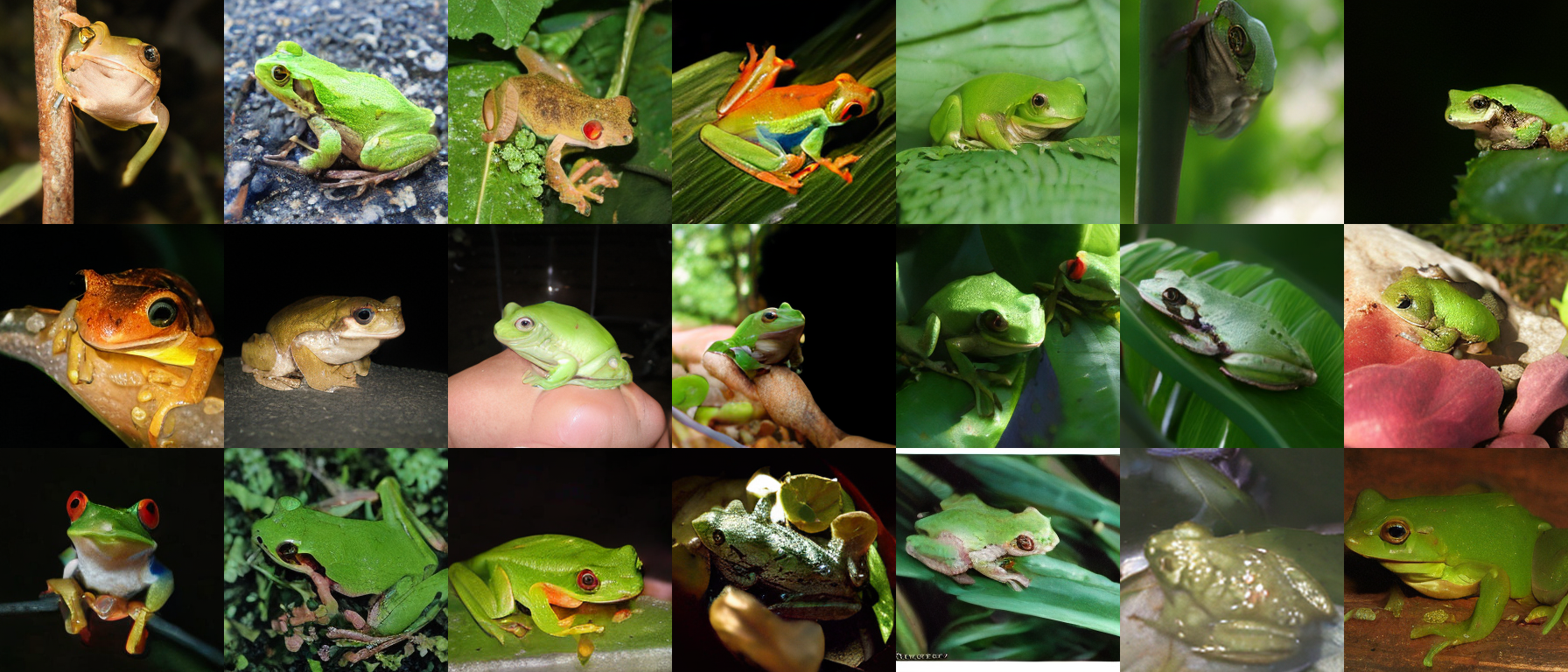}
    \caption{Tree frog (Class 031)}
\end{subfigure}

\par\medskip
% Row 2
\begin{subfigure}{0.49\textwidth}
    \centering
    \includegraphics[width=\linewidth]{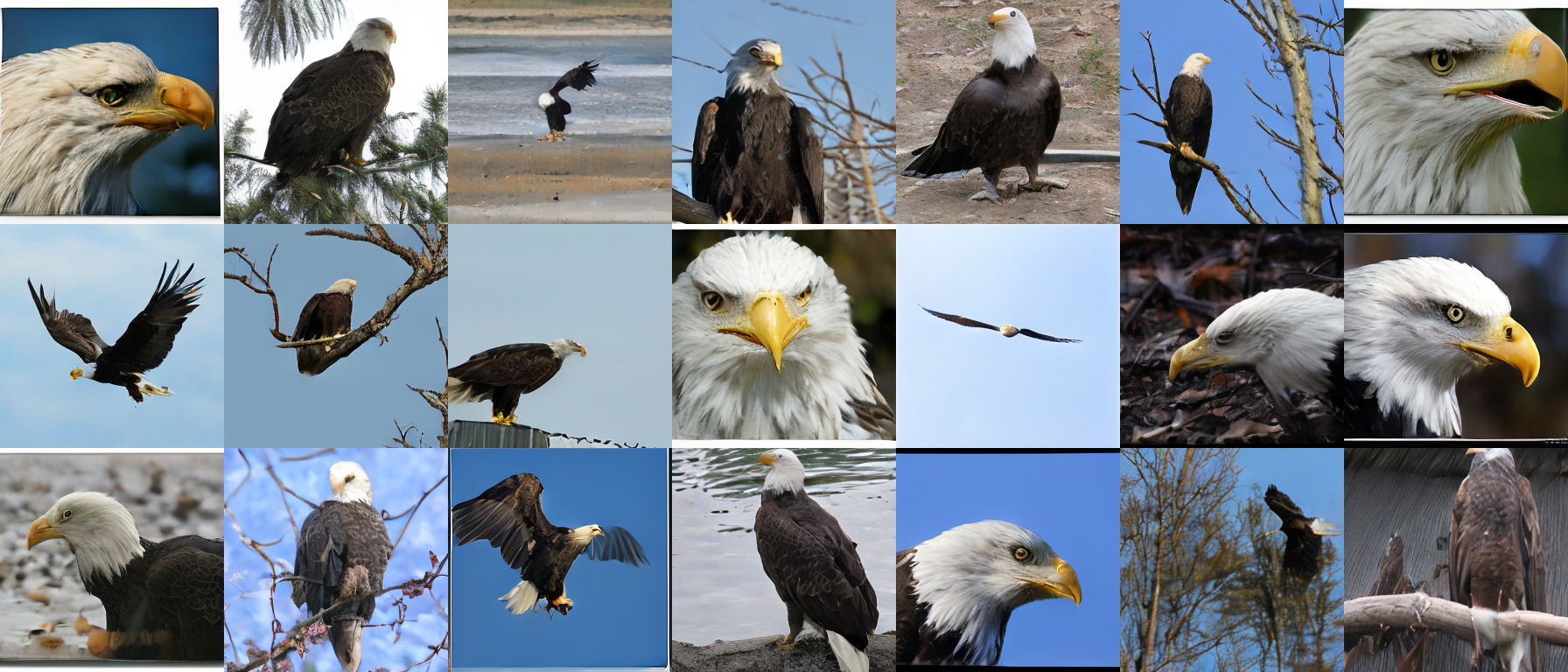}
    \caption{Bald eagle (Class 022)}
\end{subfigure}\hfill
\begin{subfigure}{0.49\textwidth}
    \centering
    \includegraphics[width=\linewidth]{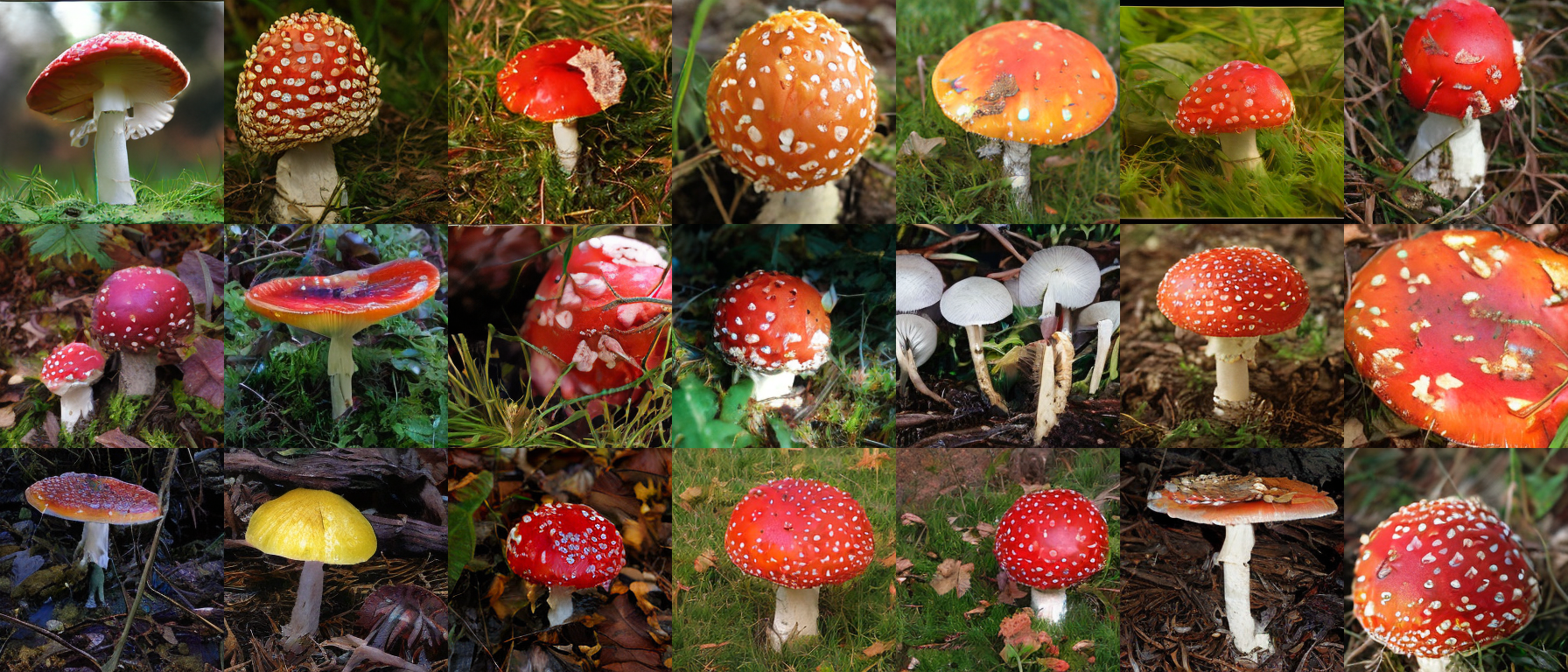}
    \caption{Agaric (Class 992)}
\end{subfigure}

\par\medskip
% Row 3
\begin{subfigure}{0.49\textwidth}
    \centering
    \includegraphics[width=\linewidth]{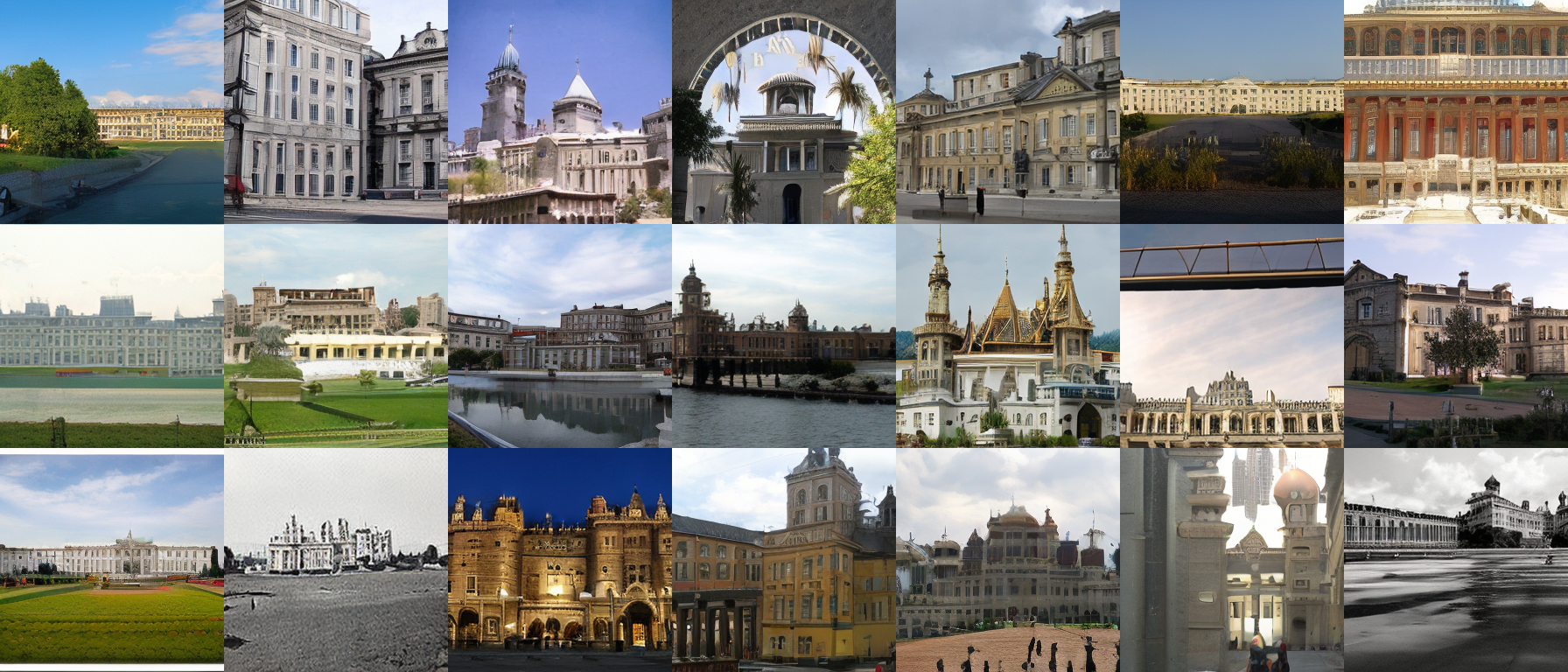}
    \caption{Palace (Class 698)}
\end{subfigure}\hfill
\begin{subfigure}{0.49\textwidth}
    \centering
    \includegraphics[width=\linewidth]{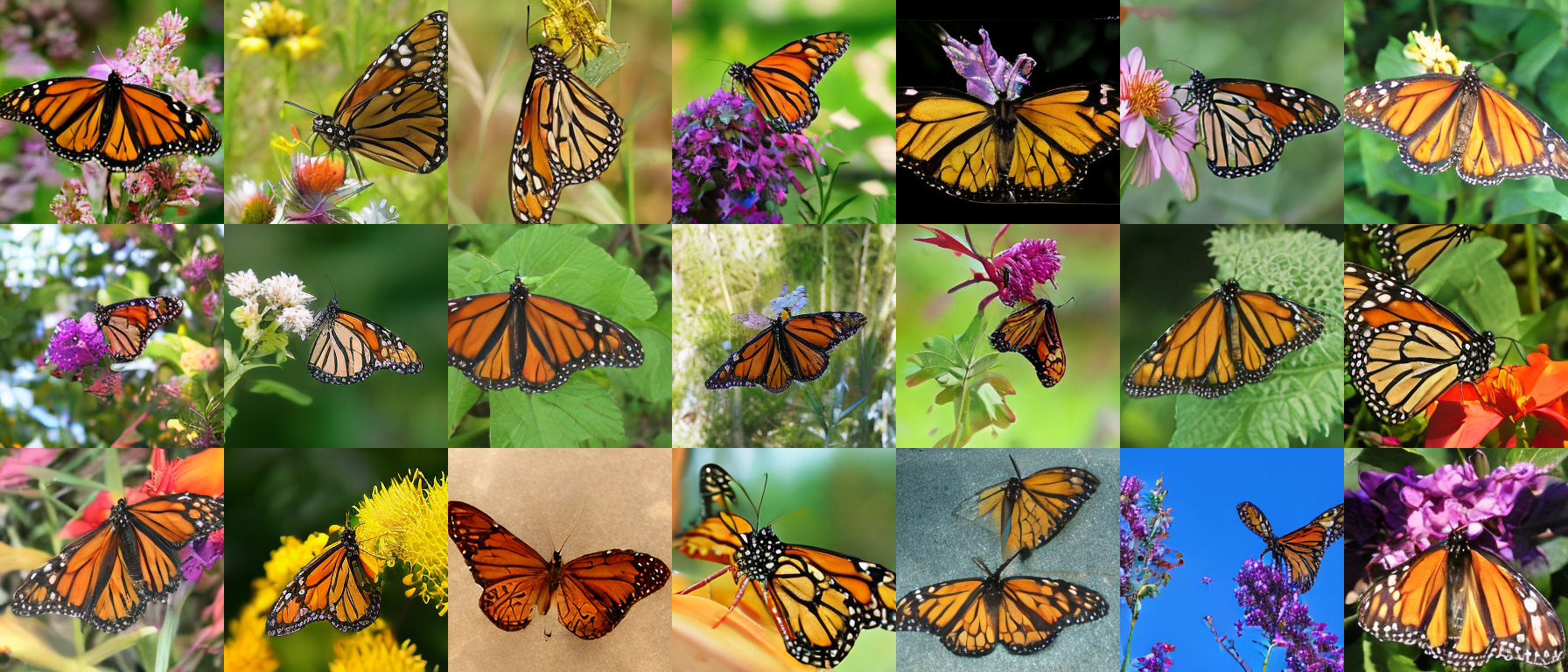}
    \caption{Monarch butterfly (Class 323)}
\end{subfigure}

\par\medskip
% Row 4
\begin{subfigure}{0.49\textwidth}
    \centering
    \includegraphics[width=\linewidth]{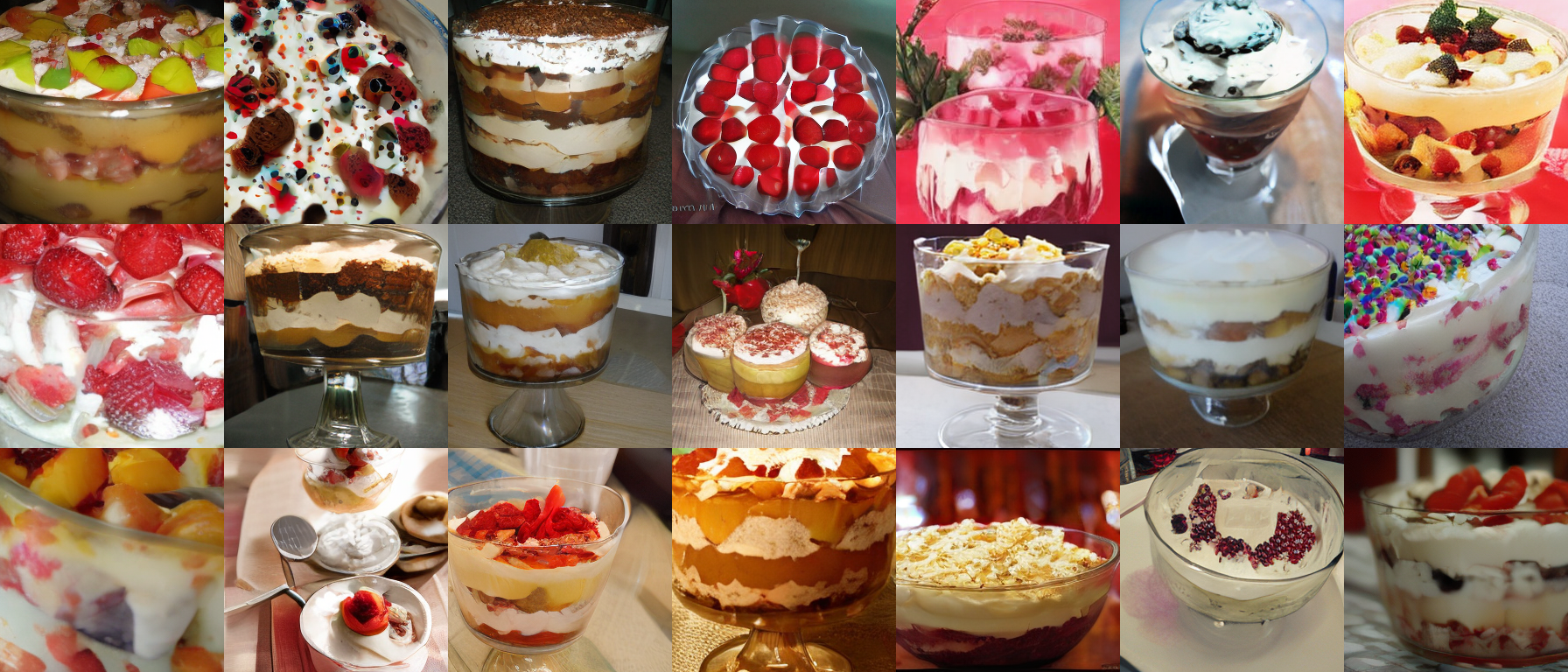}
    \caption{Trifle (Class 927)}
\end{subfigure}\hfill
\begin{subfigure}{0.49\textwidth}
    \centering
    \includegraphics[width=\linewidth]{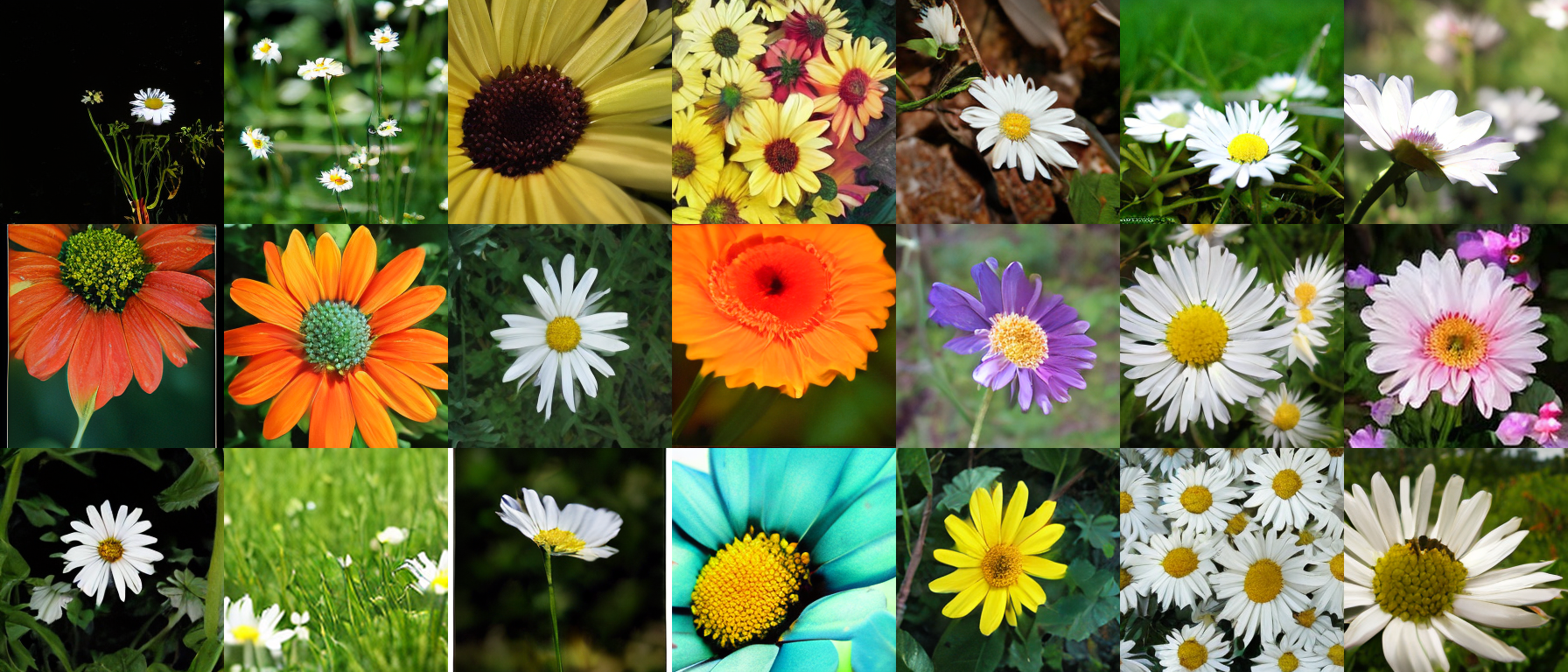}
    \caption{Daisy (Class 985)}
\end{subfigure}

\par\medskip
% Row 5
\begin{subfigure}{0.49\textwidth}
    \centering
    \includegraphics[width=\linewidth]{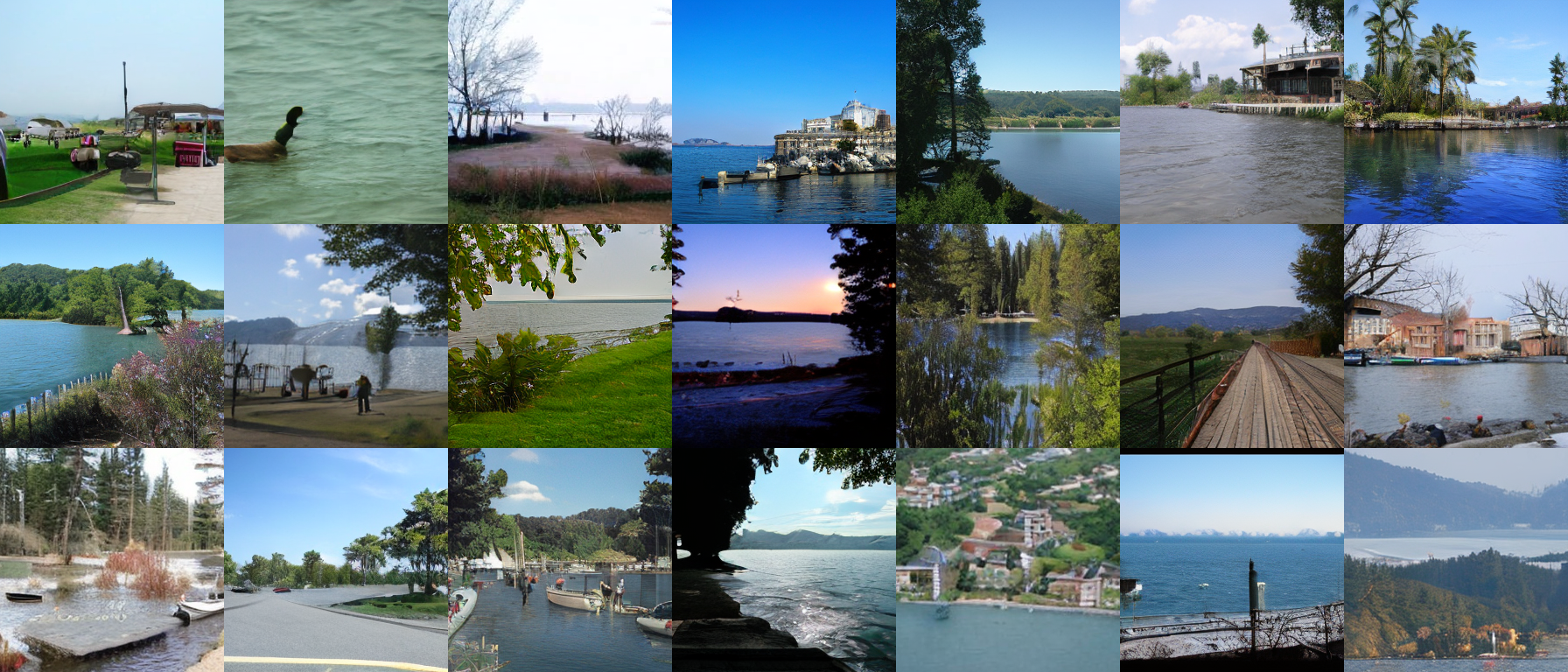}
    \caption{Lakeside (Class 975)}
\end{subfigure}\hfill
\begin{subfigure}{0.49\textwidth}
    \centering
    \includegraphics[width=\linewidth]{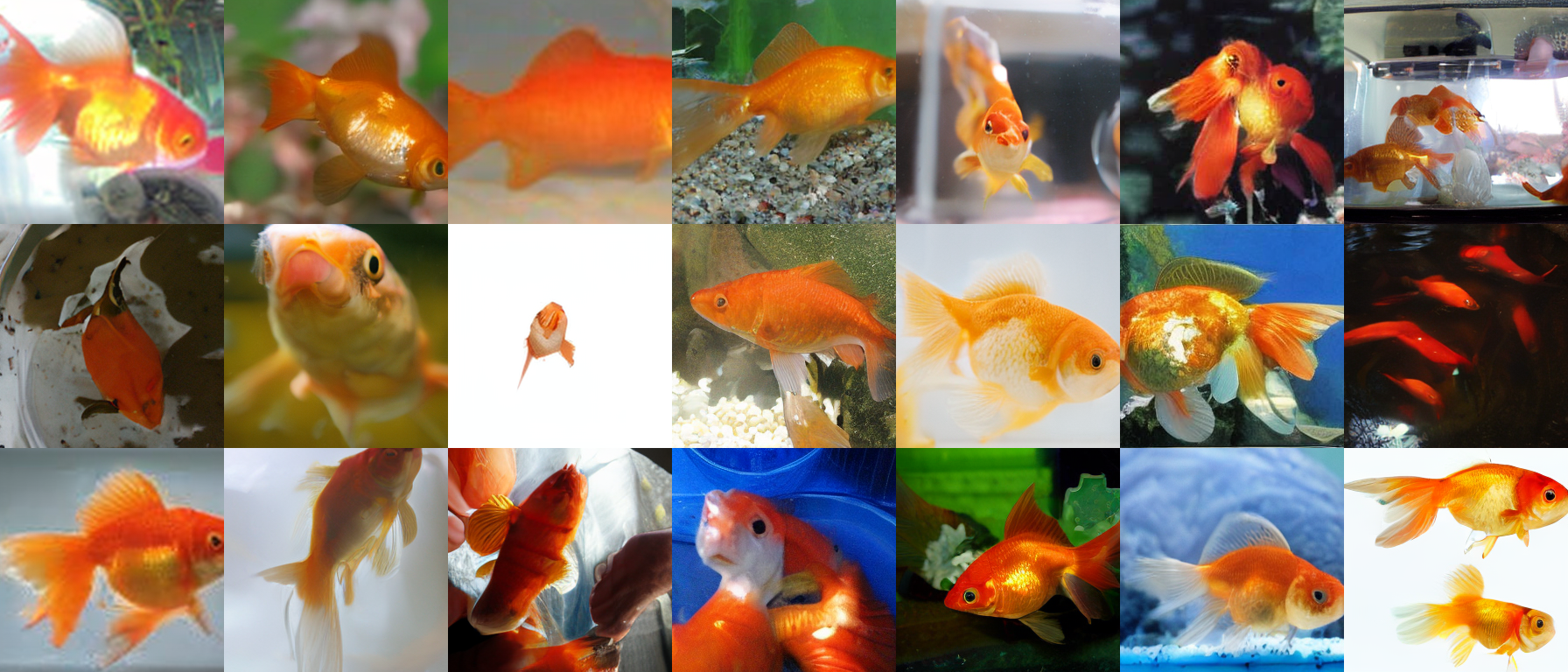}
    \caption{Goldfish (Class 001)}
\end{subfigure}

\caption{Random class-conditional ImageNet samples generated by our UOT-XL model at 200K training steps with CFG $=1.14$. Class indices follow the zero-based ImageNet-1K label mapping.}
\label{fig:samples-five-by-two}
\vspace{-2mm}
\end{figure*}

\begin{figure*}[t]
\centering
\captionsetup[subfigure]{font=small,skip=2pt}

% Row 1
\begin{subfigure}{0.49\textwidth}
    \centering
    \includegraphics[width=\linewidth]{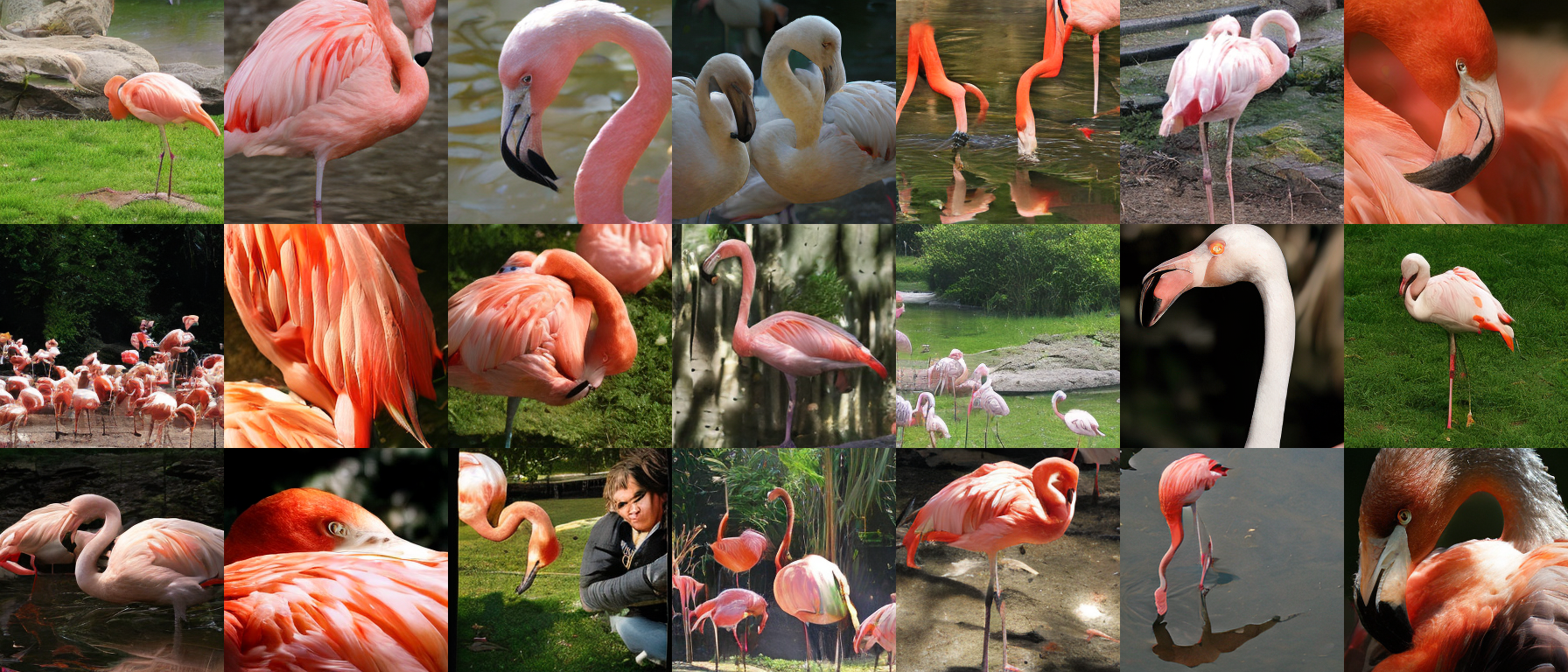}
    \caption{Flamingo (Class 130)}
\end{subfigure}\hfill
\begin{subfigure}{0.49\textwidth}
    \centering
    \includegraphics[width=\linewidth]{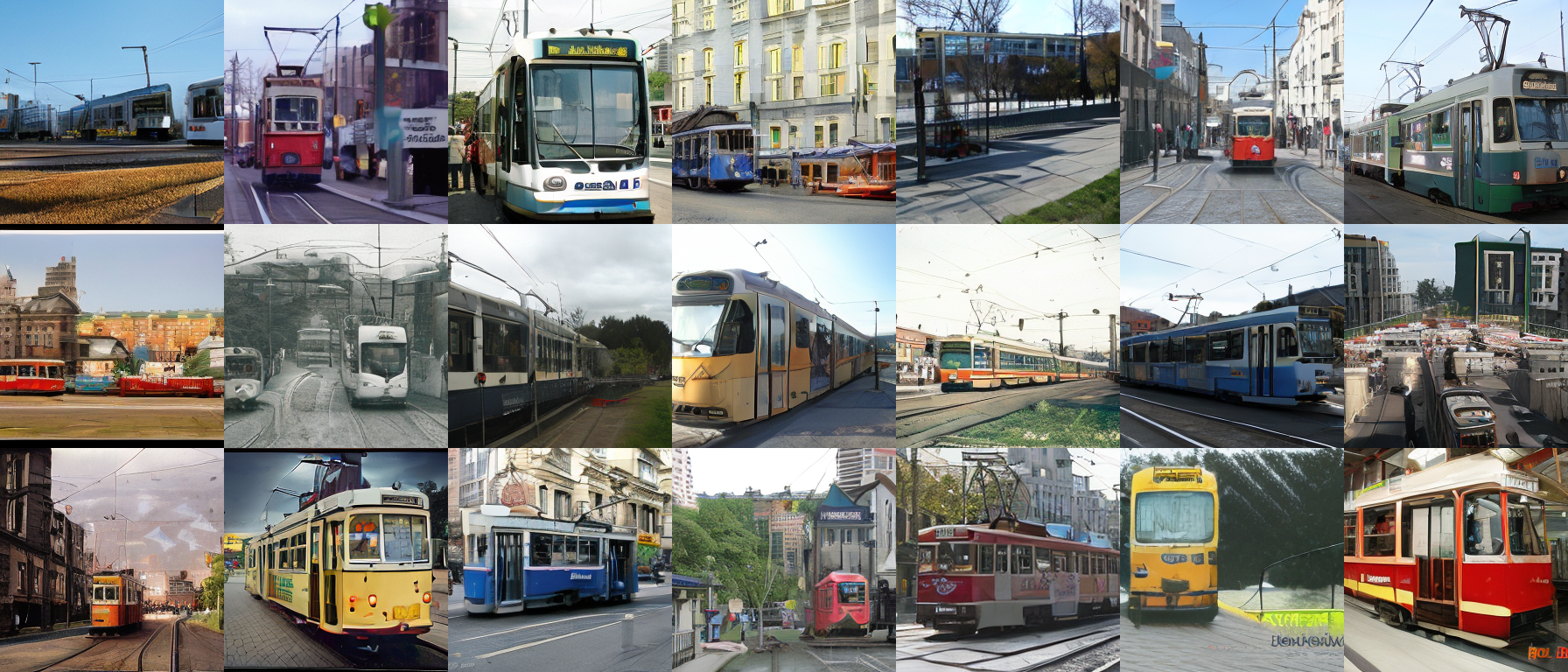}
    \caption{Streetcar (Class 829)}
\end{subfigure}

\par\medskip
% Row 2
\begin{subfigure}{0.49\textwidth}
    \centering
    \includegraphics[width=\linewidth]{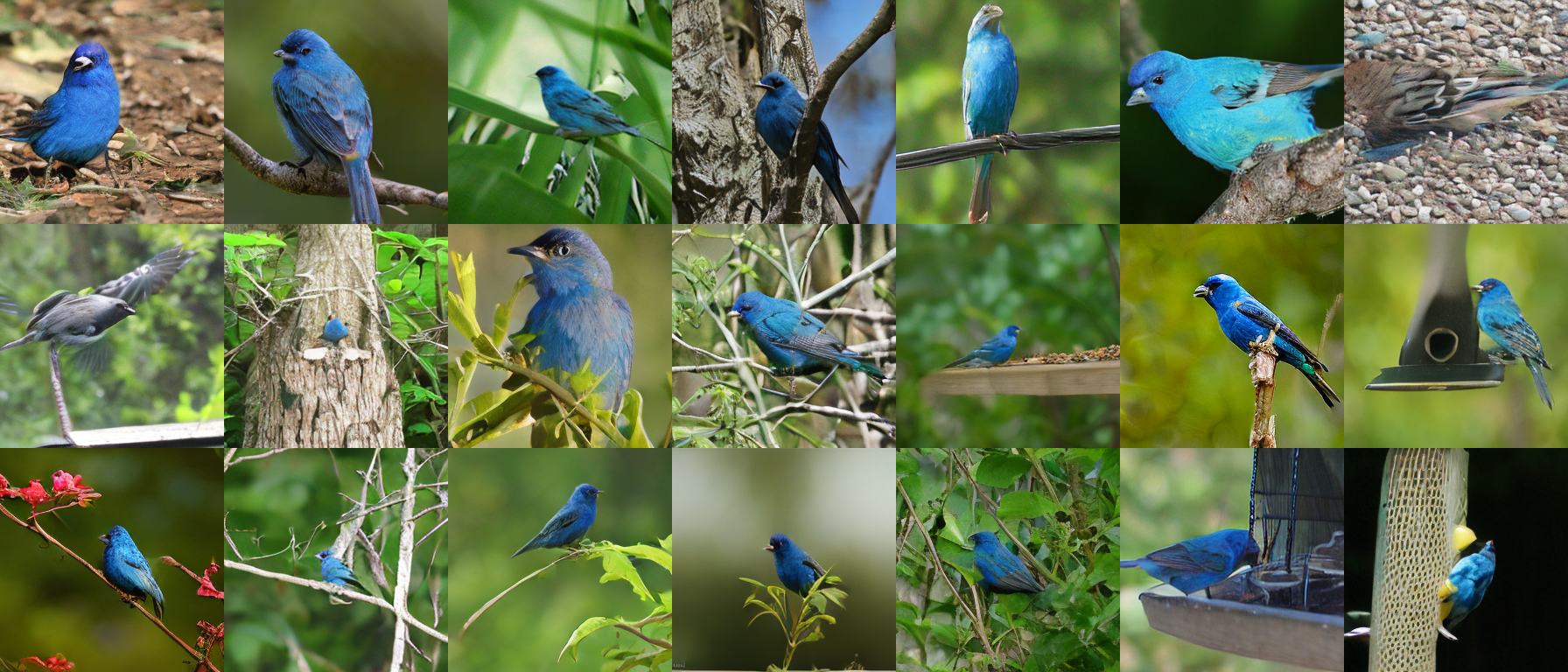}
    \caption{Indigo bunting (Class 014)}
\end{subfigure}\hfill
\begin{subfigure}{0.49\textwidth}
    \centering
    \includegraphics[width=\linewidth]{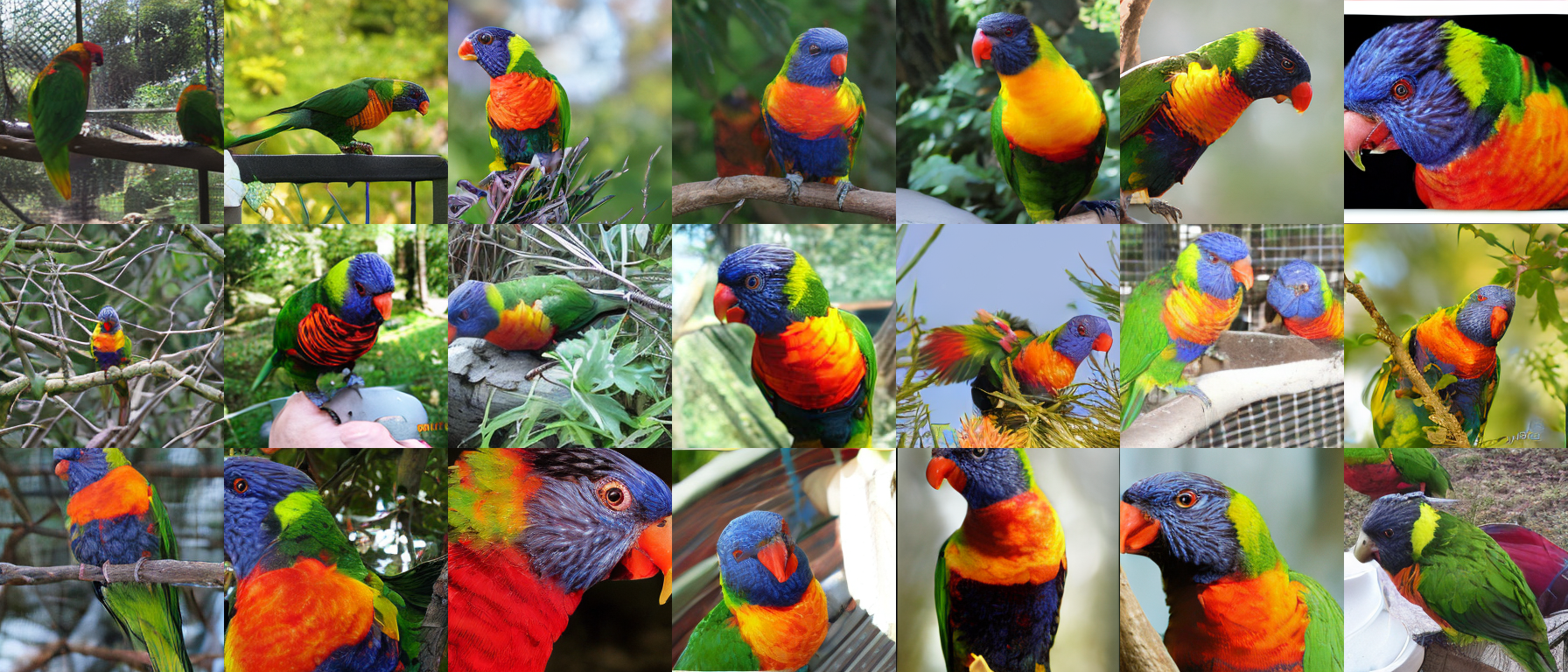}
    \caption{Lorikeet (Class 090)}
\end{subfigure}

\par\medskip
% Row 3
\begin{subfigure}{0.49\textwidth}
    \centering
    \includegraphics[width=\linewidth]{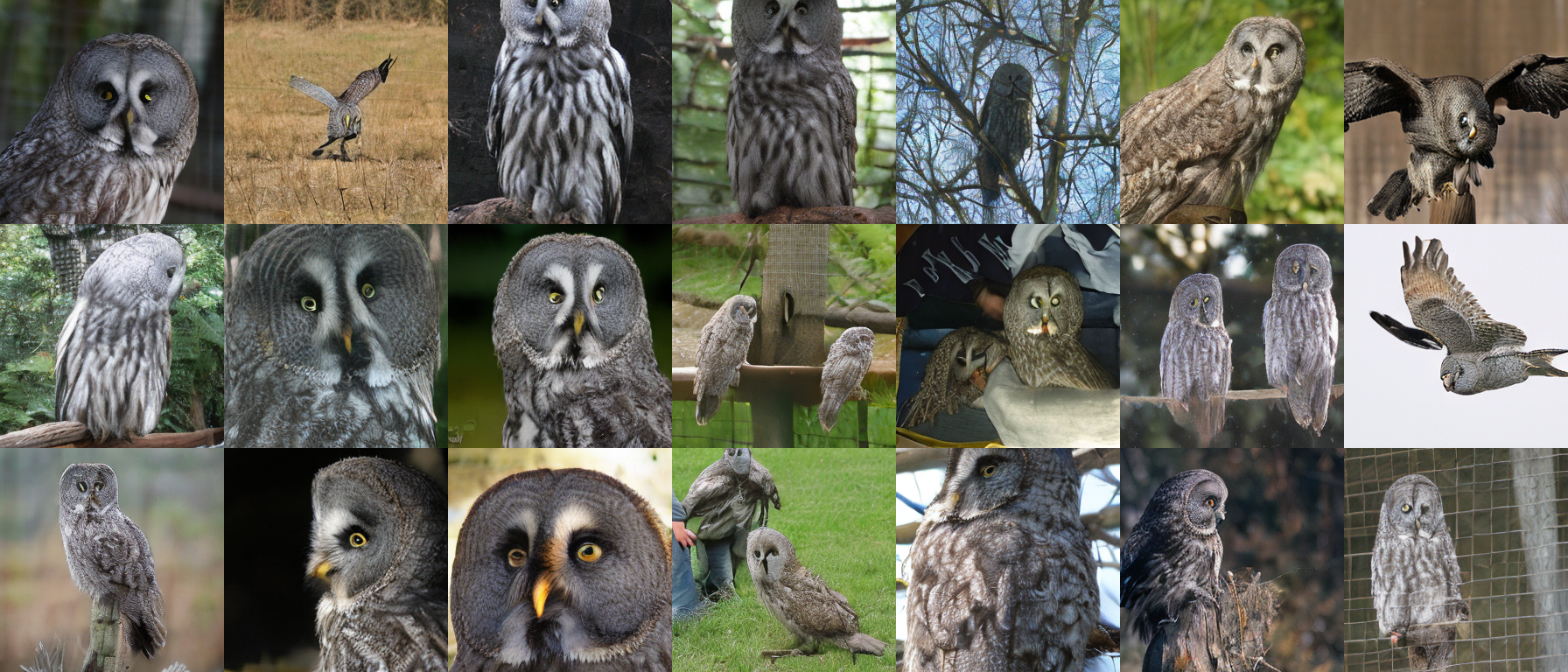}
    \caption{Great grey owl (Class 024)}
\end{subfigure}\hfill
\begin{subfigure}{0.49\textwidth}
    \centering
    \includegraphics[width=\linewidth]{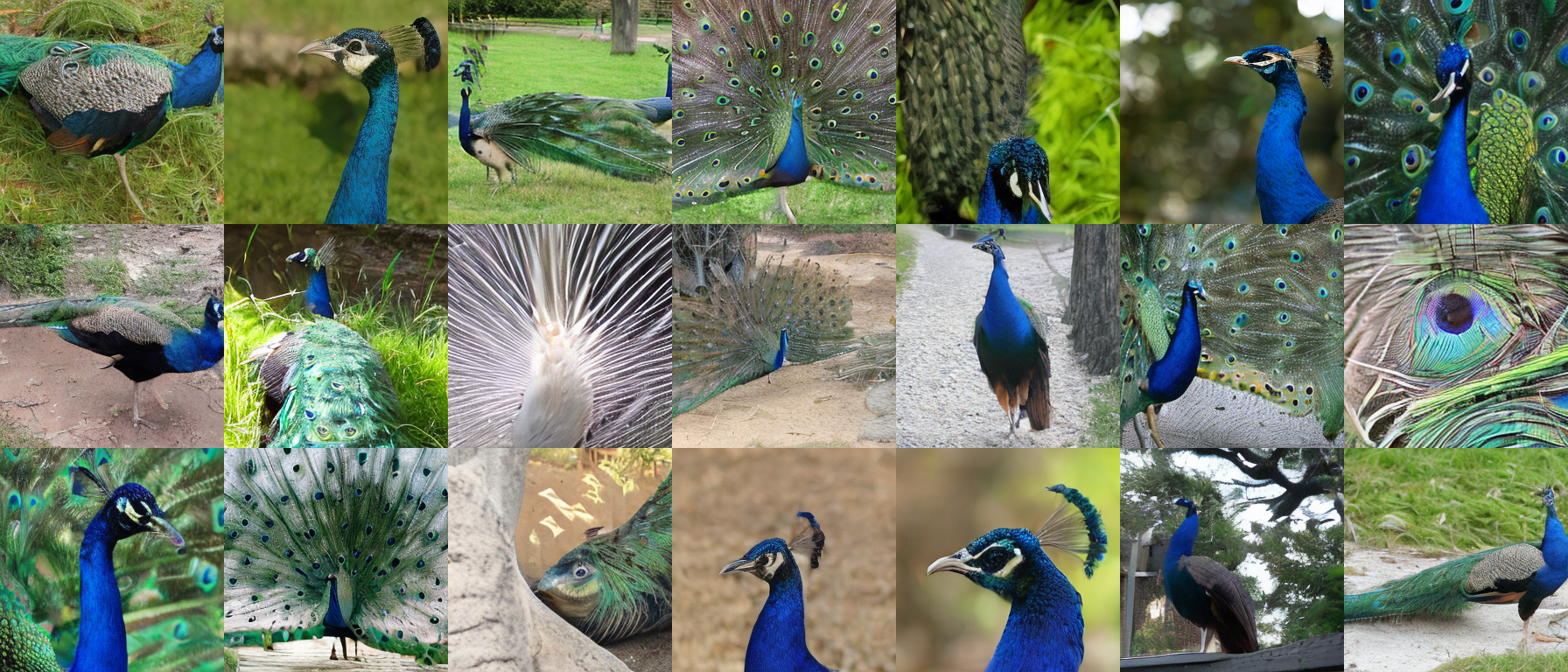}
    \caption{Peacock (Class 084)}
\end{subfigure}

\par\medskip
% Row 4
\begin{subfigure}{0.49\textwidth}
    \centering
    \includegraphics[width=\linewidth]{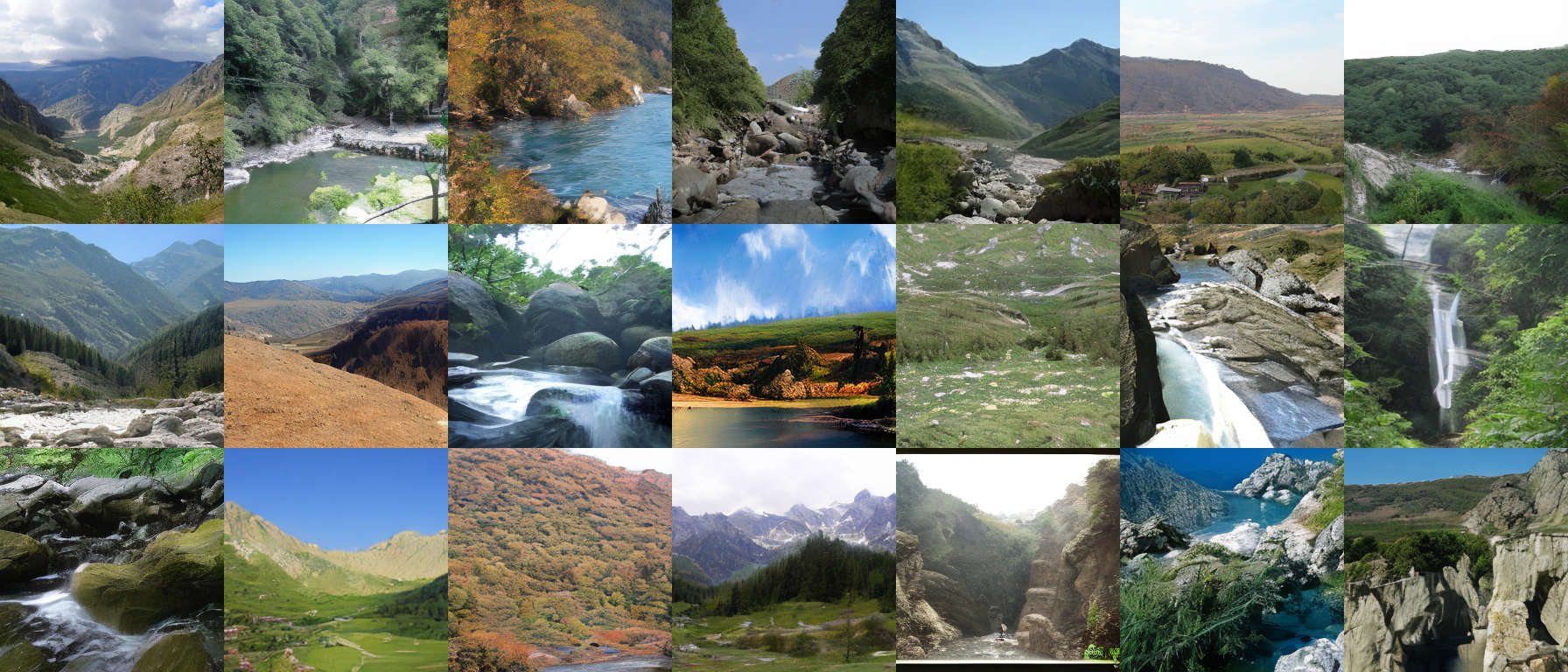}
    \caption{Valley (Class 979)}
\end{subfigure}\hfill
\begin{subfigure}{0.49\textwidth}
    \centering
    \includegraphics[width=\linewidth]{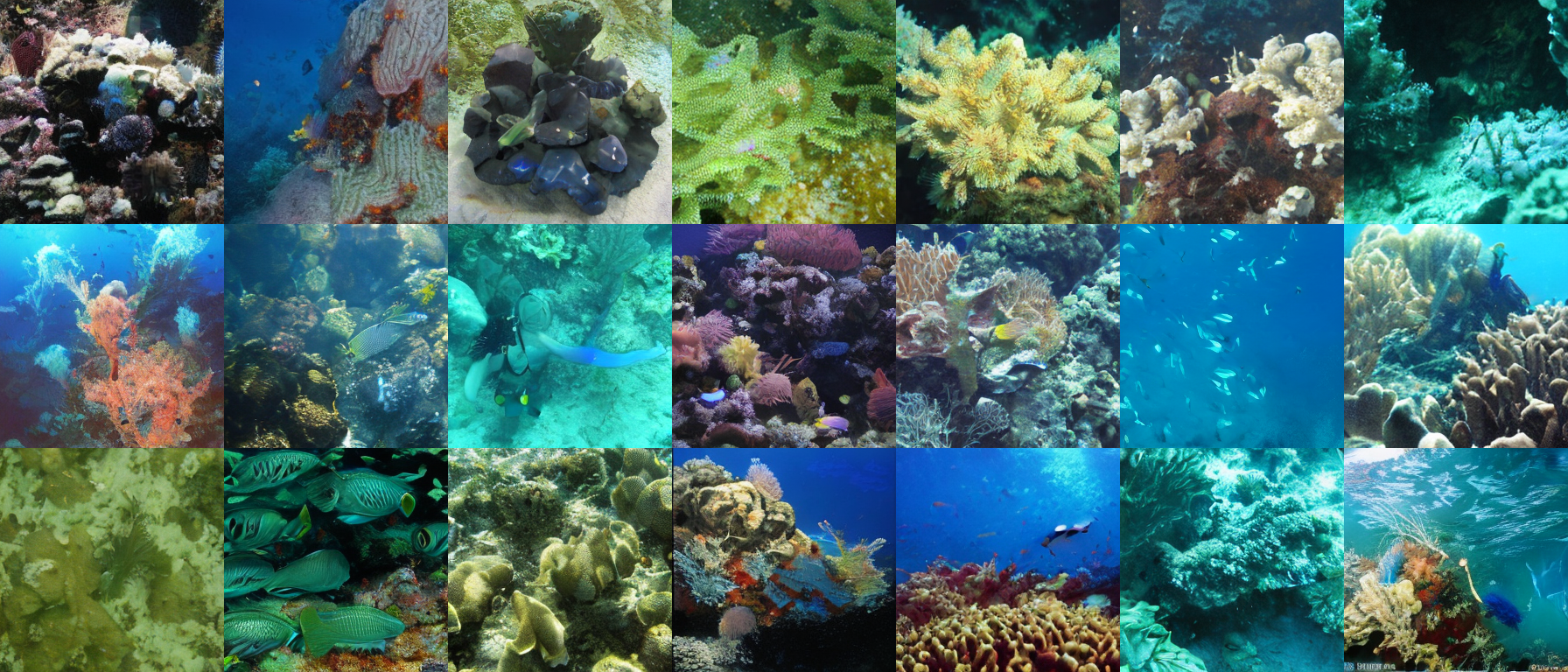}
    \caption{Coral reef (Class 973)}
\end{subfigure}

\par\medskip
% Row 5
\begin{subfigure}{0.49\textwidth}
    \centering
    \includegraphics[width=\linewidth]{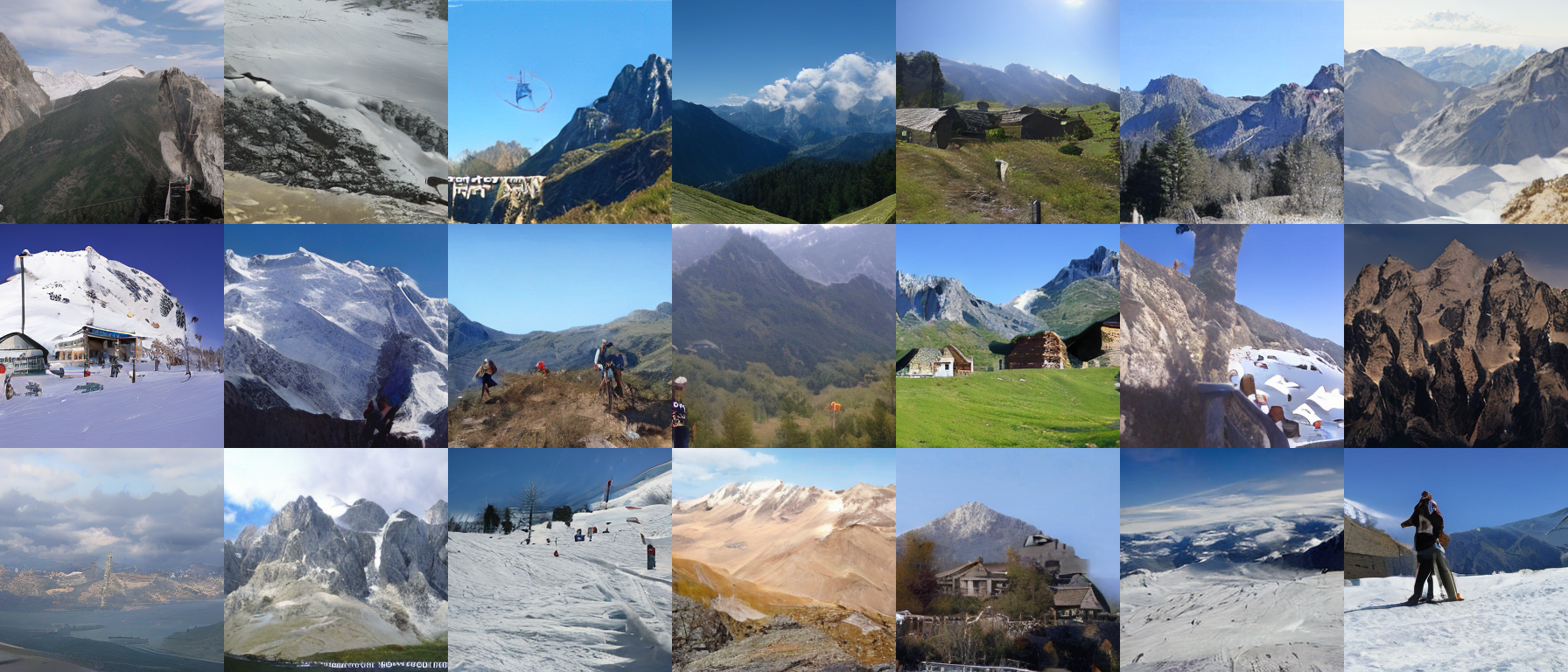}
    \caption{Alp (Class 970)}
\end{subfigure}\hfill
\begin{subfigure}{0.49\textwidth}
    \centering
    \includegraphics[width=\linewidth]{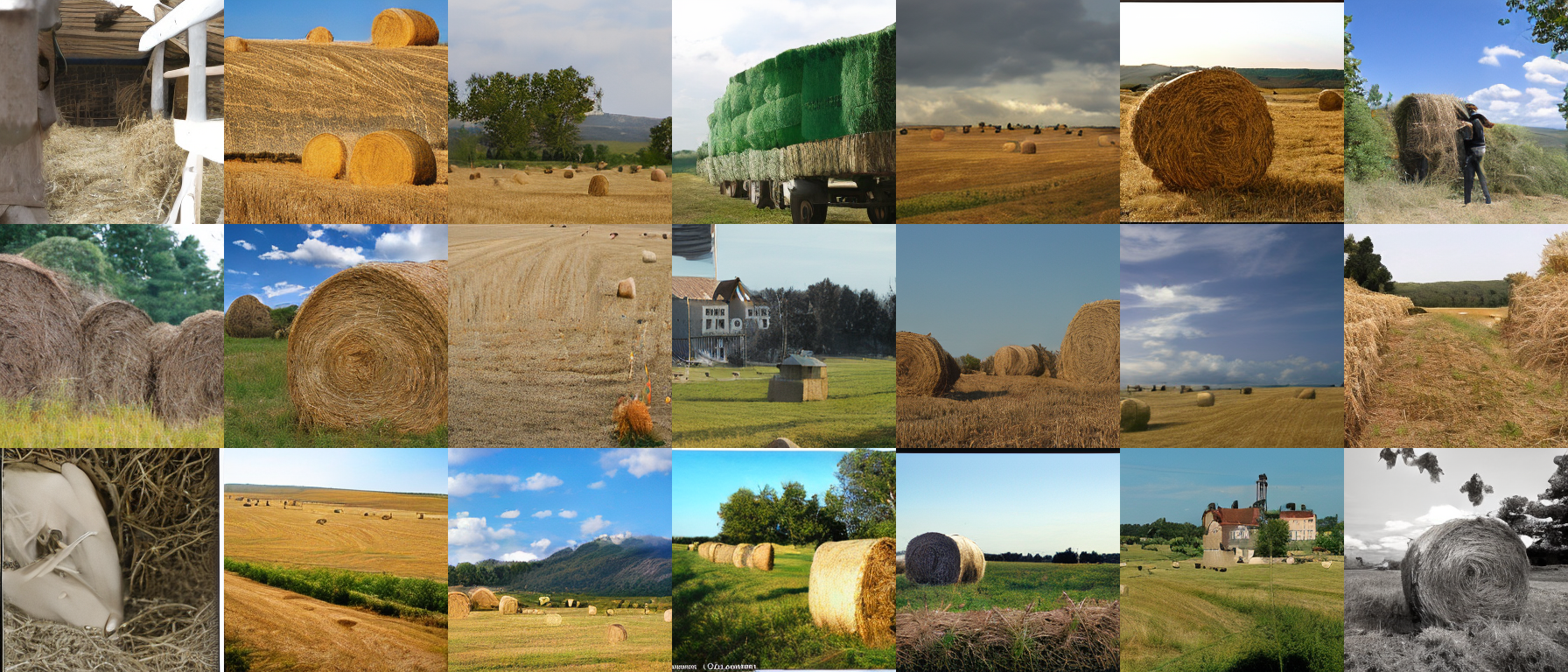}
    \caption{Hay (Class 958)}
\end{subfigure}

\par\medskip
% Row 6
\begin{subfigure}{0.49\textwidth}
    \centering
    \includegraphics[width=\linewidth]{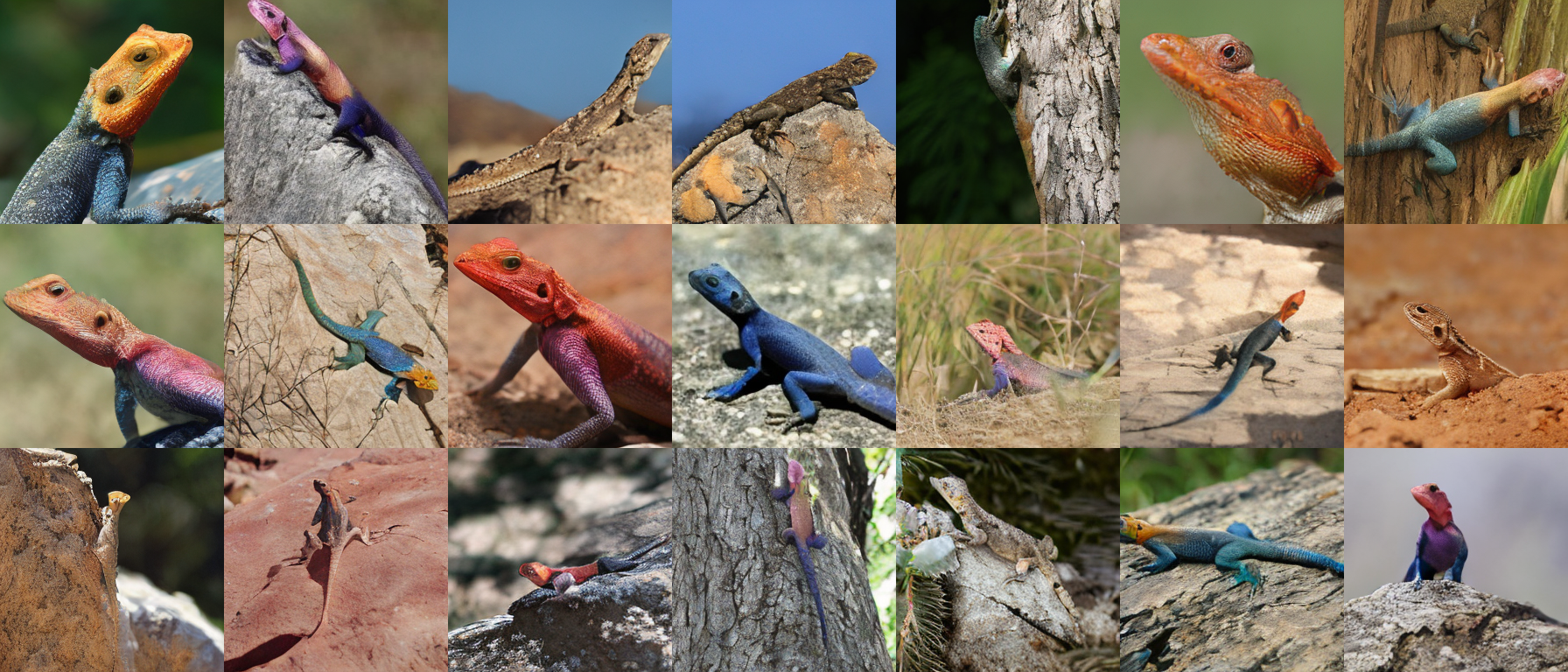}
    \caption{Agama (Class 042)}
\end{subfigure}\hfill
\begin{subfigure}{0.49\textwidth}
    \centering
    \includegraphics[width=\linewidth]{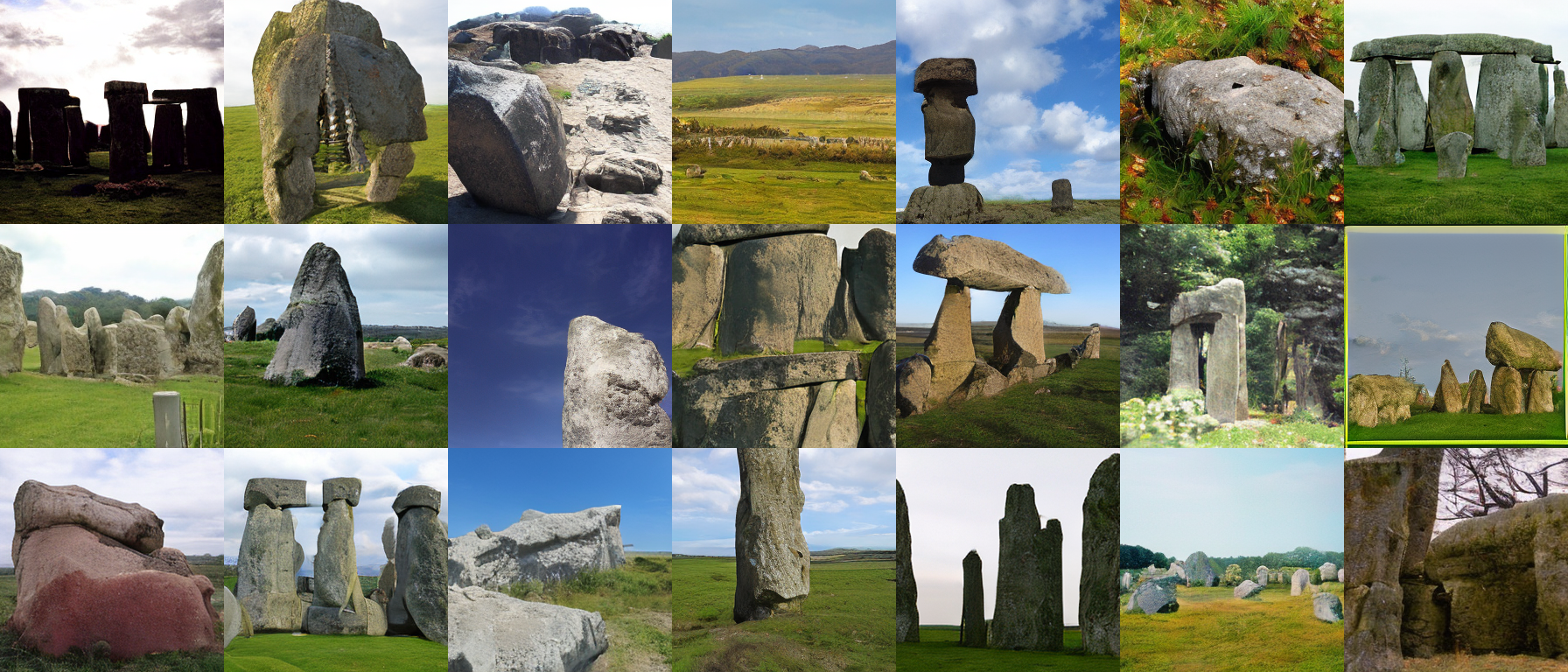}
    \caption{Megalith (Class 649)}
\end{subfigure}

\caption{
Random class-conditional ImageNet samples generated by our UOT-XL model
at 200K training steps with CFG $=1.14$.
Class indices follow the zero-based ImageNet-1K label mapping.
}
\label{fig:samples-six-by-two}
\end{figure*}

% ==================== Lion ====================
\begin{figure*}[t]
\centering
\captionsetup[subfigure]{font=small,skip=2pt}

\begin{subfigure}{0.95\textwidth}
    \centering
    \includegraphics[width=\linewidth]
        {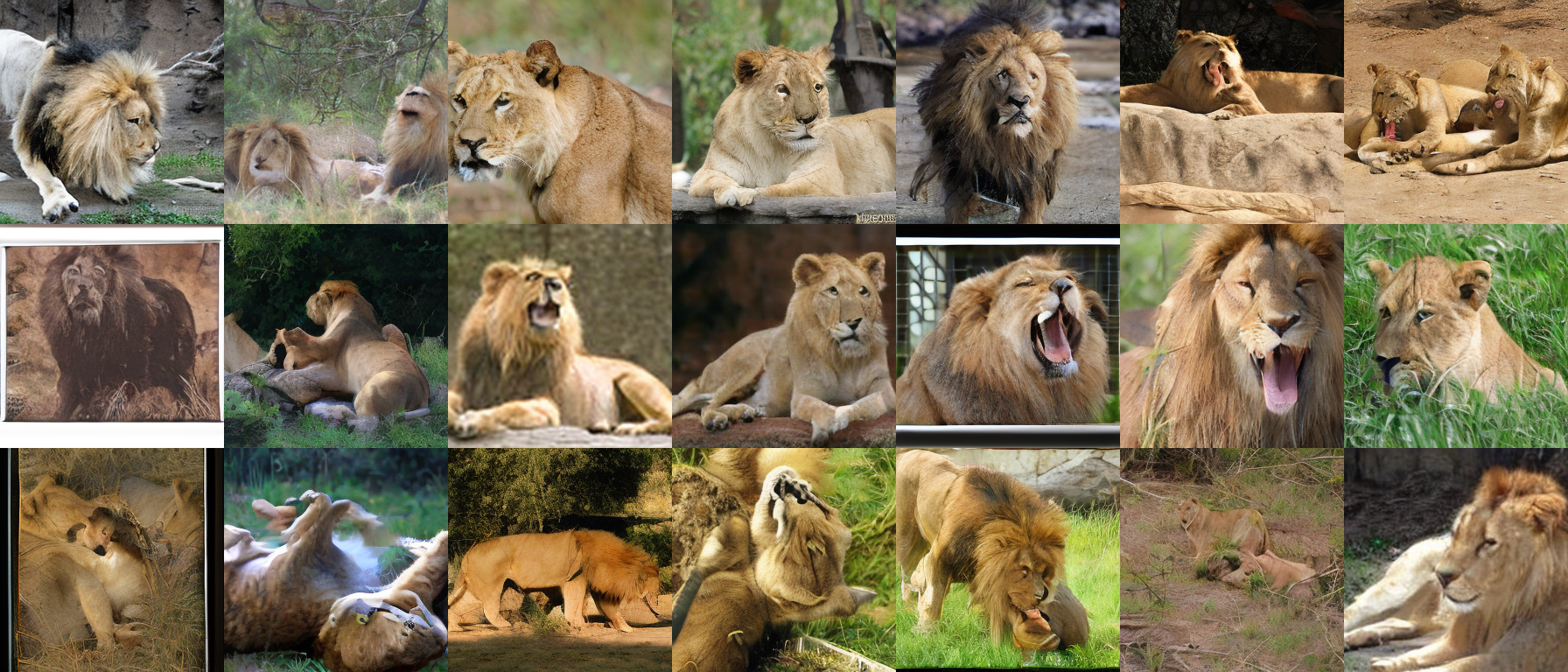}
    \caption{W-Flow-B (balanced OT)}
\end{subfigure}

\par\medskip
\begin{subfigure}{0.95\textwidth}
    \centering
    \includegraphics[width=\linewidth]
        {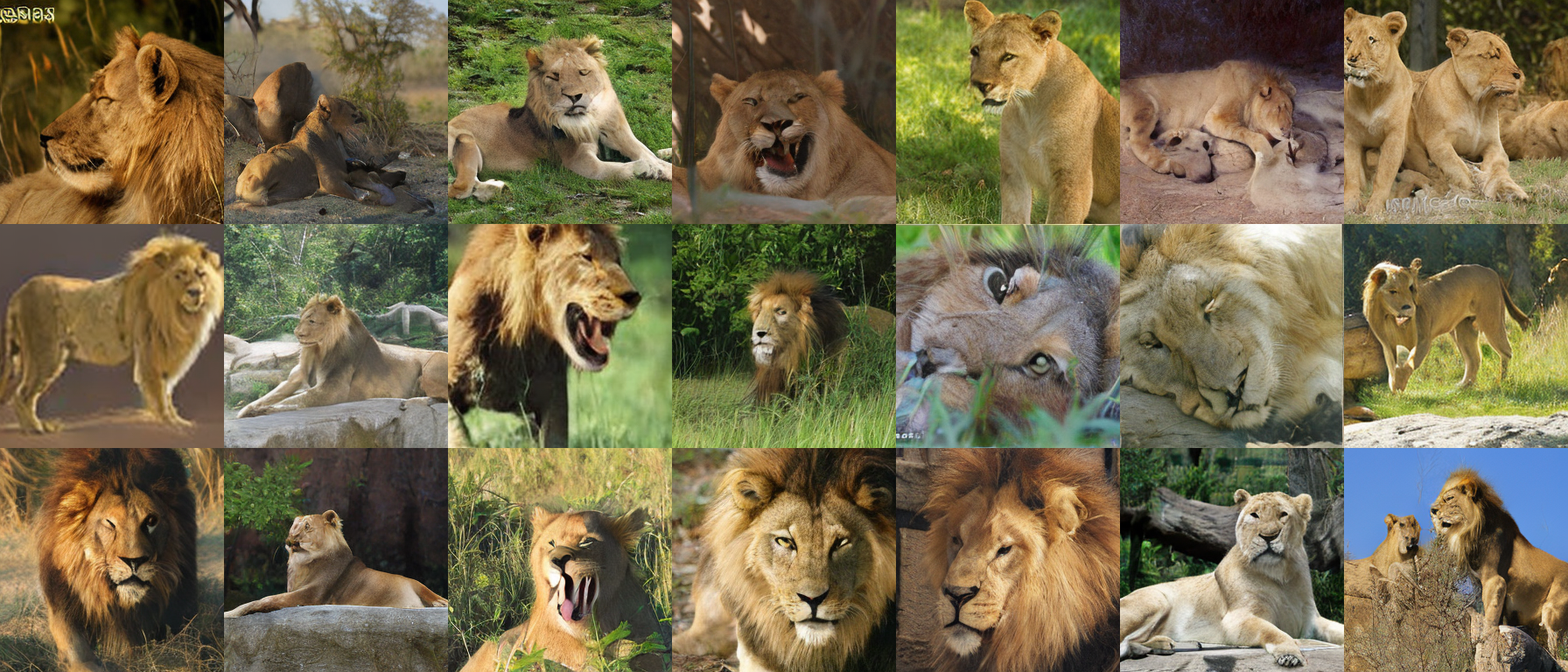}
    \caption{Ours-B (source-fixed UOT)}
\end{subfigure}

\caption{
Qualitative comparison for lion (ImageNet Class 291):
official W-Flow-B \citep{han2026one} (top) and our UOT-B (bottom).
Both models use 200K-step EMA checkpoints, CFG $=1.19$,
and sampling seed $=42$.
Corresponding positions use the same initial noise and class label.
All 21 samples per model from this seed are shown without filtering.
}
\label{fig:b-comparison-lion-seed42}
\end{figure*}

% ==================== Pembroke Welsh corgi ====================
\begin{figure*}[t]
\centering
\captionsetup[subfigure]{font=small,skip=2pt}

\begin{subfigure}{0.95\textwidth}
    \centering
    \includegraphics[width=\linewidth]
        {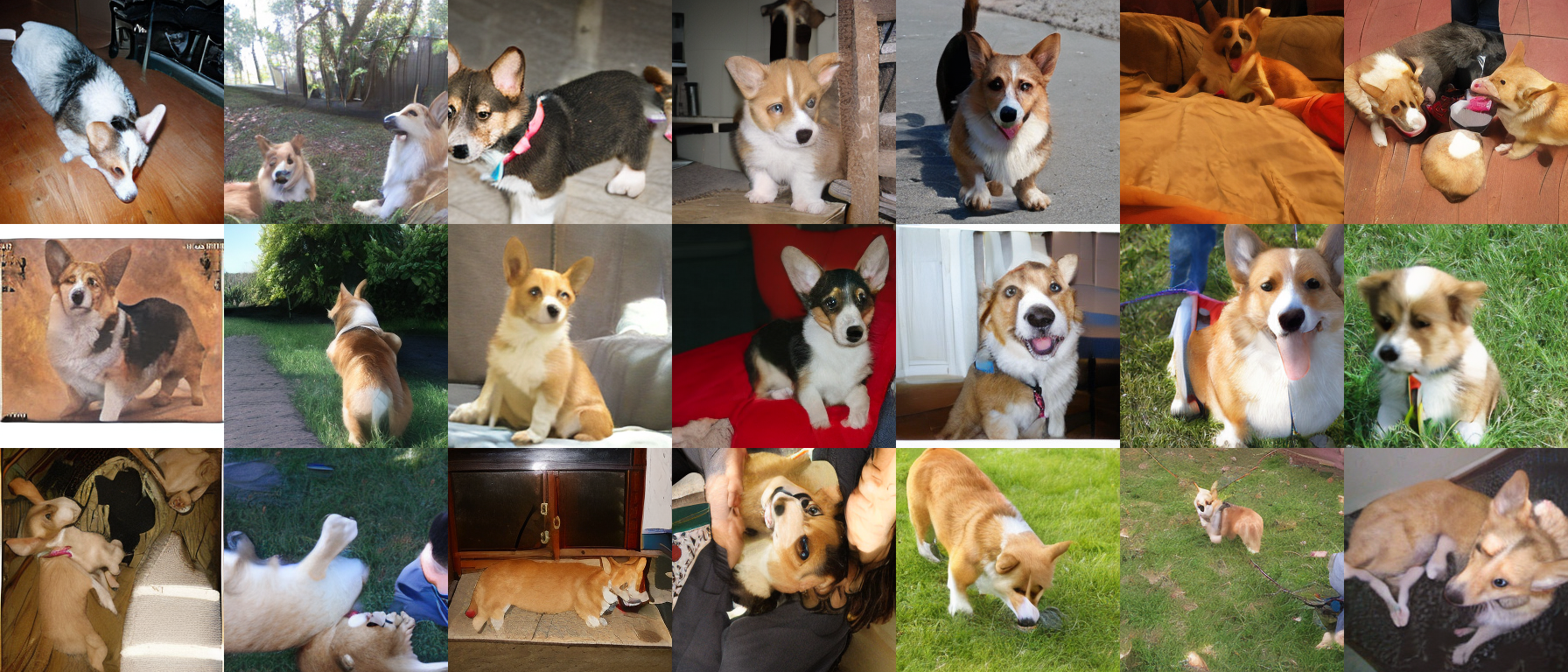}
    \caption{W-Flow-B (balanced OT)}
\end{subfigure}

\par\medskip
\begin{subfigure}{0.95\textwidth}
    \centering
    \includegraphics[width=\linewidth]
        {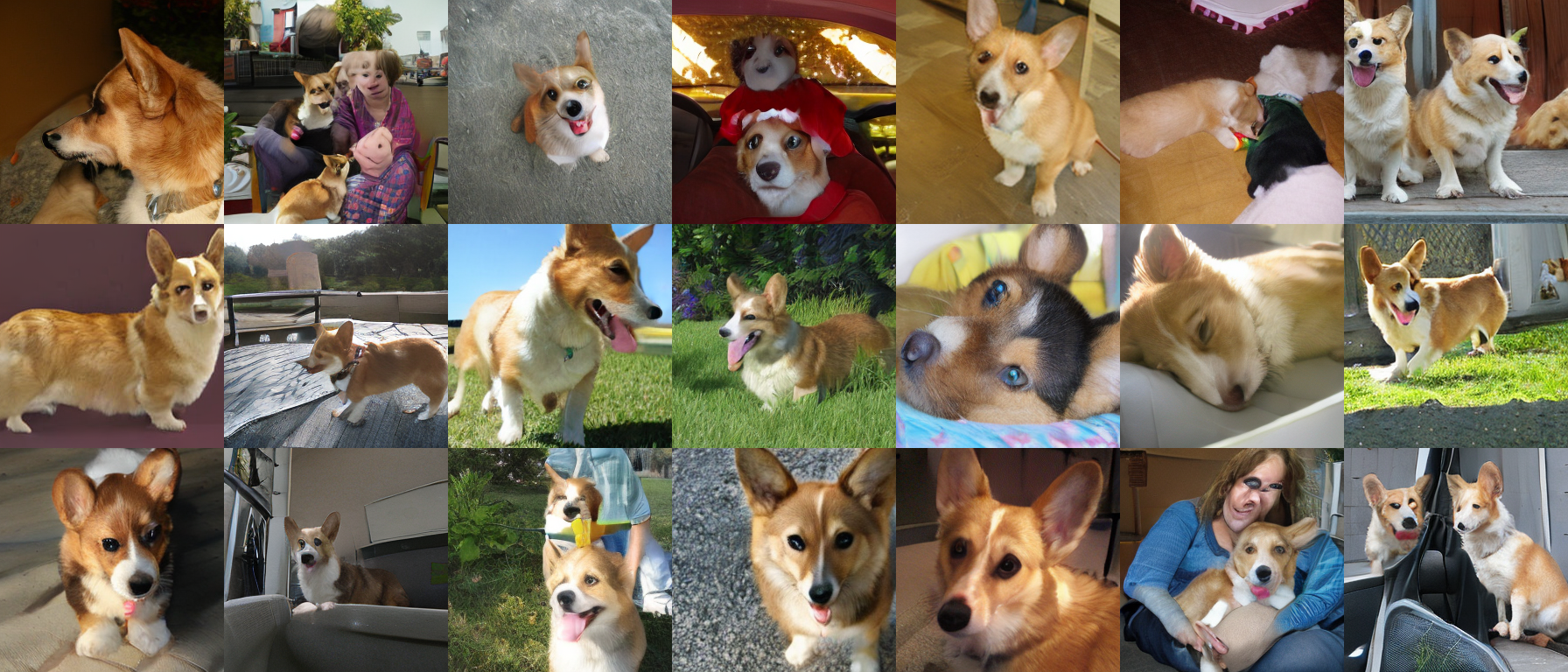}
    \caption{Ours-B (source-fixed UOT)}
\end{subfigure}

\caption{
Qualitative comparison for Pembroke Welsh corgi (ImageNet Class 263):
official W-Flow-B \citep{han2026one} (top) and our UOT-B (bottom).
Both models use 200K-step EMA checkpoints, CFG $=1.19$,
and sampling seed $=42$.
Corresponding positions use the same initial noise and class label.
All 21 samples per model from this seed are shown without filtering.
}
\label{fig:b-comparison-corgi-seed42}
\end{figure*}

% ==================== King penguin ====================
\begin{figure*}[t]
\centering
\captionsetup[subfigure]{font=small,skip=2pt}

\begin{subfigure}{0.95\textwidth}
    \centering
    \includegraphics[width=\linewidth]
        {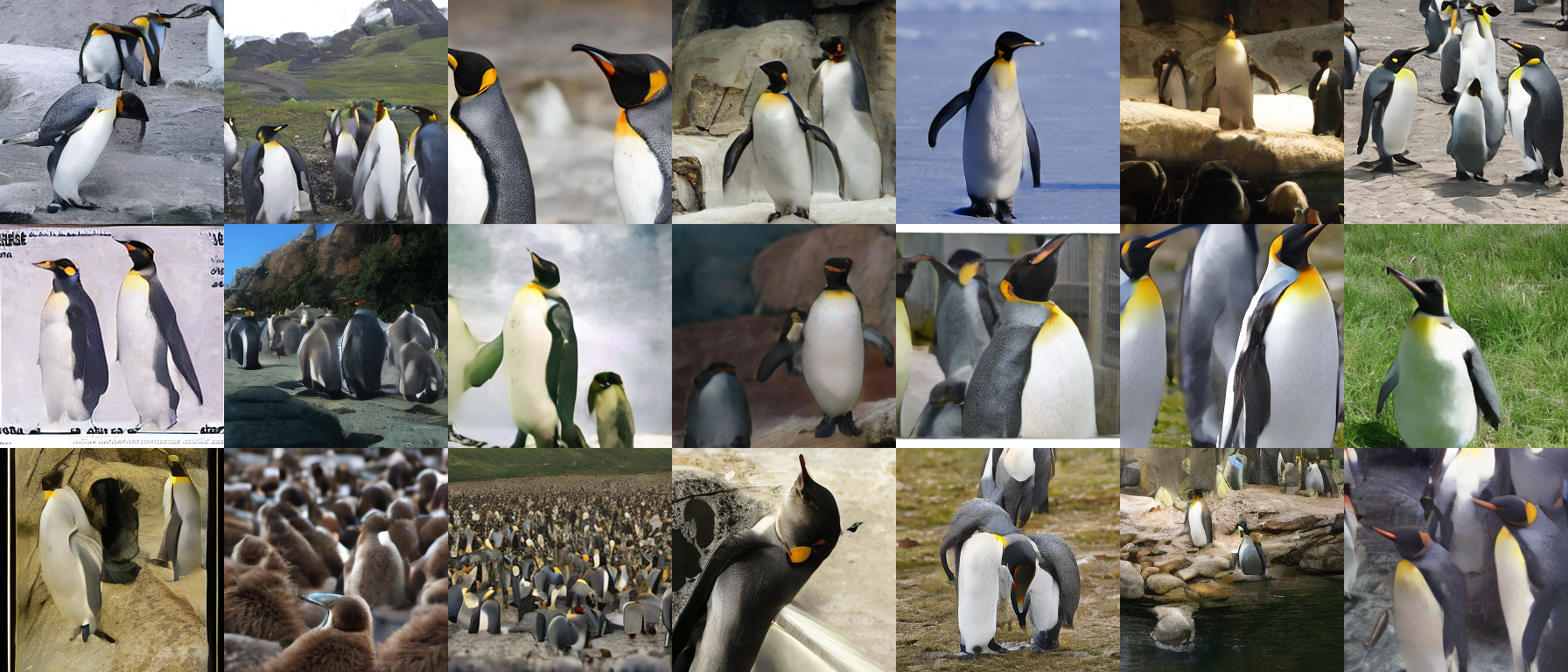}
    \caption{W-Flow-B (balanced OT)}
\end{subfigure}

\par\medskip
\begin{subfigure}{0.95\textwidth}
    \centering
    \includegraphics[width=\linewidth]
        {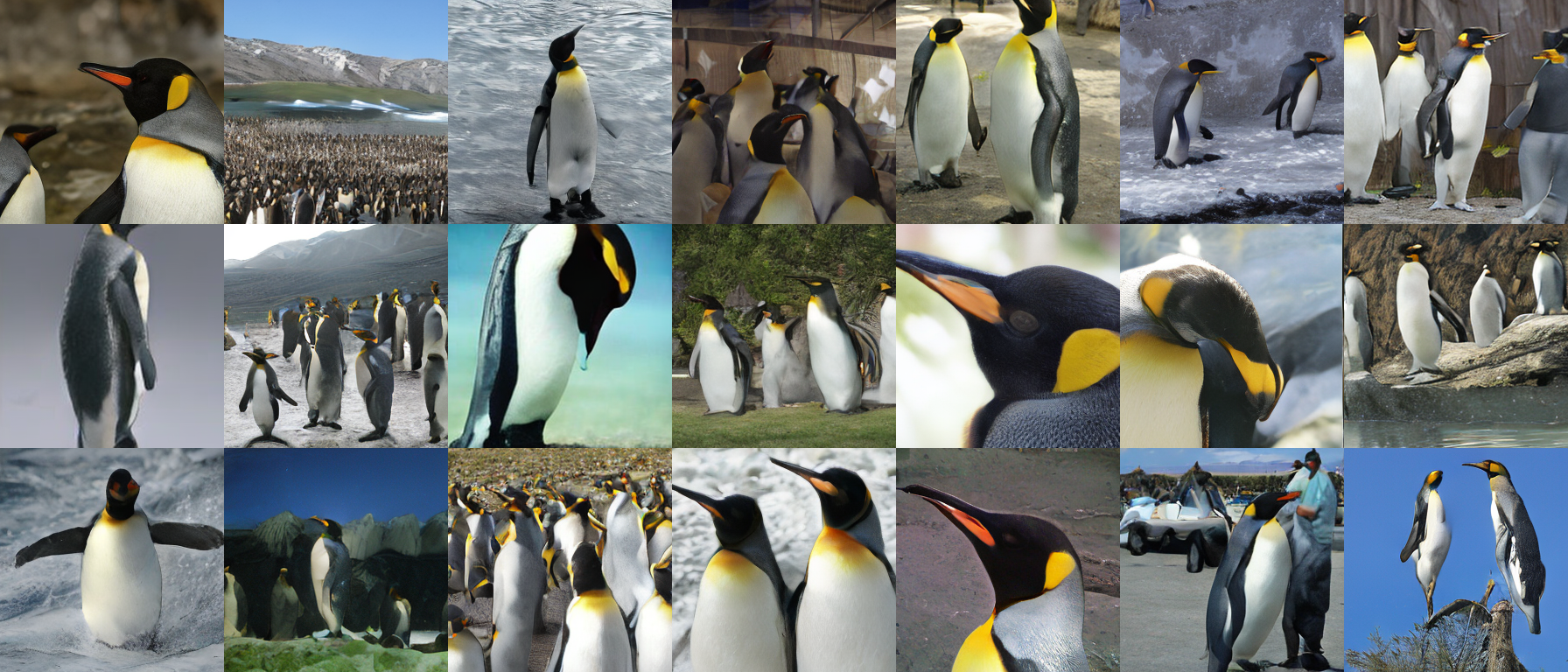}
    \caption{Ours-B (source-fixed UOT)}
\end{subfigure}

\caption{
Qualitative comparison for king penguin (ImageNet Class 145): official W-Flow-B \citep{han2026one} (top) and our UOT-B (bottom). Both models use 200K-step EMA checkpoints, CFG $=1.19$, and sampling seed $=42$. Corresponding positions use the same initial noise and class label.
All 21 samples per model from this seed are shown without filtering.}
\label{fig:b-comparison-penguin-seed42}
\end{figure*}

\begin{figure*}[t]
\centering
\captionsetup[subfigure]{font=small,skip=2pt}

\begin{subfigure}{0.95\textwidth}
    \centering
    \includegraphics[width=\linewidth]{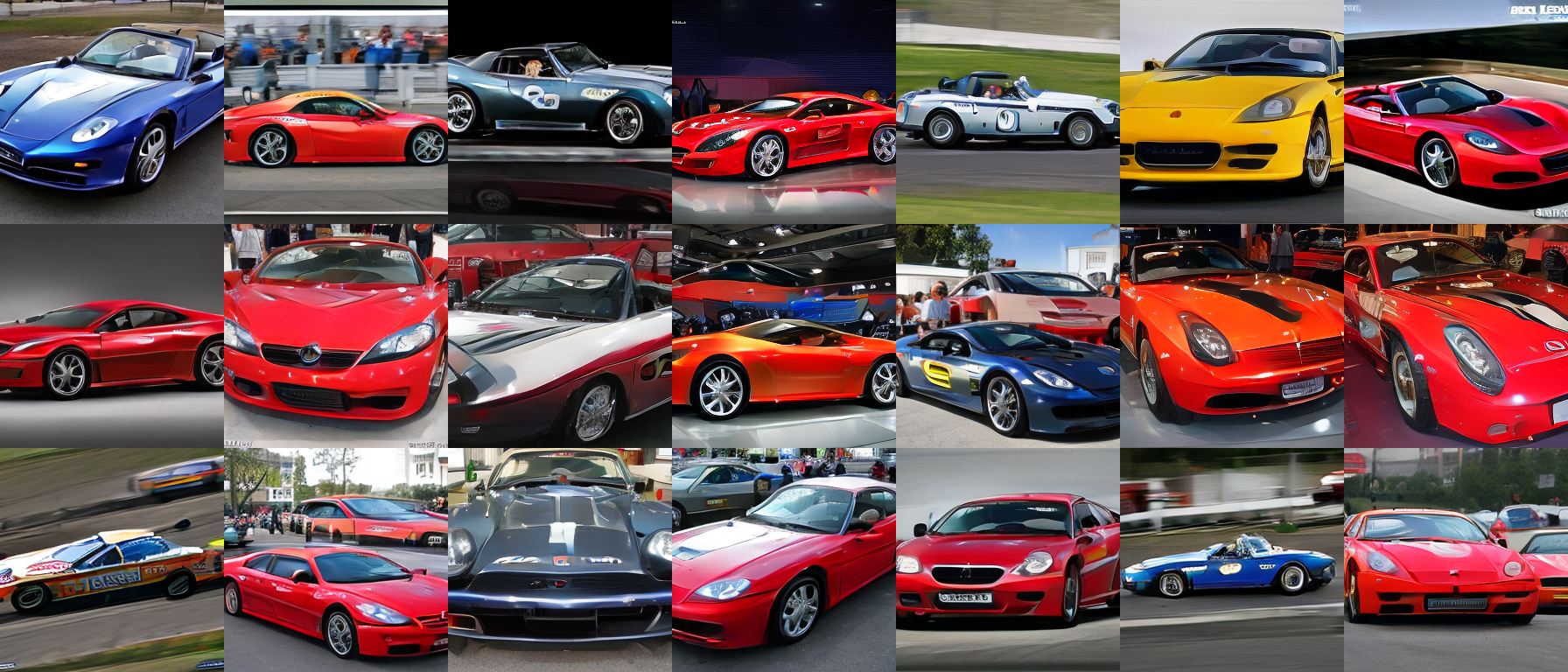}
    \caption{CFG $=2.0$}
\end{subfigure}

\par\medskip

\begin{subfigure}{0.95\textwidth}
    \centering
    \includegraphics[width=\linewidth]{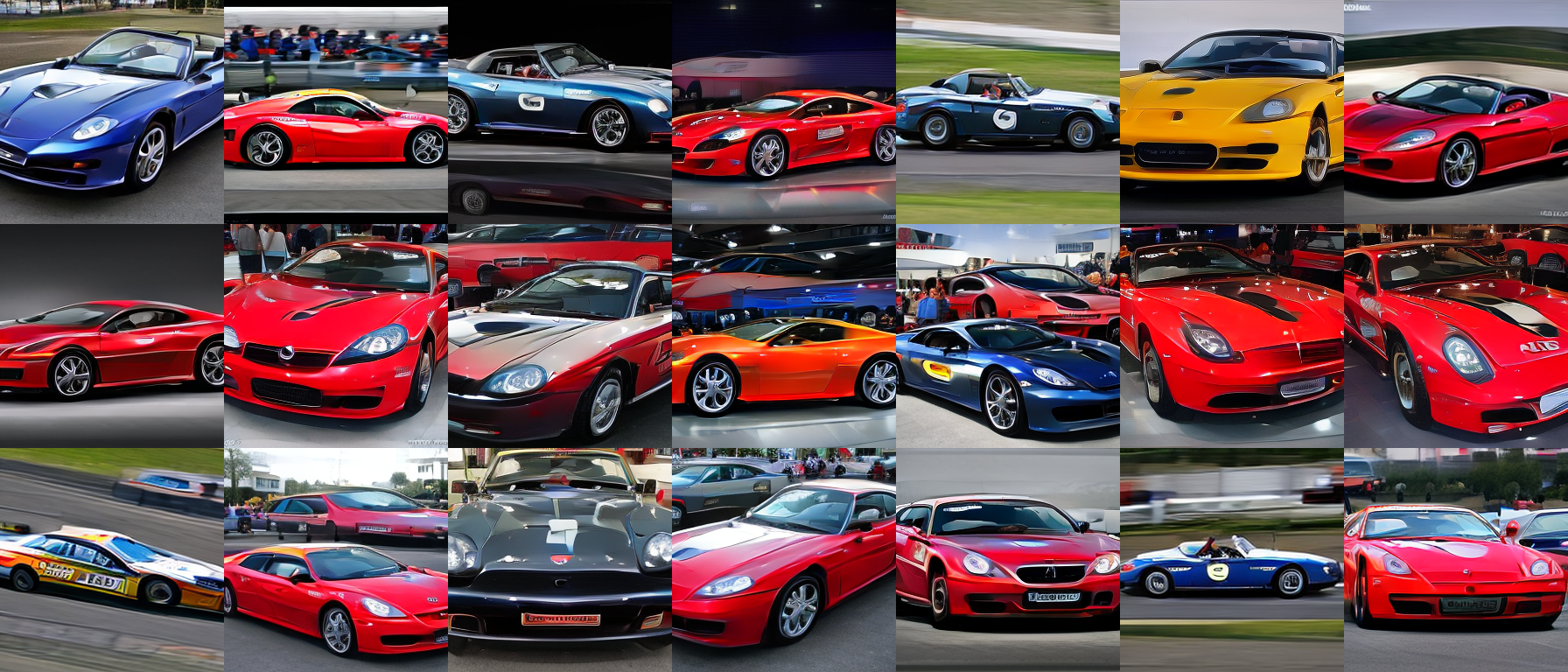}
    \caption{CFG $=3.0$}
\end{subfigure}

\caption{
Effect of classifier-free guidance scale on samples from our XL model. Both panels show sports cars (ImageNet Class 817), generated using the same 200K-step EMA checkpoint and sampling seed $=0$. Each panel contains 21 samples arranged in a $3\times7$ grid; corresponding positions use the same initial noise.
}
\label{fig:uot-xl-cfg-comparison}
\end{figure*}

\section{Limitations and Broader Impact}
\label{sec:limitations}

% \paragraph{Limitations.}
% Our analysis establishes force identities, kinetic lifting,
% and stationary-state examples under the stated assumptions. Positive definiteness is known for symmetric generators under the conditions of \citet{Sjourn2019SinkhornDF}, but remains open for the source-fixed energy, as do the trajectory assumptions required for convergence. The convergence statements concern the population flow; the mini-batch update with finitely many Sinkhorn iterations is the next object to analyze. Empirically, why the mild forward-plan reweighting observed on ImageNet (Figure~\ref{fig:source_fixed_uot_mass}) improves generation remains open, and the controlled comparison with balanced transport is at B/2; L/2 and XL/2 are system-level results with a fixed transport recipe. Experiments are on ImageNet-256 in the SD-VAE latent space with DiT generators; other datasets and latent spaces remain to be explored.

\paragraph{Limitations.}
Our analysis establishes force identities, kinetic lifting,
and stationary-state examples under the stated assumptions. Positive definiteness is known for symmetric generators under the conditions of \citet{Sjourn2019SinkhornDF}, but remains open for the source-fixed energy, as do the trajectory assumptions required for convergence. The convergence statements concern the population flow; the mini-batch update with finitely many Sinkhorn iterations is the next object to analyze. Empirically, why the mild forward-plan reweighting observed on ImageNet (Figure~\ref{fig:source_fixed_uot_mass}) improves generation remains open, and the controlled comparison with balanced transport is at B/2; L/2 and XL/2 are system-level results with a fixed transport recipe. Experiments are on ImageNet-256 in the SD-VAE latent space with DiT generators; other datasets and latent spaces remain to be explored.

\paragraph{Broader impact.}
One-step generation produces an image in a single forward pass, making high-quality synthesis cheaper to serve, easier to embed in interactive settings, and less energy-intensive at inference than diffusion and flow models. It carries the familiar hazards of generative imagery: fabricated or misleading content, reproduction of training-data stereotypes, and likenesses generated without consent. Our method adds one consideration: marginal reweighting may affect the representation of underrepresented samples, and aggregate precision and recall do not establish subgroup coverage, so subgroup-level evaluation remains necessary. We support standard release practices: content provenance or watermarking, documented training data and intended use, and access controls or output filtering outside research contexts.

\section{Additional generated samples}\label{app:samples}

\end{document}